\documentclass[a4paper,11pt]{article}
\usepackage{amsmath,amsthm,amssymb}
\usepackage{graphicx}
\usepackage{mathrsfs}
\usepackage[all]{xy}
\makeatletter
\@addtoreset{equation}{section}
\makeatother

\newtheorem{theorem}{Theorem}[section]
\newtheorem*{theorem*}{Theorem}

\newtheorem{proposition}[theorem]{Proposition}
\newtheorem{corollary}[theorem]{Corollary}
\newtheorem*{corollary*}{Corollary}
\newtheorem{lemma}[theorem]{Lemma}
\newtheorem{example}[theorem]{Example}
\newtheorem{remark}[theorem]{Remark}
\newtheorem*{remark*}{Remark}

\newcommand{\BZ}{\mathbb{Z}}
\newcommand{\BR}{\mathbb{R}}
\newcommand{\BC}{\mathbb{C}}

\newcommand{\bD}{\mathbf{D}}
\newcommand{\bH}{\mathbf{H}}
\newcommand{\bI}{\mathbf{I}}

\newcommand{\bn}{\mathbf{n}}
\newcommand{\bsigma}{\boldsymbol{\sigma}}

\newcommand{\cB}{\mathcal{B}}
\newcommand{\cC}{\mathcal{C}}

\newcommand{\cO}{\mathcal{O}}
\newcommand{\cH}{\mathcal{H}}
\newcommand{\cI}{\mathcal{I}}
\newcommand{\cL}{\mathcal{L}}
\newcommand{\cP}{\mathcal{P}}
\newcommand{\cCL}{\mathcal{CL}}
\newcommand{\fg}{\mathfrak{g}}

\newcommand{\ra}{\mathrm{a}}
\newcommand{\rA}{\mathrm{A}}
\newcommand{\rB}{\mathrm{B}}
\newcommand{\rd}{\mathrm{d}}
\newcommand{\rH}{\mathrm{H}}
\newcommand{\rh}{\mathrm{h}}
\newcommand{\rQ}{\mathrm{Q}}

\newcommand{\Tr}{\operatorname{Tr}}
\newcommand{\tr}{\operatorname{tr}}
\newcommand{\even}{\mathrm{even}}
\newcommand{\odd}{\mathrm{odd}}
\newcommand{\NCHO}{\mathrm{NCHO}}
\newcommand{\QRM}{\mathrm{QRM}}
\newcommand{\op}{\mathrm{op}}
\newcommand{\HS}{\mathrm{HS}}

\newcommand{\sech}{\operatorname{sech}}
\newcommand{\artanh}{\operatorname{artanh}}
\renewcommand{\Re}{\operatorname{Re}}
\renewcommand{\Im}{\operatorname{Im}}
\newcommand{\hsum}{\sideset{}{^\oplus}{\sum}}

\newcommand{\eqspace}{\mathrel{\phantom{=}}}
\allowdisplaybreaks[3]

\title{Special values of spectral zeta functions of \\ one-, two-photon quantum Rabi models \\ and non-commutative harmonic oscillators}
\author{Ryosuke Nakahama\thanks{This work was supported by JST CREST Grant Number JPMJCR2113, Japan.}
\thanks{Email: ryosuke.nakahama@ntt.com} \\
\textit{NTT Institute for Fundamental Mathematics,} \\ \textit{Communication Science Laboratories, NTT, Inc.} \\
\textit{3-9-11 Midori-cho, Musashino-shi, Tokyo 180-8585, Japan}}
\date{\today}

\begin{document}

\maketitle

\begin{abstract}
We find explicit expressions of the special values of the Hurwitz-type spectral zeta function $\zeta(\rH;n,\lambda)$ for the Hamiltonians $\rH$ of the one-photon quantum Rabi model (1pQRM), 
the two-photon quantum Rabi model (2pQRM), and the non-commutative harmonic oscillator (NCHO), at positive integers $n$. 
Then the 1st term of the spectral zeta function of 1pQRM gives a generalization of Beukers' integral used for the proof of the irrationality of $\zeta(2)$ after Ap\'ery's work. 
A similar expression of the 1st term of that of 2pQRM is also discussed. \medskip

\noindent \textbf{Keywords}: quantum Rabi models; non-commutative harmonic oscillators; spectral zeta functions; Beukers' integrals; Ap\'ery-like numbers. 

\noindent \textbf{2020 Mathematics Subject Classification}: 81Q10; 11M41. 
\end{abstract}

\section{Introduction}

The \emph{quantum Rabi model} (QRM) is a light-matter interaction model introduced by Jaynes and Cummings \cite{JC}, 
with the Hamiltonian given by the formally self-adjoint operator on $L^2(\BR)\otimes\BC^2$, 
\[ \rH_{1\QRM}^{(g,\Delta,0)}:=\bI\ra^\dagger\ra+g\bsigma_1(\ra+\ra^\dagger)+\Delta\bsigma_3, \]
where $g,\Delta\in\BR$, $\ra^\dagger,\ra$ are the creation and annihilation operators on $L^2(\BR)$, and $\bsigma_1,\bsigma_3\in M(2,\BC)$ are the Pauli matrices. 
This model describes a linear interaction of an electromagnetic mode and a two-level system. 
The integrability of this model is proved by Braak \cite{B1, B2} by using its $\BZ_2$-symmetry. 
As a generalization, the asymmetric quantum Rabi model 
\[ \rH_{1\QRM}^{(g,\Delta,\varepsilon)}:=\bI\ra^\dagger\ra+g\bsigma_1(\ra+\ra^\dagger)+\Delta\bsigma_3+\varepsilon\bsigma_1, \]
where $\varepsilon\in\BR$, which breaks the $\BZ_2$-symmetry of the original QRM, is also well-studied (see, e.g., \cite{B1, B2, KRW, XZBL}). 
Similarly, the \emph{two-photon quantum Rabi model} (2pQRM)
\[ \rH_{2\QRM}^{(g,\Delta,0)}:=\bI\ra^\dagger\ra+g\bsigma_1(\ra^2+(\ra^\dagger)^2)+\Delta\bsigma_3 \]
describes a non-linear interaction of an electromagnetic mode and a two-level system. 
The one-photon QRM (1pQRM) for all $g\in\BR$ and 2pQRM for $|g|<1/2$ have only discrete spectra, but this is not the case for 2pQRM with $|g|\ge 1/2$. 
Indeed, for $g=1/2$ case, 2pQRM may contain both discrete and continuous spectra (see, e.g., \cite{B1, B2, B3, DXBC, L, XC}). 
We can also consider the asymmetric generalization of 2pQRM, 
\[ \rH_{2\QRM}^{(g,\Delta,\varepsilon)}:=\bI\ra^\dagger\ra+g\bsigma_1(\ra^2+(\ra^\dagger)^2)+\Delta\bsigma_3+\varepsilon\bsigma_1. \]

Independently, as a purely mathematical model, motivated by number-theoretic interests and lower bound estimates of systems of (pseudo)differential operators, 
Parmeggiani and Wakayama \cite{PW0, PW1, PW2} introduced the \emph{non-commutative harmonic oscillator} (NCHO), 
with the Hamiltonian operator on $L^2(\BR)\allowbreak\otimes\BC^2$, 
\begin{align*}
\rH_\NCHO^{(\alpha,\beta,0)}:\hspace{-3pt}&=\begin{bmatrix}\alpha&0\\0&\beta\end{bmatrix}\biggl(-\frac{1}{2}\frac{d^2}{dx^2}+\frac{1}{2}x^2\biggr)
+\begin{bmatrix}0&-1\\1&0\end{bmatrix}\biggl(x\frac{d}{dx}+\frac{1}{2}\biggr) \\
&=\begin{bmatrix}\alpha&0\\0&\beta\end{bmatrix}\biggl(\ra^\dagger\ra+\frac{1}{2}\biggr)+\frac{1}{2}\begin{bmatrix}0&-1\\1&0\end{bmatrix}(\ra^2-(\ra^\dagger)^2), 
\end{align*}
where $\alpha,\beta\in\BR$. If $\alpha\beta>1$, then this has discrete spectra \cite{P2014, PV}, and in this article, we always assume $\alpha\beta>1$. 
The spectra of NCHO are studied mathematically via the spectral zeta functions (see \cite{IW2, KW2, KW4, O3} and references therein), 
via holomorphic differential equations (see e.g., \cite{O1, O2, RW, W}), and via pseudodifferential operators (see e.g., \cite{M, MP}).  
A generalization of NCHO, the shifted NCHO 
\begin{align*}
\rH_\NCHO^{(\alpha,\beta,\eta)}
&=\begin{bmatrix}\alpha&0\\0&\beta\end{bmatrix}\biggl(\ra^\dagger\ra+\frac{1}{2}\biggr)
+\frac{1}{2}\begin{bmatrix}0&-1\\1&0\end{bmatrix}(\ra^2-(\ra^\dagger)^2)+2\eta\sqrt{\alpha\beta-1}\sqrt{-1}\begin{bmatrix}0&-1\\1&0\end{bmatrix}, 
\end{align*}
where $\eta\in\BR$, is also considered by \cite{RW}. 
By \cite{N2}, the eigenvalue problem of NCHO is equivalent to that of 2pQRM, although the origins of these two models are different. 
Also, again by \cite{N2}, a deformation of 2pQRM converges to 1pQRM by taking a suitable limit. This is a refinement of the confluence process given in \cite{RW, W}. 

In this article, we renormalize $\rH_{1\QRM}^{(g,\Delta,\varepsilon)}$, $\rH_{2\QRM}^{(g,\Delta,\varepsilon)}$ as 
\begin{align*}
\widetilde{\rH}_{1\QRM}^{(g,\Delta,\varepsilon)}:\hspace{-3pt}&=\rH_{1\QRM}^{(g,\Delta,\varepsilon)}+g^2\bI
=\bI(\ra^\dagger\ra+g^2)+g\bsigma_1(\ra+\ra^\dagger)+\Delta\bsigma_3+\varepsilon\bsigma_1, \\
\widetilde{\rH}_{2\QRM}^{(g,\Delta,\varepsilon)}:\hspace{-3pt}&=\cosh(2g)\biggl(\rH_{2\QRM}^{(\tanh(2g)/2,\Delta\sech(2g),\varepsilon\sech(2g))}+\frac{1}{2}\bI\biggr) \\
&=\cosh(2g)\bI\biggl(\ra^\dagger\ra+\frac{1}{2}\biggr)+\frac{\sinh(2g)}{2}\bsigma_1(\ra^2+(\ra^\dagger)^2)+\Delta\bsigma_3+\varepsilon\bsigma_1, 
\end{align*}
where $g,\Delta,\varepsilon\in\BR$, and renormalize $\rH_\NCHO^{(\alpha,\beta,\eta)}$ as 
\begin{align*}
&\widetilde{\rH}_\NCHO^{(\alpha,\beta,\eta)}:=\frac{\alpha+\beta}{2\sqrt{\alpha\beta(\alpha\beta-1)}}\rH_\NCHO^{(\alpha,\beta,\eta)} \\
&=\frac{\alpha+\beta}{2\sqrt{\alpha\beta(\alpha\beta-1)}} \\
&\eqspace{}\times\biggl(\begin{bmatrix}\alpha&0\\0&\beta\end{bmatrix}\biggl(\ra^\dagger\ra+\frac{1}{2}\biggr)
+\frac{1}{2}\begin{bmatrix}0&-1\\1&0\end{bmatrix}(\ra^2-(\ra^\dagger)^2)+2\eta\sqrt{\alpha\beta-1}\sqrt{-1}\begin{bmatrix}0&-1\\1&0\end{bmatrix}\biggr), 
\end{align*}
where $\alpha,\beta>0$, $\alpha\beta>1$, $\eta\in\BR$, to make the later formulas simpler. Then these have only discrete spectra, 
and by using the eigenvalues $\{\mu_j\}$ of $\rH=\widetilde{\rH}_{1\QRM}^{(g,\Delta,\varepsilon)}, \widetilde{\rH}_{2\QRM}^{(g,\Delta,\varepsilon)}, \widetilde{\rH}_\NCHO^{(\alpha,\beta,\eta)}$, 
its Hurwitz-type spectral zeta function is defined by 
\[ \zeta(\rH;s,\lambda):=\sum_j \frac{1}{(\mu_j+\lambda)^s}=\Tr\bigl((\rH+\lambda)^{-s}\bigr) \]
for $s,\lambda\in\BC$ when it converges. Such spectral zeta functions are studied by, e.g., \cite{HS, IW2, KW2, KW4, O3, RW2, RW3, S}. 
The purpose of this article is to give explicit expressions of the special values $\zeta(\rH;n,\lambda)$ for $n\in\BZ_{\ge 2}$. 
To describe the main results, for $g\in\BR$, $\lambda,\varepsilon\in\BC$ with $\lambda\pm\varepsilon\notin -\BZ_{\ge 0}$, $m\in\BZ_{\ge 1}$, let 
\begin{align*}
R_m^\flat(\lambda;g,\varepsilon)&:=\Tr\bigl(\bigl((\ra^\dagger\ra+g(\ra+\ra^\dagger)+g^2+\varepsilon+\lambda)^{-1}(\ra^\dagger\ra-g(\ra+\ra^\dagger)+g^2-\varepsilon+\lambda)^{-1}\bigr)^m\bigr), \\
C^\flat(\lambda,\varepsilon)&:=\max_{\delta\in\{\pm \},\ k\in\BZ_{\ge 0}}\frac{1}{|k+\lambda+\delta\varepsilon|}, 
\end{align*}
and for $g\in\BR$, $\lambda,\varepsilon\in\BC$ with $\lambda\pm\varepsilon\notin -\bigl(\BZ_{\ge 0}+\frac{1}{2}\bigr)$, $m\in\BZ_{\ge 1}$, let 
\begin{align*}
R_m^+(\lambda;g,\varepsilon):\hspace{-3pt}&=\Tr\biggl(\biggl(\biggl(\cosh(2g)\biggl(\ra^\dagger\ra+\frac{1}{2}\biggr)+\frac{\sinh(2g)}{2}(\ra^2+(\ra^\dagger)^2)+\varepsilon+\lambda\biggr)^{-1} \\*
&\eqspace{}\times\biggl(\cosh(2g)\biggl(\ra^\dagger\ra+\frac{1}{2}\biggr)-\frac{\sinh(2g)}{2}(\ra^2+(\ra^\dagger)^2)-\varepsilon+\lambda\biggr)^{-1}\biggr)^m\biggr), \\ 
C(\lambda,\varepsilon):\hspace{-3pt}&=\max_{\delta\in\{\pm \},\ k\in\BZ_{\ge 0}}\frac{1}{\bigl|k+\lambda+\delta\varepsilon+\frac{1}{2}\bigr|}. 
\end{align*}
Then the main result of Section \ref{section_zeta_sum} is given as follows. 

\begin{theorem}[Corollary \ref{cor_zeta_sum}]\label{thm_intro1}
\begin{enumerate}
\item Let $g,\Delta,\varepsilon\in\BR$, $\lambda\in\BC$. We assume $\lambda\pm\varepsilon\notin -\BZ_{\ge 0}$ and $|\Delta|<C_\flat(\lambda,\varepsilon)^{-1}$. 
Then for $n\in\BZ_{\ge 2}$, $(\widetilde{\rH}_{1\QRM}^{(g,\Delta,\varepsilon)}+\lambda)^{-n}$ is of trace class, and we have 
\begin{align*}
\zeta\bigl(\widetilde{\rH}_{1\QRM}^{(g,\Delta,\varepsilon)};n,\lambda\bigr) 
=\zeta(n,\lambda+\varepsilon)+\zeta(n,\lambda-\varepsilon)
+\frac{(-1)^n}{(n-1)!}\sum_{m=1}^\infty \frac{\Delta^{2m}}{m}\frac{\partial^n R_m^\flat}{\partial\lambda^n}(\lambda;g,\varepsilon). 
\end{align*}
\item Let $g,\Delta,\varepsilon\in\BR$, $\lambda\in\BC$. 
We assume $\lambda\pm\varepsilon\notin -\bigl(\BZ_{\ge 0}+\frac{1}{2}\bigr)$ and $|\Delta|<C(\lambda,\varepsilon)^{-1}$. 
Then for $n\in\BZ_{\ge 2}$, $(\widetilde{\rH}_{2\QRM}^{(g,\Delta,\varepsilon)}+\lambda)^{-n}$ is of trace class, and we have 
\begin{align*}
&\zeta\bigl(\widetilde{\rH}_{2\QRM}^{(g,\Delta,\varepsilon)};n,\lambda\bigr) \\
&=\zeta\biggl(n,\lambda+\varepsilon+\frac{1}{2}\biggr)+\zeta\biggl(n,\lambda-\varepsilon+\frac{1}{2}\biggr)
+\frac{(-1)^n}{(n-1)!}\sum_{m=1}^\infty \frac{\Delta^{2m}}{m}\frac{\partial^n R_m^+}{\partial\lambda^n}(\lambda;g,\varepsilon). 
\end{align*}
\item Let $\alpha,\beta,\eta\in\BR$, $\lambda\in\BC$. We assume $\alpha\beta>1$, $\lambda\pm 2\eta\notin -\bigl(\BZ_{\ge 0}+\frac{1}{2}\bigr)$ 
and $\bigl|\lambda\frac{\alpha-\beta}{\alpha+\beta}\bigr|C(\lambda,2\eta)\allowbreak<1$. 
Then for $n\in\BZ_{\ge 2}$, $(\widetilde{\rH}_\NCHO^{(\alpha,\beta,\eta)}+\lambda)^{-n}$ is of trace class, and we have 
\begin{multline*}
\zeta\bigl(\widetilde{\rH}_\NCHO^{(\alpha,\beta,\eta)};n,\lambda\bigr)
=\zeta\biggl(n,\lambda+2\eta+\frac{1}{2}\biggr)+\zeta\biggl(n,\lambda-2\eta+\frac{1}{2}\biggr) \\*
{}+\frac{(-1)^n}{(n-1)!}\sum_{m=1}^\infty \frac{1}{m}\biggl(\frac{\alpha-\beta}{\alpha+\beta}\biggr)^{2m}
\frac{\partial^n}{\partial \lambda^n}\lambda^{2m}R_m^+\biggl(\lambda;\frac{1}{2}\artanh\biggl(\frac{1}{\sqrt{\alpha\beta}}\biggr),2\eta\biggr). 
\end{multline*}
\end{enumerate}
\end{theorem}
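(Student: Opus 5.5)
The approach is to write each Hamiltonian as a $\Delta$-perturbation of a block-diagonal operator after a unitary change of basis, and then expand $\log\det$ (equivalently the resolvent trace) in powers of $\Delta$.

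\textbf{Step 1: change of basis.} For 1pQRM we conjugate by the Hadamard-type unitary that diagonalizes $\bsigma_1$. Under this conjugation $\bsigma_1\mapsto\bsigma_3$ and $\bsigma_3\mapsto\bsigma_1$, so
\[
\widetilde{\rH}_{1\QRM}^{(g,\Delta,\varepsilon)}+\lambda \simeq \begin{bmatrix}A_+&\Delta\\ \Delta&A_-\end{bmatrix},\qquad A_\pm:=\ra^\dagger\ra\pm g(\ra+\ra^\dagger)+g^2\pm\varepsilon+\lambda .
\]
Each $A_\pm$ is unitarily equivalent, via a displacement operator, to $\ra^\dagger\ra\pm\varepsilon+\lambda$. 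Hence $\sigma(A_\pm)=\BZ_{\ge0}\pm\varepsilon+\lambda$ and $\|A_\pm^{-1}\|\le C^\flat(\lambda,\varepsilon)$.

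The same conjugation works for 2pQRM. There $B_\pm=\cosh(2g)(\ra^\dagger\ra+\frac12)\pm\frac{\sinh(2g)}2(\ra^2+(\ra^\dagger)^2)\pm\varepsilon+\lambda$, and each $B_\pm$ is a squeeze transform of $\ra^\dagger\ra+\frac12\pm\varepsilon+\lambda$. This gives $\|B_\pm^{-1}\|\le C(\lambda,\varepsilon)$.

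For NCHO we rewrite $\operatorname{diag}(\alpha,\beta)=\frac{\alpha+\beta}2\bI+\frac{\alpha-\beta}2\bsigma_3$. Using the known equivalence with 2pQRM from \cite{N2}, or directly by a squeeze and a rotation in $\BC^2$ that diagonalizes the off-diagonal part, the operator becomes
\[
\begin{bmatrix}B_+&\ast\\ \ast&B_-\end{bmatrix}
\]
with $g=\frac12\artanh(1/\sqrt{\alpha\beta})$ and $\varepsilon=2\eta$. The off-diagonal coupling is $\frac{\alpha-\beta}{\alpha+\beta}$ times a multiple of the full operator. Tracking it shows that it acts like $\Delta=\lambda\frac{\alpha-\beta}{\alpha+\beta}$ after a rescaling. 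This accounts for the factor $\lambda^{2m}$ inside the derivative and for the smallness hypothesis $|\lambda\frac{\alpha-\beta}{\alpha+\beta}|C<1$. I expect this identification to be the main obstacle: verifying exactly which combination becomes off-diagonal, and that the $\lambda$-dependence enters only via $\lambda^{2m}$, requires careful algebra.

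\textbf{Step 2: Schur complement.} With $M=\begin{bmatrix}A_+&\Delta\\\Delta&A_-\end{bmatrix}$ we have
\[
M=\operatorname{diag}(A_+,A_-)\bigl(1+\Delta K\bigr),\qquad K=\begin{bmatrix}0&A_+^{-1}\\ A_-^{-1}&0\end{bmatrix},
\]
and $\|\Delta K\|<1$ by the hypothesis. Therefore $M^{-1}$ exists and can be written as a convergent Neumann series. Since $\partial_\lambda M=1$, we have $M^{-n}=\frac{(-1)^{n-1}}{(n-1)!}\partial_\lambda^{\,n-1}M^{-1}$.

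\textbf{Step 3: the formal expansion.} Formally,
\[
\operatorname{Tr}M^{-n}=\frac{(-1)^n}{(n-1)!}\partial_\lambda^{\,n}\operatorname{Tr}\log M .
\]
Expanding,
\[
\log M=\log\operatorname{diag}(A_\pm)-\sum_m\frac{\Delta^{2m}}{2m}\,\mathrm{tr}\,K^{2m},
\]
since only even powers of $K$ have nonzero trace. Moreover $\mathrm{Tr}\,K^{2m}=2\,\mathrm{Tr}\bigl((A_+^{-1}A_-^{-1})^m\bigr)=2R_m$. The diagonal part contributes $\zeta(n,\lambda\pm\varepsilon)$ (or the $+\frac12$ shifted versions) by the unitary equivalences of Step 1. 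The sign then matches the stated $+\frac{\Delta^{2m}}{m}$ after the factor $(-1)^n$.

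\textbf{Step 4: rigorous justification.} To make this rigorous without $\log\det$, I avoid the logarithm and differentiate the Neumann series termwise. Concretely, $M^{-1}=\sum_k(-\Delta K)^k D^{-1}$ with $D=\operatorname{diag}(A_+,A_-)$. For $n\ge2$, each $\partial_\lambda^{\,n-1}$ of a term is a sum of products containing at least two resolvent factors. Each $A_\pm^{-1}$ lies in Schatten class $S^p$ for every $p>1$, since its singular values are $\asymp 1/k$. So every such product is trace class, with trace norm bounded by $c\,(k+1)^{n}\,|\Delta|^kC^{k}$ times a constant. Summing gives absolute convergence, and hence trace class of $M^{-n}$, under $|\Delta|C<1$.

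The termwise traces are then identified with $\partial_\lambda^{\,n}R_m/m$ by cyclicity of the trace. The key identity is $\operatorname{Tr}\partial_\lambda\bigl((A_+^{-1}A_-^{-1})^m\bigr)=-m\operatorname{Tr}\bigl((A_+^{-1}A_-^{-1})^{m}(A_+^{-1}+A_-^{-1})\bigr)$ and its higher analogues. Finally, $\lambda$-analyticity allows interchanging $\partial_\lambda$ with the sum, by locally uniform convergence in $\lambda$ on the set where $|\Delta|C<1$.
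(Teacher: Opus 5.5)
Parts (1) and (2) of your proposal are sound and follow essentially the paper's proof. The paper also expands the block resolvent as a Neumann series in $\Delta$ around $\operatorname{diag}(\rh_+^{-1},\rh_-^{-1})$. It bounds trace norms by two Hilbert--Schmidt factors times operator norms, and uses cyclicity to identify the $\Delta^{2m}$-coefficient with $\frac{(-1)^n}{(n-1)!\,m}\partial_\lambda^nR_m$. Your identity $\Tr(K^{2m}D^{-1})=-\frac{1}{2m}\partial_\lambda\Tr(K^{2m})$, combined with $M^{-n}=\frac{(-1)^{n-1}}{(n-1)!}\partial_\lambda^{n-1}M^{-1}$, is a compact substitute for the paper's cyclic-group counting.

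The genuine gap is part (3). The NCHO reduction you describe is not correct, and your Step 4 does not apply to it.

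In a unitary frame the coupling is not a scalar. After the Hadamard conjugation of $\widetilde{\rQ}_\nu^{(\alpha,\beta,\eta)}$, the off-diagonal entry is the unbounded operator $\frac{(\alpha+\beta)(\alpha-\beta)}{4\sqrt{\alpha\beta(\alpha\beta-1)}}\bigl(2z\frac{\rd}{\rd z}+\nu\bigr)$. The diagonal blocks then correspond to $\tanh(2g)=\frac{2}{\alpha+\beta}$, not $\frac{1}{\sqrt{\alpha\beta}}$. Nor can any $\lambda$-independent unitary produce the coupling $\lambda\Delta$: $\partial_\lambda(\widetilde{\rQ}_\nu+\lambda)=\bI$, whereas $\partial_\lambda\bigl[\begin{smallmatrix}\rh_+&\Delta\lambda\\\Delta\lambda&\rh_-\end{smallmatrix}\bigr]=\bigl[\begin{smallmatrix}1&\Delta\\\Delta&1\end{smallmatrix}\bigr]$.

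What actually holds, and what the paper uses, is the non-unitary similarity by $B=\frac{1}{\sqrt2}\bigl[\begin{smallmatrix}1&1\\1&-1\end{smallmatrix}\bigr]\bigl[\begin{smallmatrix}\sqrt\alpha&0\\0&\sqrt\beta\end{smallmatrix}\bigr]$:
\[ B(\widetilde{\rQ}_\nu^{(\alpha,\beta,\eta)}+\lambda)B^{-1}=P^{-1}M(\lambda),\qquad P=\begin{bmatrix}1&\Delta\\\Delta&1\end{bmatrix},\quad M(\lambda)=\begin{bmatrix}\rh_+^\nu&\Delta\lambda\\\Delta\lambda&\rh_-^\nu\end{bmatrix},\quad \Delta=\frac{\beta-\alpha}{\alpha+\beta}, \]
with $\rh_\pm^\nu=\rh_\pm^\nu\bigl(\lambda;\frac12\artanh\bigl(\frac{1}{\sqrt{\alpha\beta}}\bigr),2\eta\bigr)$. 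The constant factor $P^{-1}$ cannot be dropped. First, $\Tr\bigl((\widetilde{\rQ}_\nu+\lambda)^{-n}\bigr)=\Tr\bigl((M^{-1}P)^n\bigr)\neq\Tr(M^{-n})$. Second, $\partial_\lambda M=P\neq\bI$, so your formula $M^{-n}=\frac{(-1)^{n-1}}{(n-1)!}\partial_\lambda^{n-1}M^{-1}$ fails. Simply substituting $\Delta\mapsto\lambda\Delta$ into (2) would give $\lambda^{2m}\partial_\lambda^nR_m^+$, with $\lambda^{2m}$ outside the derivative, which is wrong. Your sentence that the rescaling ``accounts for the factor $\lambda^{2m}$ inside the derivative'' is exactly the unproved content of (3).

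Your framework can be repaired as follows. Since $\partial_\lambda(P^{-1}M)=\bI$, one has $(P^{-1}M)^{-n}=\frac{(-1)^{n-1}}{(n-1)!}\partial_\lambda^{n-1}(M^{-1}P)$. With $D=\operatorname{diag}(\rh_+^\nu,\rh_-^\nu)$ and $K=D^{-1}\bigl[\begin{smallmatrix}0&1\\1&0\end{smallmatrix}\bigr]$,
\[ M^{-1}P=\sum_{k=0}^\infty(-\lambda\Delta K)^k\bigl(D^{-1}+\Delta K\bigr). \]
For $m\ge1$, the diagonal (trace-carrying) part of the $\Delta^{2m}$-coefficient is $\lambda^{2m}K^{2m}D^{-1}-\lambda^{2m-1}K^{2m}$. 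By your cyclicity identity,
\[ \Tr\bigl(\lambda^{2m}K^{2m}D^{-1}-\lambda^{2m-1}K^{2m}\bigr)=-\frac{1}{2m}\partial_\lambda\bigl(\lambda^{2m}\Tr(K^{2m})\bigr)=-\frac{1}{m}\partial_\lambda\bigl(\lambda^{2m}R_m^\nu\bigr). \]
Applying $\frac{(-1)^{n-1}}{(n-1)!}\partial_\lambda^{n-1}$ then gives exactly the stated term, with $\lambda^{2m}$ inside $\partial_\lambda^n$. The series converges in trace norm, locally uniformly in $\lambda$, when $|\lambda\Delta|\,C_\nu(\lambda,2\eta)<1$; this is where that hypothesis comes from. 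The paper instead expands $(M^{-1}P)^n$ directly in monomials $(-\Delta\lambda)^k\Delta^l$ and resums them using binomial identities and the Leibniz rule. Summing over $\nu=\frac12,\frac32$ then yields $R_m^+$.
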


Here, $\zeta(s,\lambda):=\sum_{k=0}^\infty (k+\lambda)^{-s}$ is the Hurwitz zeta function. 
Next, in Section \ref{section_int_exp}, we find the integral expressions of $R_m^\flat(\lambda;g,\varepsilon)$, $R_m^+(\lambda;g,\varepsilon)$, by using the Lie semigroup theory. 

\begin{theorem}\label{thm_intro2}
\begin{enumerate}
\item (Theorem \ref{thm_int_exp}\,(1)) Let $g\in\BR$, $\lambda,\varepsilon\in\BC$, $m\in\BZ_{\ge 1}$. If $\Re\lambda-|\Re\varepsilon|>0$, then we have 
\begin{align*}
R_m^\flat(\lambda;g,\varepsilon)
&=\int_{[0,1]^{2m}} \exp\biggl(-4g^2\frac{\Phi_m(u_1,\ldots,u_{2m})}{1-\prod_{j=1}^{2m}u_j}\biggr)\frac{\prod_{j=1}^{2m} u_j^{\lambda-(-1)^j\varepsilon-1}}{1-\prod_{j=1}^{2m}u_j}
\,\prod_{j=1}^{2m}\rd u_j,
\end{align*}
where 
\[ \Phi_m(u_1,\ldots,u_{2m}):=m\biggl(1+\prod_{j=1}^{2m}u_j\biggr)+\sum_{l=1}^{2m-1}(-1)^l\sum_{k=1}^{2m}\prod_{j=k}^{k+l-1}u_j, \]
with $u_{2m+j}:=u_j$ for $j=1,\ldots,2m$. 
\item (Corollary \ref{cor_int_exp}) Let $g\in\BR$, $\lambda,\varepsilon\in\BC$, $m\in\BZ_{\ge 1}$. If $\Re\lambda-|\Re\varepsilon|+\frac{1}{2}>0$, then we have 
\begin{align*}
R_m^+(\lambda;g,\varepsilon)&=\int_{[0,1]^{2m}} \frac{1}{\sqrt{\Psi^g_m(u_1,\ldots,u_{2m})-2}}\prod_{j=1}^{2m} u_j^{\lambda-(-1)^j\varepsilon-1}\,\rd u_j, 
\end{align*}
where 
\begin{align*}
\Psi^g_m(u_1,\ldots,u_{2m}):=\tr\biggl(\prod_{1\le j\le 2m}^{\longrightarrow}\biggl(\begin{bmatrix} \cosh(2g)&(-1)^j\sinh(2g) \\ (-1)^j\sinh(2g)&\cosh(2g) \end{bmatrix}
\begin{bmatrix}u_j^{-1}&0\\0&u_j\end{bmatrix}\biggr)\biggr). 
\end{align*}
\end{enumerate}
\end{theorem}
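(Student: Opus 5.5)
We write each resolvent as a Laplace transform and reduce the trace of the product to a trace of products of explicit semigroup operators. For $\Re\mu>0$ and a positive self-adjoint operator $A$ with $\Re\mu+\inf\sigma(A)>0$ we have
\[(A+\mu)^{-1}=\int_0^1 u^{A+\mu-1}\,\rd u,\]
obtained from $\int_0^\infty e^{-t(A+\mu)}\,\rd t$ with $u=e^{-t}$.

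\textbf{Part (1).} The operators $\ra^\dagger\ra\pm g(\ra+\ra^\dagger)+g^2$ equal $D(\mp g)\,\ra^\dagger\ra\, D(\mp g)^{-1}$, where $D(z)=\exp(z(\ra^\dagger-\ra))$ is the displacement operator. Hence
\[u^{\ra^\dagger\ra\pm g(\ra+\ra^\dagger)+g^2}=D(\mp g)\,u^{\ra^\dagger\ra}\,D(\pm g).\]
Substituting this into the $m$-th power of the product and using cyclicity, the trace becomes an integral over $[0,1]^{2m}$ of
\[\prod_j u_j^{\lambda-(-1)^j\varepsilon-1}\cdot\Tr\Bigl(\prod_j D(c_j)\,u_j^{\ra^\dagger\ra}\Bigr),\qquad c_j=\pm 2g \text{ alternating}.\]
The order of integration and trace may be exchanged by Fubini, since everything is trace class with absolutely integrable norms once $\Re\lambda>|\Re\varepsilon|$. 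We then move all $u_j^{\ra^\dagger\ra}$ to one side using
\[u^{\ra^\dagger\ra}D(c)=D(uc)\,u^{\ra^\dagger\ra}.\]
This produces a single displacement $D(\gamma)$ together with a phase, and $\gamma$ is real, so the phase is trivial because all the $c_j$ are real. Concretely,
\[\gamma=2g\sum_{k}(-1)^k\prod_{j<k}u_j,\qquad \text{operator}=D(\gamma)\,U^{\ra^\dagger\ra},\quad U=\prod_j u_j.\]
The coherent-state trace formula
\[\Tr\bigl(D(\gamma)U^{\ra^\dagger\ra}\bigr)=\frac{1}{1-U}\exp\Bigl(-\frac{|\gamma|^2}{2}\,\frac{1+U}{1-U}\Bigr)\]
then gives the integrand. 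The remaining work is to check that
\[\frac{|\gamma|^2(1+U)}{2}\quad\text{rewritten appropriately equals}\quad 4g^2\,\Phi_m.\]
One must be careful here, since the cyclic symmetrization comes from choosing where to cut the trace. A cleaner route is to compute the Gaussian trace directly via the kernel, or to verify the identity by expanding $\gamma^2(1+U)$ and grouping the cyclic products $\prod_{j=k}^{k+l-1}u_j$ with sign $(-1)^l$. This combinatorial identity is the main obstacle. I would first check it at $m=1$, where $\Phi_1=1+u_1u_2-u_1-u_2=(1-u_1)(1-u_2)$ must match, and then prove the general case by the telescoping
\[(1-U)\,\gamma^2/(4g^2)\;\to\;\text{cyclic sum}.\]

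\textbf{Part (2).} The operators $\cosh(2g)(\ra^\dagger\ra+\frac12)\pm\frac{\sinh(2g)}{2}(\ra^2+(\ra^\dagger)^2)$ are squeeze conjugates $S(\pm g)\,(\ra^\dagger\ra+\frac12)\,S(\pm g)^{-1}$. Therefore
\[u^{(\cdots)}=S(\pm g)\,u^{\ra^\dagger\ra+1/2}\,S(\mp g).\]
These elements lie in the oscillator (Olshanskii) semigroup of the metaplectic representation of $SL(2,\BR)$. Here $u^{\ra^\dagger\ra+1/2}$ corresponds to $\mathrm{diag}(u^{-1},u)$, and $S(g)^2$ corresponds to the hyperbolic matrix
\[\begin{bmatrix}\cosh 2g&\pm\sinh 2g\\ \pm\sinh 2g&\cosh 2g\end{bmatrix}.\]
The representation is a homomorphism on the semigroup up to sign. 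The sign is fixed by continuity from $g=0$, where everything reduces to the positive operators $u^{\ra^\dagger\ra+1/2}$.

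The trace of the image of a hyperbolic-type semigroup element $M$ is known:
\[\frac{1}{\sqrt{\tr M-2}}.\]
One checks this on diagonal $M=\mathrm{diag}(U^{-1},U)$, where it gives
\[\frac{U^{1/2}}{1-U}=\frac{1}{\sqrt{U^{-1}+U-2}},\]
and the general case follows by conjugation invariance. Combining this with Fubini (justified since $\Re\lambda\pm\Re\varepsilon+\frac12>0$) and absorbing the $\frac12$ shifts into the exponents $\lambda-(-1)^j\varepsilon-1$ yields the formula. The delicate point is the justification that the product lies in the contraction semigroup with positive trace branch, which again follows by a continuity argument in $g$.
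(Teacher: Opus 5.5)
Your overall strategy (Laplace transform of the resolvents, Fubini, a trace formula on the Heisenberg semigroup) is the paper's, but part (1) breaks at the step that collapses the product into a single displacement. The relation $u^{\ra^\dagger\ra}D(c)=D(uc)\,u^{\ra^\dagger\ra}$ is false. Conjugation by the non-unitary contraction $u^{\ra^\dagger\ra}$ scales the two halves of $D(c)$ in opposite directions: $u^{\ra^\dagger\ra}\ra^\dagger u^{-\ra^\dagger\ra}=u\ra^\dagger$, but $u^{\ra^\dagger\ra}\ra\,u^{-\ra^\dagger\ra}=u^{-1}\ra$. So the result is not a displacement operator.

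The correct computation writes $D(c)=e^{-c^2/2}e^{c\ra^\dagger}e^{-c\ra}$ and uses $u^{\ra^\dagger\ra}e^{c\ra^\dagger}=e^{uc\ra^\dagger}u^{\ra^\dagger\ra}$ and $e^{-c\ra}u^{\ra^\dagger\ra}=u^{\ra^\dagger\ra}e^{-uc\ra}$, together with normal ordering. For $c_j=2(-1)^jg$ and $U=\prod_ju_j$ this gives
\[ \prod_{1\le j\le 2m}^{\longrightarrow}D(c_j)\,u_j^{\ra^\dagger\ra}=e^{-\beta}\,e^{\alpha\ra^\dagger}\,U^{\ra^\dagger\ra}\,e^{-\alpha'\ra}. \]
Here $\alpha=2g\sum_k(-1)^k\prod_{j<k}u_j$ is your $\gamma$. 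However, $\alpha'=2g\sum_k(-1)^k\prod_{j\ge k}u_j$ is not $U\alpha$. Also $\beta=4g^2\bigl(m+\sum_{k<l}(-1)^{k+l}\prod_{j=k}^{l-1}u_j\bigr)$ is a real scalar, neither a phase nor negligible. This is exactly the element $(a,b,c,d)=(\alpha',\beta,U,\alpha)$ of $G_\flat^\BC$ that the paper computes. By Theorem \ref{thm_Tr_tauflat} its trace is $\frac{1}{1-U}\exp\bigl(-\frac{\alpha\alpha'+\beta(1-U)}{1-U}\bigr)$, so the identity you must prove is $\alpha\alpha'+\beta(1-U)=4g^2\Phi_m$.

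Your single-displacement formula already gives a wrong integrand for $m=1$. Its exponent $-2g^2(1-u_1)^2\frac{1+u_1u_2}{1-u_1u_2}$ is not symmetric in $u_1,u_2$ and differs from $-4g^2\frac{(1-u_1)(1-u_2)}{1-u_1u_2}$; at $u_1=0$ they are $-2g^2$ and $-4g^2(1-u_2)$. So the check you planned would fail. Your worry that the answer depends on where the trace is cut is a symptom of the same error: a correct reduction cannot depend on the cut.

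Part (2) is sound and differs from the paper only in packaging. You use the oscillator semigroup on all of $L^2(\BR)$ with the trace $1/\sqrt{\tr M-2}$. The paper instead computes $R_m^{1/2}$ and $R_m^{3/2}$ separately on $\cH_{1/2}(\bD)$ and $\cH_{3/2}(\bD)$ via $\mu^\nu/(1-\mu^2)$ (Theorem \ref{thm_Tr_taunu}) and adds them, using $\frac{\mu^{1/2}+\mu^{3/2}}{1-\mu^2}=\frac{\mu^{1/2}}{1-\mu}=\frac{1}{\sqrt{\tr M-2}}$. Your route is more direct for $R_m^+$; the paper's also yields $R_m^-$ and the general-$\nu$ formula needed for the $\nu\to\infty$ limit.

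Two small points on part (2). First, checking the trace on diagonal elements and invoking conjugation invariance does not by itself prove it. A hyperbolic element of the semigroup is in general not conjugate to a diagonal one by an element of $SU(1,1)$, because its two fixed points need not be symmetric with respect to the unit circle. Either cite the formula (Howe's oscillator-semigroup trace formula, or Theorem \ref{thm_Tr_taunu} as the paper does) or argue by holomorphy on the semigroup. Second, there is no $\frac12$-shift to absorb: $\mathrm{diag}(u^{-1},u)$ already maps to $u^{\ra^\dagger\ra+1/2}$, so the exponents $\lambda-(-1)^j\varepsilon-1$ appear directly.
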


Here, $\prod_{1\le j\le 2m}^{\rightarrow}A_j:=A_1A_2\cdots A_{2m}$ for non-commutative elements $A_j$. From these theorems, we immediately get the following. 
\begin{corollary}\label{cor_intro}
\begin{enumerate}
\item Let $g,\Delta,\varepsilon\in\BR$, $\lambda\in\BC$. We assume $\Re\lambda-|\varepsilon|>0$ and $|\Delta|<|\lambda-|\varepsilon||$. 
Then for $n\in\BZ_{\ge 2}$, we have 
\begin{multline*}
\zeta\bigl(\widetilde{\rH}_{1\QRM}^{(g,\Delta,\varepsilon)};n,\lambda\bigr) 
=\zeta(n,\lambda+\varepsilon)+\zeta(n,\lambda-\varepsilon)+\frac{1}{\Gamma(n)}\sum_{m=1}^\infty \frac{\Delta^{2m}}{m} \\
{}\times\int_{[0,1]^{2m}} \exp\biggl(-4g^2\frac{\Phi_m(u_1,\ldots,u_{2m})}{1-\prod_{j=1}^{2m}u_j}\biggr)
\frac{\bigl(-\log\prod_{j=1}^{2m}u_j\bigr)^n}{1-\prod_{j=1}^{2m}u_j}\prod_{j=1}^{2m} u_j^{\lambda-(-1)^j\varepsilon-1}\,\rd u_j. 
\end{multline*}
\item Let $g,\Delta,\varepsilon\in\BR$, $\lambda\in\BC$. 
We assume $\Re\lambda-|\varepsilon|+\frac{1}{2}>0$ and $|\Delta|<\bigl|\lambda-|\varepsilon|+\frac{1}{2}\bigr|$. 
Then for $n\in\BZ_{\ge 2}$, we have 
\begin{multline*}
\zeta\bigl(\widetilde{\rH}_{2\QRM}^{(g,\Delta,\varepsilon)};n,\lambda\bigr)
=\zeta\biggl(n,\lambda+\varepsilon+\frac{1}{2}\biggr)+\zeta\biggl(n,\lambda-\varepsilon+\frac{1}{2}\biggr)+\frac{1}{\Gamma(n)}\sum_{m=1}^\infty \frac{\Delta^{2m}}{m} \\
{}\times\int_{[0,1]^{2m}} \frac{1}{\sqrt{\Psi^g_m(u_1,\ldots,u_{2m})-2}}\biggl(-\log\prod_{j=1}^{2m}u_j\biggr)^n\prod_{j=1}^{2m} u_j^{\lambda-(-1)^j\varepsilon-1}\,\rd u_j. 
\end{multline*}
\end{enumerate}
\end{corollary}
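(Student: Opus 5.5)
We plug the integral expressions of Theorem \ref{thm_intro2} into the series of Theorem \ref{thm_intro1} and differentiate under the integral sign. Fix $n\ge 2$ and $m\ge 1$, and write $U:=\prod_{j=1}^{2m}u_j$. The integrands in Theorem \ref{thm_intro2} depend on $\lambda$ only through the factor
\[
\prod_{j=1}^{2m}u_j^{\lambda-(-1)^j\varepsilon-1}=U^{\lambda}\prod_{j=1}^{2m}u_j^{-(-1)^j\varepsilon-1}.
\]
Hence, formally,
\[
\frac{\partial^n}{\partial\lambda^n}\prod_{j=1}^{2m}u_j^{\lambda-(-1)^j\varepsilon-1}=(\log U)^n\prod_{j=1}^{2m}u_j^{\lambda-(-1)^j\varepsilon-1}.
\]
Using $(-1)^n(\log U)^n=(-\log U)^n$ and $(n-1)!=\Gamma(n)$, the prefactor $\frac{(-1)^n}{(n-1)!}$ in Theorem \ref{thm_intro1} turns each term into exactly the integral stated in Corollary \ref{cor_intro}.

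The first step is to check that the hypotheses of Theorem \ref{thm_intro1} hold.

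For (1), $\varepsilon$ is real and $\Re\lambda>|\varepsilon|$. Then $\Re(k+\lambda\pm\varepsilon)\ge\Re\lambda-|\varepsilon|>0$, so $\lambda\pm\varepsilon\notin-\BZ_{\ge0}$. Moreover,
\[
|k+\lambda\pm\varepsilon|\ge \Re(k+\lambda\pm\varepsilon)\ge \Re\lambda-|\varepsilon|,
\]
which gives $C^\flat(\lambda,\varepsilon)^{-1}\ge\Re\lambda-|\varepsilon|$. I read the assumption $|\Delta|<|\lambda-|\varepsilon||$ as intended to ensure $|\Delta|<C^\flat(\lambda,\varepsilon)^{-1}$. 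For real $\lambda$ this is immediate. For complex $\lambda$ the minimum of $|k+\lambda\pm\varepsilon|$ over $k\ge0$ and the two signs is at least the distance from $\lambda$ to the set $\{\pm|\varepsilon|-k\}$. I would check this carefully, and if needed replace the condition by the one from Theorem \ref{thm_intro1}.

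Case (2) is identical after shifting by $\tfrac12$, using $C(\lambda,\varepsilon)$. The condition $\Re\lambda-|\Re\varepsilon|(+\tfrac12)>0$ needed for Theorem \ref{thm_intro2} also holds. In fact Theorem \ref{thm_intro2} holds on the open half-plane $\{\Re\lambda>|\Re\varepsilon|\}$ (resp. with $+\tfrac12$), so the identity of holomorphic functions of $\lambda$ there may be differentiated in $\lambda$.

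The second step is to justify differentiating under the integral sign. Take $\lambda$ in a small compact neighbourhood $K$ inside the half-plane, with $\Re\lambda\ge\sigma_0>|\varepsilon|$ on $K$. The factors $\exp(-4g^2\Phi_m/(1-U))$ and $(\Psi_m^g-2)^{-1/2}$ do not depend on $\lambda$. The first is bounded by $1$, provided $\Phi_m\ge0$, which should follow from $R_m^\flat$ being a trace of a positive operator for real $\lambda$; I would verify it directly, or note that Theorem \ref{thm_intro2} already asserts convergence. The derivative integrand is then dominated by
\[
|\log U|^n\,\frac{\prod_j u_j^{\sigma_0-(-1)^j\varepsilon-1}}{1-U}
\]
in case (1), with the analogous bound in case (2). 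Since $|\log U|\le C_\delta U^{-\delta}$ for any $\delta>0$, this majorant is bounded by the integrand of Theorem \ref{thm_intro2} at a slightly smaller real $\lambda=\sigma_0-\delta$ and $g=0$ (in case (1)), or at the same $g$ (in case (2)). That integral is finite by Theorem \ref{thm_intro2} itself, so dominated convergence applies.

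I expect the main obstacle to be this domination argument, specifically the integrability near the singular locus: $U\to1$ in case (1), where there is a factor $1/(1-U)$, and $\Psi^g_m\to2$ in case (2). Bounding by the $\lambda$-shifted integral from Theorem \ref{thm_intro2} avoids any direct analysis of $\Phi_m$ or $\Psi^g_m$. Finally, substituting the differentiated expressions termwise into the $m$-series of Theorem \ref{thm_intro1} is legitimate because that series is already asserted to converge; no interchange of the sum and the integral is needed. This yields Corollary \ref{cor_intro}.
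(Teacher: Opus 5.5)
Your proposal is correct and follows the paper's own (implicit) argument: insert the integral formulas of Theorem \ref{thm_intro2} into the series of Theorem \ref{thm_intro1}, then differentiate $n$ times in $\lambda$ under the integral sign, which your domination argument justifies. The check you left open is in fact an equality, and the other loose end is also covered:
\begin{itemize}
\item For real $\varepsilon$ and $\Re\lambda>|\varepsilon|$, the points $-k\mp\varepsilon$ all lie on the real axis at or to the left of $|\varepsilon|$. Hence $C_\flat(\lambda,\varepsilon)^{-1}=|\lambda-|\varepsilon||$, and likewise $C(\lambda,\varepsilon)^{-1}=|\lambda-|\varepsilon|+\frac{1}{2}|$.
\item The bound $\Phi_m\ge 0$ follows from Lemma \ref{lem_Trtau}\,(1) via $|\Tr(\cdot)|\le\Vert\cdot\Vert_{\Tr}$. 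Alternatively, you can bypass it by comparing with the positive integrand at real $\lambda$ and the same $g$.
\end{itemize}
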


These theorems refine the result by Kimoto--Wakayama \cite{KW4} on the spectral zeta function of NCHO. 
We can also compute the spectral zeta function of 1pQRM by using the heat kernel given by Reyes-Bustos--Wakayama \cite{RW2, RW1, RW3}, 
and it is desirable to verify the coincidence of our special values and those computed by using the heat kernel for 1pQRM.  

Especially, the 1st term of $\zeta\bigl(\widetilde{\rH}_{1\QRM}^{(g,\Delta,\varepsilon)};n,\lambda\bigr)$ is given by the differential with respect to $\lambda$ of 
\begin{align*}
R_1^\flat(\lambda;g,\varepsilon)&=\int_{[0,1]^2}\exp\biggl(-4g^2\frac{(1-u)(1-v)}{1-uv}\biggr)\frac{u^{\lambda+\varepsilon-1}v^{\lambda-\varepsilon-1}}{1-uv}\,\rd u\rd v \\
&=\sum_{n=0}^\infty \frac{(-4g^2)^n}{n!}\int_{[0,1]^2} \frac{(1-u)^n(1-v)^n}{(1-uv)^{n+1}}u^{\lambda+\varepsilon-1}v^{\lambda-\varepsilon-1}\,\rd u\rd v. 
\end{align*}
Then each term of this sum is regarded as a generalization of Beukers' integral \cite{Be1}, 
\begin{equation}\label{intro_Beukers}
(-1)^n\int_{[0,1]^2} \frac{(1-u)^n(1-v)^nu^nv^n}{(1-uv)^{n+1}}\,\rd u\rd v=A_n\zeta(2)-B_n, 
\end{equation}
used for the proof of the irrationality of $\zeta(2)$, clarifying the Ap\'ery's work \cite{Ap, vdP}. 
This work is also studied via the modularity of the generating function $\sum_{n=0}^\infty A_nt^n$ of $\{A_n\}$ by \cite{Be2, Be3}. 
In this article, we seek an expression analogous to (\ref{intro_Beukers}) of each term of $R_1^\flat(\lambda;g,\varepsilon)$ for general $\lambda,\varepsilon$ in Section \ref{section_1pQRM}. 

Similarly, the 1st term of $\zeta\bigl(\widetilde{\rH}_{2\QRM}^{(g,\Delta,\varepsilon)};n,\lambda\bigr)$ is given by the differential with respect to $\lambda$ of 
\begin{align*}
R_1^+(\lambda;g,\varepsilon)&=\sech(2g)\int_{[0,1]^2} \frac{u^{\lambda+\varepsilon-\frac{1}{2}}v^{\lambda-\varepsilon-\frac{1}{2}}}{\sqrt{(1-uv)^2-\tanh^2(2g)(u-v)^2}}\,\rd u\rd v \\
&=\sech(2g)\sum_{n=0}^\infty \frac{\bigl(\frac{1}{2}\bigr)_n}{n!}\tanh^{2n}(2g)\int_{[0,1]^2} \frac{(u-v)^{2n}}{(1-uv)^{2n+1}}
u^{\lambda+\varepsilon-\frac{1}{2}}v^{\lambda-\varepsilon-\frac{1}{2}}\,\rd u\rd v, 
\end{align*}
where $\bigl(\frac{1}{2}\bigr)_n:=\frac{1}{2}\cdot\frac{3}{2}\cdots\bigl(n-\frac{1}{2}\bigr)$. 
Then again each term of this sum is regarded as an analogue of Beukers' integral (``Ap\'ery-like numbers'' in the terminology of Kimoto--Wakayama \cite{KW4}). 
Its modularity is studied by \cite{KW2} for $\lambda=\varepsilon=0$, and by \cite{KW4} for $\varepsilon=0$. 
In this article, we seek an expression analogous to (\ref{intro_Beukers}) of each term of $R_1^+(\lambda;g,\varepsilon)$ for general $\lambda,\varepsilon$ in Section \ref{section_2pQRM}. 

The Hamiltonian $\widetilde{\rH}_{1\QRM}^{(g,\Delta,\varepsilon)}$ on $L^2(\BR)\otimes\BC^2$ is unitarily equivalent to the $\BC^2$-valued differential operator 
$\widetilde{\rH}_{\flat}^{(g,\Delta,\varepsilon)}$ on $\BC$ (see (\ref{def_Hflat})) via the Bargmann transform. 
Similarly, the restrictions of the Hamiltonian $\widetilde{\rH}_{2\QRM}^{(g,\Delta,\varepsilon)}$ on $L^2(\BR)_\even\otimes\BC^2$ and $L^2(\BR)_\odd\otimes\BC^2$ are 
unitarily equivalent to the $\BC^2$-valued differential operators $\widetilde{\rH}_{\nu}^{(g,\Delta,\varepsilon)}$ on the unit disc $\bD$ (see (\ref{def_Hnu})) with $\nu=1/2, 3/2$ respectively, 
via the Laplace and Cayley transforms, so that we have 
\begin{align*}
\zeta\bigl(\widetilde{\rH}_{1\QRM}^{(g,\Delta,\varepsilon)};s,\lambda\bigr)&=\zeta\bigl(\widetilde{\rH}_{\flat}^{(g,\Delta,\varepsilon)};s,\lambda\bigr), \\
\zeta\bigl(\widetilde{\rH}_{2\QRM}^{(g,\Delta,\varepsilon)};s,\lambda\bigr)
&=\zeta\bigl(\widetilde{\rH}_{1/2}^{(g,\Delta,\varepsilon)};s,\lambda\bigr)+\zeta\bigl(\widetilde{\rH}_{3/2}^{(g,\Delta,\varepsilon)};s,\lambda\bigr). 
\end{align*}
The operator $\widetilde{\rH}_{\nu}^{(g,\Delta,\varepsilon)}$ is defined for all $\nu>0$, and by using the dilation map 
\[ \rho_\nu\colon\cO(\bD)\longrightarrow\cO(\sqrt{\nu}\bD), \qquad f(z)\longmapsto f(w/\sqrt{\nu}), \]
the composition $\rho_\nu\circ\widetilde{\rH}_{\nu}^{(g,\Delta,\varepsilon)}\circ\rho_\nu^{-1}$ defines a $\BC^2$-valued differential operator on $\sqrt{\nu}\bD$. 
Then by \cite{N2}, as a formal differential operator, we have the limit 
\[ \lim_{\nu\to\infty}\frac{1}{2}\bigl(\rho_\nu\circ\widetilde{\rH}_{\nu}^{(g/\sqrt{\nu},2\Delta,2\varepsilon)}\circ\rho_\nu^{-1}-\nu\bigr)
=\widetilde{\rH}_{\flat}^{(g,\Delta,\varepsilon)}. \]
This refines the confluence process given in \cite{RW, W}. Especially, it is expected that 
\[ 2^s\lim_{\nu\to\infty}\zeta\bigl(\widetilde{\rH}_{\nu}^{(g/\sqrt{\nu},2\Delta,2\varepsilon)};s,2\lambda-\nu\bigr)
=\zeta\bigl(\widetilde{\rH}_{\flat}^{(g,\Delta,\varepsilon)};s,\lambda\bigr) \]
holds. This is verified in Corollary \ref{cor_zeta}\,(3) for $s=n\in\BZ_{\ge 2}$ and for $\lambda,\Delta,\varepsilon$ satisfying the conditions in Corollary \ref{cor_intro}\,(1). 

In this article, the connection of $\zeta\bigl(\widetilde{\rH}_{\nu}^{(g,\Delta,\varepsilon)};n,\lambda\bigr)$ and the ``Ap\'ery-like numbers'' are discussed only when $\nu=1/2,3/2$, 
which correspond to the original 2pQRM on $L^2(\BR)\otimes\BC^2$. However, if $\nu\in\frac{1}{2}\BZ_{\ge 1}$, then $\widetilde{\rH}_{\nu}^{(g,\Delta,\varepsilon)}$ is unitarily equivalent to 
a restriction of a multivariate generalization of 2pQRM, by using the theory of spherical harmonics (see \cite{N} and Remarks \ref{rem_sph_harm}, \ref{rem_multivariate} below). 
Hence it is desirable to find a connection with the ``Ap\'ery-like numbers'' for $\nu$ other than $1/2,3/2$.

\section{Preliminaries}
\subsection{Trace class norm, Hilbert--Schmidt norm and Schatten $p$-norm}

In this subsection, we review the trace class, the Hilbert--Schmidt, and the Schatten $p$-norms from, e.g., \cite{Sch}. 
Let $\cH$ be a separable Hilbert space equipped with the inner product $\langle\cdot,\cdot\rangle$ and the norm $\Vert v\Vert:=\sqrt{\langle v,v\rangle}$. 
Throughout the article, for a bounded linear operator $\rA$ on $\cH$, its operator norm, trace class norm, and Hilbert--Schmidt norm are denoted by 
$\Vert \rA\Vert_\op$, $\Vert \rA\Vert_{\Tr}$, and $\Vert \rA\Vert_\HS$ respectively. 
\begin{align*}
\Vert \rA\Vert_\op:=\sup_{v\in\cH,\; \Vert v\Vert=1} \Vert \rA v\Vert, \quad 
\Vert \rA\Vert_{\Tr}:=\sum_k \bigl\langle \sqrt{\rA^\dagger \rA}v_k,v_k\bigr\rangle, \quad
\Vert \rA\Vert_\HS:=\biggl(\sum_k \Vert \rA v_k\Vert^2\biggr)^{1/2}. 
\end{align*}
Here, $\{v_k\}\subset\cH$ is a fixed orthonormal basis. Then $\Vert \rA\Vert_{\Tr}, \Vert \rA\Vert_\HS$ do not depend on the choice of $\{v_k\}$. 
If $\Vert \rA\Vert_{\Tr}<\infty$, then its trace is denoted by $\Tr(\rA)$. 
\[ \Tr(\rA):=\sum_k \langle \rA v_k,v_k\rangle. \]
Again this does not depend on the choice of $\{v_k\}$. We use $\tr(A)$ for the usual matrix trace, and distinguish with the operator trace $\Tr(\rA)$. 
For two bounded operators $\rA,\rB$, the following inequalities hold. 
\begin{align*}
\Vert\rA\rB\Vert_{\Tr}\le \Vert\rA\Vert_{\Tr}\Vert\rB\Vert_{\op}, \qquad
\Vert\rA\rB\Vert_{\Tr}\le \Vert\rA\Vert_{\op}\Vert\rB\Vert_{\Tr}, \qquad
\Vert\rA\rB\Vert_{\Tr}\le \Vert\rA\Vert_{\HS}\Vert\rB\Vert_{\HS}. 
\end{align*}
More generally, for $1\le p\le\infty$, let $\Vert\rA\Vert_p$ denote the Schatten $p$-norm, 
\[ \Vert\rA\Vert_p:=\bigl(\Tr((\rA^\dagger\rA)^{p/2})\bigr)^{1/p}, \]
so that $\Vert\rA\Vert_1=\Vert\rA\Vert_{\Tr}$, $\Vert\rA\Vert_2=\Vert\rA\Vert_\HS$, $\Vert\rA\Vert_\infty=\Vert\rA\Vert_\op$ hold. 
When $\frac{1}{p}+\frac{1}{q}=\frac{1}{r}$, we have H\"older's inequality, 
\[ \Vert\rA\rB\Vert_r\le \Vert\rA\Vert_p\Vert\rB\Vert_q \]
(see, e.g., \cite[Corollaire 4.4]{Fac}, \cite[Theorem 4.2]{FacKos}), and recursively, when $\frac{1}{p_1}+\cdots+\frac{1}{p_n}=1$, we have 
\begin{equation}\label{Holder_ineq}
\Vert\rA_1\rA_2\cdots\rA_n\Vert_{\Tr}\le \Vert\rA_1\Vert_{p_1}\Vert\rA_2\Vert_{p_2}\cdots\Vert\rA_n\Vert_{p_n}. 
\end{equation}

Next, suppose that $\rA$ is a self-adjoint operator densely defined on $\cH$ which has only discrete spectra $\{\mu_j\}$ with no accumulation points. 
Then for a resolvent $\lambda\in\BC$ of $-\rA$ and for $s\in\BC$, the Hurwitz-type spectral zeta function of $\rA$ is defined as 
\[ \zeta(\rA;s,\lambda):=\sum_j\frac{1}{(\mu_j+\lambda)^s}=\Tr\bigl((\rA+\lambda)^{-s}\bigr) \]
if it converges. Especially, if $(\rA+\lambda)^{-s_0}$ is of trace class for some $s_0\in\BR$, then $\zeta(\rA;s,\lambda)$ converges for $\Re s\ge s_0$. 
Its analytic continuation for $\Re s<s_0$ is also denoted by $\zeta(\rA;s,\lambda)$.

\subsection{Semigroup acting on Fock space}

We recall the \textit{Fock space} from, e.g., \cite[Section 1.6]{F}, 
\[ \cH_\flat(\BC):=\{ f\in\cO(\BC)\mid \Vert f\Vert_\flat^2:=\langle f,f\rangle_\flat<\infty\}, \]
defined by the inner product 
\[ \langle f,g\rangle_\flat:=\frac{1}{\pi}\int_\BC f(w)\overline{g(w)}e^{-|w|^2}\,\rd w \qquad (f,g\in\cO(\BC)). \]
Then $\cH_\flat(\BC)$ is unitarily isomorphic to $L^2(\BR)$ via the \textit{Bergmann transform}
\[ \cB\colon L^2(\BR)\longrightarrow\cH_\flat(\BC), \qquad (\cB f)(w):=\pi^{-1/4}\int_\BR f(x)e^{-(x^2+w^2)/2+\sqrt{2}xw}\,\rd x, \]
and the annihilation and creation operators on $L^2(\BR)$, 
\[ \ra:=\frac{1}{\sqrt{2}}\biggl(x+\frac{\rd}{\rd x}\biggr), \qquad \ra^\dagger:=\frac{1}{\sqrt{2}}\biggl(x-\frac{\rd}{\rd x}\biggr) \]
satisfy 
\[ (\cB\ra f)(w)=\frac{\rd}{\rd w}(\cB f)(w), \qquad (\cB\ra^\dagger f)(w)=w(\cB f)(w). \]

Next, let $G^\BC_\flat$ and $\fg^\BC_\flat$ be the complex Lie group and its Lie algebra given by 
\[ G^\BC_\flat:=\left\{ \begin{bmatrix} 1&a&b\\0&c&d\\0&0&1 \end{bmatrix}\ \middle|\ \begin{matrix} a,b,d\in\BC,\\ c\in\BC^\times \end{matrix} \right\}, \qquad 
\fg^\BC_\flat:=\left\{ \begin{bmatrix} 0&a&b\\0&c&d\\0&0&0 \end{bmatrix}\ \middle|\ \begin{matrix} a,b,c,d\in\BC \end{matrix} \right\}. \]
$G^\BC_\flat$ is the semidirect product of $\Bigl\{\Bigl[\begin{smallmatrix}1&0&0\\0&c&0\\0&0&1\end{smallmatrix}\Bigr]\Bigr\}\simeq\BC^\times$ 
and the Heisenberg group $\Bigl\{\Bigl[\begin{smallmatrix}1&a&b\\0&1&d\\0&0&1\end{smallmatrix}\Bigr]\Bigr\}$. 
Next we define the involutive anti-automorphism $\ddagger$ on $G^\BC_\flat$ and $\fg^\BC_\flat$ by 
\[ \begin{bmatrix} 1&a&b\\0&c&d\\0&0&1 \end{bmatrix}^\ddagger :=\begin{bmatrix} 1&-\overline{d}&\overline{b}\\0&\overline{c}&-\overline{a}\\0&0&1 \end{bmatrix}, \qquad
\begin{bmatrix} 0&a&b\\0&c&d\\0&0&0 \end{bmatrix}^\ddagger :=\begin{bmatrix} 0&-\overline{d}&\overline{b}\\0&\overline{c}&-\overline{a}\\0&0&0 \end{bmatrix}, \]
and let 
\[ G_\flat:=\{ \gamma\in G^\BC_\flat \mid \gamma^\ddagger=\gamma^{-1} \}, \qquad \fg_\flat:=\{ \gamma\in \fg^\BC_\flat \mid \gamma^\ddagger=-\gamma \}. \]
Then $G_\flat^\BC$ and $\fg_\flat^\BC$ act on the space $\cO(\BC)$ of entirely holomorphic functions on $\BC$ by 
\begin{align*}
\left(\tau_\flat\left(\begin{bmatrix} 1&a&b\\0&c&d\\0&0&1 \end{bmatrix}\right)f\right)(w)&:=e^{dw-b}f(cw-a), \\
\left(\rd\tau_\flat\left(\begin{bmatrix} 0&a&b\\0&c&d\\0&0&0 \end{bmatrix}\right)f\right)(w)&:=\biggl[-a\frac{\rd}{\rd w}-b+cw\frac{\rd}{\rd w}+dw\biggr]f(w), 
\end{align*}
and $G_\flat$ acts unitarily on $\cH_\flat(\BC)$, namely, 
for $\gamma=\left[\begin{smallmatrix}1&a&|a|^2/2+\sqrt{-1}b\\0&c&\overline{a}c\\0&0&1\end{smallmatrix}\right]\in G_\flat$ with $a\in\BC$, $b\in\BR$, $|c|=1$, we have 
\begin{align*}
\Vert \tau_\flat(\gamma)f\Vert_\flat^2&=\frac{1}{\pi}\int_\BC \bigl|e^{\overline{a}cw-(|a|^2/2+\sqrt{-1}b)}f(cw-a)\bigr|^2 e^{-|w|^2}\,\rd w \\
&=\frac{1}{\pi}\int_\BC |f(cw-a)|^2 e^{-|cw-a|^2}\,\rd w=\frac{1}{\pi}\int_\BC |f(w)|^2 e^{-|w|^2}\,\rd w=\Vert f\Vert_\flat^2. 
\end{align*}
Also, as an unbounded operator on $\cH_\flat(\BC)$, $\rd\tau_\flat(\gamma)^\dagger=\rd\tau_\flat(\gamma^\ddagger)$ holds for all $\gamma\in\fg^\BC_\flat$. 

Next, let $\Gamma_\flat^\circ\subset G^\BC_\flat$ be the semigroup given by 
\[ \Gamma_\flat^\circ:=\left\{\gamma=\begin{bmatrix}1&a&b\\0&c&d\\0&0&1\end{bmatrix}\in G^\BC_\flat\ \middle|\ |\det(\gamma)|=|c|<1\right\}. \]
Then the following holds. 

\begin{lemma}
For $\gamma\in\Gamma_\flat^\circ$, we have $\tau_\flat(\gamma)(\cH_\flat(\BC))\subset\cH_\flat(\BC)$, 
and as an operator on $\cH_\flat(\BC)$, we have $\tau_\flat(\gamma)^\dagger=\tau_\flat(\gamma^\ddagger)$. 
\end{lemma}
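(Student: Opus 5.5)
The plan is to check both claims directly, using the reproducing kernel of $\cH_\flat(\BC)$. Put $K_z(w):=e^{w\overline{z}}$. Then $\langle f,K_z\rangle_\flat=f(z)$ and $\Vert K_z\Vert_\flat^2=e^{|z|^2}$, so Cauchy--Schwarz gives the pointwise bound
\[ |f(z)|\le \Vert f\Vert_\flat\, e^{|z|^2/2} \qquad (f\in\cH_\flat(\BC)). \]
Finite linear combinations of the $K_z$ are dense in $\cH_\flat(\BC)$, because their orthogonal complement consists of functions vanishing everywhere.

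\textbf{Step 1: $\tau_\flat(\gamma)$ is bounded on $\cH_\flat(\BC)$.} Let $\gamma\in\Gamma_\flat^\circ$ and $f\in\cH_\flat(\BC)$. Then $\tau_\flat(\gamma)f(w)=e^{dw-b}f(cw-a)$ is entire, and the pointwise bound gives
\[ |\tau_\flat(\gamma)f(w)|^2e^{-|w|^2}\le \Vert f\Vert_\flat^2\, e^{-2\Re b}\exp\bigl(2\Re(dw)+|cw-a|^2-|w|^2\bigr). \]
The exponent equals $-(1-|c|^2)|w|^2$ plus terms of degree at most one in $w$. Since $|c|<1$ this is a negative definite quadratic plus linear terms, so the right-hand side is integrable over $\BC$. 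Hence $\Vert\tau_\flat(\gamma)f\Vert_\flat\le C_\gamma\Vert f\Vert_\flat$ with a finite constant $C_\gamma$ given by an explicit Gaussian integral. Thus $\tau_\flat(\gamma)$ maps $\cH_\flat(\BC)$ into itself as a bounded operator.

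\textbf{Step 2: the adjoint on kernel functions.} Since $\gamma^\ddagger$ has diagonal entry $\overline{c}$ and $|\overline{c}|=|c|<1$, we also have $\gamma^\ddagger\in\Gamma_\flat^\circ$, so $\tau_\flat(\gamma^\ddagger)$ is bounded by Step 1. By density, it suffices to show $\tau_\flat(\gamma)^\dagger K_z=\tau_\flat(\gamma^\ddagger)K_z$ for all $z\in\BC$. For any $f$,
\[ \langle f,\tau_\flat(\gamma)^\dagger K_z\rangle_\flat=(\tau_\flat(\gamma)f)(z)=e^{dz-b}f(cz-a)=\bigl\langle f,\ e^{\overline{d}\,\overline{z}-\overline{b}}K_{cz-a}\bigr\rangle_\flat. \]
Therefore
\[ (\tau_\flat(\gamma)^\dagger K_z)(w)=\exp\bigl(\overline{d}\,\overline{z}-\overline{b}+w(\overline{c}\,\overline{z}-\overline{a})\bigr). \]
On the other hand, $\gamma^\ddagger$ has entries $(-\overline{d},\overline{b},\overline{c},-\overline{a})$ in the positions of $(a,b,c,d)$. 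So by the definition of $\tau_\flat$,
\[ (\tau_\flat(\gamma^\ddagger)K_z)(w)=e^{-\overline{a}w-\overline{b}}K_z(\overline{c}w+\overline{d})=\exp\bigl(-\overline{a}w-\overline{b}+(\overline{c}w+\overline{d})\overline{z}\bigr). \]
The two expressions agree. Both operators are bounded and coincide on a dense set, so $\tau_\flat(\gamma)^\dagger=\tau_\flat(\gamma^\ddagger)$.

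The only step needing care is Step 1, namely verifying integrability from $|c|<1$. Once the pointwise kernel bound is in hand this is an elementary Gaussian estimate, so I do not expect a serious obstacle.
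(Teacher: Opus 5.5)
Your proof is correct, and it differs from the paper's on both halves. For the mapping property, the paper substitutes $w\mapsto(w+a)/c$ and compares Gaussian weights. It proves more generally that $\tau_\flat(\gamma)$ sends the larger space $\cH_\flat(\BC)(r)$ (weight $e^{-|w|^2/r^2}$) into $\cH_\flat(\BC)$ whenever $|c|<r$. You instead insert the reproducing-kernel bound $|f(z)|\le\Vert f\Vert_\flat e^{|z|^2/2}$ into the norm integral, which gives boundedness on $\cH_\flat(\BC)$ directly.

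For the adjoint, the paper argues by analytic continuation. On the unitary group $G_\flat$ one has $\tau_\flat(\gamma)^\dagger=\tau_\flat(\gamma^{-1})=\tau_\flat(\gamma^\ddagger)$. Since $G_\flat$ is totally real in the connected set $\Gamma_\flat^\circ(r)$ for $r>1$, the identity $\langle\tau_\flat(\gamma)f,g\rangle_\flat=\langle f,\tau_\flat(\gamma^\ddagger)g\rangle_\flat$ for $f,g\in\cH_\flat(\BC)(r)$ extends to all $\gamma\in\Gamma_\flat^\circ(r)$. The paper then concludes using density of $\cH_\flat(\BC)(r)$. You bypass this by checking the identity on the kernels $K_z$, where $\tau_\flat(\gamma)^\dagger K_z=e^{\overline{d}\,\overline{z}-\overline{b}}K_{cz-a}$ is a one-line computation.

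Your route is shorter and needs no auxiliary spaces $\cH_\flat(\BC)(r)$. It also avoids having to justify holomorphy in $\gamma$ of the matrix coefficients, which the paper uses tacitly, and it gives an explicit formula for the adjoint on kernels. The paper's argument is the standard holomorphic-semigroup mechanism, where $\ddagger$-covariance is inherited from unitarity on the real form. It mirrors the viewpoint the paper takes from FKKLR for $\tau_\nu$ on $\cH_\nu(\bD)$ and does not rely on an explicit reproducing-kernel computation.
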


\begin{proof}
For $r>0$, let 
\begin{align*}
\cH_\flat(\BC)(r)&:=\biggl\{ f\in\cO(\BC)\biggm| \frac{1}{\pi}\int_\BC f(w)\overline{g(w)}e^{-|w|^2/r^2}\,\rd w<\infty\biggr\}, \\
\Gamma_\flat^\circ(r)&:=\left\{\gamma=\begin{bmatrix}1&a&b\\0&c&d\\0&0&1\end{bmatrix}\in G^\BC_\flat\ \middle|\ |\det(\gamma)|=|c|<r\right\}, 
\end{align*}
so that $\cH_\flat(\BC)(1)=\cH_\flat(\BC)$, $\Gamma_\flat^\circ(1)=\Gamma_\flat^\circ$. Then for $r>0$, $f\in\cH_\flat(\BC)(r)$, 
$\gamma=\Bigl[\begin{smallmatrix}1&a&b\\0&c&d\\0&0&1\end{smallmatrix}\Bigr]\in\Gamma_\flat^\circ(r)$, we have 
\begin{align*}
&\Vert\tau_\flat(\gamma)f\Vert_\flat^2 \\
&=\frac{1}{\pi}\int_\BC \bigl|e^{dw-b}f(cw-a)\bigr|^2e^{-|w|^2}\,\rd w
=\frac{1}{|c|^2\pi}\int_\BC \bigl|e^{d(w+a)/c-b}f(w)\bigr|^2e^{-|w+a|^2/|c|^2}\,\rd w \\
&\le \frac{1}{|c|^2}\max_{w\in\BC} \exp\biggl(2\Re\biggl(\frac{d(w+a)}{c}-b\biggr)-\biggl(\frac{1}{|c|^2}-\frac{1}{r^2}\biggr)|w|^2-2\frac{\Re w\overline{a}}{|c|^2}-\frac{|a|^2}{|c|^2}\biggr) \\
&\eqspace{}\times\frac{1}{\pi}\int_\BC |f(w)|^2e^{-|w|^2/r^2}\,\rd w<\infty, 
\end{align*}
and hence $\tau_\flat(\Gamma_\flat^\circ(r))(\cH_\flat(\BC)(r))\subset\cH_\flat(\BC)$. Next, since $G_\flat$ acts unitarily on $\cH_\flat(\BC)$, we have 
\[ \langle \tau_\flat(\gamma)f,g\rangle_\flat=\langle f,\tau_\flat(\gamma^{-1})g\rangle_\flat=\langle f,\tau_\flat(\gamma^\ddagger)g\rangle_\flat 
\qquad (f,g\in\cH_\flat(\BC),\; \gamma\in G_\flat), \]
and for $r>1$, since $\cH_\flat(\BC)(r)\subset\cH_\flat(\BC)$ holds and $G_\flat\subset\Gamma_\flat^\circ(r)$ is totally real, we have 
\[ \langle \tau_\flat(\gamma)f,g\rangle_\flat=\langle f,\tau_\flat(\gamma^\ddagger)g\rangle_\flat 
\qquad (f,g\in\cH_\flat(\BC)(r),\; \gamma\in \Gamma_\flat^\circ(r)). \]
Then since $\Gamma_\flat^\circ\subset\Gamma_\flat^\circ(r)$ and $\cH_\flat(\BC)(r)$ is dense in $\cH_\flat(\BC)$, we get 
\[ \langle \tau_\flat(\gamma)f,g\rangle_\flat=\langle f,\tau_\flat(\gamma^\ddagger)g\rangle_\flat 
\qquad (f,g\in\cH_\flat(\BC),\; \gamma\in \Gamma_\flat^\circ), \]
namely, $\tau_\flat(\gamma)^\dagger=\tau_\flat(\gamma^\ddagger)$ holds. 
\end{proof}

Next we consider the trace of $\tau_\flat(\gamma)$ for $\gamma\in\Gamma_\flat^\circ$. 

\begin{theorem}\label{thm_Tr_tauflat}
For $\gamma=\Bigl[\begin{smallmatrix}1&a&b\\0&c&d\\0&0&1\end{smallmatrix}\Bigr]\in\Gamma_\flat^\circ$, the operator $\tau_\flat(\gamma)$ on $\cH_\flat(\BC)$ is of trace class, 
and we have 
\[ \Tr(\tau_\flat(\gamma))=\frac{\exp\bigl(-\frac{ad+b(1-c)}{1-c}\bigr)}{1-c}. \]
\end{theorem}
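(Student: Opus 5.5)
The plan is to use the reproducing kernel of $\cH_\flat(\BC)$ throughout. First I show that $\tau_\flat(\gamma)$ is a product of two Hilbert--Schmidt operators, which makes it trace class. Then I compute the trace as the integral of its kernel over the diagonal, which turns out to be a Gaussian integral.

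Recall that $K_z(w):=e^{w\bar z}$ is the reproducing kernel: $f(z)=\langle f,K_z\rangle_\flat$ for $f\in\cH_\flat(\BC)$. For any $\delta=\Bigl[\begin{smallmatrix}1&a&b\\0&c&d\\0&0&1\end{smallmatrix}\Bigr]\in\Gamma_\flat^\circ$ this gives
\[ (\tau_\flat(\delta)f)(w)=e^{dw-b}f(cw-a)=\frac{1}{\pi}\int_\BC k_\delta(w,z)f(z)e^{-|z|^2}\,\rd z,\qquad k_\delta(w,z):=e^{dw-b}e^{(cw-a)\bar z}. \]
The Hilbert--Schmidt norm is then the weighted $L^2$ norm of the kernel:
\[ \Vert\tau_\flat(\delta)\Vert_\HS^2=\frac{1}{\pi^2}\int_{\BC^2}|k_\delta(w,z)|^2e^{-|w|^2-|z|^2}\,\rd w\,\rd z. \]
I will check this identity in the orthonormal basis $w^n/\sqrt{n!}$. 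The exponent here is a real quadratic form in $(w,z)$ with quadratic part $-|w|^2-|z|^2+2\Re(cw\bar z)$ plus linear terms. This form is negative definite exactly when $|c|<1$, so the integral is finite. Hence every element of $\Gamma_\flat^\circ$ acts by a Hilbert--Schmidt operator.

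Next I factor $\gamma=\gamma_1\gamma_2$ with both factors in $\Gamma_\flat^\circ$. Take $\gamma_2=\mathrm{diag}(1,c^{1/2},1)$ for a choice of square root, and $\gamma_1=\gamma\gamma_2^{-1}$. The $(2,2)$-entry of $\gamma_1$ is $c^{1/2}$, so $|\det\gamma_1|=|c|^{1/2}<1$ and $\gamma_1\in\Gamma_\flat^\circ$. Since $\tau_\flat$ is a homomorphism on $\cO(\BC)$, we have $\tau_\flat(\gamma)=\tau_\flat(\gamma_1)\tau_\flat(\gamma_2)$. The inequality $\Vert\rA\rB\Vert_{\Tr}\le\Vert\rA\Vert_\HS\Vert\rB\Vert_\HS$ then shows that $\tau_\flat(\gamma)$ is of trace class. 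Alternatively one can see directly that $\tau_\flat(\gamma_2)$ is diagonal with eigenvalues $c^{n/2}$, which is clearly trace class.

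For the trace itself I would avoid the usual continuity issues of integrating a kernel along the diagonal. Instead I compute $\sum_n\langle\tau_\flat(\gamma)e_n,e_n\rangle_\flat$ with $e_n=w^n/\sqrt{n!}$. The identity $\sum_n e_n(w)\overline{e_n(z)}=e^{w\bar z}$ converges absolutely and dominatedly against $e^{-|w|^2}$ once paired with $k_\gamma$, thanks to the Gaussian decay from $|c|<1$. Granting this, the sum becomes
\[ \Tr(\tau_\flat(\gamma))=\frac{1}{\pi}\int_\BC e^{dz-b}e^{(cz-a)\bar z}e^{-|z|^2}\,\rd z=\frac{e^{-b}}{\pi}\int_\BC e^{-(1-c)|z|^2+dz-a\bar z}\,\rd z. \]
Here $\Re(1-c)>0$. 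The standard complex Gaussian formula
\[ \frac{1}{\pi}\int_\BC e^{-\alpha|z|^2+dz-a\bar z}\,\rd z=\frac{1}{\alpha}e^{-ad/\alpha}\qquad(\Re\alpha>0) \]
is proved for real $\alpha>0$ by completing the square, or by expanding $e^{dz}$ and $e^{-a\bar z}$ and using orthogonality of the monomials $z^k\bar z^l$. It extends to $\Re\alpha>0$ by analytic continuation in $\alpha$. Applying it with $\alpha=1-c$ gives $\exp\bigl(-\frac{ad+b(1-c)}{1-c}\bigr)/(1-c)$, as claimed.

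The main obstacle is justifying the interchange of the sum over $n$ with the integral in the trace computation. To handle it, I would bound
\[ \sum_n|\langle\tau_\flat(\gamma)e_n,e_n\rangle_\flat|\le\frac{1}{\pi}\int_\BC|k_\gamma(z,z)|\,e^{-|z|^2}\,\rd z<\infty. \]
This bound follows from writing each matrix element through the kernel $k_\gamma$ and using $|c|<1$. Alternatively, I could first prove the formula for $c$ real in $(0,1)$, where everything is positive and Tonelli's theorem applies. Then I would extend it by holomorphy in $(a,b,c,d)$ on the connected domain $\Gamma_\flat^\circ$: both sides are holomorphic, since $\tau_\flat(\gamma)$ depends holomorphically on $\gamma$ in trace norm by the factorization above.
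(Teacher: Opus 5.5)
\textbf{The justification of the interchange is wrong, though the rest of the proof works.} The inequality you propose for the step you call the main obstacle is false:
\[ \sum_n|\langle\tau_\flat(\gamma)e_n,e_n\rangle_\flat|\le\frac1\pi\int_\BC|k_\gamma(z,z)|e^{-|z|^2}\,\rd z. \]
Take $\gamma=\mathrm{diag}(1,c,1)$ with $c=-\frac12$. Then $e_n\mapsto c^ne_n$, so the left side is $\sum_n 2^{-n}=2$. The right side is $\frac1\pi\int_\BC e^{-\frac32|z|^2}\,\rd z=\frac23$. Even a true bound of this shape would not license swapping $\sum_n$ with the double integral, because that requires absolute values inside the integrals.

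Your fallback fails too. For $c\in(0,1)$ the integrands $k_\gamma(w,z)e_n(z)\overline{e_n(w)}$ are still complex-valued, so Tonelli does not apply.

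The correct justification is the one you sketched earlier and then ``granted''. By Tonelli, $\sum_n|e_n(z)||e_n(w)|=e^{|z||w|}$. Also $|k_\gamma(w,z)|\le e^{|b|+|d||w|+|a||z|+|c||w||z|}$ and $(1+|c|)|z||w|\le\frac{1+|c|}{2}(|z|^2+|w|^2)$. Together these give
\[ \frac{1}{\pi^2}\sum_n\iint\bigl|k_\gamma(w,z)e_n(z)\overline{e_n(w)}\bigr|e^{-|z|^2-|w|^2}\,\rd z\,\rd w\le\frac{e^{|b|}}{\pi^2}\iint e^{-\frac{1-|c|}{2}(|z|^2+|w|^2)+|a||z|+|d||w|}\,\rd z\,\rd w<\infty. \]
So Fubini applies for every $\gamma\in\Gamma_\flat^\circ$ at once, and no continuation in $c$ is needed. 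Likewise your monomial expansion proves the Gaussian formula directly for $\Re\alpha>0$, since $\frac1\pi\int e^{-\Re\alpha|z|^2+(|a|+|d|)|z|}\,\rd z<\infty$.

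\textbf{With that repair, your proof is correct and differs from the paper's in both halves.}

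For trace-class membership, the paper passes to $\sqrt{\gamma^\ddagger\gamma}\in\Gamma_\flat^\circ$ and identifies $\Vert\tau_\flat(\gamma)\Vert_{\Tr}$ with $\Tr(\tau_\flat(\sqrt{\gamma^\ddagger\gamma}))$. The trace computation then gives finiteness as well. Your factorization $\gamma=\gamma_1\gamma_2$ into two Hilbert--Schmidt pieces (which checks out) avoids a step the paper needs: knowing that $\tau_\flat(\sqrt{\gamma^\ddagger\gamma})$ is the positive square root of $\tau_\flat(\gamma)^\dagger\tau_\flat(\gamma)$.

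For the trace, the paper works in one variable:
\[ \langle\tau_\flat(\gamma)f_k,f_k\rangle_\flat=\frac{1}{2\pi\sqrt{-1}}\oint_{|w|=r}e^{dw-b}(cw-a)^kw^{-k-1}\,\rd w. \]
For $r>|a|/(1-|c|)$ one has $|(cw-a)/w|<1$ uniformly on the circle. So the geometric series can be summed under the contour integral with trivial justification, and the single residue at $w=-a/(1-c)$ gives the formula. Your reproducing-kernel and Gaussian route is more conceptual, being the Fock-space analogue of $\Tr=\int k(x,x)$. It costs the two-variable domination above.
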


\begin{proof}
First, we can directly check that the map 
\[ \{\gamma\in G^\BC_\flat\mid \gamma=\gamma^\ddagger,\; \det(\gamma)>0\}\longrightarrow\{\gamma\in G^\BC_\flat\mid \gamma=\gamma^\ddagger,\; \det(\gamma)>0\}, \qquad \gamma\longmapsto\gamma^2 \]
is bijective, and let $\gamma\mapsto\sqrt{\gamma}$ be its inverse. Then since we have 
\[ \det\bigl(\sqrt{\gamma^\ddagger\gamma}\bigr)=|\det(\gamma)| \qquad (\gamma\in G_\flat^\BC), \]
$\gamma\in\Gamma_\flat^\circ$ holds if and only if $\sqrt{\gamma^\ddagger\gamma}\in\Gamma_\flat^\circ$ holds, and we have 
\[ \Vert\tau_\flat(\gamma)\Vert_{\Tr}=\Tr\Bigl(\sqrt{\tau_\flat(\gamma)^\dagger\tau_\flat(\gamma)}\Bigr)=\Tr\Bigl(\tau_\flat\bigl(\sqrt{\gamma^\ddagger\gamma}\bigr)\Bigr). \]
Hence it suffices to compute $\Tr(\tau_\flat(\gamma))$ for $\gamma\in\Gamma_\flat^\circ$. 
We consider the orthonormal basis $\{f_k\}_{k=0}^\infty\subset\cH_\flat(\BC)$, 
\[ f_k(w):=\frac{1}{\sqrt{k!}}w^k \qquad (k\in\BZ_{\ge 0}). \]
Then if we write 
\[ (\tau_\flat(\gamma)f_k)(w)=\sum_{j=0}^\infty a_jf_j(w)=\sum_{j=0}^\infty \frac{a_j}{\sqrt{j!}}w^j, \]
then we have 
\begin{align*}
\langle\tau_\flat(\gamma)f_k,f_k\rangle_\flat&=a_k=\frac{\sqrt{k!}}{2\pi\sqrt{-1}}\oint_{|w|=r}(\tau_\flat(\gamma)f_k)(w)w^{-k-1}\,\rd w \\
&=\frac{1}{2\pi\sqrt{-1}}\oint_{|w|=r}e^{dw-b}(cw-a)^kw^{-k-1}\,\rd w
\end{align*}
for any $r>0$. Now we fix $r>\frac{|a|}{1-|c|}$. Then since $|c|=|\det(\gamma)|<1$, for $|w|=r$, we have 
\[ \biggl|\frac{cw-a}{w}\biggr|\le \frac{|c||w|+|a|}{|w|}=\frac{r|c|+|a|}{r}<1, \]
and hence we get 
\begin{align*}
\Tr(\tau_\flat(\gamma))&=\sum_{k=0}^\infty\langle\tau_\flat(\gamma)f_k,f_k\rangle_\flat=\sum_{k=0}^\infty \frac{1}{2\pi\sqrt{-1}}\oint_{|w|=r}e^{dw-b}(cw-a)^kw^{-k-1}\,\rd w \\
&=\frac{1}{2\pi\sqrt{-1}}\oint_{|w|=r}\sum_{k=0}^\infty e^{dw-b}(cw-a)^kw^{-k-1}\,\rd w \\
&=\frac{1}{2\pi\sqrt{-1}}\oint_{|w|=r}\frac{e^{dw-b}}{1-(cw-a)w^{-1}}\frac{\rd w}{w} 
=\frac{1}{2\pi\sqrt{-1}}\oint_{|w|=r}\frac{\exp(dw-b)}{(1-c)w+a}\,\rd w \\
&=\frac{\exp\bigl(-d\frac{a}{1-c}-b\bigr)}{1-c}=\frac{\exp\bigl(-\frac{ad+b(1-c)}{1-c}\bigr)}{1-c}. \qedhere
\end{align*}
\end{proof}

\begin{example}\label{ex_Xg}
For $g\in\BR$, $u\in(0,1)$, let 
\begin{align*}
X(g):\hspace{-3pt}&=\begin{bmatrix}0&-g&-g^2\\0&1&g\\0&0&0\end{bmatrix}
=\begin{bmatrix}1&-g&g^2/2\\0&1&-g\\0&0&1\end{bmatrix}\begin{bmatrix}0&0&0\\0&1&0\\0&0&0\end{bmatrix}\begin{bmatrix}1&g&g^2/2\\0&1&g\\0&0&1\end{bmatrix}\in\fg^\BC_\flat, \\ 
u^{X(g)}:\hspace{-3pt}&=\exp(X(g)\log u)=\begin{bmatrix}1&-g&g^2/2\\0&1&-g\\0&0&1\end{bmatrix}\begin{bmatrix}1&0&0\\0&u&0\\0&0&1\end{bmatrix}\begin{bmatrix}1&g&g^2/2\\0&1&g\\0&0&1\end{bmatrix}
\in\Gamma_\flat^\circ, 
\end{align*}
so that 
\begin{align}
\rd\tau_\flat(X(g))&=w\frac{\rd}{\rd w}+g\biggl(\frac{\rd}{\rd w}+w\biggr)+g^2=\cB\bigl(\ra^\dagger\ra+g(\ra+\ra^\dagger)+g^2\bigr)\cB^{-1}, \label{formula_dtauXg}\\
\tau_\flat(u^{X(g)})&=\exp\biggl(\log u\biggl(w\frac{\rd}{\rd w}+g\biggl(\frac{\rd}{\rd w}+w\biggr)+g^2\biggr)\biggr) \notag
\end{align}
are self-adjoint on $\cH_\flat(\BC)$. Then the sets of their eigenvalues are $\BZ_{\ge 0}$ and $\{u^k\mid k\in\BZ_{\ge 0}\}$ respectively, with the common eigenvectors 
$\left\{\tau_\flat\left(\left[\begin{smallmatrix}1&-g&g^2/2\\0&1&-g\\0&0&1\end{smallmatrix}\right]\right)w^k\ \middle|\ k\in\BZ_{\ge 0}\right\}$. 
Especially, for $\lambda\in\BC$, $\lambda\notin-\BZ_{\ge 0}$, the resolvent $(\rd\tau_\flat(X(g))+\lambda)^{-1}$ is of Hilbert--Schmidt class, with 
\[ \bigl\Vert(\rd\tau_\flat(X(g))+\lambda)^{-1}\bigr\Vert_\HS^2=\sum_{k=0}^\infty\frac{1}{|k+\lambda|^2}, \qquad 
\bigl\Vert(\rd\tau_\flat(X(g))+\lambda)^{-1}\bigr\Vert_\op=\max_{k\in\BZ_{\ge 0}}\frac{1}{|k+\lambda|}, \]
and for $n\in\BZ_{\ge 2}$, we have 
\[ \Tr\bigl((\rd\tau_\flat(X(g))+\lambda)^{-n}\bigr)=\sum_{k=0}^\infty\frac{1}{(k+\lambda)^n}=\zeta(n,\lambda), \]
where $\zeta(s,\lambda)$ is the Hurwitz zeta function. Similarly, $\tau_\flat(u^{X(g)})$ is of trace class, with 
\begin{gather*}
\Tr\bigl(\tau_\flat(u^{X(g)})\bigr)=\bigl\Vert\tau_\flat(u^{X(g)})\bigr\Vert_{\Tr}=\sum_{k=0}^\infty u^k=\frac{1}{1-u}, \\
\bigl\Vert\tau_\flat(u^{X(g)})\bigr\Vert_\HS^2=\sum_{k=0}^\infty (u^k)^2=\frac{1}{1-u^2}, \qquad 
\bigl\Vert\tau_\flat(u^{X(g)})\bigr\Vert_\op=\max_{k\in\BZ_{\ge 0}} u^k=1. 
\end{gather*}
\end{example}

\subsection{Semigroup acting on unit disc}

Let $\bD$ and $\bH$ be the unit disc and the upper half plane, 
\[ \bD:=\{z\in\BC\mid |z|<1\}, \qquad \bH:=\{y\in\BC\mid \Im y>0\}. \]
For $\nu>0$, we consider the \textit{weighted Bergman space}
\[ \cH_\nu(\bD):=\{ f\in\cO(\bD)\mid \Vert f\Vert_\nu^2:=\langle f,f\rangle_\nu<\infty\}, \]
defined by the inner product given by, for $f(z)=\sum_{k=0}^\infty a_kz^k$, $g(z)=\sum_{k=0}^\infty b_kz^k\in\cO(\bD)$, 
\begin{align*}
\langle f,g\rangle_\nu:\hspace{-3pt}&=\sum_{k=0}^\infty \frac{k!}{(\nu)_k}a_k\overline{b_k} && (\nu>0) \\
&=\frac{\nu-1}{\pi}\int_\bD f(z)\overline{g(z)}(1-|z|^2)^{\nu-2}\,\rd z && (\nu>1), 
\end{align*}
where $(\nu)_k:=\nu(\nu+1)(\nu+2)\cdots(\nu+k-1)$. Then $L^2(\BR)_\even$ and $L^2(\BR)_\odd$ are unitarily isomorphic to $\cH_{1/2}(\bD)$ and $\cH_{3/2}(\bD)$ as follows. 
Let $\cL_\even$, $\cL_\odd$ be the Laplace transforms, 
\begin{align*}
\cL_\even&\colon L^2(\BR)_\even\longrightarrow \cO(\bH), & (\cL_\even f)(y)&:=\pi^{-1/4}\int_\BR f(x)e^{x^2y\sqrt{-1}/2}\,\rd x, \\
\cL_\odd&\colon L^2(\BR)_\odd\longrightarrow \cO(\bH), & (\cL_\odd f)(y)&:=\sqrt{2}\pi^{-1/4}\int_\BR f(x)xe^{x^2y\sqrt{-1}/2}\,\rd x, 
\end{align*}
let $\cI_\even$, $\cI_\odd$ be the integral transforms, 
\begin{align*}
\cI_\even&\colon \cH_\flat(\BC)_\even\longrightarrow \cO(\bD), & (\cI_\even f)(z)&:=\frac{1}{\pi}\int_\BC f(w)e^{z\overline{w}^2/2}e^{-|w|^2}\,\rd w, \\
\cI_\odd&\colon \cH_\flat(\BC)_\odd\longrightarrow \cO(\bD), & (\cI_\odd f)(z)&:=\frac{1}{\pi}\int_\BC f(w)\overline{w}e^{z\overline{w}^2/2}e^{-|w|^2}\,\rd w, 
\end{align*}
and for $\nu>0$, let $\cC_\nu$ be the weighted Cayley transform, 
\[ \cC_\nu\colon\cO(\bH)\longrightarrow\cO(\bD), \qquad (\cC_\nu f)(z):=(1+z)^{-\nu}f\biggl(\sqrt{-1}\frac{1-z}{1+z}\biggr). \]
Then the following diagrams commute. 
\[ \xymatrix{ L^2(\BR)_\even \ar[r]^{\cB} \ar[d]_{\cL_\even} & \cH_\flat(\BC)_\even \ar[d]^{\cI_\even} \\ \cO(\bH) \ar[r]_{\cC_{1/2}} & \cO(\bD), } \qquad  
\xymatrix{ L^2(\BR)_\odd \ar[r]^{\cB} \ar[d]_{\cL_\odd} & \cH_\flat(\BC)_\odd \ar[d]^{\cI_\odd} \\ \cO(\bH) \ar[r]_{\cC_{3/2}} & \cO(\bD). } \]
Moreover, the compositions 
\begin{align*}
\cC\cL_{1/2}&:=\cC_{1/2}\circ\cL_\even=\cI_\even\circ\cB\colon L^2(\BR)_\even\longrightarrow\cH_{1/2}(\bD), \\ 
\cC\cL_{3/2}&:=\cC_{3/2}\circ\cL_\odd=\cI_\odd\circ\cB\colon L^2(\BR)_\odd\longrightarrow\cH_{3/2}(\bD)
\end{align*}
give the unitary isomorphisms satisfying, for $\nu=1/2$, $3/2$, 
\begin{gather*}
\cC\cL_\nu\circ\biggl(\ra^\dagger\ra+\frac{1}{2}\biggr)=\biggl(2z\frac{\rd}{\rd z}+\nu\biggr)\circ\cC\cL_\nu, \\
\cC\cL_\nu\circ\frac{1}{2}\ra^2=\frac{\rd}{\rd z}\circ\cC\cL_\nu, \qquad 
\cC\cL_\nu\circ\frac{1}{2}(\ra^\dagger)^2=\biggl(z^2\frac{\rd}{\rd z}+\nu z\biggr)\circ\cC\cL_\nu. 
\end{gather*}
See e.g., \cite[Sections 4.5, 4.6]{F}. 

\begin{remark}\label{rem_sph_harm}
More generally, for $\nu>0$, let 
\begin{gather*}
L^2_\nu(\BR_{>0}):=\biggl\{f\colon\BR_{>0}\to\BC\biggm|\Vert f\Vert_{\nu,\BR_{>0}}^2 :=\frac{2^\nu}{\Gamma(\nu)}\int_0^\infty |f(t)|^2 t^{\nu-1}\,dt<\infty\biggr\}, \\
\cL_\nu\colon L^2_\nu(\BR_{>0})\longrightarrow \cO(\bH), \qquad 
(\cL_\nu f)(y):=\frac{2^\nu}{\Gamma(\nu)}\int_0^\infty f(t)e^{\sqrt{-1}yt}t^{\nu-1}\,dt. 
\end{gather*}
Then the composition 
\[ \cC\cL_{\nu}:=\cC_\nu\circ\cL_\nu\colon L^2_\nu(\BR_{>0})\longrightarrow \cH_\nu(\bD) \]
gives the unitary isomorphism. Also, for $n\in\BZ_{>0}$, $k\in\BZ_{\ge 0}$, let $\cP(\BR^n)$ be the space of complex-valued polynomials on $\BR^n$, and let 
\[ \cH\cP_k(\BR^n):=\biggl\{h\in\cP(\BR^n)\biggm| h(tx)=t^kh(x) \quad(t\in\BR),\quad \sum_{j=1}^n\frac{\partial^2h}{\partial x_j^2}=0\biggr\} \]
be the space of homogeneous harmonic polynomials of degree $k$, equipped with the inner product 
\[ \langle h_1,h_2\rangle_{\cH\cP}:=\int_{\BR^n}h_1(x)\overline{h_2(x)}e^{-r^2(x)}\,dx, \]
where $r^2(x):=x_1^2+\cdots+x_n^2$. Then $L^2(\BR^n)$ has the orthogonal decomposition 
\[ L^2(\BR^n)\simeq \hsum_{k\in\BZ_{\ge 0}} L^2_{k+\frac{n}{2}}(\BR_{>0})\otimes \cH\cP_k(\BR^n), \]
where the embedding is given by $f(t)\otimes h(x)\mapsto f(r^2(x)/2)h(x)$. 
Let $\ra_j$, $\ra_j^\dagger$ be the annihilation and creation operators with respect to the $j$-th variable on $L^2(\BR^n)$. Then for $\nu=k+\frac{n}{2}$, $f(t)\in L^2_\nu(\BR_{>0})$, $h(x)\in\cH\cP_k(\BR^n)$, we have 
\begin{gather*}
(\cC\cL_\nu\otimes Id_{\cP\cH_k(\BR^n)}) \biggl(\sum_{j=1}^n\biggl(\ra_j^\dagger\ra_j+\frac{1}{2}\biggr)\biggl(f\biggl(\frac{r^2(x)}{2} \biggr)h(x)\biggr)\biggr)
=\biggl(2z\frac{\rd}{\rd z}+\nu\biggr)(\cC\cL_\nu f)(z)\otimes h(x), \\
(\cC\cL_\nu\otimes Id_{\cP\cH_k(\BR^n)}) \biggl(\frac{1}{2}\sum_{j=1}^n\ra_j^2 \biggl(f\biggl(\frac{r^2(x)}{2} \biggr)h(x)\biggr)\biggr)
=\frac{\rd}{\rd z}(\cC\cL_\nu f)(z)\otimes h(x), \\
(\cC\cL_\nu\otimes Id_{\cP\cH_k(\BR^n)}) \biggl(\frac{1}{2}\sum_{j=1}^n(\ra_j^\dagger)^2 \biggl(f\biggl(\frac{r^2(x)}{2} \biggr)h(x)\biggr)\biggr)
=\biggl(z^2\frac{\rd}{\rd z}+\nu z\biggr)(\cC\cL_\nu f)(z)\otimes h(x). 
\end{gather*}
This gives a special case of Howe's dual pair correspondence. See, e.g., \cite{H, KV}. See also \cite{N}. 
\end{remark}

Next, let 
\[ G^\BC:=SL(2,\BC)=\biggl\{\begin{bmatrix}a&b\\c&d\end{bmatrix}\biggm|\begin{matrix}a,b,c,d\in\BC, \\ ad-bc=1\end{matrix}\biggr\}, \quad 
\fg^\BC:=\mathfrak{sl}(2,\BC)=\biggl\{\begin{bmatrix}a&b\\c&-a\end{bmatrix}\biggm|a,b,c\in\BC\biggr\}. \]
We define the involutive anti-automorphism $\ddagger$ on $G^\BC$ and $\fg^\BC$ by 
\[ \begin{bmatrix}a&b\\c&d\end{bmatrix}^\ddagger:=\begin{bmatrix}1&0\\0&-1\end{bmatrix}\begin{bmatrix}a&b\\c&d\end{bmatrix}^\dagger\begin{bmatrix}1&0\\0&-1\end{bmatrix}
=\begin{bmatrix}\overline{a}&-\overline{c}\\-\overline{b}&\overline{d}\end{bmatrix}, \]
and let 
\[ G:=SU(1,1)=\{\gamma\in G^\BC\mid \gamma^\ddagger=\gamma^{-1}\}, \qquad \fg:=\mathfrak{su}(1,1)=\{\gamma\in\fg^\BC\mid \gamma^\ddagger=-\gamma\}. \]
Then for $\nu>0$, the universal covering group $\widetilde{G}$ of $G$ acts unitarily on $\cH_\nu(\bD)$ by 
\begin{equation}\label{def_taunu}
\biggl(\tau_\nu\biggl(\begin{bmatrix}a&b\\c&d\end{bmatrix}^{-1}\biggr)f\biggr)(z):=(cz+d)^{-\nu}f\biggl(\frac{az+b}{cz+d}\biggr). 
\end{equation}
We note that $(cz+d)^{-\nu}$ is not well-defined on $G\times \bD$ if $\nu\notin\BZ$, but is well-defined on the universal covering space $\widetilde{G}\times \bD$ for all $\nu>0$. 
Its complexified differential action of $\fg^\BC$ is given by the unbounded operator on $\cH_\nu(\bD)$, 
\[ \biggl(\rd\tau_\nu\biggl(\begin{bmatrix}a&b\\c&-a\end{bmatrix}\biggr)f\biggr)(z):=-a\biggl(2z\frac{\rd}{\rd z}+\nu\bigg)-b\frac{\rd}{\rd z}+c\biggl(z^2\frac{\rd}{\rd z}+\nu z\biggr), \]
so that $\rd\tau_\nu(\gamma)^\dagger=\rd\tau_\nu(\gamma^\ddagger)$ holds. 

Next, by using the action of $G^\BC=SL(2,\BC)$ on $\BC\cup\{\infty\}$, 
\[ \gamma.z:=\frac{az+b}{cz+d} \qquad \biggl(\gamma=\begin{bmatrix}a&b\\c&d\end{bmatrix}\in G^\BC,\; z\in\BC\cup\{\infty\}\biggr), \]
we define the semigroups $\Gamma,\Gamma^\circ\subset G^\BC$ by 
\begin{align*}
\Gamma&:=\{\gamma\in G^\BC\mid \gamma^{-1}.z\in \bD \text{ for all }z\in \bD\}, \\
\Gamma^\circ&:=\{\gamma\in G^\BC\mid \gamma^{-1}.z\in \bD \text{ for all }z\in \overline{\bD}\}, 
\end{align*}
where $\overline{\bD}\subset\BC$ is the closure of $\bD$. Also, let $C,C^\circ\subset\sqrt{-1}\fg\subset\fg^\BC$ be the cones given by 
\begin{align*}
C&:=\biggl\{\begin{bmatrix}a&b\\-\overline{b}&-a\end{bmatrix} \biggm| \begin{matrix}a\in\BR_{\ge 0},\;b\in\BC, \\ a^2\ge |b|^2\end{matrix} \biggr\}
=Ad(G)\BR_{\ge 0}\begin{bmatrix}1&0\\0&-1\end{bmatrix}, \\
C^\circ&:=\biggl\{\begin{bmatrix}a&b\\-\overline{b}&-a\end{bmatrix} \biggm| \begin{matrix}a\in\BR_{>0},\;b\in\BC, \\ a^2>|b|^2\end{matrix} \biggr\}
=Ad(G)\BR_{>0}\begin{bmatrix}1&0\\0&-1\end{bmatrix}. 
\end{align*}
Then the following holds. 

\begin{theorem}\label{thm_semigroup}
\begin{enumerate}
\item (\cite[Part I, Theorem V.1.1]{FKKLR})
\[ \Gamma=G\exp(C), \qquad \Gamma^\circ=G\exp(C^\circ). \]
\item (\cite[Part I, Proof of Proposition V.3.1]{FKKLR}) Let $\gamma\in\Gamma^\circ\subset M(2,\BC)$. If $\tr(\gamma)\in\BR$, then 
\[ |\tr(\gamma)|>2. \]
\item (\cite[Part I, Proposition V.3.1]{FKKLR}) If $\gamma\in\Gamma^\circ$, then one of the eigenvalues $\mu(\gamma)$ of $\gamma\in M(2,\BC)$ satisfies 
\[ |\mu(\gamma)|<1. \]
The function $\gamma\mapsto\mu(\gamma)$ is holomorphic on $\Gamma^\circ$. 
\end{enumerate}
\end{theorem}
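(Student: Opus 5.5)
The plan is to prove part (3) first, deduce part (2) from it, and then treat part (1), which needs the most work. Throughout, the main tool is the Möbius action of $G^\BC$ on the Riemann sphere, together with the Hermitian form $J=\mathrm{diag}(1,-1)$. A point $z$ lies in $\bD$ exactly when $v=(z,1)^T$ satisfies $v^\dagger Jv<0$, and $\gamma^\ddagger=J\gamma^\dagger J$ is the $J$-adjoint.

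\emph{Part (3).} Let $\gamma\in\Gamma^\circ$. The map $z\mapsto\gamma^{-1}.z$ sends the compact set $\overline{\bD}$ into the open disc $\bD$. By the Schwarz--Pick lemma (or the Earle--Hamilton theorem), it is a strict contraction for the hyperbolic metric on the image, so it has a unique fixed point $z_0\in\bD$ and this fixed point is attracting: $|(\gamma^{-1})'(z_0)|<1$.
\begin{itemize}
\item A fixed point of a Möbius map corresponds to an eigenvector $(z_0,1)^T$ of $\gamma^{-1}$.
\item The derivative at that fixed point equals $\mu_1/\mu_2$, where $\mu_1,\mu_2$ are the eigenvalues of $\gamma^{-1}$ with $\mu_1\mu_2=1$.
\end{itemize}
Hence $|\mu_1|<|\mu_2|$, so one of $\mu_1,\mu_2$, which are the eigenvalues of $\gamma$ in reverse roles, has modulus $<1$. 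Since $\mu\neq\mu^{-1}$, the eigenvalue is simple. The implicit function theorem applied to the characteristic polynomial $\mu^2-\tr(\gamma)\mu+1$ then shows that $\mu(\gamma)$ is holomorphic on the open set $\Gamma^\circ$.

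\emph{Part (2).} Write $\tr(\gamma)=\mu+\mu^{-1}$ with $|\mu|<1$, using part (3). If $\mu+\mu^{-1}\in\BR$, then either $\mu\in\BR$ or $|\mu|=1$. The second case is excluded by part (3), so $\mu\in(-1,1)\setminus\{0\}$, and therefore $|\mu+\mu^{-1}|>2$.

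\emph{Part (1).}
\begin{enumerate}
\item For the inclusion $\supseteq$: take $X\in C$. It is $Ad(G)$-conjugate to $t\,\mathrm{diag}(1,-1)$ with $t\ge0$, so $\exp(X)^{-1}$ is $G$-conjugate to $z\mapsto e^{-2t}z$. This map sends $\bD$ into $\bD$, and sends $\overline{\bD}$ into $\bD$ when $t>0$, i.e.\ when $X\in C^\circ$. Since $G$ preserves $\bD$, we get $G\exp(C)\subset\Gamma$ and $G\exp(C^\circ)\subset\Gamma^\circ$.
\item For the reverse inclusion: set $h=\gamma^{-1}$ and $P:=\gamma^\ddagger\gamma$. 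Then $P^\ddagger=P$, i.e.\ $JP$ is Hermitian. The condition $h.\bD\subset\bD$ translates into the statement that $h$ maps the negative cone of $J$ into itself. This should mean that the form $v\mapsto v^\dagger JPv$ is compatible with $J$ in the sense needed to diagonalize $P$ by an element of $G$.
\item The key step is to show that $P$ is $G$-conjugate to $\mathrm{diag}(a^2,a^{-2})$ with $a\ge1$ (resp.\ $a>1$ for $\Gamma^\circ$). Then $P=\exp(2X)$ with $X\in C$ (resp.\ $C^\circ$), and $g:=\gamma\exp(-X)$ satisfies $g^\ddagger g=1$, so $g\in G$ and $\gamma=g\exp(X)$.
\end{enumerate}

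I expect step 3, the $G$-diagonalization of $P$, to be the main obstacle. It requires showing that $P$ has positive real eigenvalues and that its two eigenvectors can be chosen $J$-orthogonal, one $J$-positive and one $J$-negative.

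I would try to reduce this to a normal form. Using the transitivity of $G$ on $\bD$, compose $\gamma$ with an element of $G$ so that $\gamma^{-1}.0=0$. Then $h$ is upper triangular of the form $\left[\begin{smallmatrix}\alpha&\beta\\0&\alpha^{-1}\end{smallmatrix}\right]$, and the condition $h.\bD\subset\bD$ becomes the explicit inequality $|\alpha|^2+|\alpha\beta|\le1$. From this the signs and positivity can be checked by direct computation. If the boundary case ($\Gamma$ rather than $\Gamma^\circ$) causes degeneracies, I would handle it by a limiting argument from $\Gamma^\circ$, using that $C$ is closed.
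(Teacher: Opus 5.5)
The paper gives no proof of its own here: all three parts are quoted from \cite{FKKLR}. Since (2) is taken from inside the proof of the proposition that gives (3), the source apparently goes (2)$\Rightarrow$(3), the reverse of your order. Your arguments for (3) and (2) are correct and self-contained, and they give a clean elementary substitute for the citation:
\begin{itemize}
\item $h=\gamma^{-1}$ has a fixed point $z_0\in h(\overline{\bD})\subset\bD$.
\item Schwarz--Pick gives $|h'(z_0)|<1$, because $h$ is not an automorphism.
\item $h'(z_0)$ is the square of the eigenvalue not attached to $(z_0,1)^T$.
\item Simplicity of $\mu$, together with openness of $\Gamma^\circ$, gives holomorphy.
\item (2) then follows from $\Im(\mu+\mu^{-1})=(|\mu|-|\mu|^{-1})\sin(\arg\mu)$.
\end{itemize}

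Part (1), by contrast, is only an outline. The deferred step 3 is where all the content lies, and as written it contains an error.

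\emph{The normal form is wrong.} The normalization $\gamma^{-1}.0=0$ is achieved by $\gamma\mapsto\gamma k$ with $k\in G$, $k.0=\gamma^{-1}.0$, which replaces $P$ by $k^{-1}Pk$. It kills the $(1,2)$ entry, so $h=\left[\begin{smallmatrix}\alpha&0\\\beta&\alpha^{-1}\end{smallmatrix}\right]$ is lower triangular. Your upper triangular matrix fixes $\infty$, not $0$.

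\emph{What the correct form gives.} The criterion $|\alpha|^2+|\alpha\beta|\le1$ happens to be right for the corrected form. Then $P=\left[\begin{smallmatrix}|\alpha|^{-2}-|\beta|^2&\alpha\bar\beta\\-\bar\alpha\beta&|\alpha|^2\end{smallmatrix}\right]$ and $\tr P=|\alpha|^{-2}+|\alpha|^2-|\beta|^2\ge2$. The inequality is strict iff $|\alpha|^2+|\alpha\beta|<1$, i.e.\ iff $\gamma\in\Gamma^\circ$.

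\emph{An orientation step is missing.} Even in the $\Gamma^\circ$ case you must show that the larger eigenvalue sits on the $J$-positive eigenvector; otherwise you get $X\in-C^\circ$. This follows from $(\gamma v)^\dagger J(\gamma v)=v^\dagger JPv\ge0$ whenever $v^\dagger Jv\ge0$, tested on a $J$-null vector. It is not in your proposal.

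For $\Gamma$, step 3 is false as stated. If $|\alpha|^2+|\alpha\beta|=1$ and $\beta\ne0$, then $\tr P=2$ but $P\ne I$, so $P$ is a nontrivial unipotent and is not diagonalizable at all. Correspondingly, $C$ contains nonzero nilpotents (those with $a=|b|>0$) that are not $\mathrm{Ad}(G)$-conjugate to any $t\,\mathrm{diag}(1,-1)$. The description $C=\mathrm{Ad}(G)\BR_{\ge0}\mathrm{diag}(1,-1)$ is only true after taking the closure, so your ``$\supseteq$'' argument for $\Gamma$ also skips these elements.

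Both points can be repaired.
\begin{itemize}
\item In the unipotent case take $X=(P-I)/2$. It is nilpotent, $\ddagger$-symmetric, and has $(1,1)$-entry $(1-|\alpha|^2)/2\ge0$, so $X\in C$, $\exp(2X)=P$, and $\gamma\exp(-X)\in G$.
\item Alternatively, run your limiting argument via $\gamma\exp(\epsilon\,\mathrm{diag}(1,-1))\in\Gamma^\circ$. Since $G$ is non-compact, closedness of $C$ is not enough; you need properness of $\exp$ on $C$. This follows from $\exp(2X)=\cosh(2t)I+\frac{\sinh 2t}{t}X$ with $t^2=a^2-|b|^2$, which shows that bounded $P_\epsilon$ forces bounded $X_\epsilon$, hence bounded $g_\epsilon$.
\end{itemize}
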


The universal covering semigroups $\widetilde{\Gamma}, \widetilde{\Gamma}^\circ$ of $\Gamma,\Gamma^\circ$ act on $\cH_\nu(\bD)$ by the same formula as (\ref{def_taunu}). 
If $\gamma\in\exp(C^\circ)\subset\widetilde{\Gamma}^\circ$, then $\tau_\nu(\gamma)$ is self-adjoint. 
Moreover, the trace of $\tau_\nu(\gamma)$ for $\gamma\in\widetilde{\Gamma}^\circ$ satisfies the following. 
\begin{theorem}[{\cite[Part I, Theorem V.3.2]{FKKLR}}]\label{thm_Tr_taunu}
For $\gamma\in\widetilde{\Gamma}^\circ$, the operator $\tau_\nu(\gamma)$ on $\cH_\nu(\bD)$ is of trace class, and we have 
\[ \Tr(\tau_\nu(\gamma))=\frac{\mu(\gamma)^\nu}{1-\mu(\gamma)^2}, \]
where $\mu(\gamma)$ is the eigenvalue of $\gamma$ such that $|\mu(\gamma)|<1$. 
\end{theorem}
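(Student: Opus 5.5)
Although this statement is quoted from \cite[Part I, Theorem V.3.2]{FKKLR}, I would reprove it by imitating the proof of Theorem \ref{thm_Tr_tauflat}: first reduce trace-class membership to the self-adjoint positive case, then compute the trace in the monomial basis by a contour integral.

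\emph{Reduction.} For $\gamma\in\widetilde{\Gamma}^\circ$, Theorem \ref{thm_semigroup}\,(1) gives a polar decomposition $\gamma=k\exp(X)$ with $k\in\widetilde{G}$ and $X\in C^\circ$. Since $\tau_\nu(k)$ is unitary and $\tau_\nu(\exp X)$ is self-adjoint and positive, we get
\[
\Vert\tau_\nu(\gamma)\Vert_{\Tr}=\Tr(\tau_\nu(\exp X)).
\]
Writing $X=\mathrm{Ad}(h)\,t\begin{bmatrix}1&0\\0&-1\end{bmatrix}$ with $h\in G$ and $t>0$, we have $\tau_\nu(\exp X)=\tau_\nu(h)\,\tau_\nu(\mathrm{diag}(e^t,e^{-t}))\,\tau_\nu(h)^{-1}$. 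Up to the sign and inverse conventions in (\ref{def_taunu}), the middle factor acts by $f(z)\mapsto e^{-t\nu}f(e^{-2t}z)$. It is therefore diagonal in the basis $z^k$ with eigenvalues $e^{-t(\nu+2k)}$, and its trace is
\[
\frac{e^{-t\nu}}{1-e^{-4t}}<\infty .
\]
This proves that $\tau_\nu(\gamma)$ is of trace class, and it also confirms the formula in this case, since $\mu=e^{-2t}$ after accounting for the square relating $\gamma$ and its action.

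\emph{Computation for general $\gamma$.} Use the orthogonal basis $z^k$, for which $\langle z^k,z^k\rangle_\nu=k!/(\nu)_k$. Write $\gamma^{-1}=\begin{bmatrix}a&b\\c&d\end{bmatrix}$. The diagonal coefficient $\langle\tau_\nu(\gamma)z^k,z^k\rangle_\nu/\Vert z^k\Vert_\nu^2$ is the $k$-th Taylor coefficient of $(cz+d)^{-\nu}(\gamma^{-1}.z)^k$, namely
\[
\frac{1}{2\pi\sqrt{-1}}\oint_{|z|=r}(cz+d)^{-\nu}\Bigl(\frac{az+b}{z(cz+d)}\Bigr)^{k}\frac{\rd z}{z}.
\]
I take the contour to be $|z|=r$ with $r<1$ close to $1$. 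Since $\gamma\in\Gamma^\circ$, the map $\gamma^{-1}$ sends $\overline{\bD}$ into $\bD$, so $|\gamma^{-1}.z|<r$ on the circle for $r$ near $1$. The geometric series in $k$ then converges uniformly, and summing gives
\[
\Tr(\tau_\nu(\gamma))=\frac{1}{2\pi\sqrt{-1}}\oint_{|z|=r}\frac{(cz+d)^{1-\nu}}{cz^2+(d-a)z-b}\,\rd z .
\]
The roots of the denominator are the fixed points of $\gamma^{-1}$. Exactly one of them, $z_0$, lies in $\bD$: it is the attracting fixed point, by Brouwer applied to the contraction of $\overline{\bD}$. The residue at $z_0$ is
\[
\frac{(cz_0+d)^{1-\nu}}{2cz_0+d-a}.
\]
To simplify it, note that $cz_0+d$ is an eigenvalue of $\gamma^{-1}$; call it $\mu^{-1}$. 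Then $\mu$ is an eigenvalue of $\gamma$, and $|\mu|<1$ by the attracting property. Using $ad-bc=1$, one finds $2cz_0+d-a=(cz_0+d)-(cz_0+d)^{-1}=\mu^{-1}-\mu$. The residue therefore equals
\[
\frac{\mu^{\nu-1}}{\mu^{-1}-\mu}=\frac{\mu^\nu}{1-\mu^2},
\]
which is the claimed formula.

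\emph{Main obstacle.} The delicate point is the branch of $(cz+d)^{-\nu}$ and of $\mu^\nu$ on the universal cover. I would handle it by analytic continuation. Both sides are holomorphic in $\gamma$ on the connected set $\widetilde{\Gamma}^\circ$: the left side because $\tau_\nu$ is holomorphic, and the right side because $\mu$ is holomorphic by Theorem \ref{thm_semigroup}\,(3). The two sides agree on $\exp(C^\circ)$, where the reduction step fixes consistent real positive branches. The identity then propagates to all of $\widetilde{\Gamma}^\circ$, and this also fixes the branch of $\mu^\nu$ as the continuation of the positive one.
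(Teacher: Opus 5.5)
The paper gives no proof of this statement; it only quotes \cite[Part I, Theorem V.3.2]{FKKLR}. Your argument adapts the paper's proof of Theorem \ref{thm_Tr_tauflat} to $\cH_\nu(\bD)$, and its core is sound. The geometric series gives $\frac{(cz+d)^{1-\nu}}{cz^2+(d-a)z-b}$. The identity $2cz_0+d-a=(cz_0+d)-(cz_0+d)^{-1}$ follows from $ad-bc=1$ together with $cz_0^2+(d-a)z_0-b=0$. The residue is therefore $\mu^\nu/(1-\mu^2)$, and the case $c=0$ is covered by the same formula.

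Compared with citing, your route gives a self-contained proof. It exhibits the trace as a holomorphic Lefschetz fixed-point formula at the attracting fixed point of $\gamma^{-1}$. It also pins down the branch directly: $\mu(\gamma)^\nu$ is $(cz_0+d)^{-\nu}$, with the branch carried by the lift $\gamma\in\widetilde{\Gamma}^\circ$. Your continuation argument is then needed only to identify this with the continuation of the positive branch on $\exp(C^\circ)$.

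Two points need repair.

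First, the diagonal computation is wrong as written. For $D=\mathrm{diag}(e^t,e^{-t})$ one has $\sum_{k\ge 0}e^{-t(\nu+2k)}=e^{-t\nu}/(1-e^{-2t})$, not $e^{-t\nu}/(1-e^{-4t})$. Also $\mu(D)=e^{-t}$, the small eigenvalue of $D$ itself; $e^{-2t}=\mu^2$ is the multiplier of $z\mapsto e^{-2t}z$, not $\mu$.
\begin{itemize}
\item With your values the check fails: $e^{-t\nu}/(1-e^{-4t})\ne e^{-2t\nu}/(1-e^{-4t})$.
\item With the correct values it holds with no ``square'' adjustment, in agreement with Example \ref{ex_Yg} at $g=0$, $u=e^{-t}$.
\end{itemize}
Your branch argument is anchored on agreement over $\exp(C^\circ)$, so this must be corrected. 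Your own residue formula already gives the right answer there, since $z_0=0$ and $cz_0+d=e^t$.

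Second, Brouwer only yields a fixed point of $\gamma^{-1}$ in $\bD$. The facts that it is the only root inside $|z|=r$, that it is simple, and that $|\mu|<1$ all come from Schwarz--Pick.
\begin{itemize}
\item $\gamma^{-1}$ maps $\bD$ into a relatively compact subset, so it is not an automorphism. Hence its multiplier at $z_0$, namely $(cz_0+d)^{-2}=\mu^2$, has modulus $<1$.
\item The other fixed point has multiplier $\mu^{-2}$ (or lies at $\infty$ if $c=0$). It cannot lie in $\overline{\bD}$: it would then lie in $\gamma^{-1}(\overline{\bD})\subset\bD$, where Schwarz--Pick forbids a multiplier of modulus $>1$.
\end{itemize}
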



\begin{example}\label{ex_Yg}
For $g\in\BR$, $u\in(0,1)$, let 
\begin{align*}
Y(g):\hspace{-3pt}&=\begin{bmatrix} -\cosh(2g)&-\sinh(2g) \\ \sinh(2g)&\cosh(2g) \end{bmatrix} \\
&=\begin{bmatrix} \cosh(g)&-\sinh(g) \\ -\sinh(g)&\cosh(g) \end{bmatrix}
\begin{bmatrix}-1&0\\0&1\end{bmatrix}\begin{bmatrix} \cosh(g)&\sinh(g) \\ \sinh(g)&\cosh(g) \end{bmatrix}\in\fg^\BC, \\
u^{Y(g)}:\hspace{-3pt}&=\exp(Y(g)\log u) \\
&=\begin{bmatrix} \cosh(g)&-\sinh(g) \\ -\sinh(g)&\cosh(g) \end{bmatrix}
\begin{bmatrix}u^{-1}&0\\0&u\end{bmatrix}\begin{bmatrix} \cosh(g)&\sinh(g) \\ \sinh(g)&\cosh(g) \end{bmatrix}\in\Gamma^\circ, 
\end{align*}
so that 
\begin{align*}
\rd\tau_\nu(Y(g))&=\cosh(2g)\biggl(2z\frac{\rd}{\rd z}+\nu\biggr)+\sinh(2g)\biggl((1+z^2)\frac{\rd}{\rd z}+\nu z\biggr), \\
\tau_\nu(u^{Y(g)})&=\exp\biggl(\log u\biggl(\cosh(2g)\biggl(2z\frac{\rd}{\rd z}+\nu\biggr)+\sinh(2g)\biggl((1+z^2)\frac{\rd}{\rd z}+\nu z\biggr)\biggr)\biggr) 
\end{align*}
are self-adjoint on $\cH_\nu(\bD)$. Especially, for $\nu=1/2, 3/2$, we have 
\begin{align}
\rd\tau_{1/2}(Y(g))&=\cC\cL_{1/2}\biggl(\cosh(2g)\biggl(\ra^\dagger\ra+\frac{1}{2}\biggr)+\frac{\sinh(2g)}{2}(\ra^2+(\ra^\dagger)^2)\biggr)\biggr|_{L^2(\BR)_\even}\cC\cL_{1/2}^{-1}, \notag\\
\rd\tau_{3/2}(Y(g))&=\cC\cL_{3/2}\biggl(\cosh(2g)\biggl(\ra^\dagger\ra+\frac{1}{2}\biggr)+\frac{\sinh(2g)}{2}(\ra^2+(\ra^\dagger)^2)\biggr)\biggr|_{L^2(\BR)_\odd}\cC\cL_{3/2}^{-1}. \label{formula_dtauYg}
\end{align}
The sets of eigenvalues of $\rd\tau_\nu(Y(g))$ and $\tau_\nu(u^{Y(g)})$ are $2\BZ_{\ge 0}+\nu$ and $\{u^{2k+\nu}\mid k\in\BZ_{\ge 0}\}$ respectively, with the common eigenvectors 
$\Bigl\{\tau_\nu\Bigl(\Bigl[\begin{smallmatrix}\cosh(g)&-\sinh(g) \\ -\sinh(g)&\cosh(g)\end{smallmatrix}\Bigr]\Bigr)z^k \Bigm| k\in\BZ_{\ge 0}\Bigr\}$. 
Especially, for $\lambda\in\BC$, $\lambda\notin-(2\BZ_{\ge 0}+\nu)$, the resolvent $(\rd\tau_\nu(Y(g))+\lambda)^{-1}$ is of Hilbert--Schmidt class, with 
\[ \bigl\Vert(\rd\tau_\nu(Y(g))+\lambda)^{-1}\bigr\Vert_\HS^2=\sum_{k=0}^\infty\frac{1}{|2k+\lambda+\nu|^2}, \quad 
\bigl\Vert(\rd\tau_\nu(Y(g))+\lambda)^{-1}\bigr\Vert_\op=\max_{k\in\BZ_{\ge 0}}\frac{1}{|2k+\lambda+\nu|}, \]
and for $n\in\BZ_{\ge 2}$, we have 
\[ \Tr\bigl((\rd\tau_\nu(Y(g))+\lambda)^{-n}\bigr)=\sum_{k=0}^\infty\frac{1}{(2k+\lambda+\nu)^n}=2^{-n}\zeta\biggl(n,\frac{\lambda+\nu}{2}\biggr), \]
where $\zeta(s,\lambda)$ is the Hurwitz zeta function. Similarly, $\tau_\nu(u^{Y(g)})$ is of trace class, with 
\begin{gather*}
\Tr\bigl(\tau_\nu(u^{Y(g)})\bigr)=\bigl\Vert\tau_\nu(u^{Y(g)})\bigr\Vert_{\Tr}=\sum_{k=0}^\infty u^{2k+\nu}=\frac{u^\nu}{1-u^2}, \\
\bigl\Vert\tau_\nu(u^{Y(g)})\bigr\Vert_\HS^2=\sum_{k=0}^\infty (u^{2k+\nu})^2=\frac{u^{2\nu}}{1-u^4}, \qquad 
\bigl\Vert\tau_\nu(u^{Y(g)})\bigr\Vert_\op=\max_{k\in\BZ_{\ge 0}} u^{2k+\nu}=u^\nu. 
\end{gather*}
\end{example}

\subsection{Hamiltonians treated in this article}

In this article, we consider the renormalized 1-photon quantum Rabi model 
\[ \widetilde{\rH}_{1\QRM}^{(g,\Delta,\varepsilon)}:=\rH_{1\QRM}^{(g,\Delta,\varepsilon)}+g^2\bI
=\bI(\ra^\dagger\ra+g^2)+g\bsigma_1(\ra+\ra^\dagger)+\Delta\bsigma_3+\varepsilon\bsigma_1 \]
(where $g,\Delta,\varepsilon\in\BR$, $\bI=\bigl[\begin{smallmatrix}1&0\\0&1\end{smallmatrix}\bigr]$, $\bsigma_1=\bigl[\begin{smallmatrix}0&1\\1&0\end{smallmatrix}\bigr]$,  
$\bsigma_2=\bigl[\begin{smallmatrix}0&-\sqrt{-1}\\\sqrt{-1}&0\end{smallmatrix}\bigr]$, $\bsigma_3=\bigl[\begin{smallmatrix}1&0\\0&-1\end{smallmatrix}\bigr]$ are the Pauli matrices), 
the renormalized 2-photon quantum Rabi model 
\begin{align*}
\widetilde{\rH}_{2\QRM}^{(g,\Delta,\varepsilon)}:\hspace{-3pt}&=\cosh(2g)\biggl(\rH_{2\QRM}^{(\tanh(2g)/2,\Delta\sech(2g),\varepsilon\sech(2g))}+\frac{1}{2}\bI\biggr) \\
&=\cosh(2g)\bI\biggl(\ra^\dagger\ra+\frac{1}{2}\biggr)+\frac{\sinh(2g)}{2}\bsigma_1(\ra^2+(\ra^\dagger)^2)+\Delta\bsigma_3+\varepsilon\bsigma_1 
\end{align*}
(where $g,\Delta,\varepsilon\in\BR$, $\bI,\bsigma_1,\bsigma_2,\bsigma_3$ are as above), 
and the non-commutative harmonic oscillator 
\begin{align*}
&\widetilde{\rH}_\NCHO^{(\alpha,\beta,\eta)}:=\frac{\alpha+\beta}{2\sqrt{\alpha\beta(\alpha\beta-1)}}\rH_\NCHO^{(\alpha,\beta,\eta)} \\
&=\frac{\alpha+\beta}{2\sqrt{\alpha\beta(\alpha\beta-1)}} \\
&\eqspace{}\times\biggl(\begin{bmatrix}\alpha&0\\0&\beta\end{bmatrix}\biggl(\ra^\dagger\ra+\frac{1}{2}\biggr)
+\frac{1}{2}\begin{bmatrix}0&-1\\1&0\end{bmatrix}(\ra^2-(\ra^\dagger)^2)+2\eta\sqrt{\alpha\beta-1}\sqrt{-1}\begin{bmatrix}0&-1\\1&0\end{bmatrix}\biggr) 
\end{align*}
(where $\alpha,\beta>0$, $\alpha\beta>1$, $\eta\in\BR$). 
For $g,\Delta,\varepsilon\in\BR$, we define the operator $\widetilde{\rH}_\flat^{(g,\Delta,\varepsilon)}$ on $\cH_\flat(\BC)$ by 
\begin{align}
\widetilde{\rH}_{\flat}^{(g,\Delta,\varepsilon)}:\hspace{-3pt}&=\bI\biggl(w\frac{\rd}{\rd w}+g^2\biggr)+g\bsigma_3\biggl(\frac{\rd}{\rd w}+w\biggr)+\Delta\bsigma_1+\varepsilon\bsigma_3 \notag\\
&=\begin{bmatrix} \rd\tau_\flat(X(g))+\varepsilon & \Delta \\ \Delta & \rd\tau_\flat(X(-g))-\varepsilon \end{bmatrix}, \label{def_Hflat}
\end{align}
where $X(g)\in\fg^\BC_\flat$ is as in Example \ref{ex_Xg}. Then since 
\[ \widetilde{\rH}_{\flat}^{(g,\Delta,\varepsilon)}=\frac{1}{2}\begin{bmatrix}1&1\\1&-1\end{bmatrix}
\cB\widetilde{\rH}_{1\QRM}^{(g,\Delta,\varepsilon)}\cB^{-1}\begin{bmatrix}1&1\\1&-1\end{bmatrix} \]
holds, we have 
\[ \zeta(\widetilde{\rH}_{1\QRM}^{(g,\Delta,\varepsilon)};s,\lambda)=\zeta(\widetilde{\rH}_{\flat}^{(g,\Delta,\varepsilon)};s,\lambda). \]
Next, for $\nu>0$, $g,\Delta,\varepsilon\in\BR$, we define the operator $\widetilde{\rH}_\nu^{(g,\Delta,\varepsilon)}$ on $\cH_\nu(\bD)$ by 
\begin{align}
\widetilde{\rH}_\nu^{(g,\Delta,\varepsilon)}:\hspace{-3pt}
&=\cosh(2g)\bI\biggl(2z\frac{\rd}{\rd z}+\nu\biggr)+\sinh(2g)\bsigma_3\biggl((1+z^2)\frac{\rd}{\rd z}+\nu z\biggr)+\Delta\bsigma_1+\varepsilon\bsigma_3 \notag\\
&=\begin{bmatrix} \rd\tau_\nu(Y(g))+\varepsilon & \Delta \\ \Delta & \rd\tau_\nu(Y(-g))-\varepsilon \end{bmatrix}, \label{def_Hnu}
\end{align}
where $Y(g)\in\fg^\BC$ is as in Example \ref{ex_Yg}. Then since 
\begin{align*}
\widetilde{\rH}_{1/2}^{(g,\Delta,\varepsilon)}
&=\frac{1}{2}\begin{bmatrix}1&1\\1&-1\end{bmatrix}\cCL_{1/2}\bigl(\widetilde{\rH}_{2\QRM}^{(g,\Delta,\varepsilon)}\bigr|_{L^2(\BR)_{\even}}\bigr)\cCL_{1/2}^{-1}\begin{bmatrix}1&1\\1&-1\end{bmatrix}, \\
\widetilde{\rH}_{3/2}^{(g,\Delta,\varepsilon)}
&=\frac{1}{2}\begin{bmatrix}1&1\\1&-1\end{bmatrix}\cCL_{3/2}\bigl(\widetilde{\rH}_{2\QRM}^{(g,\Delta,\varepsilon)}\bigr|_{L^2(\BR)_{\odd}}\bigr)\cCL_{3/2}^{-1}\begin{bmatrix}1&1\\1&-1\end{bmatrix} 
\end{align*}
hold, we have 
\[ \zeta(\widetilde{\rH}_{2\QRM}^{(g,\Delta,\varepsilon)};s,\lambda)=\zeta(\widetilde{\rH}_{1/2}^{(g,\Delta,\varepsilon)};s,\lambda)+\zeta(\widetilde{\rH}_{3/2}^{(g,\Delta,\varepsilon)};s,\lambda). \]
Similarly, for $\nu,\alpha,\beta>0$, $\alpha\beta>1$, $\eta\in\BR$, we define the operator $\widetilde{\rQ}_\nu^{(\alpha,\beta,\eta)}$ on $\cH_\nu(\bD)$ by 
\begin{align}
\widetilde{\rQ}_\nu^{(\alpha,\beta,\eta)}:\hspace{-3pt}&=\frac{\alpha+\beta}{2\sqrt{\alpha\beta(\alpha\beta-1)}}
\biggl(\begin{bmatrix}\alpha&0\\0&\beta\end{bmatrix}\biggl(2z\frac{\rd}{\rd z}+\nu\biggr) \notag\\
&\eqspace{}+\begin{bmatrix}0&1\\1&0\end{bmatrix}\biggl((1+z^2)\frac{\rd}{\rd z}+\nu z\biggr)+2\eta\sqrt{\alpha\beta-1}\begin{bmatrix}0&1\\1&0\end{bmatrix}\biggr). \label{def_Qnu}
\end{align}
Then again since 
\begin{align*}
&\widetilde{\rQ}_{1/2}^{(\alpha,\beta,\eta)} \\*
&=e^{\pi\sqrt{-1}\bsigma_3/4}\tau_\nu\bigl(e^{\pi\sqrt{-1}\bsigma_3/4}\bigr)
\cCL_{1/2}\bigl(\widetilde{\rH}_\NCHO^{(\alpha,\beta,\eta)}\bigr|_{L^2(\BR)_{\even}}\bigr)\cCL_{1/2}^{-1}\tau_\nu\bigl(e^{-\pi\sqrt{-1}\bsigma_3/4}\bigr)e^{-\pi\sqrt{-1}\bsigma_3/4}, \\
&\widetilde{\rQ}_{3/2}^{(\alpha,\beta,\eta)} \\*
&=e^{\pi\sqrt{-1}\bsigma_3/4}\tau_\nu\bigl(e^{\pi\sqrt{-1}\bsigma_3/4}\bigr)
\cCL_{3/2}\bigl(\widetilde{\rH}_\NCHO^{(\alpha,\beta,\eta)}\bigr|_{L^2(\BR)_{\odd}}\bigr)\cCL_{3/2}^{-1}\tau_\nu\bigl(e^{-\pi\sqrt{-1}\bsigma_3/4}\bigr)e^{-\pi\sqrt{-1}\bsigma_3/4} 
\end{align*}
hold, where $e^{\pm\pi\sqrt{-1}\bsigma_3/4}=\left[\begin{smallmatrix}e^{\pm\pi\sqrt{-1}/4}&0\\0&e^{\mp\pi\sqrt{-1}/4}\end{smallmatrix}\right]$, we have 
\[ \zeta(\widetilde{\rH}_\NCHO^{(\alpha,\beta,\eta)};s,\lambda)=\zeta(\widetilde{\rQ}_{1/2}^{(\alpha,\beta,\eta)};s,\lambda)+\zeta(\widetilde{\rQ}_{3/2}^{(\alpha,\beta,\eta)};s,\lambda). \]

\begin{remark}\label{rem_multivariate}
We can consider multivariate generalizations of the 2-photon quantum Rabi model and the non-commutative harmonic oscillator on $L^2(\BR^n)\otimes\BC^2$ as 
\begin{align*}
\widetilde{\rH}_{2\QRM,\BR^n}^{(g,\Delta,\varepsilon)}:\hspace{-3pt}
&=\cosh(2g)\bI\sum_{j=1}^n\biggl(\ra_j^\dagger\ra_j+\frac{1}{2}\biggr)+\frac{\sinh(2g)}{2}\bsigma_1\sum_{j=1}^n(\ra_j^2+(\ra_j^\dagger)^2)+\Delta\bsigma_3+\varepsilon\bsigma_1, \\
\widetilde{\rH}_{\NCHO,\BR^n}^{(\alpha,\beta,\eta)}:\hspace{-3pt}&=\frac{\alpha+\beta}{2\sqrt{\alpha\beta(\alpha\beta-1)}}
\biggl(\begin{bmatrix}\alpha&0\\0&\beta\end{bmatrix}\sum_{j=1}^n\biggl(\ra_j^\dagger\ra_j+\frac{1}{2}\biggr) \\*
&\eqspace{}+\frac{1}{2}\begin{bmatrix}0&-1\\1&0\end{bmatrix}\sum_{j=1}^n(\ra_j^2-(\ra_j^\dagger)^2) +2\eta\sqrt{\alpha\beta-1}\sqrt{-1}\begin{bmatrix}0&-1\\1&0\end{bmatrix}\biggr). 
\end{align*}
Then by the decomposition of $L^2(\BR^n)$ in Remark \ref{rem_sph_harm}, we have 
\begin{align*}
\zeta\bigl(\widetilde{\rH}_{2\QRM,\BR^n}^{(g,\Delta,\varepsilon)}|_{L^2_{k+\frac{n}{2}}(\BR_{>0})\otimes\cH\cP_k(\BR^n)};s,\lambda\bigr)
&=\zeta\bigl(\widetilde{\rH}_{k+\frac{n}{2}}^{(g,\Delta,\varepsilon)};s,\lambda\bigr) \dim\cH\cP_k(\BR^n), \\
\zeta\bigl(\widetilde{\rH}_{\NCHO,\BR^n}^{(\alpha,\beta,\eta)}|_{L^2_{k+\frac{n}{2}}(\BR_{>0})\otimes\cH\cP_k(\BR^n)};s,\lambda\bigr)
&=\zeta\bigl(\widetilde{\rQ}_{k+\frac{n}{2}}^{(\alpha,\beta,\eta)};s,\lambda\bigr) \dim\cH\cP_k(\BR^n), 
\end{align*}
where $\dim\cH\cP_k(\BR^n)=\bigl(\begin{smallmatrix}k+n-1\\n-1\end{smallmatrix}\bigr) -\bigl(\begin{smallmatrix}k+n-3\\n-1\end{smallmatrix}\bigr)$, and hence we have 
\begin{align*}
\zeta(\widetilde{\rH}_{2\QRM,\BR^n}^{(g,\Delta,\varepsilon)};s,\lambda)
&=\sum_{k=0}^\infty\zeta\bigl(\widetilde{\rH}_{k+\frac{n}{2}}^{(g,\Delta,\varepsilon)};s,\lambda\bigr) \dim\cH\cP_k(\BR^n), \\
\zeta(\widetilde{\rH}_{\NCHO,\BR^n}^{(\alpha,\beta,\eta)};s,\lambda)
&=\sum_{k=0}^\infty\zeta\bigl(\widetilde{\rQ}_{k+\frac{n}{2}}^{(\alpha,\beta,\eta)};s,\lambda\bigr) \dim\cH\cP_k(\BR^n). 
\end{align*}
\end{remark}

Next, for $\nu>1$, we consider the Hilbert space 
\[ \cH_\nu(\sqrt{\nu}\bD):=\biggl\{ f(w)\in\cO(\sqrt{\nu}\bD)\biggm| \frac{\nu-1}{\nu\pi}\int_{\sqrt{\nu}\bD}|f(w)|^2\biggl(1-\frac{|w|^2}{\nu}\biggr)^{\nu-2}\,\rd w<\infty\biggr\}, \]
and the dilation map  
\[ \rho_\nu\colon\cH_\nu(\bD)\longrightarrow\cH_\nu(\sqrt{\nu}\bD), \qquad f(z)\longmapsto f(w/\sqrt{\nu}). \]
Then $\rho_\nu$ is a unitary isomorphism, and the composition 
\begin{align*}
&\rho_\nu\circ\widetilde{\rH}_{\nu}^{(g,\Delta,\varepsilon)}\circ\rho_\nu^{-1} \\
&=\cosh(2g)\bI\biggl(2w\frac{\rd}{\rd w}+\nu\biggr)+\sqrt{\nu}\sinh(2g)\bsigma_3\biggl(\biggl(1+\frac{w^2}{\nu}\biggr)\frac{\rd}{\rd w}+w\biggr)+\Delta\bsigma_1+\varepsilon\bsigma_3 
\end{align*}
defines a $\BC^2$-valued differential operator on $\sqrt{\nu}\bD$. 
Then as a formal differential operator, we have the limit 
\begin{align*}
&\lim_{\nu\to\infty}\frac{1}{2}\bigl(\rho_\nu\circ\widetilde{\rH}_{\nu}^{(g/\sqrt{\nu},2\Delta,2\varepsilon)}\circ\rho_\nu^{-1}-\nu\bigr) \\
&=\lim_{\nu\to\infty}\biggl(\bI\biggl(\cosh\biggl(\frac{2g}{\sqrt{\nu}}\biggr)w\frac{\rd}{\rd w}+\biggl(\cosh\biggl(\frac{2g}{\sqrt{\nu}}\biggr)-1\biggr)\frac{\nu}{2}\biggr) \\*
&\eqspace{}+\frac{\sqrt{\nu}}{2}\sinh\biggl(\frac{2g}{\sqrt{\nu}}\biggr)\bsigma_3\biggl(\biggl(1+\frac{w^2}{\nu}\biggr)\frac{\rd}{\rd w}+w\biggr)+\Delta\bsigma_1+\varepsilon\bsigma_3\biggr) \\
&=\bI\biggl(w\frac{\rd}{\rd w}+g^2\biggr)+g\bsigma_3\biggl(\frac{\rd}{\rd w}+w\biggr)+\Delta\bsigma_1+\varepsilon\bsigma_3=\widetilde{\rH}_{\flat}^{(g,\Delta,\varepsilon)}. 
\end{align*}
Hence it is expected that 
\[ 2^s\lim_{\nu\to\infty}\zeta\bigl(\widetilde{\rH}_{\nu}^{(g/\sqrt{\nu},2\Delta,2\varepsilon)};s,2\lambda-\nu\bigr)
=\zeta\bigl(\widetilde{\rH}_{\flat}^{(g,\Delta,\varepsilon)};s,\lambda\bigr) \]
holds. We verify this for $s=n\in\BZ_{\ge 2}$ and for suitable $\lambda, g,\Delta,\varepsilon$ later in Corollary \ref{cor_zeta}\,(3).

\section{Power series expansion of special values of spectral zeta functions}\label{section_zeta_sum}

In this section, we express the spectral zeta functions $\zeta(\widetilde{\rH}_\flat^{(g,\Delta,\varepsilon)};n,\lambda)$, $\zeta(\widetilde{\rH}_\nu^{(g,\Delta,\varepsilon)};n,\lambda)$ 
and $\zeta(\widetilde{\rQ}_\nu^{(\alpha,\beta,\eta)};n,\lambda)$ for $n\in\BZ_{\ge 2}$ of $\widetilde{\rH}_\flat^{(g,\Delta,\varepsilon)}$, 
$\widetilde{\rH}_\nu^{(g,\Delta,\varepsilon)}$ and $\widetilde{\rQ}_\nu^{(\alpha,\beta,\eta)}$ given in (\ref{def_Hflat}), (\ref{def_Hnu}) and (\ref{def_Qnu}) 
as power series of $\Delta$ or $\frac{\alpha-\beta}{\alpha+\beta}$. 
To do this, for $g\in\BR$, $\lambda,\varepsilon\in\BC$, let $\rh_\pm^\flat(\lambda;g,\varepsilon)$ be the operator on $\cH_\flat(\BC)$ given by 
\[ \rh_\pm^\flat(\lambda;g,\varepsilon):=\rd\tau_\flat(X(\pm g))\pm\varepsilon+\lambda
=w\frac{\rd}{\rd w}\pm g\biggl(\frac{\rd}{\rd w}+w\biggr)+g^2\pm\varepsilon+\lambda, \]
where $X(g)\in\fg^\BC_\flat$ is as in Example \ref{ex_Xg}, and if $\lambda\pm\varepsilon\notin -\BZ_{\ge 0}$, then for $m\in\BZ_{\ge 1}$, let 
\begin{align*}
R_m^\flat(\lambda;g,\varepsilon)&:=\Tr\bigl(\bigl(\rh_+^\flat(\lambda;g,\varepsilon)^{-1}\rh_-^\flat(\lambda;g,\varepsilon)^{-1}\bigr)^m\bigr), \\
C_\flat(\lambda,\varepsilon)&:=\max_{\delta\in\{\pm \}}\bigl\Vert\rh_\delta^\flat(\lambda;g,\varepsilon)^{-1}\bigr\Vert_\op
=\max_{\delta\in\{\pm \},\ k\in\BZ_{\ge 0}}\frac{1}{|k+\lambda+\delta\varepsilon|}, \\
C'_\flat(\lambda,\varepsilon)&:=\max_{\delta\in\{\pm \}}\bigl\Vert\rh_\delta^\flat(\lambda;g,\varepsilon)^{-1}\bigr\Vert_\HS
=\max_{\delta\in\{\pm \}}\biggl(\sum_{k=0}^\infty\frac{1}{|k+\lambda+\delta\varepsilon|^2}\biggr)^{1/2}. 
\end{align*}
Similarly, for $\nu\in\BR_{>0}$, $g\in\BR$, $\lambda,\varepsilon\in\BC$, let $\rh_\pm^\nu(\lambda;g,\varepsilon)$ be the operator on $\cH_\nu(\bD)$ given by 
\begin{align*}
\rh_\pm^\nu(\lambda;g,\varepsilon):\hspace{-3pt}&=\rd\tau_\nu(Y(\pm g))\pm\varepsilon+\lambda \\
&=\cosh(2g)\biggl(2z\frac{\rd}{\rd z}+\nu\biggr)\pm\sinh(2g)\biggl((1+z^2)\frac{\rd}{\rd z}+\nu z\biggr)\pm\varepsilon+\lambda, 
\end{align*}
where $Y(g)\in\fg^\BC$ is as in Example \ref{ex_Yg}, and if $\lambda\pm\varepsilon\notin -(2\BZ_{\ge 0}+\nu)$, then for $m\in\BZ_{\ge 1}$, let 
\begin{align*}
R_m^\nu(\lambda;g,\varepsilon)&:=\Tr\bigl(\bigl(\rh_+^\nu(\lambda;g,\varepsilon)^{-1}\rh_-^\nu(\lambda;g,\varepsilon)^{-1}\bigr)^m\bigr), \\
C_\nu(\lambda,\varepsilon)&:=\max_{\delta\in\{\pm \}}\bigl\Vert\rh_\delta^\nu(\lambda;g,\varepsilon)^{-1}\bigr\Vert_\op
=\max_{\delta\in\{\pm \},\ k\in\BZ_{\ge 0}}\frac{1}{|2k+\lambda+\delta\varepsilon+\nu|}, \\
C'_\nu(\lambda,\varepsilon)&:=\max_{\delta\in\{\pm \}}\bigl\Vert\rh_\delta^\nu(\lambda;g,\varepsilon)^{-1}\bigr\Vert_\HS
=\max_{\delta\in\{\pm \}}\biggl(\sum_{k=0}^\infty\frac{1}{|2k+\lambda+\delta\varepsilon+\nu|^2}\biggr)^{1/2}. 
\end{align*}
Then we have 
\[ 2C_\nu(2\lambda-\nu,2\varepsilon)=C_\flat(\lambda,\varepsilon), \qquad 2C_\nu'(2\lambda-\nu,2\varepsilon)=C_\flat'(\lambda,\varepsilon). \]

First, we prove the following. 
\begin{lemma}\label{lem_diff_R}
Let $\nu\in\BR_{>0}\cup\{\flat\}$, $g\in\BR$, $\lambda,\varepsilon\in\BC$, $m\in\BZ_{\ge 1}$. 
We assume $\lambda\pm\varepsilon\notin -\BZ_{\ge 0}$ when $\nu=\flat$, and $\lambda\pm\varepsilon\notin -(2\BZ_{\ge 0}+\nu)$ when $\nu\in\BR_{>0}$. 
Then $R_m^\nu(\lambda;g,\varepsilon)$ is holomorphic with respect to $\lambda$, and for $n\in\BZ_{\ge 0}$, we have 
\begin{align*}
&\frac{1}{n!}\frac{\partial^n R_m^\nu}{\partial\lambda^n}(\lambda;g,\varepsilon)
=(-1)^n\sum_{\substack{\bn\in(\BZ_{\ge 0})^{2m} \\ |\bn|=n}}
\Tr\biggl(\prod_{1\le j\le m}^{\longrightarrow}(\rh_+^\nu(\lambda;g,\varepsilon)^{-n_{2j-1}-1}\rh_-^\nu(\lambda;g,\varepsilon)^{-n_{2j}-1})\biggr) \\
&=\frac{(-1)^n}{2}\sum_{\substack{\bn\in(\BZ_{\ge 0})^{2m} \\ |\bn|=n}}
\Tr\biggl(\prod_{1\le j\le 2m}^{\longrightarrow}\biggl(\begin{bmatrix} \rh_+^\nu(\lambda;g,\varepsilon)^{-1} & 0 \\ 0 & \rh_-^\nu(\lambda;g,\varepsilon)^{-1} \end{bmatrix}^{n_j+1}
\begin{bmatrix}0&1\\1&0\end{bmatrix}\biggr)\biggr). 
\end{align*}
\end{lemma}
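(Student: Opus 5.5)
The plan is to view $R_m^\nu$ as the trace of a product of $2m$ resolvents that depend holomorphically on $\lambda$, and to differentiate term by term. Write $A_\pm(\lambda):=\rh_\pm^\nu(\lambda;g,\varepsilon)^{-1}$. By Example \ref{ex_Xg} or \ref{ex_Yg}, each $A_\pm(\lambda)$ is the resolvent of a self-adjoint operator with discrete spectrum $\{k\pm\varepsilon\}$ (resp.\ $\{2k+\nu\pm\varepsilon\}$), shifted by $\lambda$. Hence $\lambda\mapsto A_\pm(\lambda)$ is holomorphic in the Hilbert--Schmidt norm on the allowed domain. One way to see this is to diagonalize in the common eigenbasis: the entries $(k+\lambda\pm\varepsilon)^{-1}$ are holomorphic and locally uniformly square-summable. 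The resolvent identity $A(\lambda)-A(\lambda')=(\lambda'-\lambda)A(\lambda)A(\lambda')$ gives $\frac{\rd}{\rd\lambda}A_\pm(\lambda)=-A_\pm(\lambda)^2$, with the difference quotient converging in $\HS$-norm. Iterating gives $\frac{1}{k!}\frac{\rd^k}{\rd\lambda^k}A_\pm=(-1)^kA_\pm^{k+1}$.

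Since $m\ge1$, the product $(A_+A_-)^m$ contains at least two Hilbert--Schmidt factors. By H\"older's inequality (\ref{Holder_ineq}), or simply $\Vert\rA\rB\Vert_{\Tr}\le\Vert\rA\Vert_\HS\Vert\rB\Vert_\HS$ together with the operator-norm bounds for the remaining factors, the map $(B_1,\dots,B_{2m})\mapsto B_1\cdots B_{2m}$ is a bounded multilinear map from $\HS^{2m}$ into the trace class. Composing it with the holomorphic maps $A_\pm$ and the continuous linear functional $\Tr$ shows that $R_m^\nu$ is holomorphic. The Leibniz rule for multilinear maps then gives
\[ \frac{1}{n!}\frac{\rd^n}{\rd\lambda^n}\bigl(A_+A_-\cdots A_+A_-\bigr)=\sum_{|\bn|=n}\prod_{j}\frac{1}{n_j!}\frac{\rd^{n_j}}{\rd\lambda^{n_j}}A_{\pm}, \]
with the $j$-th factor being $A_+$ for odd $j$ and $A_-$ for even $j$. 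Substituting $\frac{1}{n_j!}A^{(n_j)}=(-1)^{n_j}A^{n_j+1}$ and collecting the signs to $(-1)^{|\bn|}=(-1)^n$ yields the first formula after taking the trace. The main technical point is justifying the interchange of trace and derivative. This is handled by the trace-norm holomorphy just described; alternatively, one can compute directly in the eigenbases, although the two families of eigenvectors for $\pm g$ differ, so the norm argument is cleaner.

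For the second expression, set $D:=\mathrm{diag}(A_+,A_-)$ and $J:=\bsigma_1$. A direct computation gives $D^{a}JD^{b}J=\mathrm{diag}(A_+^aA_-^b,\;A_-^aA_+^b)$. Hence
\[ \prod_{j=1}^{2m}(D^{n_j+1}J)=\mathrm{diag}\Bigl(\prod_j A_+^{n_{2j-1}+1}A_-^{n_{2j}+1},\;\prod_j A_-^{n_{2j-1}+1}A_+^{n_{2j}+1}\Bigr), \]
and its trace is the sum of the traces of these two blocks. The second block, summed over $|\bn|=n$, becomes the first after relabeling $\bn$ by a cyclic shift $n_j\mapsto n_{j+1}$ and applying cyclicity of the trace (valid since all factors are bounded and the product is trace class). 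The two block sums are therefore equal, so the sum equals twice the first expression, which accounts for the factor $\frac12$.
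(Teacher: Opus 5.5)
Your proof is correct. For the second identity it is the paper's argument: split $\prod_j(D^{n_j+1}J)$ into its two diagonal blocks, then identify the two block sums by a cyclic relabelling of $\bn$ and cyclicity of the trace.

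For the first identity your route differs. The paper never differentiates. It fixes $\lambda$ and expands $(\rh_\pm+\mu)^{-1}=\sum_{k\ge 0}(-\mu)^k\rh_\pm^{-k-1}$ as a Neumann series for $|\mu|<C_\nu(\lambda,\varepsilon)^{-1}$, then multiplies out $\bigl((\rh_++\mu)^{-1}(\rh_-+\mu)^{-1}\bigr)^m$. It bounds the coefficient of $\mu^n$ in trace norm by $\binom{2m+n-1}{n}C_\nu(\lambda,\varepsilon)^{2m+n-2}C'_\nu(\lambda,\varepsilon)^2$, using two Hilbert--Schmidt factors and operator norm for the rest. The Taylor coefficients are then read off from this absolutely convergent power series.

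You instead prove Hilbert--Schmidt-norm holomorphy of each resolvent with $A_\pm'=-A_\pm^2$. You note that the $2m$-fold product is a bounded multilinear map from Hilbert--Schmidt operators into the trace class, and apply the Leibniz rule. Your version is more conceptual and avoids the coefficient counting: holomorphy and the formula come out together. The paper's version also gives an explicit disc of convergence and explicit trace-norm bounds of the form $C_\nu^{2m+n-2}C_\nu'^2$. Estimates of the same kind are reused in Theorem~\ref{thm_zeta_sum} and Lemma~\ref{lem_limit}.

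Two small corrections.

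First, your stated reason for cyclicity ("all factors are bounded and the product is trace class") is not enough in general. For bounded $X,Y$, $XY$ being trace class does not even force $YX$ to be trace class. What actually works here is that $X=A_-^{n_1+1}$ and the remaining product $Y$ are both Hilbert--Schmidt, and $\Tr(XY)=\Tr(YX)$ holds for Hilbert--Schmidt $X,Y$.

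Second, for non-real $\varepsilon$ the operators $\rh_\pm^\nu(\lambda;g,\varepsilon)$ are normal rather than self-adjoint. They are still diagonal in an orthonormal eigenbasis, so this does not affect your diagonalization argument.
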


\begin{proof}
In the following, we abbreviate $\rh_\pm^\nu(\lambda;g,\varepsilon)=:\rh_\pm$. 
Then for $|\mu|<C_\nu(\lambda,\varepsilon)^{-1}$, we have 
\begin{align*}
R_m^\nu(\lambda+\mu;g,\varepsilon)&=\Tr\bigl(\bigl((\rh_++\mu)^{-1}(\rh_-+\mu)^{-1}\bigr)^m\bigr) \\
&=\Tr\bigl(\bigl(\rh_+^{-1}(1+\mu\rh_+^{-1})^{-1}\rh_-^{-1}(1+\mu\rh_-^{-1})^{-1}\bigr)^m\bigr) \\
&=\Tr\biggl(\biggl(\biggl(\sum_{n=0}^\infty (-\mu)^n\rh_+^{-n-1}\biggr)\biggl(\sum_{n'=0}^\infty (-\mu)^{n'}\rh_-^{-n'-1}\biggr)\biggr)^m\biggr) \\
&=\Tr\biggl(\sum_{n=0}^\infty (-\mu)^n \sum_{\substack{\bn\in(\BZ_{\ge 0})^{2m} \\ |\bn|=n}} \prod_{1\le j\le m}^{\longrightarrow} (\rh_+^{-n_{2j-1}-1}\rh_-^{-n_{2j}-1})\biggr), 
\end{align*}
and since 
\begin{align*}
&\biggl\Vert\sum_{\substack{\bn\in(\BZ_{\ge 0})^{2m} \\ |\bn|=n}} \prod_{1\le j\le m}^{\longrightarrow} (\rh_+^{-n_{2j-1}-1}\rh_-^{-n_{2j}-1})\biggr\Vert_{\Tr} \\
&\le \binom{2m+n-1}{n}\max\{\Vert\rh_+^{-1}\Vert_\HS,\Vert\rh_-^{-1}\Vert_\HS\}^2\max\{\Vert\rh_+^{-1}\Vert_\op,\Vert\rh_-^{-1}\Vert_\op\}^{2m+n-2} \\
&=\frac{(2m)_n}{n!}C_\nu(\lambda,\varepsilon)^{2m+n-2}C'_\nu(\lambda,\varepsilon)^2
\end{align*}
holds, the series 
\[ R_m^\nu(\lambda+\mu;g,\varepsilon)
=\sum_{n=0}^\infty (-\mu)^n \sum_{\substack{\bn\in(\BZ_{\ge 0})^{2m} \\ |\bn|=n}} \Tr\biggl(\prod_{1\le j\le m}^{\longrightarrow} (\rh_+^{-n_{2j-1}-1}\rh_-^{-n_{2j}-1})\biggr) \]
converges absolutely. Hence this is analytic at $\mu=0$, and combining this with the Taylor expansion 
\[ R_m^\nu(\lambda+\mu;g,\varepsilon)=\sum_{n=0}^\infty \frac{1}{n!}\frac{\partial^n R_m^\nu}{\partial\lambda^n}(\lambda;g,\varepsilon)\mu^n, \]
we get the 1st equality. Also, by 
\begin{align*}
&\sum_{\substack{\bn\in(\BZ_{\ge 0})^{2m} \\ |\bn|=n}}
\Tr\biggl(\prod_{1\le j\le 2m}^{\longrightarrow}\biggl(\begin{bmatrix} \rh_+^{-1} & 0 \\ 0 & \rh_-^{-1} \end{bmatrix}^{n_j+1}
\begin{bmatrix}0&1\\1&0\end{bmatrix}\biggr)\biggr) \\
&=\sum_{\substack{\bn\in(\BZ_{\ge 0})^{2m} \\ |\bn|=n}} \Tr\biggl(\prod_{1\le j\le m}^{\longrightarrow} (\rh_+^{-n_{2j-1}-1}\rh_-^{-n_{2j}-1})\biggr)
+\sum_{\substack{\bn\in(\BZ_{\ge 0})^{2m} \\ |\bn|=n}} \Tr\biggl(\prod_{1\le j\le m}^{\longrightarrow} (\rh_-^{-n_{2j-1}-1}\rh_+^{-n_{2j}-1})\biggr)
\end{align*}
and 
\begin{align*}
&\sum_{\substack{\bn\in(\BZ_{\ge 0})^{2m} \\ |\bn|=n}} \Tr\biggl(\prod_{1\le j\le m}^{\longrightarrow} (\rh_-^{-n_{2j-1}-1}\rh_+^{-n_{2j}-1})\biggr) \\
&=\sum_{\substack{\bn\in(\BZ_{\ge 0})^{2m} \\ |\bn|=n}} \Tr\biggl(\rh_+^{-n_{2m}-1}\prod_{1\le j\le m-1}^{\longrightarrow} (\rh_-^{-n_{2j-1}-1}\rh_+^{-n_{2j}-1})\rh_-^{-n_{2m-1}-1}\biggr) \\
&=\sum_{\substack{\bn\in(\BZ_{\ge 0})^{2m} \\ |\bn|=n}} \Tr\biggl(\prod_{1\le j\le m}^{\longrightarrow} (\rh_+^{-n_{2j-1}-1}\rh_-^{-n_{2j}-1})\biggr), 
\end{align*}
we get the 2nd equality. 
\end{proof}

The next theorem gives the expansions of the spectral zeta functions. 
\begin{theorem}\label{thm_zeta_sum}
\begin{enumerate}
\item Let $g,\Delta,\varepsilon\in\BR$, $\lambda\in\BC$. We assume $\lambda\pm\varepsilon\notin -\BZ_{\ge 0}$ and $|\Delta|<C_\flat(\lambda,\varepsilon)^{-1}$. 
Then for $n\in\BZ_{\ge 2}$, $(\widetilde{\rH}_\flat^{(g,\Delta,\varepsilon)}+\lambda)^{-n}$ is of trace class, and we have 
\begin{align*}
\zeta\bigl(\widetilde{\rH}_\flat^{(g,\Delta,\varepsilon)};n,\lambda\bigr)
=\zeta(n,\lambda+\varepsilon)+\zeta(n,\lambda-\varepsilon)
+\frac{(-1)^n}{(n-1)!}\sum_{m=1}^\infty \frac{\Delta^{2m}}{m}\frac{\partial^n R_m^\flat}{\partial\lambda^n}(\lambda;g,\varepsilon). 
\end{align*}
\item Let $\nu\in\BR_{>0}$, $g,\Delta,\varepsilon\in\BR$, $\lambda\in\BC$. 
We assume $\lambda\pm\varepsilon\notin -(2\BZ_{\ge 0}+\nu)$ and $|\Delta|<C_\nu(\lambda,\varepsilon)^{-1}$. 
Then for $n\in\BZ_{\ge 2}$, $(\widetilde{\rH}_\nu^{(g,\Delta,\varepsilon)}+\lambda)^{-n}$ is of trace class, and we have 
\begin{align*}
&\zeta\bigl(\widetilde{\rH}_\nu^{(g,\Delta,\varepsilon)};n,\lambda\bigr) \\
&=2^{-n}\zeta\biggl(n,\frac{\lambda+\varepsilon+\nu}{2}\biggr)+2^{-n}\zeta\biggl(n,\frac{\lambda-\varepsilon+\nu}{2}\biggr)
+\frac{(-1)^n}{(n-1)!}\sum_{m=1}^\infty \frac{\Delta^{2m}}{m}\frac{\partial^n R_m^\nu}{\partial\lambda^n}(\lambda;g,\varepsilon). 
\end{align*}
\item Let $\nu\in\BR_{>0}$, $\alpha,\beta,\eta\in\BR$, $\lambda\in\BC$. We assume $\alpha\beta>1$, $\lambda\pm 2\eta\notin -(2\BZ_{\ge 0}+\nu)$ 
and $\bigl|\lambda\frac{\alpha-\beta}{\alpha+\beta}\bigr|C_\nu(\lambda,2\eta)<1$. 
Then for $n\in\BZ_{\ge 2}$, $(\widetilde{\rQ}_\nu^{(\alpha,\beta,\eta)}+\lambda)^{-n}$ is of trace class, and we have 
\begin{multline*}
\zeta\bigl(\widetilde{\rQ}_\nu^{(\alpha,\beta,\eta)};n,\lambda\bigr)
=2^{-n}\zeta\biggl(n,\frac{\lambda+2\eta+\nu}{2}\biggr)+2^{-n}\zeta\biggl(n,\frac{\lambda-2\eta+\nu}{2}\biggr) \\*
{}+\frac{(-1)^n}{(n-1)!}\sum_{m=1}^\infty \frac{1}{m}\biggl(\frac{\alpha-\beta}{\alpha+\beta}\biggr)^{2m}
\frac{\partial^n}{\partial \lambda^n}\lambda^{2m}R_m^\nu\biggl(\lambda;\frac{1}{2}\artanh\biggl(\frac{1}{\sqrt{\alpha\beta}}\biggr),2\eta\biggr). 
\end{multline*}
\end{enumerate}
\end{theorem}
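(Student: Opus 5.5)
The plan is to treat $\Delta\bsigma_1$ as a perturbation of the diagonal operator $\rD(\lambda):=\mathrm{diag}(\rh_+(\lambda),\rh_-(\lambda))$, with $\rh_\pm=\rh_\pm^\nu(\lambda;g,\varepsilon)$ for $\nu\in\BR_{>0}\cup\{\flat\}$, and to organise the resulting Neumann series as a ``$\log\det$'' expansion. By (\ref{def_Hflat}) and (\ref{def_Hnu}) we have $\widetilde{\rH}^{(g,\Delta,\varepsilon)}_\nu+\lambda=\rD(\lambda)+\Delta\bsigma_1$, hence
\[ (\widetilde{\rH}+\lambda)^{-1}=\sum_{k=0}^\infty(-\Delta)^k(\rD^{-1}\bsigma_1)^k\rD^{-1}. \]
This converges in operator norm, because $\Vert\rD^{-1}\bsigma_1\Vert_\op=C_\nu(\lambda,\varepsilon)<|\Delta|^{-1}$. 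Each $\rh_\pm^{-1}$ is Hilbert--Schmidt by Examples \ref{ex_Xg} and \ref{ex_Yg}. Therefore every term with $k\ge1$ is trace class, with trace norm at most $C'^2_\nu C_\nu^{k-1}$, and the series $(\widetilde{\rH}+\lambda)^{-1}-\rD^{-1}=\sum_{k\ge1}(\cdots)$ converges in trace norm. It follows that $(\widetilde{\rH}+\lambda)^{-n}$ is trace class for $n\ge2$: write it as $(\widetilde{\rH}+\lambda)^{-1}\cdot(\widetilde{\rH}+\lambda)^{-1}\cdot(\text{bounded})$ and use $\Vert\rA\rB\Vert_{\Tr}\le\Vert\rA\Vert_\HS\Vert\rB\Vert_\HS$, where each factor $(\widetilde{\rH}+\lambda)^{-1}=\rD^{-1}(1+\Delta\bsigma_1\rD^{-1})^{-1}$ is Hilbert--Schmidt.

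Next I introduce $F(\lambda):=-\sum_{m\ge1}\frac{\Delta^{2m}}{m}R_m^\nu(\lambda;g,\varepsilon)$. Using the bound in the proof of Lemma \ref{lem_diff_R} (at $n=0$), this series converges locally uniformly in $\lambda$ on the region where $|\Delta|C_\nu<1$, so $F$ is holomorphic there. Formally $F=\Tr\log(1+\Delta\rD^{-1}\bsigma_1)$: the odd powers of $\rD^{-1}\bsigma_1$ are off-diagonal and have zero trace, while $\Tr((\rD^{-1}\bsigma_1)^{2m})=2R_m^\nu$. The key identity to prove is
\[ F'(\lambda)=\Tr\bigl((\widetilde{\rH}+\lambda)^{-1}-\rD(\lambda)^{-1}\bigr). \]
For this I apply Lemma \ref{lem_diff_R} with $n=1$ in its second (matrix) form. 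It gives $\partial_\lambda R_m=-\tfrac12\sum_{j=1}^{2m}\Tr\bigl(\cdots\rD^{-2}\bsigma_1\cdots\bigr)$, and by cyclicity of the trace this equals $-m\Tr\bigl(\rD^{-1}(\bsigma_1\rD^{-1})^{2m}\bigr)$. Hence $F'=\sum_m\Delta^{2m}\Tr(\rD^{-1}(\bsigma_1\rD^{-1})^{2m})$, which is exactly the trace of the Neumann series above (the odd terms have zero trace). All interchanges of trace and sum are justified by the trace-norm convergence already established.

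Finally I differentiate this identity $n-1$ more times in $\lambda$. Using $\partial_\lambda^{n-1}(\rA+\lambda)^{-1}=(-1)^{n-1}(n-1)!(\rA+\lambda)^{-n}$ for both $\rA=\widetilde{\rH}$ and $\rA=\rD$, I get
\[ \Tr(\widetilde{\rH}+\lambda)^{-n}=\Tr\rD^{-n}+\frac{(-1)^{n-1}}{(n-1)!}F^{(n)}(\lambda). \]
This is the claimed formula, since $\Tr\rh_\pm^{-n}$ is the Hurwitz zeta term computed in Examples \ref{ex_Xg} and \ref{ex_Yg}. Here I have to check that $\Tr$ commutes with $\partial_\lambda$ on the difference $(\widetilde{\rH}+\lambda)^{-1}-\rD^{-1}$. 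I would do this by showing that $\lambda\mapsto$ this difference is holomorphic into trace-class operators, which follows from the locally uniform trace-norm convergence of the Neumann series. This justification, and keeping all estimates uniform, is the main technical obstacle.

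For (3), I write $\mathrm{diag}(\alpha,\beta)=\frac{\alpha+\beta}2\bI+\frac{\alpha-\beta}2\bsigma_3$. I then aim to show that, after conjugating by a constant matrix and rescaling by a positive diagonal factor, $\widetilde{\rQ}_\nu+\lambda$ becomes $\rD(\lambda)+\kappa\lambda\bsigma_1$, with $\rD$ built from $\rh^\nu_\pm(\lambda;\tfrac12\artanh(1/\sqrt{\alpha\beta}),2\eta)$ and $\kappa=\frac{\alpha-\beta}{\alpha+\beta}$. This step uses $\tanh(2g)=1/\sqrt{\alpha\beta}$ and the normalising constant $\frac{\alpha+\beta}{2\sqrt{\alpha\beta(\alpha\beta-1)}}$. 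The same argument then runs with $\Delta$ replaced by $\kappa\lambda$, which produces the factor $\lambda^{2m}$ inside the derivative. Verifying this conjugation identity concretely is the step I expect to require the most care in (3).
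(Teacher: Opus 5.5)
\paragraph{Parts (1) and (2).} Your argument is correct, and it takes a different route from the paper. The paper expands the $n$-th power of the Neumann series directly and bounds it in trace norm. It then uses a cyclic-orbit count: the sum over $\bn\in(\BZ_{\ge0})^{m+1}$ with $|\bn|=n-1$ equals $\frac{n}{m}$ times the sum over $\bn\in(\BZ_{\ge0})^{m}$ with $|\bn|=n$. This is combined with Lemma \ref{lem_diff_R} for general $n$.

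You need only the case $n=1$ of that lemma. The identity $F'(\lambda)=\Tr\bigl((\widetilde{\rH}+\lambda)^{-1}-\mathrm{D}(\lambda)^{-1}\bigr)$, followed by $n-1$ differentiations, replaces the combinatorics. The price is the analytic input you identified: trace-norm holomorphy of the resolvent difference near $\lambda$. Your locally uniform Neumann-series bound supplies this, because $|\Delta|C_\nu(\mu,\varepsilon)<1$ is an open condition in $\mu$.

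\paragraph{Part (3): the gap.} No combination of conjugation and rescaling can make $\widetilde{\rQ}_\nu+\lambda$ \emph{similar} to a multiple of $\mathrm{D}(\lambda)+\kappa\lambda\bsigma_1$. In a similarity, the coefficient of $\lambda$ would have to be both $\bI$ and a multiple of $\bI+\kappa\bsigma_1$. What actually holds is a congruence. Put $\theta:=\frac{\beta-\alpha}{\alpha+\beta}=-\kappa$, $\mathrm{A}(\lambda):=\mathrm{D}(\lambda)+\theta\lambda\bsigma_1$, and $B:=\frac{1}{\sqrt{2}}\begin{bmatrix}1&1\\1&-1\end{bmatrix}\begin{bmatrix}\sqrt{\alpha}&0\\0&\sqrt{\beta}\end{bmatrix}$. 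Then
\[ \widetilde{\rQ}_\nu+\lambda=\frac{\alpha+\beta}{2\alpha\beta}B^\dagger\mathrm{A}(\lambda)B, \qquad B(\widetilde{\rQ}_\nu+\lambda)^{-1}B^{-1}=\mathrm{A}(\lambda)^{-1}(\bI+\theta\bsigma_1). \]
Hence $\zeta(\widetilde{\rQ}_\nu;n,\lambda)=\Tr\bigl((\mathrm{A}^{-1}(\bI+\theta\bsigma_1))^n\bigr)$, not $\Tr(\mathrm{A}^{-n})$.

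So ``the same argument with $\Delta$ replaced by $\kappa\lambda$'' breaks at both of its steps.
\begin{enumerate}
\item Differentiating $G(\lambda):=-\sum_m\frac{(\kappa\lambda)^{2m}}{m}R_m^\nu(\lambda)$ also hits $\lambda^{2m}$. This produces the term $-2\sum_m\kappa^{2m}\lambda^{2m-1}R_m^\nu(\lambda)$, which does not appear in $\Tr(\mathrm{A}^{-1}-\mathrm{D}^{-1})$.
\item $\partial_\lambda\mathrm{A}^{-1}=-\mathrm{A}^{-1}(\bI+\theta\bsigma_1)\mathrm{A}^{-1}\neq-\mathrm{A}^{-2}$, so your differentiation formula for the resolvent no longer applies.
\end{enumerate}

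\paragraph{Repair within your framework.} Since $\mathrm{A}'=\bI+\theta\bsigma_1$ is constant, $X:=\mathrm{A}^{-1}\mathrm{A}'$ satisfies $\partial_\lambda X=-X^2$. Therefore $\partial_\lambda^{n-1}X=(-1)^{n-1}(n-1)!\,X^n$. The correct key identity is
\[ G'(\lambda)=\Tr\bigl(\mathrm{A}^{-1}(\bI+\theta\bsigma_1)-\mathrm{D}^{-1}-\theta\mathrm{D}^{-1}\bsigma_1\bigr). \]
The extra term from step 1 is exactly the trace of the odd-$k$ part of $\theta\sum_{k\ge1}(-\theta\lambda)^k(\mathrm{D}^{-1}\bsigma_1)^{k+1}$. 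The operator $\theta\mathrm{D}^{-1}\bsigma_1$ must be subtracted: it is Hilbert--Schmidt but not trace class, and its ``trace'' vanishes only formally.

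Now differentiate $n-1$ times. For $n\ge2$, $\mathrm{D}^{-n}\bsigma_1$ is trace class with zero trace, so you obtain the formula in (3). The Leibniz rule produces $\partial_\lambda^n(\lambda^{2m}R_m^\nu)$ automatically. The paper obtains the same result by hand, through its bookkeeping of the coefficients of $(-\theta\lambda)^k\theta^l$.
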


\begin{proof}
(1), (2) We abbreviate $\rh_\pm^\nu(\lambda;g,\varepsilon)=:\rh_\pm$. For $\nu\in\BR_{>0}\cup\{\flat\}$, since we have 
\[ \widetilde{\rH}_\nu^{(g,\Delta,\varepsilon)}+\lambda=\begin{bmatrix} \rh_+^\nu(\lambda;g,\varepsilon) & \Delta \\ \Delta & \rh_-^\nu(\lambda;g,\varepsilon) \end{bmatrix}, \]
for $|\Delta|<C_\nu(\lambda,\varepsilon)^{-1}$, its inverse is given by 
\begin{align*}
(\widetilde{\rH}_\nu^{(g,\Delta,\varepsilon)}+\lambda)^{-1}&=\begin{bmatrix} \rh_+ & \Delta \\ \Delta & \rh_- \end{bmatrix}^{-1}
=\biggl(\bI+\begin{bmatrix}\rh_+^{-1}&0\\0&\rh_-^{-1}\end{bmatrix}\begin{bmatrix}0&\Delta\\\Delta&0\end{bmatrix}\biggr)^{-1}\begin{bmatrix}\rh_+^{-1}&0\\0&\rh_-^{-1}\end{bmatrix} \\
&=\sum_{m=0}^\infty \biggl(-\Delta\begin{bmatrix}\rh_+^{-1}&0\\0&\rh_-^{-1}\end{bmatrix}\begin{bmatrix}0&1\\1&0\end{bmatrix}\biggr)^m\begin{bmatrix}\rh_+^{-1}&0\\0&\rh_-^{-1}\end{bmatrix}. 
\end{align*}
Then the coefficient of $(-\Delta)^m$ of its $n$-th power is the sum of the products of 
$m$ times of $\Bigl[\begin{smallmatrix}\rh_+^{-1}&0\\0&\rh_-^{-1}\end{smallmatrix}\Bigr]\bigl[\begin{smallmatrix}0&1\\1&0\end{smallmatrix}\bigr]$s 
and $n$ times of $\Bigl[\begin{smallmatrix}\rh_+^{-1}&0\\0&\rh_-^{-1}\end{smallmatrix}\Bigr]$s, 
where the rightmost term is always $\Bigl[\begin{smallmatrix}\rh_+^{-1}&0\\0&\rh_-^{-1}\end{smallmatrix}\Bigr]$. Namely, the $n$-th power is computed as 
\begin{align*}
&(\widetilde{\rH}_\nu^{(g,\Delta,\varepsilon)}+\lambda)^{-n} \\
&=\sum_{m=0}^\infty (-\Delta)^m\sum_{\substack{\bn\in(\BZ_{\ge 0})^{m+1} \\ |\bn|=n-1}}\prod_{1\le j\le m}^{\longrightarrow}\biggl(\begin{bmatrix}\rh_+^{-1}&0\\0&\rh_-^{-1}\end{bmatrix}^{n_j+1} 
\begin{bmatrix}0&1\\1&0\end{bmatrix}\biggr)\begin{bmatrix}\rh_+^{-1}&0\\0&\rh_-^{-1}\end{bmatrix}^{n_{m+1}+1}. 
\end{align*}
Then its trace class norm is bounded as 
\begin{align*}
&\bigl\Vert(\widetilde{\rH}_\nu^{(g,\Delta,\varepsilon)}+\lambda)^{-n}\bigr\Vert_{\Tr} \\
&\le \sum_{m=0}^\infty |\Delta|^m\sum_{\substack{\bn\in(\BZ_{\ge 0})^{m+1} \\ |\bn|=n-1}}
\Biggl\Vert\prod_{1\le j\le m}^{\longrightarrow}\biggl(\begin{bmatrix}\rh_+^{-1}&0\\0&\rh_-^{-1}\end{bmatrix}^{n_j+1} 
\begin{bmatrix}0&1\\1&0\end{bmatrix}\biggr)\begin{bmatrix}\rh_+^{-1}&0\\0&\rh_-^{-1}\end{bmatrix}^{n_{m+1}+1}\Biggr\Vert_{\Tr} \\
&\le \sum_{m=0}^\infty\binom{m+n-1}{m}|\Delta|^m\biggl\Vert\begin{bmatrix}\rh_+^{-1}&0\\0&\rh_-^{-1}\end{bmatrix}\biggr\Vert_\op^{m+n-2}
\biggl\Vert\begin{bmatrix}\rh_+^{-1}&0\\0&\rh_-^{-1}\end{bmatrix}\biggr\Vert_\HS^2 \\
&\le \sum_{m=0}^\infty\frac{(n)_m}{m!}|\Delta|^m C_\nu(\lambda,\varepsilon)^{m+n-2}\cdot 2C'_\nu(\lambda,\varepsilon)^2
=\frac{2C_\nu(\lambda,\varepsilon)^{n-2}C'_\nu(\lambda,\varepsilon)^2}{(1-C_\nu(\lambda,\varepsilon)|\Delta|)^n}<\infty. 
\end{align*}
Next, let $C_m=\BZ/m\BZ$ be the cyclic group, which acts on $(\BZ_{\ge 0})^m$ by the shift of indices. Then by Lemma \ref{lem_diff_R}, for $m\ge 1$ we have 
\begin{align}
&\sum_{\substack{\bn\in(\BZ_{\ge 0})^{m+1} \\ |\bn|=n-1}}\Tr\biggl(\prod_{1\le j\le m}^{\longrightarrow}\biggl(\begin{bmatrix}\rh_+^{-1}&0\\0&\rh_-^{-1}\end{bmatrix}^{n_j+1} 
\begin{bmatrix}0&1\\1&0\end{bmatrix}\biggr)\begin{bmatrix}\rh_+^{-1}&0\\0&\rh_-^{-1}\end{bmatrix}^{n_{m+1}+1}\biggr) \notag\\
&=\sum_{\substack{\bn\in(\BZ_{\ge 0})^{m+1} \\ |\bn|=n-1}}\Tr\biggl(\begin{bmatrix}\rh_+^{-1}&0\\0&\rh_-^{-1}\end{bmatrix}^{n_1+n_{m+1}+2}\begin{bmatrix}0&1\\1&0\end{bmatrix}
\prod_{2\le j\le m}^{\longrightarrow}\biggl(\begin{bmatrix}\rh_+^{-1}&0\\0&\rh_-^{-1}\end{bmatrix}^{n_j+1}\begin{bmatrix}0&1\\1&0\end{bmatrix}\biggr)\biggr) \notag\\
&=\sum_{\substack{\bn\in(\BZ_{\ge 0})^m \\ |\bn|=n}}
n_1\Tr\biggl(\prod_{1\le j\le m}^{\longrightarrow} \biggl(\begin{bmatrix}\rh_+^{-1}&0\\0&\rh_-^{-1}\end{bmatrix}^{n_j+1}\begin{bmatrix}0&1\\1&0\end{bmatrix}\biggr)\biggr) \notag\\
&=\sum_{\substack{\bn\in(\BZ_{\ge 0})^m/C_m \\ |\bn|=n}}
\biggl(\sum_{j=1}^m n_j\biggr)\Tr\biggl(\prod_{1\le j\le m}^{\longrightarrow} \biggl(\begin{bmatrix}\rh_+^{-1}&0\\0&\rh_-^{-1}\end{bmatrix}^{n_j+1}\begin{bmatrix}0&1\\1&0\end{bmatrix}\biggr)\biggr) 
\notag\\
&=n\sum_{\substack{\bn\in(\BZ_{\ge 0})^m/C_m \\ |\bn|=n}}
\Tr\biggl(\prod_{1\le j\le m}^{\longrightarrow} \biggl(\begin{bmatrix}\rh_+^{-1}&0\\0&\rh_-^{-1}\end{bmatrix}^{n_j+1}\begin{bmatrix}0&1\\1&0\end{bmatrix}\biggr)\biggr) \notag\\
&=\frac{n}{m}\sum_{\substack{\bn\in(\BZ_{\ge 0})^m \\ |\bn|=n}}
\Tr\biggl(\prod_{1\le j\le m}^{\longrightarrow} \biggl(\begin{bmatrix}\rh_+^{-1}&0\\0&\rh_-^{-1}\end{bmatrix}^{n_j+1}\begin{bmatrix}0&1\\1&0\end{bmatrix}\biggr)\biggr) \notag\\
&=\begin{cases} \displaystyle \frac{(-1)^n}{(n-1)!m'}\frac{\partial^n R_{m'}^\nu}{\partial\lambda^n}(\lambda;g,\varepsilon) & (m=2m') \\
0 & (m=2m'+1). \end{cases} \label{formula_sumtrace}
\end{align}
Hence we get 
\begin{align*}
&\zeta\bigl(\widetilde{\rH}_\nu^{(g,\Delta,\varepsilon)};n,\lambda\bigr)=\Tr\bigl((\widetilde{\rH}_\nu^{(g,\Delta,\varepsilon)}+\lambda)^{-n}\bigr) \\
&=\sum_{m=0}^\infty (-\Delta)^m\sum_{\substack{\bn\in(\BZ_{\ge 0})^{m+1} \\ |\bn|=n-1}}\Tr\biggl(\prod_{1\le j\le m}^{\longrightarrow}\biggl(\begin{bmatrix}\rh_+^{-1}&0\\0&\rh_-^{-1}\end{bmatrix}^{n_j+1} 
\begin{bmatrix}0&1\\1&0\end{bmatrix}\biggr)\begin{bmatrix}\rh_+^{-1}&0\\0&\rh_-^{-1}\end{bmatrix}^{n_{m+1}+1}\biggr) \\
&=\Tr\biggl(\begin{bmatrix}\rh_+^{-1}&0\\0&\rh_-^{-1}\end{bmatrix}^n\biggr)+\frac{(-1)^n}{(n-1)!}\sum_{m=1}^\infty \frac{\Delta^{2m}}{m}\frac{\partial^n R_m^\nu}{\partial\lambda^n}(\lambda;g,\varepsilon). 
\end{align*}
Then since 
\begin{align*}
\Tr\biggl(\begin{bmatrix}\rh_+^{-1}&0\\0&\rh_-^{-1}\end{bmatrix}^n\biggr)&=\Tr(\rh_+^\nu(\lambda;g,\varepsilon)^{-n})+\Tr(\rh_-^\nu(\lambda;g,\varepsilon)^{-n}) \\
&=\begin{cases} \zeta(n,\lambda+\varepsilon)+\zeta(n,\lambda-\varepsilon) & (\nu=\flat), \\
2^{-n}\zeta\bigl(n,\frac{1}{2}(\lambda+\varepsilon+\nu)\bigr)+2^{-n}\zeta\bigl(n,\frac{1}{2}(\lambda-\varepsilon+\nu)\bigr) & (\nu\in\BR_{>0}) \end{cases}
\end{align*}
by Examples \ref{ex_Xg}, \ref{ex_Yg}, we get the desired formula. 

(3) Let 
\[ B:=\frac{1}{\sqrt{2}}\begin{bmatrix}1&1\\1&-1\end{bmatrix}\begin{bmatrix}\sqrt{\alpha}&0\\0&\sqrt{\beta}\end{bmatrix}. \]
Then $\widetilde{\rQ}_\nu^{(\alpha,\beta,\eta)}+\lambda$ is written as 
\begin{align*}
&\widetilde{\rQ}_\nu^{(\alpha,\beta,\eta)}+\lambda \\
&=\frac{\alpha+\beta}{2\sqrt{\alpha\beta(\alpha\beta-1)}}\biggl(\begin{bmatrix}\alpha&0\\0&\beta\end{bmatrix}\biggl(2z\frac{\rd}{\rd z}+\nu\biggr)
+\begin{bmatrix}0&1\\1&0\end{bmatrix}\biggl((1+z^2)\frac{\rd}{\rd z}+\nu z\biggr) \\*
&\eqspace{}+2\eta\sqrt{\alpha\beta-1}\begin{bmatrix}0&1\\1&0\end{bmatrix}\biggr)+\lambda\bI \\
&=\frac{\alpha+\beta}{2\sqrt{\alpha\beta(\alpha\beta-1)}}B^\dagger\biggl(\bI\biggl(2z\frac{\rd}{\rd z}+\nu\biggr)
+\frac{1}{\sqrt{\alpha\beta}}\begin{bmatrix}1&0\\0&-1\end{bmatrix}\biggl((1+z^2)\frac{\rd}{\rd z}+\nu z\biggr) \\*
&\eqspace{}+2\eta\frac{\sqrt{\alpha\beta-1}}{\sqrt{\alpha\beta}}\begin{bmatrix}1&0\\0&-1\end{bmatrix}\biggr)B
+\frac{\lambda}{2\alpha\beta}B^\dagger\begin{bmatrix}\alpha+\beta&\beta-\alpha\\\beta-\alpha&\alpha+\beta\end{bmatrix}B \\
&=\frac{\alpha+\beta}{2\alpha\beta}B^\dagger\biggl(\frac{\sqrt{\alpha\beta}}{\sqrt{\alpha\beta-1}}\bI\biggl(2z\frac{\rd}{\rd z}+\nu\biggr)
+\frac{1}{\sqrt{\alpha\beta-1}}\begin{bmatrix}1&0\\0&-1\end{bmatrix}\biggl((1+z^2)\frac{\rd}{\rd z}+\nu z\biggr) \\*
&\eqspace{}+2\eta\begin{bmatrix}1&0\\0&-1\end{bmatrix}
+\lambda\begin{bmatrix}1&\frac{\beta-\alpha}{\alpha+\beta}\\\frac{\beta-\alpha}{\alpha+\beta}&1\end{bmatrix}\biggr)B \\
&=\frac{\alpha+\beta}{2\alpha\beta}B^\dagger\begin{bmatrix}\rh_+^\nu\bigl(\lambda;\frac{1}{2}\artanh\bigl(\frac{1}{\sqrt{\alpha\beta}}\bigr),2\eta\bigr)&\Delta\lambda\\
\Delta\lambda&\rh_-^\nu\bigl(\lambda;\frac{1}{2}\artanh\bigl(\frac{1}{\sqrt{\alpha\beta}}\bigr),2\eta\bigr)\end{bmatrix}B \\
&=\frac{\alpha+\beta}{2\alpha\beta}B^{-1}BB^\dagger\begin{bmatrix}\rh_+&\Delta\lambda\\\Delta\lambda&\rh_-\end{bmatrix}B
=\frac{1}{1-\Delta^2}B^{-1}\begin{bmatrix}1&-\Delta\\-\Delta&1\end{bmatrix}\begin{bmatrix}\rh_+&\Delta\lambda\\\Delta\lambda&\rh_-\end{bmatrix}B \\
&=B^{-1}\begin{bmatrix}1&\Delta\\\Delta&1\end{bmatrix}^{-1}\begin{bmatrix}\rh_+&\Delta\lambda\\\Delta\lambda&\rh_-\end{bmatrix}B,
\end{align*}
where we put $\rh_\pm^\nu\bigl(\lambda;\frac{1}{2}\artanh\bigl(\frac{1}{\sqrt{\alpha\beta}}\bigr),2\eta\bigr)=:\rh_\pm$, $\frac{\beta-\alpha}{\alpha+\beta}=:\Delta$. 
Then for $|\Delta\lambda|<C(\lambda,2\eta)^{-1}$, its inverse is given as 
\begin{align*}
&B(\widetilde{\rQ}_\nu^{(\alpha,\beta,\eta)}+\lambda)^{-1}B^{-1}
=\begin{bmatrix}\rh_+&\Delta\lambda\\\Delta\lambda&\rh_-\end{bmatrix}^{-1}\begin{bmatrix}1&\Delta\\\Delta&1\end{bmatrix} \\
&=\biggl(\bI+\Delta\lambda\begin{bmatrix}\rh_+^{-1}&0\\0&\rh_-^{-1}\end{bmatrix}\begin{bmatrix}0&1\\1&0\end{bmatrix}\biggr)^{-1}
\begin{bmatrix}\rh_+^{-1}&0\\0&\rh_-^{-1}\end{bmatrix}\biggl(\bI+\Delta\begin{bmatrix}0&1\\1&0\end{bmatrix}\biggr) \\
&=\sum_{m=0}^\infty \biggl(-\Delta\lambda\begin{bmatrix}\rh_+^{-1}&0\\0&\rh_-^{-1}\end{bmatrix}\begin{bmatrix}0&1\\1&0\end{bmatrix}\biggr)^m
\biggl(\begin{bmatrix}\rh_+^{-1}&0\\0&\rh_-^{-1}\end{bmatrix}+\Delta\begin{bmatrix}\rh_+^{-1}&0\\0&\rh_-^{-1}\end{bmatrix}\begin{bmatrix}0&1\\1&0\end{bmatrix}\biggr). 
\end{align*}
Then for $k\ge 0$, $0\le l\le n$, the coefficient of $(-\Delta\lambda)^k\Delta^l$ of $B(\widetilde{\rQ}_\nu^{(\alpha,\beta,\eta)}+\lambda)^{-n}B^{-1}$ is the sum of the products of 
$k+l$ times of $\Bigl[\begin{smallmatrix}\rh_+^{-1}&0\\0&\rh_-^{-1}\end{smallmatrix}\Bigr]\bigl[\begin{smallmatrix}0&1\\1&0\end{smallmatrix}\bigr]$s 
and $n-l$ times of $\Bigl[\begin{smallmatrix}\rh_+^{-1}&0\\0&\rh_-^{-1}\end{smallmatrix}\Bigr]$s, namely the sum of 
\[ \prod_{1\le j\le k+l}^{\longrightarrow}\biggl(\begin{bmatrix}\rh_+^{-1}&0\\0&\rh_-^{-1}\end{bmatrix}^{n_j+1} 
\begin{bmatrix}0&1\\1&0\end{bmatrix}\biggr)\begin{bmatrix}\rh_+^{-1}&0\\0&\rh_-^{-1}\end{bmatrix}^{n_{k+l+1}} \]
with $\bn\in(\BZ_{\ge 0})^{k+l+1}$, $|\bn|=n-l$. 
For each $\bn$, among $k+l$ times of $\Bigl[\begin{smallmatrix}\rh_+^{-1}&0\\0&\rh_-^{-1}\end{smallmatrix}\Bigr]\bigl[\begin{smallmatrix}0&1\\1&0\end{smallmatrix}\bigr]$s, 
$k$ times come from $\sum_m \Bigl(-\Delta\lambda\Bigl[\begin{smallmatrix}\rh_+^{-1}&0\\0&\rh_-^{-1}\end{smallmatrix}\Bigr]\bigl[\begin{smallmatrix}0&1\\1&0\end{smallmatrix}\bigr]\Bigr)^m$, 
and the other $l$ times come from $\Bigl[\begin{smallmatrix}\rh_+^{-1}&0\\0&\rh_-^{-1}\end{smallmatrix}\Bigr]
\allowbreak+\Delta\Bigl[\begin{smallmatrix}\rh_+^{-1}&0\\0&\rh_-^{-1}\end{smallmatrix}\Bigr]\bigl[\begin{smallmatrix}0&1\\1&0\end{smallmatrix}\bigr]$. 
If $n_{k+l+1}\ge 1$, then $l<n$. Similarly, if $n_{k+l+1}=0$, then $l>0$ and the rightmost 
$\Bigl[\begin{smallmatrix}\rh_+^{-1}&0\\0&\rh_-^{-1}\end{smallmatrix}\Bigr]\bigl[\begin{smallmatrix}0&1\\1&0\end{smallmatrix}\bigr]$ always comes from 
$\Bigl[\begin{smallmatrix}\rh_+^{-1}&0\\0&\rh_-^{-1}\end{smallmatrix}\Bigr]
+\Delta\Bigl[\begin{smallmatrix}\rh_+^{-1}&0\\0&\rh_-^{-1}\end{smallmatrix}\Bigr]\bigl[\begin{smallmatrix}0&1\\1&0\end{smallmatrix}\bigr]$. 
Hence $B(\widetilde{\rQ}_\nu^{(\alpha,\beta,\eta)}+\lambda)^{-n}B^{-1}$ is expanded as 
\begin{align*}
&B(\widetilde{\rQ}_\nu^{(\alpha,\beta,\eta)}+\lambda)^{-n}B^{-1} \\
&=\sum_{k=0}^\infty\sum_{l=0}^{n-1}(-\Delta\lambda)^k\Delta^l \binom{k+l}{k} \\*
&\eqspace{}\times\sum_{\substack{\bn\in(\BZ_{\ge 0})^{k+l+1}\\|\bn|=n-l-1}}
\prod_{1\le j\le k+l}^{\longrightarrow}\biggl(\begin{bmatrix}\rh_+^{-1}&0\\0&\rh_-^{-1}\end{bmatrix}^{n_j+1}\begin{bmatrix}0&1\\1&0\end{bmatrix}\biggr)
\begin{bmatrix}\rh_+^{-1}&0\\0&\rh_-^{-1}\end{bmatrix}^{n_{k+l+1}+1} \\*
&\eqspace{}+\sum_{k=0}^\infty\sum_{l=1}^{n}(-\Delta\lambda)^k\Delta^l \binom{k+l-1}{k}\sum_{\substack{\bn\in(\BZ_{\ge 0})^{k+l}\\|\bn|=n-l}}
\prod_{1\le j\le k+l}^{\longrightarrow}\biggl(\begin{bmatrix}\rh_+^{-1}&0\\0&\rh_-^{-1}\end{bmatrix}^{n_j+1}\begin{bmatrix}0&1\\1&0\end{bmatrix}\biggr). 
\end{align*}
Then its trace class norm is bounded as 
\begin{align*}
&\bigl\Vert B(\widetilde{\rQ}_\nu^{(\alpha,\beta,\eta)}+\lambda)^{-n}B^{-1}\bigr\Vert_{\Tr} \\
&\le\sum_{k=0}^\infty\sum_{l=0}^{n-1}|\Delta\lambda|^k|\Delta|^l \binom{k+l}{k} \\*
&\eqspace{}\times\sum_{\substack{\bn\in(\BZ_{\ge 0})^{k+l+1}\\|\bn|=n-l-1}}
\Biggl\Vert\prod_{1\le j\le k+l}^{\longrightarrow}\biggl(\begin{bmatrix}\rh_+^{-1}&0\\0&\rh_-^{-1}\end{bmatrix}^{n_j+1}\begin{bmatrix}0&1\\1&0\end{bmatrix}\biggr)
\begin{bmatrix}\rh_+^{-1}&0\\0&\rh_-^{-1}\end{bmatrix}^{n_{k+l+1}+1}\Biggr\Vert_{\Tr} \\*
&\eqspace{}+\sum_{k=0}^\infty\sum_{l=1}^{n}|\Delta\lambda|^k|\Delta|^l \binom{k+l-1}{k}\sum_{\substack{\bn\in(\BZ_{\ge 0})^{k+l}\\|\bn|=n-l}}
\Biggl\Vert\prod_{1\le j\le k+l}^{\longrightarrow}\biggl(\begin{bmatrix}\rh_+^{-1}&0\\0&\rh_-^{-1}\end{bmatrix}^{n_j+1}\begin{bmatrix}0&1\\1&0\end{bmatrix}\biggr)\Biggr\Vert_{\Tr} \\
&\le\sum_{k=0}^\infty\sum_{l=0}^{n-1}|\Delta\lambda|^k|\Delta|^l \binom{k+l}{k}\binom{k+n-1}{n-l-1}
\biggl\Vert\begin{bmatrix}\rh_+^{-1}&0\\0&\rh_-^{-1}\end{bmatrix}\biggr\Vert_\op^{k+n-2}\biggl\Vert\begin{bmatrix}\rh_+^{-1}&0\\0&\rh_-^{-1}\end{bmatrix}\biggr\Vert_\HS^2 \\*
&\eqspace{}+\sum_{k=0}^\infty\sum_{l=1}^{n}|\Delta\lambda|^k|\Delta|^l \binom{k+l-1}{k}\binom{k+n-1}{n-l}
\biggl\Vert\begin{bmatrix}\rh_+^{-1}&0\\0&\rh_-^{-1}\end{bmatrix}\biggr\Vert_\op^{k+n-2}\biggl\Vert\begin{bmatrix}\rh_+^{-1}&0\\0&\rh_-^{-1}\end{bmatrix}\biggr\Vert_\HS^2 \\
&\le\sum_{k=0}^\infty\sum_{l=0}^{n}|\Delta\lambda|^k|\Delta|^l \frac{(n)_k}{k!}\binom{n}{l}C_\nu(\lambda,2\eta)^{k+n-2}\cdot 2C'_\nu(\lambda,2\eta)^2 \\
&=\frac{2(1+|\Delta|)^n C_\nu(\lambda,2\eta)^{n-2}C'_\nu(\lambda,2\eta)^2}{(1-C_\nu(\lambda,2\eta)|\Delta\lambda|)^n}<\infty. 
\end{align*}
Also, by Lemma \ref{lem_diff_R} and (\ref{formula_sumtrace}), for $k+l\ge 1$ we have 
\begin{align*}
&\sum_{\substack{\bn\in(\BZ_{\ge 0})^{k+l+1}\\|\bn|=n-l-1}}
\Tr\biggl(\prod_{1\le j\le k+l}^{\longrightarrow}\biggl(\begin{bmatrix}\rh_+^{-1}&0\\0&\rh_-^{-1}\end{bmatrix}^{n_j+1}\begin{bmatrix}0&1\\1&0\end{bmatrix}\biggr)
\begin{bmatrix}\rh_+^{-1}&0\\0&\rh_-^{-1}\end{bmatrix}^{n_{k+l+1}+1}\biggr) \\
&=\begin{cases} \displaystyle \frac{2(-1)^{n-l}}{(n-l-1)!(k+l)}\frac{\partial^{n-l}R_{(k+l)/2}^\nu}{\partial\lambda^{n-l}}(\lambda;g,2\eta) 
& (k+l\colon\text{even}) \\
0 & (k+l\colon\text{odd}), \end{cases} \\
&\sum_{\substack{\bn\in(\BZ_{\ge 0})^{k+l}\\|\bn|=n-l}}
\Tr\biggl(\prod_{1\le j\le k+l}^{\longrightarrow}\biggl(\begin{bmatrix}\rh_+^{-1}&0\\0&\rh_-^{-1}\end{bmatrix}^{n_j+1}\begin{bmatrix}0&1\\1&0\end{bmatrix}\biggr)\biggr) \\
&=\begin{cases} \displaystyle \frac{2(-1)^{n-l}}{(n-l)!}\frac{\partial^{n-l}R_{(k+l)/2}^\nu}{\partial\lambda^{n-l}}(\lambda;g,2\eta) & (k+l\colon\text{even}) \\
0 & (k+l\colon\text{odd}), \end{cases}
\end{align*}
where $g:=\frac{1}{2}\artanh\bigl(\frac{1}{\sqrt{\alpha\beta}}\bigr)$, and hence $\zeta\bigl(\widetilde{\rQ}_\nu^{(\alpha,\beta,\eta)};n,\lambda)$ is computed as 
\begin{align*}
&\zeta\bigl(\widetilde{\rQ}_\nu^{(\alpha,\beta,\eta)};n,\lambda)
=\Tr\bigl(B(\widetilde{\rQ}_\nu^{(\alpha,\beta,\eta)}+\lambda)^{-n}B^{-1}\bigr) \\
&=\sum_{k=0}^\infty\sum_{l=0}^{n-1}(-\Delta\lambda)^k\Delta^l \binom{k+l}{k} \\*
&\eqspace{}\times\sum_{\substack{\bn\in(\BZ_{\ge 0})^{k+l+1}\\|\bn|=n-l-1}}
\Tr\biggl(\prod_{1\le j\le k+l}^{\longrightarrow}\biggl(\begin{bmatrix}\rh_+^{-1}&0\\0&\rh_-^{-1}\end{bmatrix}^{n_j+1}\begin{bmatrix}0&1\\1&0\end{bmatrix}\biggr)
\begin{bmatrix}\rh_+^{-1}&0\\0&\rh_-^{-1}\end{bmatrix}^{n_{k+l+1}+1}\biggr) \\*
&\eqspace{}+\sum_{k=0}^\infty\sum_{l=1}^{n}(-\Delta\lambda)^k\Delta^l \binom{k+l-1}{k}\sum_{\substack{\bn\in(\BZ_{\ge 0})^{k+l}\\|\bn|=n-l}}
\Tr\biggl(\prod_{1\le j\le k+l}^{\longrightarrow}\biggl(\begin{bmatrix}\rh_+^{-1}&0\\0&\rh_-^{-1}\end{bmatrix}^{n_j+1}\begin{bmatrix}0&1\\1&0\end{bmatrix}\biggr)\biggr) \\
&=\Tr\biggl(\begin{bmatrix}\rh_+^{-1}&0\\0&\rh_-^{-1}\end{bmatrix}^n\biggr) \\*
&\eqspace{}+\sum_{k=0}^\infty\sum_{\substack{0\le l\le n-1 \\ k+l\ge 1\colon\text{even}}}(-\Delta\lambda)^k\Delta^l \binom{k+l}{k}
\frac{2(-1)^{n-l}}{(n-l-1)!(k+l)}\frac{\partial^{n-l}R_{(k+l)/2}^\nu}{\partial\lambda^{n-l}}(\lambda;g,2\eta) \\*
&\eqspace{}+\sum_{k=0}^\infty\sum_{\substack{1\le l\le n \\ k+l\colon\text{even}}}(-\Delta\lambda)^k\Delta^l \binom{k+l-1}{k}
\frac{2(-1)^{n-l}}{(n-l)!}\frac{\partial^{n-l}R_{(k+l)/2}^\nu}{\partial\lambda^{n-l}}(\lambda;g,2\eta) \\
&=\Tr\biggl(\begin{bmatrix}\rh_+^{-1}&0\\0&\rh_-^{-1}\end{bmatrix}^n\biggr) \\*
&\eqspace{}+\sum_{k=0}^\infty\sum_{\substack{0\le l\le n \\ k+l\ge 1\colon\text{even}}}\lambda^k\Delta^{k+l} \binom{n}{l}
\frac{2(-1)^{n+k+l}(k+1)_l}{(n-1)!(k+l)}\frac{\partial^{n-l}R_{(k+l)/2}^\nu}{\partial\lambda^{n-l}}(\lambda;g,2\eta) \\
&=\Tr\biggl(\begin{bmatrix}\rh_+^{-1}&0\\0&\rh_-^{-1}\end{bmatrix}^n\biggr) \\*
&\eqspace{}+\sum_{m=1}^\infty\sum_{l=0}^{\min\{2m,n\}}\lambda^{2m-l}\Delta^{2m} \binom{n}{l}
\frac{2(-1)^{n+2m}(2m-l+1)_l}{(n-1)!\cdot 2m}\frac{\partial^{n-l}R_m^\nu}{\partial\lambda^{n-l}}(\lambda;g,2\eta) \\
&=\Tr\biggl(\begin{bmatrix}\rh_+^{-1}&0\\0&\rh_-^{-1}\end{bmatrix}^n\biggr)
+\frac{(-1)^n}{(n-1)!}\sum_{m=1}^\infty \frac{\Delta^{2m}}{m}\sum_{l=0}^{\min\{2m,n\}} \binom{n}{l}\biggl(\frac{\partial^l}{\partial\lambda^l}\lambda^{2m}\biggr)
\frac{\partial^{n-l}R_m^\nu}{\partial\lambda^{n-l}}(\lambda;g,2\eta) \\
&=\Tr\biggl(\begin{bmatrix}\rh_+^{-1}&0\\0&\rh_-^{-1}\end{bmatrix}^n\biggr)
+\frac{(-1)^n}{(n-1)!}\sum_{m=1}^\infty \frac{\Delta^{2m}}{m}\frac{\partial^n}{\partial\lambda^n}\lambda^{2m}R_m^\nu(\lambda;g,2\eta). 
\end{align*}
Then since 
\begin{align*}
\Tr\biggl(\begin{bmatrix}\rh_+^{-1}&0\\0&\rh_-^{-1}\end{bmatrix}^n\biggr)
&=\Tr\bigl(\rh_+^\nu(\lambda;g,2\eta)^{-n}\bigr)+\Tr\bigl(\rh_-^\nu(\lambda;g,2\eta)^{-n}\bigr) \\
&=2^{-n}\zeta\biggl(n,\frac{\lambda+2\eta+\nu}{2}\biggr)+2^{-n}\zeta\biggl(n,\frac{\lambda-2\eta+\nu}{2}\biggr) 
\end{align*}
by Example \ref{ex_Yg}, we get the desired formula. 
\end{proof}

For later use, we prove the following. 
\begin{lemma}\label{lem_limit}
Let $g,\Delta,\varepsilon\in\BR$, $\lambda\in\BC$, $n\in\BZ_{\ge 2}$. We assume $\lambda\pm\varepsilon\notin -\BZ_{\ge 0}$ and $|\Delta|<C_\flat(\lambda,\varepsilon)^{-1}$. 
If $\lim_{\nu\to\infty}R_m^\nu\bigl(2\lambda-\nu;\frac{g}{\sqrt{\nu}},2\varepsilon\bigr)$ converges to a holomorphic function in $\lambda$ around $(\lambda,g,\varepsilon)$, then we have 
\begin{align*}
&2^n\lim_{\nu\to\infty}\zeta\bigl(\widetilde{\rH}_\nu^{(g/\sqrt{\nu},2\Delta,2\varepsilon)};n,2\lambda-\nu\bigr) \\
&=\zeta(n,\lambda+\varepsilon)+\zeta(n,\lambda-\varepsilon)
+\frac{(-1)^n}{(n-1)!}\sum_{m=1}^\infty \frac{(2\Delta)^{2m}}{m}\frac{\partial^n}{\partial\lambda^n}\lim_{\nu\to\infty}R_m^\nu\biggl(2\lambda-\nu;\frac{g}{\sqrt{\nu}},2\varepsilon\biggr). 
\end{align*}
\end{lemma}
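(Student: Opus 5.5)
We apply Theorem \ref{thm_zeta_sum}\,(2) for each fixed large $\nu$, with $g/\sqrt{\nu}$, $2\Delta$, $2\varepsilon$ and spectral parameter $2\lambda-\nu$ in place of $g,\Delta,\varepsilon,\lambda$. First we check its hypotheses. Since $(2\lambda-\nu)\pm 2\varepsilon+\nu+2k=2(\lambda\pm\varepsilon+k)$, the condition $\lambda\pm\varepsilon\notin-\BZ_{\ge 0}$ is exactly the non-degeneracy condition needed. Moreover $2C_\nu(2\lambda-\nu,2\varepsilon)=C_\flat(\lambda,\varepsilon)$, so $|2\Delta|<C_\nu(2\lambda-\nu,2\varepsilon)^{-1}$ is equivalent to $|\Delta|<C_\flat(\lambda,\varepsilon)^{-1}$. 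Both conditions are independent of $\nu$. Multiplying the resulting identity by $2^n$ gives
\[ 2^n\zeta\bigl(\widetilde{\rH}_\nu^{(g/\sqrt{\nu},2\Delta,2\varepsilon)};n,2\lambda-\nu\bigr)=\zeta(n,\lambda+\varepsilon)+\zeta(n,\lambda-\varepsilon)+\frac{(-1)^n2^n}{(n-1)!}\sum_{m=1}^\infty\frac{(2\Delta)^{2m}}{m}\frac{\partial^nR_m^\nu}{\partial\lambda'^n}(2\lambda-\nu;\tfrac{g}{\sqrt\nu},2\varepsilon). \]
Here $\lambda'=2\lambda-\nu$. The chain rule gives $2^n\partial_{\lambda'}^n=\partial_\lambda^n$ applied to $F_m^\nu(\lambda):=R_m^\nu(2\lambda-\nu;g/\sqrt\nu,2\varepsilon)$. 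So the right-hand side is already in the desired shape for each $\nu$, and it remains to pass $\lim_{\nu\to\infty}$ inside the sum over $m$ and inside $\partial_\lambda^n$.

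\emph{Derivatives and the limit.} By Lemma \ref{lem_diff_R}, each $F_m^\nu$ is holomorphic in $\lambda$. We will show the family $\{F_m^\nu\}_\nu$ is locally uniformly bounded on a small disc around $\lambda$. Then the assumed pointwise convergence to a holomorphic limit upgrades, via Vitali/Montel (or simply Cauchy's integral formula on a small circle), to locally uniform convergence, so $\partial_\lambda^nF_m^\nu\to\partial_\lambda^n\lim_\nu F_m^\nu$. For the uniform bound we use the estimate from the proof of Lemma \ref{lem_diff_R}, which gives $|\partial_\lambda^nF_m^\nu|/n!$ at most
\[ \frac{(2m)_n}{n!}\,2^n C_\nu(\cdot)^{2m+n-2}C'_\nu(\cdot)^2=\frac{(2m)_n}{n!}2^{-2m+2}C_\flat(\lambda,\varepsilon)^{2m+n-2}C'_\flat(\lambda,\varepsilon)^2 . \]
This uses $2C_\nu(2\lambda-\nu,2\varepsilon)=C_\flat(\lambda,\varepsilon)$ and $2C'_\nu=C'_\flat$. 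The bound is uniform in $\nu$, and by continuity of $C_\flat,C'_\flat$ it is also uniform for $\lambda$ in a small closed disc. So Cauchy estimates give the derivative convergence directly.

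\emph{Interchanging limit and $m$-sum.} This is the main obstacle, and we handle it by dominated convergence in $m$. With the bound above (at $n$ fixed), the $m$-th term of the series is bounded independently of $\nu$ by
\[ \mathrm{const}\cdot\frac{(2m)_n}{m}|\Delta|^{2m}C_\flat(\lambda,\varepsilon)^{2m}, \]
the factor $(2\Delta)^{2m}2^{-2m}$ being cancelled. This is summable because $|\Delta|C_\flat(\lambda,\varepsilon)<1$ and $(2m)_n$ grows only polynomially in $m$. Tannery's theorem then lets us take $\nu\to\infty$ termwise, which yields the claimed formula. The only delicate points are the exact scaling identities $2C_\nu(2\lambda-\nu,2\varepsilon)=C_\flat(\lambda,\varepsilon)$ and $2C'_\nu(2\lambda-\nu,2\varepsilon)=C'_\flat(\lambda,\varepsilon)$. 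Both are immediate from the explicit formulas and make every bound $\nu$-independent.
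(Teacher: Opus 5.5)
Your proof is correct and is essentially the paper's argument. It applies Theorem~\ref{thm_zeta_sum}\,(2), uses $2C_\nu(2\lambda-\nu,2\varepsilon)=C_\flat(\lambda,\varepsilon)$ and $2C'_\nu(2\lambda-\nu,2\varepsilon)=C'_\flat(\lambda,\varepsilon)$ to make every bound uniform in $\nu$, and passes the limit through Cauchy's integral formula and dominated convergence. The difference is only organizational: you dominate the $m$-series at the point $\lambda$ with the Taylor-coefficient bound from the proof of Lemma~\ref{lem_diff_R} and handle the per-$m$ convergence of derivatives separately, while the paper dominates the $m$-sum and the contour integral over $\partial B$ together using $|R_m^\nu|\le C_\nu^{2m-2}C_\nu'^2$ on the circle (which is why it chooses $B$ with $C_\flat(B,\varepsilon)<|\Delta|^{-1}$).
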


\begin{proof}
By Theorem \ref{thm_zeta_sum}\,(2), we have 
\begin{align*}
&2^n\zeta\bigl(\widetilde{\rH}_\nu^{(g/\sqrt{\nu},2\Delta,2\varepsilon)};n,2\lambda-\nu\bigr) \\
&=\zeta(n,\lambda+\varepsilon)+\zeta(n,\lambda-\varepsilon)
+\frac{(-2)^n}{(n-1)!}\sum_{m=1}^\infty \frac{(2\Delta)^{2m}}{m}\frac{\partial^n R_m^\nu}{\partial\lambda^n}\biggl(2\lambda-\nu;\frac{g}{\sqrt{\nu}},2\varepsilon\biggr) \\
&=\zeta(n,\lambda+\varepsilon)+\zeta(n,\lambda-\varepsilon)
+\frac{(-1)^n}{(n-1)!}\sum_{m=1}^\infty \frac{(2\Delta)^{2m}}{m}\frac{\partial^n}{\partial\lambda^n}R_m^\nu\biggl(2\lambda-\nu;\frac{g}{\sqrt{\nu}},2\varepsilon\biggr). 
\end{align*}
We take a sufficiently small $r>0$ such that the ball $B:=\{\mu\in\BC\mid |\mu-\lambda|\le r\}$ does not intersect $\pm\varepsilon-\BZ_{\ge 0}$, 
and $C_\flat(B,\varepsilon):=\max_{\mu\in B}C_\flat(\mu,\varepsilon)$, $C_\flat'(B,\varepsilon):=\max_{\mu\in B}C_\flat'(\mu,\varepsilon)$ satisfy 
$(C_\flat(\lambda,\varepsilon)\le)C_\flat(B,\varepsilon)<|\Delta|^{-1}$. Then we have 
\begin{align*}
&\sum_{m=1}^\infty \frac{|(2\Delta)^{2m}|}{m}\frac{n!}{2\pi}\oint_{\partial B} \biggl|R_m^\nu\biggl(2\mu-\nu;\frac{g}{\sqrt{\nu}},2\varepsilon\biggr)\biggr|\frac{|\rd\mu|}{|(\mu-\lambda)^{n+1}|} \\
&\le \sum_{m=1}^\infty \frac{|2\Delta|^{2m}}{m}\frac{n!}{2\pi}\oint_{\partial B} C_\nu(2\mu-\nu,2\varepsilon)^{2m-2}C_\nu'(2\mu-\nu,2\varepsilon)^2\frac{|\rd\mu|}{r^{n+1}} \\
&=\sum_{m=1}^\infty \frac{|\Delta|^{2m}}{m}\frac{n!}{2\pi}\oint_{\partial B} C_\flat(\mu,\varepsilon)^{2m-2}C_\flat'(\mu,\varepsilon)^2\frac{|\rd\mu|}{r^{n+1}} \\
&\le\frac{n!}{r^n}\sum_{m=1}^\infty \frac{|\Delta|^{2m}}{m}C_\flat(B,\varepsilon)^{2m-2}C_\flat'(B,\varepsilon)^2<\infty 
\end{align*}
uniformly in $\nu>0$. Hence by the dominated convergence theorem, 
\begin{align*}
&\lim_{\nu\to\infty}\sum_{m=1}^\infty \frac{(2\Delta)^{2m}}{m}\frac{\partial^n}{\partial\lambda^n}R_m^\nu\biggl(2\lambda-\nu;\frac{g}{\sqrt{\nu}},2\varepsilon\biggr) \\
&=\lim_{\nu\to\infty}\sum_{m=1}^\infty \frac{(2\Delta)^{2m}}{m}\frac{n!}{2\pi\sqrt{-1}}
\oint_{\partial B} R_m^\nu\biggl(2\mu-\nu;\frac{g}{\sqrt{\nu}},2\varepsilon\biggr)\frac{\rd\mu}{(\mu-\lambda)^{n+1}} \\
&=\sum_{m=1}^\infty \frac{(2\Delta)^{2m}}{m}\frac{n!}{2\pi\sqrt{-1}}
\oint_{\partial B} \lim_{\nu\to\infty} R_m^\nu\biggl(2\mu-\nu;\frac{g}{\sqrt{\nu}},2\varepsilon\biggr)\frac{\rd\mu}{(\mu-\lambda)^{n+1}} \\
&=\sum_{m=1}^\infty \frac{(2\Delta)^{2m}}{m}\frac{\partial^n}{\partial\lambda^n}\lim_{\nu\to\infty} R_m^\nu\biggl(2\lambda-\nu;\frac{g}{\sqrt{\nu}},2\varepsilon\biggr) 
\end{align*}
holds if $\lim_{\nu\to\infty} R_m^\nu\bigl(2\lambda-\nu;\frac{g}{\sqrt{\nu}},2\varepsilon\bigr)$ is holomorphic in $\lambda\in B$. 
\end{proof}

By (\ref{formula_dtauXg}), (\ref{formula_dtauYg}), for $g\in\BR$, $\lambda,\varepsilon\in\BC$ with $\lambda\pm\varepsilon\notin -\BZ_{\ge 0}$, $m\in\BZ_{\ge 1}$, we have 
\[ R_m^\flat(\lambda;g,\varepsilon)=\Tr\bigl(\bigl((\ra^\dagger\ra+g(\ra+\ra^\dagger)+g^2+\varepsilon+\lambda)^{-1}(\ra^\dagger\ra-g(\ra+\ra^\dagger)+g^2-\varepsilon+\lambda)^{-1}\bigr)^m\bigr), \]
and for $g\in\BR$, $\lambda,\varepsilon\in\BC$ with $\lambda\pm\varepsilon\notin -\bigl(\BZ_{\ge 0}+\frac{1}{2}\bigr)$, $m\in\BZ_{\ge 1}$, we have 
\begin{align}
&R_m^+(\lambda;g,\varepsilon):=R_m^{1/2}(\lambda;g,\varepsilon)+R_m^{3/2}(\lambda;g,\varepsilon) \notag\\*
&=\Tr\biggl(\biggl(\biggl(\cosh(2g)\biggl(\ra^\dagger\ra+\frac{1}{2}\biggr)+\frac{\sinh(2g)}{2}(\ra^2+(\ra^\dagger)^2)+\varepsilon+\lambda\biggr)^{-1} \notag\\*
&\hspace{40pt}\times\biggl(\cosh(2g)\biggl(\ra^\dagger\ra+\frac{1}{2}\biggr)-\frac{\sinh(2g)}{2}(\ra^2+(\ra^\dagger)^2)-\varepsilon+\lambda\biggr)^{-1}\biggr)^m\biggr), \notag\\
&R_m^-(\lambda;g,\varepsilon):=R_m^{1/2}(\lambda;g,\varepsilon)-R_m^{3/2}(\lambda;g,\varepsilon) \notag\\*
&=\Tr\biggl(\biggl(\biggl(\cosh(2g)\biggl(\ra^\dagger\ra+\frac{1}{2}\biggr)+\frac{\sinh(2g)}{2}(\ra^2+(\ra^\dagger)^2)+\varepsilon+\lambda\biggr)^{-1} \notag\\*
&\hspace{40pt}\times\biggl(\cosh(2g)\biggl(\ra^\dagger\ra+\frac{1}{2}\biggr)-\frac{\sinh(2g)}{2}(\ra^2+(\ra^\dagger)^2)-\varepsilon+\lambda\biggr)^{-1}\biggr)^m\biggr|_{L^2(\BR)_\even}\biggr) \notag\\*
&\eqspace{}-\Tr\biggl(\biggl(\biggl(\cosh(2g)\biggl(\ra^\dagger\ra+\frac{1}{2}\biggr)+\frac{\sinh(2g)}{2}(\ra^2+(\ra^\dagger)^2)+\varepsilon+\lambda\biggr)^{-1} \notag\\*
&\hspace{40pt}\times\biggl(\cosh(2g)\biggl(\ra^\dagger\ra+\frac{1}{2}\biggr)-\frac{\sinh(2g)}{2}(\ra^2+(\ra^\dagger)^2)-\varepsilon+\lambda\biggr)^{-1}\biggr)^m\biggr|_{L^2(\BR)_\odd}\biggr). 
\label{def_Rmpm}
\end{align}
Also, let 
\[ C(\lambda,\varepsilon):=\max\{C_{1/2}(\lambda,\varepsilon),C_{3/2}(\lambda,\varepsilon)\}=\max_{\delta\in\{\pm \},\ k\in\BZ_{\ge 0}}\frac{1}{\bigl|k+\lambda+\delta\varepsilon+\frac{1}{2}\bigr|}. \]
Then by Theorem \ref{thm_zeta_sum}, we immediately get the following. 

\begin{corollary}\label{cor_zeta_sum}
\begin{enumerate}
\item Let $g,\Delta,\varepsilon\in\BR$, $\lambda\in\BC$. We assume $\lambda\pm\varepsilon\notin -\BZ_{\ge 0}$ and $|\Delta|<C_\flat(\lambda,\varepsilon)^{-1}$. 
Then for $n\in\BZ_{\ge 2}$, $(\widetilde{\rH}_{1\QRM}^{(g,\Delta,\varepsilon)}+\lambda)^{-n}$ is of trace class, and we have 
\begin{align*}
\zeta\bigl(\widetilde{\rH}_{1\QRM}^{(g,\Delta,\varepsilon)};n,\lambda\bigr)
=\zeta(n,\lambda+\varepsilon)+\zeta(n,\lambda-\varepsilon)
+\frac{(-1)^n}{(n-1)!}\sum_{m=1}^\infty \frac{\Delta^{2m}}{m}\frac{\partial^n R_m^\flat}{\partial\lambda^n}(\lambda;g,\varepsilon). 
\end{align*}
\item Let $g,\Delta,\varepsilon\in\BR$, $\lambda\in\BC$. 
We assume $\lambda\pm\varepsilon\notin -\bigl(\BZ_{\ge 0}+\frac{1}{2}\bigr)$ and $|\Delta|<C(\lambda,\varepsilon)^{-1}$. 
Then for $n\in\BZ_{\ge 2}$, $(\widetilde{\rH}_{2\QRM}^{(g,\Delta,\varepsilon)}+\lambda)^{-n}$ is of trace class, and we have 
\begin{align*}
&\zeta\bigl(\widetilde{\rH}_{2\QRM}^{(g,\Delta,\varepsilon)};n,\lambda\bigr) \\
&=\zeta\biggl(n,\lambda+\varepsilon+\frac{1}{2}\biggr)+\zeta\biggl(n,\lambda-\varepsilon+\frac{1}{2}\biggr)
+\frac{(-1)^n}{(n-1)!}\sum_{m=1}^\infty \frac{\Delta^{2m}}{m}\frac{\partial^n R_m^+}{\partial\lambda^n}(\lambda;g,\varepsilon), \\
&\zeta\bigl(\widetilde{\rH}_{2\QRM}^{(g,\Delta,\varepsilon)}\bigr|_{L^2(\BR)_\even};n,\lambda\bigr)-\zeta\bigl(\widetilde{\rH}_{2\QRM}^{(g,\Delta,\varepsilon)}\bigr|_{L^2(\BR)_\odd};n,\lambda\bigr) \\
&=\sum_{k=0}^\infty\frac{(-1)^k}{\bigl(\lambda+\varepsilon+\frac{1}{2}+k\bigr)^n}+\sum_{k=0}^\infty\frac{(-1)^k}{\bigl(\lambda-\varepsilon+\frac{1}{2}+k\bigr)^n}
+\frac{(-1)^n}{(n-1)!}\sum_{m=1}^\infty \frac{\Delta^{2m}}{m}\frac{\partial^n R_m^-}{\partial\lambda^n}(\lambda;g,\varepsilon). 
\end{align*}
\item Let $\alpha,\beta,\eta\in\BR$, $\lambda\in\BC$. We assume $\alpha\beta>1$, $\lambda\pm 2\eta\notin -\bigl(\BZ_{\ge 0}+\frac{1}{2}\bigr)$ 
and $\bigl|\lambda\frac{\alpha-\beta}{\alpha+\beta}\bigr|C(\lambda,2\eta)\allowbreak<1$. 
Then for $n\in\BZ_{\ge 2}$, $(\widetilde{\rH}_\NCHO^{(\alpha,\beta,\eta)}+\lambda)^{-n}$ is of trace class, and we have 
\begin{align*}
&\zeta\bigl(\widetilde{\rH}_\NCHO^{(\alpha,\beta,\eta)};n,\lambda\bigr) \\*
&=\zeta\biggl(n,\lambda+2\eta+\frac{1}{2}\biggr)+\zeta\biggl(n,\lambda-2\eta+\frac{1}{2}\biggr) \\*
&\eqspace{}+\frac{(-1)^n}{(n-1)!}\sum_{m=1}^\infty \frac{1}{m}\biggl(\frac{\alpha-\beta}{\alpha+\beta}\biggr)^{2m}
\frac{\partial^n}{\partial \lambda^n}\lambda^{2m}R_m^+\biggl(\lambda;\frac{1}{2}\artanh\biggl(\frac{1}{\sqrt{\alpha\beta}}\biggr),2\eta\biggr), \\
&\zeta\bigl(\widetilde{\rH}_\NCHO^{(\alpha,\beta,\eta)}\bigr|_{L^2(\BR)_\even};n,\lambda\bigr)-\zeta\bigl(\widetilde{\rH}_\NCHO^{(\alpha,\beta,\eta)}\bigr|_{L^2(\BR)_\odd};n,\lambda\bigr) \\*
&=\sum_{k=0}^\infty\frac{(-1)^k}{\bigl(\lambda+2\eta+\frac{1}{2}+k\bigr)^n}+\sum_{k=0}^\infty\frac{(-1)^k}{\bigl(\lambda-2\eta+\frac{1}{2}+k\bigr)^n} \\*
&\eqspace{}+\frac{(-1)^n}{(n-1)!}\sum_{m=1}^\infty \frac{1}{m}\biggl(\frac{\alpha-\beta}{\alpha+\beta}\biggr)^{2m}
\frac{\partial^n}{\partial \lambda^n}\lambda^{2m}R_m^-\biggl(\lambda;\frac{1}{2}\artanh\biggl(\frac{1}{\sqrt{\alpha\beta}}\biggr),2\eta\biggr). 
\end{align*}
\end{enumerate}
\end{corollary}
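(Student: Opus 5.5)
The plan is to derive Corollary \ref{cor_zeta_sum} directly from Theorem \ref{thm_zeta_sum} by way of the unitary equivalences in the subsection on the Hamiltonians. No new analysis is needed; the work is in matching the pieces.

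For (1), the identity
\[ \widetilde{\rH}_{\flat}^{(g,\Delta,\varepsilon)}=\frac{1}{2}\begin{bmatrix}1&1\\1&-1\end{bmatrix}\cB\widetilde{\rH}_{1\QRM}^{(g,\Delta,\varepsilon)}\cB^{-1}\begin{bmatrix}1&1\\1&-1\end{bmatrix} \]
shows that $\widetilde{\rH}_{1\QRM}^{(g,\Delta,\varepsilon)}$ and $\widetilde{\rH}_\flat^{(g,\Delta,\varepsilon)}$ are unitarily equivalent, since $\frac{1}{\sqrt2}\bigl[\begin{smallmatrix}1&1\\1&-1\end{smallmatrix}\bigr]$ is unitary and self-inverse. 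Consequently the trace class property of $(\cdot+\lambda)^{-n}$ and its trace both transfer. Theorem \ref{thm_zeta_sum}\,(1) then gives the formula immediately. By (\ref{formula_dtauXg}), $R_m^\flat$ coincides with the trace on $L^2(\BR)$ used in the introduction.

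For (2), write $L^2(\BR)\otimes\BC^2=(L^2(\BR)_\even\otimes\BC^2)\oplus(L^2(\BR)_\odd\otimes\BC^2)$. Both summands are invariant because $\widetilde{\rH}_{2\QRM}$ involves only even-degree terms in $\ra,\ra^\dagger$. Via $\cCL_{1/2}$ and $\cCL_{3/2}$, conjugated by the same matrix, the two restrictions are unitarily equivalent to $\widetilde{\rH}_{1/2}^{(g,\Delta,\varepsilon)}$ and $\widetilde{\rH}_{3/2}^{(g,\Delta,\varepsilon)}$. Since $C(\lambda,\varepsilon)=\max\{C_{1/2},C_{3/2}\}$, the hypothesis gives $|\Delta|<C_\nu(\lambda,\varepsilon)^{-1}$ for both $\nu=1/2$ and $\nu=3/2$. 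The condition $\lambda\pm\varepsilon\notin-(\BZ_{\ge0}+\frac12)$ likewise covers $-(2\BZ_{\ge0}+\nu)$ for both values of $\nu$. We therefore apply Theorem \ref{thm_zeta_sum}\,(2) to each summand and either add or subtract the results.

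In the sum, the $R$-terms combine to $R_m^{1/2}+R_m^{3/2}=R_m^+$ by (\ref{def_Rmpm}). The Hurwitz terms combine through the elementary identity
\[ 2^{-n}\zeta\Bigl(n,\tfrac{x+1/2}{2}\Bigr)+2^{-n}\zeta\Bigl(n,\tfrac{x+3/2}{2}\Bigr)=\sum_{k\ge0}(2k+x+\tfrac12)^{-n}+\sum_{k\ge0}(2k+1+x+\tfrac12)^{-n}=\zeta(n,x+\tfrac12), \]
applied with $x=\lambda\pm\varepsilon$. In the difference, the $R$-terms give $R_m^-$, and the same even/odd splitting produces the alternating sums $\sum_k(-1)^k(\lambda\pm\varepsilon+\frac12+k)^{-n}$. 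Trace class on the whole space follows because each summand is trace class.

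For (3), the argument is the same, using the stated equivalences of the even and odd parts of $\widetilde{\rH}_\NCHO^{(\alpha,\beta,\eta)}$ with $\widetilde{\rQ}_{1/2}^{(\alpha,\beta,\eta)}$ and $\widetilde{\rQ}_{3/2}^{(\alpha,\beta,\eta)}$. The conjugating operators are $e^{\pi\sqrt{-1}\bsigma_3/4}$ and $\tau_\nu(e^{\pi\sqrt{-1}\bsigma_3/4})$. The latter is unitary on $\cH_\nu(\bD)$, since it is a rotation $z\mapsto\pm\sqrt{-1}z$ times a constant of modulus one. As a fallback, even if one only has a similarity, traces and the trace class property are still preserved. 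We then apply Theorem \ref{thm_zeta_sum}\,(3) with $\nu=1/2,3/2$ and combine exactly as in (2); the $R$-terms again merge to $R_m^\pm$ by linearity in the $\lambda$-derivative of $\lambda^{2m}R_m$.

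The only point needing care is bookkeeping of the hypotheses: checking that $C(\lambda,\cdot)$ dominates both $C_{1/2}$ and $C_{3/2}$, and verifying the Hurwitz recombination. Neither poses a real obstacle.
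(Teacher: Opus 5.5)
Your proposal is correct and follows the paper's route. The paper derives the corollary "immediately" from Theorem~\ref{thm_zeta_sum} via the unitary equivalences with $\widetilde{\rH}_\flat$, $\widetilde{\rH}_{1/2}\oplus\widetilde{\rH}_{3/2}$ and $\widetilde{\rQ}_{1/2}\oplus\widetilde{\rQ}_{3/2}$, using $R_m^\pm=R_m^{1/2}\pm R_m^{3/2}$ and $C=\max\{C_{1/2},C_{3/2}\}$; your checks of the hypotheses and of the even/odd recombination of the Hurwitz sums just spell out what the paper leaves implicit.
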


\begin{remark}\label{rem_Taylor_R}
Especially, if $\lambda=\eta=0$, then we have 
\begin{align*}
&\zeta\bigl(\widetilde{\rH}_\NCHO^{(\alpha,\beta,0)};n,0\bigr) \\*
&=2\zeta\biggl(n,\frac{1}{2}\biggr)+\frac{(-1)^n}{(n-1)!}\sum_{m=1}^\infty \frac{1}{m}\biggl(\frac{\alpha-\beta}{\alpha+\beta}\biggr)^{2m}
\frac{\partial^n}{\partial \lambda^n}\lambda^{2m}R_m^+\biggl(\lambda;\frac{1}{2}\artanh\biggl(\frac{1}{\sqrt{\alpha\beta}}\biggr),0\biggr)\biggr|_{\lambda=0} \\
&=2\zeta\biggl(n,\frac{1}{2}\biggr)+\frac{(-1)^n}{(n-1)!}\sum_{m=1}^{\lfloor n/2\rfloor} \binom{n}{2m}\frac{(2m)!}{m}\biggl(\frac{\alpha-\beta}{\alpha+\beta}\biggr)^{2m}
\frac{\partial^{n-2m}R_m^+}{\partial \lambda^{n-2m}}\biggl(0;\frac{1}{2}\artanh\biggl(\frac{1}{\sqrt{\alpha\beta}}\biggr),0\biggr) \\
&=2\zeta\biggl(n,\frac{1}{2}\biggr)+\sum_{m=1}^{\lfloor n/2\rfloor}\biggl(\frac{\alpha-\beta}{\alpha+\beta}\biggr)^{2m}\frac{(-1)^n n}{m(n-2m)!}
\frac{\partial^{n-2m}R_m^+}{\partial\lambda^{n-2m}}\biggl(0;\frac{1}{2}\artanh\biggl(\frac{1}{\sqrt{\alpha\beta}}\biggr),0\biggr). 
\end{align*}
Comparing this with Kimoto--Wakayama's formula \cite[Theorem 2.6]{KW4}, 
\begin{align*}
\zeta\bigl(\widetilde{\rH}_\NCHO^{(\alpha,\beta,0)};n,0\bigr)
=2\zeta\biggl(n,\frac{1}{2}\biggr)+2\sum_{m=1}^{\lfloor n/2\rfloor}\biggl(\frac{\alpha-\beta}{\alpha+\beta}\biggr)^{2m}R_{n,m}\biggl(\frac{1}{\sqrt{\alpha\beta-1}}\biggr), 
\end{align*}
our $R_m^+(\lambda;g,\varepsilon)$ and $R_{n,m}(\kappa)$ in \cite{KW4} are related as 
\[ R_{n,m}(\sinh(2g))=\frac{(-1)^n n}{2m(n-2m)!}\frac{\partial^{n-2m}R_m^+}{\partial\lambda^{n-2m}}(0;g,0), \]
namely, we have 
\begin{equation}\label{formula_Taylor_R}
R_m^+(\lambda;g,0)=\sum_{n=0}^\infty \frac{(-1)^n 2m}{n+2m}R_{n+2m,m}(\sinh(2g))\lambda^n. 
\end{equation}
\end{remark}

\section{Integral expression of each term}\label{section_int_exp}

In this section, we give integral expressions of 
\begin{align*}
R_m^\flat(\lambda;g,\varepsilon):=\Tr\bigl(\bigl((\rd\tau_\flat(X(g))+\varepsilon+\lambda)^{-1}(\rd\tau_\flat(X(-g))-\varepsilon+\lambda)^{-1}\bigr)^m\bigr), \\
R_m^\nu(\lambda;g,\varepsilon):=\Tr\bigl(\bigl((\rd\tau_\nu(Y(g))+\varepsilon+\lambda)^{-1}(\rd\tau_\nu(Y(-g))-\varepsilon+\lambda)^{-1}\bigr)^m\bigr), 
\end{align*}
where, for $g\in\BR$, 
\begin{align*}
X(g):\hspace{-3pt}&=\begin{bmatrix}0&-g&-g^2\\0&1&g\\0&0&0\end{bmatrix}
=\begin{bmatrix}1&-g&g^2/2\\0&1&-g\\0&0&1\end{bmatrix}\begin{bmatrix}0&0&0\\0&1&0\\0&0&0\end{bmatrix}\begin{bmatrix}1&g&g^2/2\\0&1&g\\0&0&1\end{bmatrix}\in\fg^\BC_\flat, \\ 
Y(g):\hspace{-3pt}&=\begin{bmatrix} -\cosh(2g)&-\sinh(2g) \\ \sinh(2g)&\cosh(2g) \end{bmatrix} \\
&=\begin{bmatrix} \cosh(g)&-\sinh(g) \\ -\sinh(g)&\cosh(g) \end{bmatrix}
\begin{bmatrix}-1&0\\0&1\end{bmatrix}\begin{bmatrix} \cosh(g)&\sinh(g) \\ \sinh(g)&\cosh(g) \end{bmatrix}\in\fg^\BC
\end{align*}
are as in Examples \ref{ex_Xg}, \ref{ex_Yg}. 

\begin{theorem}\label{thm_int_exp}
\begin{enumerate}
\item Let $g\in\BR$, $\lambda,\varepsilon\in\BC$, $m\in\BZ_{\ge 1}$. If $\Re\lambda-|\Re\varepsilon|>0$, then we have 
\begin{align*}
R_m^\flat(\lambda;g,\varepsilon)
&=\int_{[0,1]^{2m}} \exp\biggl(-4g^2\frac{\Phi_m(u_1,\ldots,u_{2m})}{1-\prod_{j=1}^{2m}u_j}\biggr)\frac{\prod_{j=1}^{2m} u_j^{\lambda-(-1)^j\varepsilon-1}}{1-\prod_{j=1}^{2m}u_j}
\,\prod_{j=1}^{2m}\rd u_j,
\end{align*}
where 
\begin{align}
\Phi_m(u_1,\ldots,u_{2m}):=m\biggl(1+\prod_{j=1}^{2m}u_j\biggr)+\sum_{l=1}^{2m-1}(-1)^l\sum_{k=1}^{2m}\prod_{j=k}^{k+l-1}u_j, \label{formula_Phigu}
\end{align}
with $u_{2m+j}:=u_j$ for $j=1,\ldots,2m$. 
\item Let $\nu>0$, $g\in\BR$, $\lambda,\varepsilon\in\BC$, $m\in\BZ_{\ge 1}$. If $\Re\lambda-|\Re\varepsilon|+\nu>0$, then we have 
\begin{multline*}
R^\nu_m(\lambda;g,\varepsilon)
=\int_{[0,1]^{2m}} \frac{\bigl(\frac{1}{2}(\sqrt{\Psi^g_m(u_1,\ldots,u_{2m})+2}-\sqrt{\Psi^g_m(u_1,\ldots,u_{2m})-2})\bigr)^{2\nu-2}}
{\sqrt{\Psi^g_m(u_1,\ldots,u_{2m})+2}\sqrt{\Psi^g_m(u_1,\ldots,u_{2m})-2}} \\*
{}\times\prod_{j=1}^{2m} u_j^{\lambda-(-1)^j\varepsilon-1}\,\rd u_j,
\end{multline*}
where 
\begin{align}
\Psi^g_m(u_1,\ldots,u_{2m}):=\tr\biggl(\prod_{1\le j\le 2m}^{\longrightarrow}\biggl(\begin{bmatrix} \cosh(2g)&(-1)^j\sinh(2g) \\ (-1)^j\sinh(2g)&\cosh(2g) \end{bmatrix}
\begin{bmatrix}u_j^{-1}&0\\0&u_j\end{bmatrix}\biggr)\biggr). \label{formula_Psigu}
\end{align}
\item Let $g\in\BR$, $\lambda,\varepsilon\in\BC$, $m\in\BZ_{\ge 1}$. If $\Re\lambda-|\Re\varepsilon|>0$, then we have 
\[ \lim_{\nu\to\infty}2^{2m}R_m^\nu\biggl(2\lambda-\nu;\frac{g}{\sqrt{\nu}},2\varepsilon\biggr)=R_m^\flat(\lambda;g,\varepsilon). \]
\end{enumerate}
\end{theorem}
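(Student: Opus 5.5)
The plan is to write each resolvent as a Laplace-type integral of the semigroup operators and then apply the trace formulas of Theorems \ref{thm_Tr_tauflat} and \ref{thm_Tr_taunu}.

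\emph{Step 1: resolvents as integrals.} For $\Re(\lambda\pm\varepsilon)>0$ (case $\flat$), or $\Re(\lambda\pm\varepsilon)+\nu>0$ (case $\nu$), Example \ref{ex_Xg} (resp.\ \ref{ex_Yg}) gives
\[ (\rd\tau_\flat(X(\pm g))\pm\varepsilon+\lambda)^{-1}=\int_0^1 \tau_\flat(u^{X(\pm g)})\,u^{\lambda\pm\varepsilon-1}\,\rd u, \]
and similarly with $Y(\pm g)$ and $\tau_\nu$. In case $\nu$ the integrand has operator norm $u^{\Re(\lambda\pm\varepsilon)+\nu-1}$, so the integral converges in operator norm.

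\emph{Step 2: the trace of the product.} Multiplying $2m$ such integrals, with $u_{2j-1}$ attached to $+g$ and $u_{2j}$ to $-g$, gives
\[ \bigl(\rh_+^{-1}\rh_-^{-1}\bigr)^m=\int_{[0,1]^{2m}}\tau\Bigl(\prod^{\rightarrow}_j u_j^{X((-1)^{j-1}g)}\Bigr)\prod_j u_j^{\lambda-(-1)^j\varepsilon-1}\rd u_j. \]
Each product lies in the open semigroup, since the determinant is $\prod u_j<1$ in case $\flat$, and products of elements of $\Gamma^\circ$ stay in $\Gamma^\circ$. To exchange trace and integral, I bound $\Vert\cdot\Vert_{\Tr}$ of the integrand by Hölder (\ref{Holder_ineq}) with all $p_j=2m$:
\[ \prod_j\Vert\tau(u_j^{X})\Vert_{2m}=\prod_j(1-u_j^{2m})^{-1/(2m)}. \]
This is integrable against $\prod_j u_j^{\Re\lambda\pm\Re\varepsilon-1}$ exactly under the hypothesis $\Re\lambda-|\Re\varepsilon|>0$; in case $\nu$ the extra factors $u^{\nu}$ relax it to $+\nu>0$. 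This justifies $\Tr\int=\int\Tr$ via Bochner integrability in trace norm. I expect this to be the main technical point, together with the branch bookkeeping in Step 4.

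\emph{Step 3: the $\flat$ case.} Conjugate using $u^{X(g)}=N_{-g}\,\mathrm{diag}(1,u,1)\,N_g$, where $N_g$ is the Heisenberg element, and compute the product matrix explicitly. Its entries are $c=\prod u_j$, and $a,b,d$ are polynomials in $g$ and the $u_j$, where $a,d$ are alternating sums of partial products and $b$ is of order $g^2$. Theorem \ref{thm_Tr_tauflat} then gives $\Tr=\exp(-(ad+b(1-c))/(1-c))/(1-c)$. The remaining task is the combinatorial identity $ad+b(1-c)=4g^2\Phi_m(u)$. I would prove it by induction on the number of factors using the group law (the $(1,2),(2,3),(1,3)$ entries compose affinely), or more cleverly by cyclic invariance of the trace, which explains the cyclic sums in $\Phi_m$.

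\emph{Step 4: the $\nu$ case.} Here the product is $\gamma=\prod^{\rightarrow}K_{(-1)^{j-1}g}^{-1}\mathrm{diag}(u_j^{-1},u_j)K_{(-1)^{j-1}g}$ with $K_g=\bigl[\begin{smallmatrix}\cosh g&\sinh g\\ \sinh g&\cosh g\end{smallmatrix}\bigr]$. Cyclically regrouping the consecutive factors $K_{\pm g}K_{\pm g}^{-1}$ into $K_{\pm 2g}$, as I read the sign pattern, gives $\tr\gamma=\Psi^g_m$. This trace is real, so by Theorem \ref{thm_semigroup} we have $\Psi>2$ and the eigenvalues are $\mu,\mu^{-1}$ with
\[ \mu=\tfrac12\bigl(\sqrt{\Psi+2}-\sqrt{\Psi-2}\bigr)^2,\qquad 1-\mu^2=\mu\sqrt{\Psi+2}\sqrt{\Psi-2}. \]
Substituting into Theorem \ref{thm_Tr_taunu}, $\mu^\nu/(1-\mu^2)$ gives the stated integrand. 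For non-integer $\nu$, the branch of $\mu^\nu$ on the universal cover must be checked to be the positive one; I would argue by continuity from $g=0$, where $\gamma$ is diagonal positive.

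\emph{Step 5: the limit.} Substitute $\lambda\to2\lambda-\nu$, $g\to g/\sqrt\nu$, $\varepsilon\to2\varepsilon$ and change variables $u_j=v_j^{1/2}$. The factor $2^{2m}$ is absorbed by the Jacobian, and the powers combine into $\prod_j v_j^{\lambda\mp\varepsilon-1}\cdot v_j^{-\nu/2}$, which must be matched against $\mu^{\nu-1}$. Expanding in $g^2/\nu$ gives $\Psi=(\prod v)^{-1/2}+(\prod v)^{1/2}+(4g^2/\nu)\,Q(v)+O(\nu^{-2})$. Then $\mu^\nu\to(\prod v)^{\nu/2}\exp(-4g^2\Phi_m/(1-\prod v))$, with the quadratic term producing $\Phi_m$; this last identification needs checking. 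Finally, pass to the limit under the integral by dominated convergence, using the Hölder-type bound of Step 2, which is uniform in $\nu$ and valid when $\Re\lambda-|\Re\varepsilon|>0$.
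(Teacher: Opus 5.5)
Your proposal follows the paper's proof in all essentials:
\begin{itemize}
\item the Laplace representation of each resolvent as $\int_0^1\tau(u^{X(\pm g)})u^{\lambda\pm\varepsilon-1}\,\rd u$ (resp.\ with $Y$, $\tau_\nu$);
\item Bochner integrability in trace norm, to move $\Tr$ inside the $2m$-fold integral;
\item conjugation of the product so that adjacent factors merge into $N_{\pm2g}$ resp.\ $K_{\pm2g}$ (your reading of the sign pattern giving $\Psi^g_m$ is correct);
\item Theorems \ref{thm_Tr_tauflat} and \ref{thm_Tr_taunu};
\item dominated convergence for (3).
\end{itemize}
The one genuine difference is the trace-norm majorant. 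You take all H\"older exponents equal to $2m$. This gives the factorized bound $\prod_j(1-u_j^{2m})^{-1/(2m)}$, times $\prod_ju_j^\nu$ for $\tau_\nu$. The paper takes $p_j=(\sum_kx_k)/x_j$ with $u_j=e^{-x_j}$. Then the product of Schatten norms collapses to the $g=0$ trace, giving the sharp bound $1/(1-\prod_ju_j)$ (resp.\ $\prod_ju_j^\nu/(1-\prod_ju_j^2)$). Both bounds are independent of $g$, integrable exactly under the stated hypotheses, and independent of $\nu$ after the substitution in (3), so either works. Yours integrates trivially; the paper's yields the explicit majorant $\sum_k(\Re(\lambda+\varepsilon)+k)^{-m}(\Re(\lambda-\varepsilon)+k)^{-m}$.

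Three points need repair.

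\emph{The eigenvalue formula.} The small eigenvalue is $\mu=\bigl(\tfrac12(\sqrt{\Psi+2}-\sqrt{\Psi-2})\bigr)^2=\tfrac12\bigl(\Psi-\sqrt{\Psi^2-4}\bigr)$, because $\sqrt{\Psi\pm2}=\mu^{-1/2}\pm\mu^{1/2}$. Your expression $\tfrac12(\sqrt{\Psi+2}-\sqrt{\Psi-2})^2$ equals $2\mu$. With the correct $\mu$, your relation $1-\mu^2=\mu\sqrt{\Psi+2}\sqrt{\Psi-2}$ holds and the integrand of (2) follows.

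\emph{The sign of $\Psi$.} Theorem \ref{thm_semigroup}\,(2) gives only $|\Psi|>2$, not $\Psi>2$. The paper closes this by noting that $\gamma$ and $\bI$ lie in one path component of $\Gamma\cap SL(2,\BR)$. Your deformation $g\to0$ does the same job: at $g=0$ one has $\Psi=\prod_ju_j^{-1}+\prod_ju_j>2$, and along the deformation $\Psi$ never meets $[-2,2]$. The same deformation also fixes the positive branch of $\mu^\nu$, so state it for both purposes.

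\emph{The deferred identities.} These are $ad+b(1-c)=4g^2\Phi_m(u)$, and the claim that the $1/\nu$ coefficient of $\Psi_m^{g/\sqrt\nu}\prod_ju_j$ is $4g^2\Phi_m(u_1^2,\ldots,u_{2m}^2)$ (that is, $Q=(\prod_jv_j)^{-1/2}\Phi_m(v)$ in your notation). They are where $\Phi_m$ actually comes from, so they must be carried out.
\begin{itemize}
\item For the first, the paper observes that the entries of $\gamma$ are at most quadratic in $g$ and reads off the $g^0,g^1,g^2$ coefficients.
\item For the second, it writes $\Psi_m^{g/\sqrt\nu}\prod_ju_j=\cosh^{2m}(2g/\sqrt\nu)\,\tr\prod^{\rightarrow}_j\bigl(\bI+(-1)^j\tanh(2g/\sqrt\nu)\bsigma_1\bigr)\mathrm{diag}(1,u_j^2)$. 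It then expands to second order in $\tanh$; the first-order terms are traceless.
\end{itemize}
Both computations end with the same cyclic reindexing ($l'=l-k$ or $l'=2m+l-k$), which is what produces the cyclic sums in $\Phi_m$.
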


For example, we have 
\begin{gather*}
\Phi_1(u_1,u_2)=1+u_1u_2-u_1-u_2=(1-u_1)(1-u_2), \\
\begin{split}
\Phi_2(u_1,u_2,u_3,u_4)&=2(1+u_1u_2u_3u_4)-u_1-u_2-u_3-u_4+u_1u_2+u_2u_3+u_3u_4+u_4u_1 \\*
&\eqspace{}-u_1u_2u_3-u_2u_3u_4-u_3u_4u_1-u_4u_1u_2, 
\end{split}\\
\begin{split}
\Psi^g_1(u_1,u_2)&=\tr\biggl(\begin{bmatrix} \cosh(2g)&-\sinh(2g) \\ -\sinh(2g)&\cosh(2g) \end{bmatrix}\begin{bmatrix}u_1^{-1}&0\\0&u_1\end{bmatrix}
\begin{bmatrix} \cosh(2g)&\sinh(2g) \\ \sinh(2g)&\cosh(2g) \end{bmatrix}\begin{bmatrix}u_2^{-1}&0\\0&u_2\end{bmatrix}\biggr) \\
&=(u_1^{-1}u_2^{-1}+u_1u_2)\cosh^2(2g)-(u_1u_2^{-1}+u_1^{-1}u_2)\sinh^2(2g).
\end{split}
\end{gather*}

To prove the theorem, we prepare the following lemma. For $u\in(0,1)$, let 
\[ u^{X(g)}:=\exp(X(g)\log u)\in\Gamma_\flat^\circ, \qquad u^{Y(g)}:=\exp(Y(g)\log u)\in\Gamma^\circ \]
be as in Examples \ref{ex_Xg}, \ref{ex_Yg}, so that we have 
\begin{equation}\label{formula_Trace}
\Tr(\tau_\flat(u^{X(g)}))=\Tr(\tau_\flat(u^{X(0)}))=\frac{1}{1-u}, \quad 
\Tr(\tau_\nu(u^{Y(g)}))=\Tr(\tau_\nu(u^{Y(0)}))=\frac{u^\nu}{1-u^2}.
\end{equation}

\begin{lemma}\label{lem_Trtau}
\begin{enumerate}
\item Let $g\in\BR$, $m\in\BZ_{\ge 1}$, $u_1,\ldots,u_{2m}\in (0,1)$. Then we have 
\begin{gather*}
\biggl\Vert\tau_\flat\biggl(\prod_{1\le j\le 2m}^{\longrightarrow}u_j^{X(-(-1)^jg)}\biggr)\biggr\Vert_{\Tr}
\le \frac{1}{1-\prod_{j=1}^{2m}u_j}, \\
\Tr\biggl(\tau_\flat\biggl(\prod_{1\le j\le 2m}^{\longrightarrow}u_j^{X(-(-1)^jg)}\biggr)\biggr)
=\exp\biggl(-4g^2\frac{\Phi_m(u_1,\ldots,u_{2m})}{1-\prod_{j=1}^{2m}u_j}\biggr)\frac{1}{1-\prod_{j=1}^{2m}u_j},
\end{gather*}
where $\Phi_m(u_1,\ldots,u_{2m})$ is as in (\ref{formula_Phigu}). 
\item Let $\nu>0$, $g\in\BR$, $m\in\BZ_{\ge 1}$, $u_1,\ldots,u_{2m}\in (0,1)$. Then we have 
\begin{gather*}
\biggl\Vert\tau_\nu\biggl(\prod_{1\le j\le 2m}^{\longrightarrow}u_j^{Y(-(-1)^jg)}\biggr)\biggr\Vert_{\Tr}
\le \frac{\prod_{j=1}^{2m}u_j^\nu}{1-\prod_{j=1}^{2m}u_j^2}, \\
\begin{split}
&\Tr\biggl(\tau_\nu\biggl(\prod_{1\le j\le 2m}^{\longrightarrow}u_j^{Y(-(-1)^jg)}\biggr)\biggr) \\*
&=\frac{\bigl(\frac{1}{2}(\sqrt{\Psi^g_m(u_1,\ldots,u_{2m})+2}-\sqrt{\Psi^g_m(u_1,\ldots,u_{2m})-2})\bigr)^{2\nu-2}}
{\sqrt{\Psi^g_m(u_1,\ldots,u_{2m})+2}\sqrt{\Psi^g_m(u_1,\ldots,u_{2m})-2}}, 
\end{split}
\end{gather*}
where $\Psi^g_m(u_1,\ldots,u_{2m})$ is as in (\ref{formula_Psigu}). 
\item Let $g\in\BR$, $m\in\BZ_{\ge 1}$, $u_1,\ldots,u_{2m}\in (0,1)$. Then we have 
\[ \lim_{\nu\to\infty}\Tr\biggl(\tau_\nu\biggl(\prod_{1\le j\le 2m}^{\longrightarrow}u_j^{Y(-(-1)^jg/\sqrt{\nu})}\biggr)\biggr)\prod_{j=1}^{2m} u_j^{-\nu}
=\Tr\biggl(\tau_\flat\biggl(\prod_{1\le j\le 2m}^{\longrightarrow}u_j^{2X(-(-1)^jg)}\biggr)\biggr). \]
\end{enumerate}
\end{lemma}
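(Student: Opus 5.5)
We prove the three parts in order. Parts (1) and (2) go through the explicit product in the complex group and the trace formulas of Theorem \ref{thm_Tr_tauflat} and Theorem \ref{thm_Tr_taunu}. Part (3) is an asymptotic comparison of the closed formulas obtained in (1) and (2).

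\emph{Trace identities.} For (1), write $u^{X(h)}=n(h)^{-1}\,\mathrm{diag}(1,u,1)\,n(h)$, where $n(h)$ is the unipotent Heisenberg element in Example \ref{ex_Xg}. The product $\gamma:=\prod^{\rightarrow}_{j}u_j^{X(-(-1)^jg)}$ is then an upper triangular $3\times 3$ matrix with $c=\prod_j u_j$. The entries $a,d$ are linear in $g$, and $b$ is quadratic in $g$. We compute them by induction on the number of factors, using $n(h)n(h')^{-1}=n(h-h')$ up to a central term. Substituting into Theorem \ref{thm_Tr_tauflat} gives $\exp(-(ad+b(1-c))/(1-c))/(1-c)$. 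It remains to check the identity $ad+b(1-c)=4g^2\Phi_m(u_1,\ldots,u_{2m})$. It comes from summing over the cyclic strings $\prod_{j=k}^{k+l-1}u_j$, whose signs alternate because the shifts alternate between $\pm 2g$. We verify it first for $m=1$, where $\Phi_1=(1-u_1)(1-u_2)$, and then in general.

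For (2), write $u^{Y(h)}=k(h)^{-1}\mathrm{diag}(u^{-1},u)k(h)$ with $k(h)=\bigl[\begin{smallmatrix}\cosh h&\sinh h\\ \sinh h&\cosh h\end{smallmatrix}\bigr]$. Since $k(h)k(h')^{-1}=k(h-h')$, consecutive factors produce exactly the matrices $\bigl[\begin{smallmatrix}\cosh 2g&\pm\sinh 2g\\ \pm\sinh 2g&\cosh 2g\end{smallmatrix}\bigr]$. Cyclicity of $\tr$ therefore gives $\tr\gamma=\Psi^g_m$. This trace is real, so Theorem \ref{thm_semigroup} gives $\Psi^g_m>2$. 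The eigenvalue of modulus less than $1$ is $\mu\in(0,1)$ with $\mu+\mu^{-1}=\Psi^g_m$. Hence $\mu^{\pm1/2}$ satisfy $\mu^{-1/2}+\mu^{1/2}=\sqrt{\Psi+2}$ and $\mu^{-1/2}-\mu^{1/2}=\sqrt{\Psi-2}$. This yields $\mu^{1/2}=\frac12(\sqrt{\Psi+2}-\sqrt{\Psi-2})$ and $\mu^{-1}-\mu=\sqrt{\Psi+2}\sqrt{\Psi-2}$. Theorem \ref{thm_Tr_taunu} then gives $\mu^\nu/(1-\mu^2)=\mu^{\nu-1}/(\mu^{-1}-\mu)$, which is the stated formula. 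The branch of $\mu^\nu$ on $\widetilde{\Gamma}^\circ$ is fixed by continuity in $g$ from $g=0$, where the value is $\prod_j u_j^\nu$.

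\emph{Trace norm bounds.} As in the proof of Theorem \ref{thm_Tr_tauflat}, we have $\Vert\tau(\gamma)\Vert_{\Tr}=\Tr\,\tau(\sqrt{\gamma^\ddagger\gamma})$, and $\sqrt{\gamma^\ddagger\gamma}$ is self-adjoint with determinant $|\det\gamma|$. In the flat case a unitary Heisenberg element conjugates it to $\bigl[\begin{smallmatrix}1&0&b'\\0&c'&0\\0&0&1\end{smallmatrix}\bigr]$ with $c'=\prod_j u_j$ and $b'\in\BR$. The positive operator $\tau(\sqrt{\gamma^\ddagger\gamma})$ then has eigenvalues $e^{-b'}c'^k$. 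Each factor $\tau_\flat(u_j^{X})$ has operator norm $1$, so the largest eigenvalue satisfies $e^{-b'}\le 1$. Hence the trace is $e^{-b'}/(1-c')\le 1/(1-\prod_j u_j)$.

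In the disc case, $\sqrt{\gamma^\ddagger\gamma}\in\exp(C^\circ)$ is $G$-conjugate to $\mathrm{diag}(\mu'^{-1},\mu')$, and the eigenvalues of the corresponding operator are $\mu'^{\nu+2k}$. The operator norm is at most $\prod_j\Vert\tau_\nu(u_j^{Y})\Vert_\op=\prod_j u_j^\nu$, which forces $\mu'\le\prod_j u_j$. Therefore $\Vert\tau_\nu(\gamma)\Vert_{\Tr}=\mu'^\nu/(1-\mu'^2)\le\prod_j u_j^\nu/(1-\prod_j u_j^2)$.

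\emph{Limit.} For (3), put $P:=\prod_j u_j$ and replace $g$ by $g/\sqrt\nu$. Expanding in $\nu$ gives $\Psi=P+P^{-1}+\frac{4g^2}{\nu}Q(u)+O(\nu^{-2})$ for an explicit polynomial expression $Q$. From $\mu+\mu^{-1}=\Psi$ we get $\mu=P\bigl(1-\frac{4g^2}{\nu}\frac{PQ}{1-P^2}+O(\nu^{-2})\bigr)$. Consequently
\[ \mu^\nu P^{-\nu}\to\exp\Bigl(-4g^2\frac{PQ}{1-P^2}\Bigr), \qquad 1-\mu^2\to 1-P^2. \]
By (1) applied to the variables $u_j^2$, the right-hand side of (3) equals $\exp\bigl(-4g^2\Phi_m(u_1^2,\ldots,u_{2m}^2)/(1-P^2)\bigr)/(1-P^2)$. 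So (3) reduces to the combinatorial identity $PQ(u)=\Phi_m(u_1^2,\ldots,u_{2m}^2)$.

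The main obstacle is this identification, together with the analogous closed form of $ad+b(1-c)$ in (1). I would first try to derive both at the Lie algebra level: a contraction argument shows that $\rho_\nu\,\tau_\nu(u^{Y(g/\sqrt\nu)})\,\rho_\nu^{-1}\,u^{-\nu}$ approaches $\tau_\flat(u^{2X(g)})$ through the corresponding group elements. That would make both identities consequences of one matrix computation. If this fails, I would expand the trace of the product of $2\times2$ matrices directly to second order in $\sinh(2g/\sqrt\nu)$, collecting terms over pairs of positions $j<k$. The off-diagonal contributions should give precisely the alternating sum of cyclic monomials in $\Phi_m$.
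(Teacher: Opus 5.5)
Your architecture for the trace formulas and the limit is the paper's: you conjugate so that each factor becomes $\bigl[\begin{smallmatrix}1&2(-1)^jg&2g^2\\0&1&2(-1)^jg\\0&0&1\end{smallmatrix}\bigr]\mathrm{diag}(1,u_j,1)$, resp.\ $\bigl[\begin{smallmatrix}\cosh 2g&(-1)^j\sinh 2g\\(-1)^j\sinh 2g&\cosh 2g\end{smallmatrix}\bigr]\mathrm{diag}(u_j^{-1},u_j)$. You then apply Theorems \ref{thm_Tr_tauflat} and \ref{thm_Tr_taunu}, and expand $\Psi^{g/\sqrt{\nu}}_m$ to order $1/\nu$.

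\textbf{Main gap.} The proposal stops where the content of (1) and (3) lies. The identities $ad+b(1-c)=4g^2\Phi_m(u_1,\ldots,u_{2m})$ and $PQ=\Phi_m(u_1^2,\ldots,u_{2m}^2)$ are asserted, not derived, and you yourself list them as the main obstacle. The contraction idea does not supply the first one; at best it transports (1) to (3). Both identities close by a direct second-order expansion in $g$, as in the paper.

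\emph{Identity in (1).} Reading off the coefficients of $g^0,g^1,g^2$ of the product gives
$c=P$, $a=2g\sum_k(-1)^k\prod_{j=k}^{2m}u_j$, $d=2g\sum_l(-1)^l\prod_{j=1}^{l-1}u_j$, and $b=4mg^2+4g^2\sum_{k<l}(-1)^{k+l}\prod_{j=k}^{l-1}u_j$.
Now sort the terms of $ad$:
\begin{itemize}
\item $k=l$ gives $2mP$;
\item $k<l$ gives $P\prod_{j=k}^{l-1}u_j$, which cancels against $-P$ times the sum in $b$;
\item $l<k$ gives the wrap-around strings $\prod_{j=k}^{2m+l-1}u_j$.
\end{itemize}
Together with the remaining sum in $b$, these are all cyclic strings of length $1\le l'\le 2m-1$, each exactly once with sign $(-1)^{l'}$. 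The constant terms combine as $2mP+m(1-P)=m(1+P)$, giving $4g^2\Phi_m$.

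\emph{Identity in (3).} Write $P\Psi^{g/\sqrt\nu}_m=\cosh^{2m}(2g/\sqrt\nu)\,\tr\prod_j\bigl(\bigl[\begin{smallmatrix}1&(-1)^jt\\(-1)^jt&1\end{smallmatrix}\bigr]\mathrm{diag}(1,u_j^2)\bigr)$ with $t=\tanh(2g/\sqrt\nu)$. Terms with an odd number of off-diagonal insertions have zero trace. Two insertions at positions $k<l$ contribute $(-1)^{k+l}\bigl(\prod_{j=k}^{l-1}u_j^2+\prod_{j=l}^{2m+k-1}u_j^2\bigr)$. Adding the $4mg^2/\nu$ from $\cosh^{2m}$, the same regrouping gives $P\Psi^{g/\sqrt\nu}_m=1+P^2+\frac{4g^2}{\nu}\Phi_m(u_1^2,\ldots,u_{2m}^2)+O(\nu^{-2})$, i.e.\ $PQ=\Phi_m(u_1^2,\ldots,u_{2m}^2)$.

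\textbf{Smaller gap in (2).} Theorem \ref{thm_semigroup}\,(2) only gives $|\Psi^g_m|>2$, not $\Psi^g_m>2$. If $\Psi^g_m<-2$, the small eigenvalue would be negative and the formula in terms of $\sqrt{\Psi^g_m\pm2}$ would fail. You need an extra argument. For example, $\Psi^g_m$ is continuous in $g$, never lies in $[-2,2]$, and $\Psi^0_m=P+P^{-1}>2$. The paper instead uses that $\gamma$ lies in the path-component of $\bI$ in $\Gamma\cap SL(2,\BR)$.

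\textbf{Trace-norm bounds: a genuinely different route.} The paper applies H\"older's inequality (\ref{Holder_ineq}) with $p_j=\sum_kx_k/x_j$. It uses that $\Tr\tau_\flat(u^{X(h)})$ and $\Tr\tau_\nu(u^{Y(h)})$ do not depend on $h$, so the Schatten norms multiply to exactly the $g=0$ trace.

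You instead diagonalize $\sqrt{\tau(\gamma)^\dagger\tau(\gamma)}=\tau(\sqrt{\gamma^\ddagger\gamma})$ and bound its top eigenvalue by submultiplicativity of $\Vert\cdot\Vert_{\op}$. This is correct:
\begin{itemize}
\item flat case: $e^{-b'}=\Vert\tau_\flat(\gamma)\Vert_{\op}\le1$;
\item disc case: $\mu'^\nu\le\prod_ju_j^\nu$, and $x\mapsto x^\nu/(1-x^2)$ is increasing on $(0,1)$.
\end{itemize}
It even yields the exact trace norm. The price is that in the disc case you need $|\tau_\nu(\gamma)|=\tau_\nu(\exp X)$ for $\gamma=k\exp X$ with $k\in G$, $X\in C^\circ$. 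This follows from the polar decomposition in Theorem \ref{thm_semigroup}\,(1) and the self-adjointness of $\tau_\nu$ on $\exp(C^\circ)$, but the paper never spells it out for $\widetilde\Gamma^\circ$. H\"older needs only the spectra of the individual one-parameter factors.
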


\begin{proof}
(1) First, let $u_j=e^{-x_j}\in (0,1)$, and put $p_j:=(\sum_{k=1}^{2m}x_k)/x_j$, so that $\sum_{j=1}^{2m}p_j^{-1}=1$. 
Then by H\"older's inequality (\ref{Holder_ineq}) and (\ref{formula_Trace}), we have 
\begin{align*}
&\biggl\Vert\tau_\flat\biggl(\prod_{1\le j\le 2m}^{\longrightarrow}u_j^{X(-(-1)^jg)}\biggr)\biggr\Vert_{\Tr}
\le \prod_{j=1}^{2m}\Vert\tau_\flat(e^{-x_jX(-(-1)^jg)})\Vert_{p_j} \\
&=\prod_{j=1}^{2m}\bigl(\Tr\bigl(\tau_\flat(e^{-\sum_{k=1}^{2m}x_kX(-(-1)^jg)}) \bigr)\bigr)^{1/p_j}
=\Tr\bigl(\tau_\flat(e^{-\sum_{k=1}^{2m}x_kX(0)}) \bigr) \\
&=\frac{1}{1-e^{-\sum_{k=1}^{2m}x_k}}
=\frac{1}{1-\prod_{j=1}^{2m}u_j}<\infty. 
\end{align*}
Next, we put 
\begin{align*}
\gamma=\begin{bmatrix}1&a&b\\0&c&d\\0&0&1\end{bmatrix}
:\hspace{-3pt}&=\begin{bmatrix}1&-g&g^2/2\\0&1&-g\\0&0&1\end{bmatrix}\biggl(\prod_{1\le j\le 2m}^{\longrightarrow}u_j^{X(-(-1)^jg)}\biggr) \begin{bmatrix}1&g&g^2/2\\0&1&g\\0&0&1\end{bmatrix} \\
&=\prod_{1\le j\le 2m}^{\longrightarrow} \left(\begin{bmatrix} 1&2(-1)^jg&2g^2 \\ 0&1&2(-1)^jg \\ 0&0&1 \end{bmatrix}\begin{bmatrix}1&0&0\\0&u_j&0\\0&0&1\end{bmatrix}\right), 
\end{align*}
so that 
\[ \Tr\biggl(\tau_\flat\biggl(\prod_{1\le j\le 2m}^{\longrightarrow}u_j^{X(-(-1)^jg)}\biggr)\biggr)=\Tr(\tau_\flat(\gamma))
=\frac{\exp\bigl(-\frac{ad+b(1-c)}{1-c}\bigr)}{1-c} \]
by Theorem \ref{thm_Tr_tauflat}. To compute $\gamma$, since we have 
\[ \gamma\in\begin{bmatrix} 1&\BC g&\BC g^2 \\ 0&\BC&\BC g \\ 0&0&1 \end{bmatrix} \]
by the induction on $m$, it suffices to compute the coefficients of $g^n$ for $n=0,1,2$. Then the coefficient of $g^0$ is computed as 
\begin{align*}
\prod_{1\le j\le 2m}^{\longrightarrow} \begin{bmatrix}1&0&0\\0&u_j&0\\0&0&1\end{bmatrix}=\begin{bmatrix}1&0&0\\0&\prod_{j=1}^{2m}u_j&0\\0&0&1\end{bmatrix}, 
\end{align*}
that of $g^1$ is computed as 
\begin{align*}
&\sum_{k=1}^{2m}\prod_{1\le j\le k-1}^{\longrightarrow} \begin{bmatrix}1&0&0\\0&u_j&0\\0&0&1\end{bmatrix} \begin{bmatrix}0&2(-1)^k&0\\0&0&2(-1)^k\\0&0&0\end{bmatrix}
\prod_{k\le j\le 2m}^{\longrightarrow} \begin{bmatrix}1&0&0\\0&u_j&0\\0&0&1\end{bmatrix} \\
&=\sum_{k=1}^{2m}\begin{bmatrix} 0&2(-1)^k\prod_{j=k}^{2m}u_j&0 \\ 0&0&2(-1)^k\prod_{j=1}^{k-1}u_j \\ 0&0&0 \end{bmatrix}, 
\end{align*}
and that of $g^2$ is computed as 
\begin{align*}
&\sum_{1\le k<l\le 2m}\prod_{1\le j\le k-1}^{\longrightarrow} \begin{bmatrix}1&0&0\\0&u_j&0\\0&0&1\end{bmatrix} \begin{bmatrix}0&2(-1)^k&0\\0&0&2(-1)^k\\0&0&0\end{bmatrix} \\*
&\eqspace{}\times\prod_{k\le j\le l-1}^{\longrightarrow} \begin{bmatrix}1&0&0\\0&u_j&0\\0&0&1\end{bmatrix} \begin{bmatrix}0&2(-1)^l&0\\0&0&2(-1)^l\\0&0&0\end{bmatrix}
\prod_{l\le j\le 2m}^{\longrightarrow} \begin{bmatrix}1&0&0\\0&u_j&0\\0&0&1\end{bmatrix} \\*
&\eqspace{}+\sum_{k=1}^{2m}\prod_{1\le j\le k-1}^{\longrightarrow} \begin{bmatrix}1&0&0\\0&u_j&0\\0&0&1\end{bmatrix} \begin{bmatrix}0&0&2\\0&0&0\\0&0&0\end{bmatrix}
\prod_{k\le j\le 2m}^{\longrightarrow} \begin{bmatrix}1&0&0\\0&u_j&0\\0&0&1\end{bmatrix} \\
&=\sum_{1\le k<l\le 2m}\begin{bmatrix} 0&0&4(-1)^{k+l}\prod_{j=k}^{l-1}u_j \\ 0&0&0\\0&0&0 \end{bmatrix} +\sum_{k=1}^{2m}\begin{bmatrix}0&0&2\\0&0&0\\0&0&0\end{bmatrix} \\
&=\sum_{1\le k<l\le 2m}\begin{bmatrix} 0&0&4(-1)^{k+l}\prod_{j=k}^{l-1}u_j \\ 0&0&0\\0&0&0 \end{bmatrix} +\begin{bmatrix}0&0&4m\\0&0&0\\0&0&0\end{bmatrix}. 
\end{align*}
Therefore we have 
\begin{align*}
\gamma=\begin{bmatrix} 1&2g\sum_{k=1}^{2m}(-1)^k\prod_{j=k}^{2m}u_j&4g^2\sum_{1\le k<l\le 2m}(-1)^{k+l}\prod_{j=k}^{l-1}u_j+4mg^2 \\
0&\prod_{j=1}^{2m}u_j&2g\sum_{k=1}^{2m}(-1)^k\prod_{j=1}^{k-1}u_j \\ 0&0&1 \end{bmatrix}. 
\end{align*}
Then $ad+b(1-c)$ is computed as 
\begin{align*}
&ad+b(1-c)=\biggl(2g\sum_{k=1}^{2m}(-1)^k\prod_{j=k}^{2m}u_j\biggr)\biggl(2g\sum_{l=1}^{2m}(-1)^l\prod_{j=1}^{l-1}u_j\biggr) \\*
&\hphantom{ad+b(1-c)}\eqspace{}+\biggl(4g^2\sum_{1\le k<l\le 2m}(-1)^{k+l}\prod_{j=k}^{l-1}u_j+4mg^2\biggr)\biggl(1-\prod_{j=1}^{2m}u_j\biggr) \\
&=4g^2\biggl(\sum_{k=1}^{2m}\prod_{j=1}^{2m}u_j+\sum_{1\le k<l\le 2m}(-1)^{k+l}\prod_{j=1}^{2m}u_j\prod_{j=k}^{l-1}u_j+\sum_{1\le l<k\le 2m}(-1)^{k+l}\prod_{j=1}^{l-1}u_j\prod_{j=k}^{2m}u_j\biggr) \\*
&\eqspace{}+4g^2\biggl(\sum_{1\le k<l\le 2m}(-1)^{k+l}\prod_{j=k}^{l-1}u_j-\sum_{1\le k<l\le 2m}(-1)^{k+l}\prod_{j=1}^{2m}u_j\prod_{j=k}^{l-1}u_j+m\biggl(1-\prod_{j=1}^{2m}u_j\biggr)\biggr) \\
&=4g^2\biggl(2m\prod_{j=1}^{2m}u_j+\!\!\sum_{1\le l<k\le 2m}\!\!(-1)^{k+l}\prod_{j=k}^{2m+l-1}u_j+\!\!\sum_{1\le k<l\le 2m}\!\!(-1)^{k+l}\prod_{j=k}^{l-1}u_j+m\biggl(1-\prod_{j=1}^{2m}u_j\biggr)\biggr) \\
&=4g^2\biggl(m\biggl(1+\prod_{j=1}^{2m}u_j\biggr)+\sum_{l'=1}^{2m-1}(-1)^{l'}\sum_{k=2m-l'+1}^{2m}\prod_{j=k}^{k+l'-1}u_j+\sum_{l'=1}^{2m-1}(-1)^{l'}\sum_{k=1}^{2m-l'}\prod_{j=k}^{k+l'-1}u_j\biggr) \\
&=4g^2\biggl(m\biggl(1+\prod_{j=1}^{2m}u_j\biggr)+\sum_{l=1}^{2m-1}(-1)^l\sum_{k=1}^{2m}\prod_{j=k}^{k+l-1}u_j\biggr)=4g^2\Phi_m(u_1,\ldots,u_{2m}), 
\end{align*}
where we put $2m+l-k=:l'$ or $l-k=:l'$ at the 4th equality. Hence by Theorem \ref{thm_Tr_tauflat}, we get 
\[ \Tr(\tau_\flat(\gamma))=\frac{\exp\bigl(-\frac{ad+b(1-c)}{1-c}\bigr)}{1-c}
=\exp\biggl(-4g^2\frac{\Phi_m(u_1,\ldots,u_{2m})}{1-\prod_{j=1}^{2m}u_j}\biggr)\frac{1}{1-\prod_{j=1}^{2m}u_j}. \]

(2) Again, let $u_j=e^{-x_j}\in (0,1)$, and put $p_j:=(\sum_{k=1}^{2m}x_k)/x_j$, so that $\sum_{j=1}^{2m}p_j^{-1}=1$. 
Then by H\"older's inequality (\ref{Holder_ineq}) and (\ref{formula_Trace}), we have 
\begin{align*}
&\biggl\Vert\tau_\nu\biggl(\prod_{1\le j\le 2m}^{\longrightarrow}u_j^{Y(-(-1)^jg)}\biggr)\biggr\Vert_{\Tr}
\le \prod_{j=1}^{2m}\Vert\tau_\nu(e^{-x_jY(-(-1)^jg)})\Vert_{p_j} \\
&=\prod_{j=1}^{2m}\bigl(\Tr\bigl(\tau_\nu(e^{-\sum_{k=1}^{2m}x_kY(-(-1)^jg)}) \bigr)\bigr)^{1/p_j}
=\Tr\bigl(\tau_\nu(e^{-\sum_{k=1}^{2m}x_kY(0)}) \bigr) \\
&=\frac{e^{-\nu\sum_{k=1}^{2m}x_k}}{1-e^{-2\sum_{k=1}^{2m}x_k}}
=\frac{\prod_{j=1}^{2m}u_j^\nu}{1-\prod_{j=1}^{2m}u_j^2}<\infty. 
\end{align*}
Next, we put 
\begin{align*}
\gamma:\hspace{-3pt}&=\begin{bmatrix} \cosh(g)&-\sinh(g) \\ -\sinh(g)&\cosh(g) \end{bmatrix}\biggl(\prod_{1\le j\le 2m}^{\longrightarrow} u_j^{Y(-(-1)^jg)}\biggr)
\begin{bmatrix} \cosh(g)&\sinh(g) \\ \sinh(g)&\cosh(g) \end{bmatrix} \\
&=\prod_{1\le j\le 2m}^{\longrightarrow}\biggl(\begin{bmatrix} \cosh(2g)&(-1)^j\sinh(2g) \\ (-1)^j\sinh(2g)&\cosh(2g) \end{bmatrix}
\begin{bmatrix}u_j^{-1}&0\\0&u_j\end{bmatrix}\biggr), 
\end{align*}
so that $\tr(\gamma)=\Psi^g_m(u_1,\ldots,u_{2m})$, and by Theorem \ref{thm_Tr_taunu}, 
\[ \Tr\biggl(\tau_\nu\biggl(\prod_{1\le j\le 2m}^{\longrightarrow}u_j^{Y(-(-1)^jg)}\biggr)\biggr)=\Tr(\tau_\nu(\gamma))=\frac{\mu(\gamma)^\nu}{1-\mu(\gamma)^2}, \]
where $\mu(\gamma)$ is the eigenvalue of $\gamma\in M(2,\BC)$ such that $|\mu|<1$. 
Then for $u_j\in(0,1)$, since $\gamma\in \Gamma^\circ\cap SL(2,\BR)$, $\tr(\gamma)\in\BR$ holds, and by Theorem \ref{thm_semigroup}\,(2), $|\tr(\gamma)|>2$ holds. 
Also, since $\gamma$ and $\bI$ belong to the same path-connected component of $\Gamma\cap SL(2,\BR)$, $\tr(\gamma)>2$ holds. Hence we have 
\begin{align*}
\mu(\gamma)^{\pm 1}=\frac{1}{2}\bigl(\tr(\gamma)\mp\sqrt{\tr(\gamma)^2-4}\bigr)=\Bigl(\frac{1}{2}\bigl(\sqrt{\tr(\gamma)+2}\mp\sqrt{\tr(\gamma)-2}\bigr)\Bigr)^2, 
\end{align*}
and 
\begin{align*}
\frac{\mu(\gamma)^\nu}{1-\mu(\gamma)^2}&=\frac{\mu(\gamma)^{\nu-1}}{\mu(\gamma)^{-1}-\mu(\gamma)}
=\frac{\bigl(\frac{1}{2}(\sqrt{\tr(\gamma)+2}-\sqrt{\tr(\gamma)-2})\bigr)^{2\nu-2}}{\sqrt{\tr(\gamma)+2}\sqrt{\tr(\gamma)-2}} \\
&=\frac{\bigl(\frac{1}{2}(\sqrt{\Psi^g_m(u_1,\ldots,u_{2m})+2}-\sqrt{\Psi^g_m(u_1,\ldots,u_{2m})-2})\bigr)^{2\nu-2}}{\sqrt{\Psi^g_m(u_1,\ldots,u_{2m})+2}\sqrt{\Psi^g_m(u_1,\ldots,u_{2m})-2}}. 
\end{align*}
Therefore we get the desired formula. 

(3) Since 
\begin{align*}
&\Psi_m^{g/\sqrt{\nu}}(u_1,\ldots,u_{2m})\prod_{j=1}^{2m}u_j \\
&=\cosh^{2m}\biggl(\frac{2g}{\sqrt{\nu}}\biggr)\tr\biggl(\prod_{1\le j\le 2m}^{\longrightarrow}
\biggl(\begin{bmatrix} 1&(-1)^j\tanh(2g/\sqrt{\nu}) \\ (-1)^j\tanh(2g/\sqrt{\nu})&1 \end{bmatrix}
\begin{bmatrix}1&0\\0&u_j^2\end{bmatrix}\biggr)\biggr) \\
&=\Bigl(1+\frac{4mg^2}{\nu}\Bigr)\tr\biggl(\prod_{1\le j\le 2m}^{\longrightarrow}\begin{bmatrix}1&0\\0&u_j^2\end{bmatrix}\biggr)
+\frac{4g^2}{\nu}\sum_{1\le k<l\le 2m}(-1)^{k+l} \\*
&\eqspace{}\times\tr\biggl(\prod_{1\le j\le k-1}^{\longrightarrow} \begin{bmatrix}1&0\\0&u_j^2\end{bmatrix}\begin{bmatrix}0&1\\1&0\end{bmatrix}
\prod_{k\le j\le l-1}^{\longrightarrow} \begin{bmatrix}1&0\\0&u_j^2\end{bmatrix}\begin{bmatrix}0&1\\1&0\end{bmatrix}
\prod_{l\le j\le 2m}^{\longrightarrow}\begin{bmatrix}1&0\\0&u_j^2\end{bmatrix}\biggr)+O(\nu^{-2}) \\
&=\Bigl(1+\frac{4mg^2}{\nu}\Bigr)\biggl(1+\prod_{j=1}^{2m}u_j^2\biggr)+\frac{4g^2}{\nu}\sum_{1\le k<l\le 2m}(-1)^{k+l}\biggl(\prod_{j=k}^{l-1}u_j^2+\prod_{j=l}^{2m+k-1}u_j^2\biggr)+O(\nu^{-2}) \\
&=\Bigl(1+\frac{4mg^2}{\nu}\Bigr)\biggl(1+\prod_{j=1}^{2m}u_j^2\biggr) \\*
&\eqspace{}+\frac{4g^2}{\nu}\biggl(\sum_{l'=1}^{2m-1}(-1)^{l'}\sum_{k=1}^{2m-l'}\prod_{j=k}^{k+l'-1}u_j^2+\sum_{l'=1}^{2m-1}(-1)^{l'}\sum_{k'=2m-l'+1}^{2m}\prod_{j=k'}^{k'+l'-1}u_j^2\biggr)+O(\nu^{-2}) \\
&=\Bigl(1+\frac{4mg^2}{\nu}\Bigr)\biggl(1+\prod_{j=1}^{2m}u_j^2\biggr)+\frac{4g^2}{\nu}\sum_{l=1}^{2m-1}(-1)^l\sum_{k=1}^{2m}\prod_{j=k}^{k+l-1}u_j^2+O(\nu^{-2}) \\
&=1+\prod_{j=1}^{2m}u_j^2+\frac{4g^2}{\nu}\Phi_m(u_1^2,\ldots,u_{2m}^2)+O(\nu^{-2}) 
\end{align*}
holds as $\nu\to\infty$, where we put $(k,l-k)=:(k,l')$ or $(l,2m+k-l)=:(k',l')$ at the 4th equality, we have 
\begin{align*}
&\sqrt{\Psi_m^{g/\sqrt{\nu}}(u_1,\ldots,u_{2m})\pm 2} \\
&=\biggl(\prod_{j=1}^{2m}u_j^{-1}\biggl(1\pm 2\prod_{j=1}^{2m}u_j+\prod_{j=1}^{2m}u_j^2+\frac{4g^2}{\nu}\Phi_m(u_1^2,\ldots,u_{2m}^2)+O(\nu^{-2})\biggr)\biggr)^{1/2} \\
&=\prod_{j=1}^{2m}u_j^{-1/2}\biggl(1\pm\prod_{j=1}^{2m}u_j\biggr)
\biggl(1+\frac{4g^2}{\nu}\biggl(1\pm \prod_{j=1}^{2m}u_j\biggr)^{-2}\Phi_m(u_1^2,\ldots,u_{2m}^2)+O(\nu^{-2})\biggr)^{1/2} \\
&=\prod_{j=1}^{2m}u_j^{-1/2}\biggl(1\pm\prod_{j=1}^{2m}u_j\biggr)
\biggl(1+\frac{2g^2}{\nu}\biggl(1\pm \prod_{j=1}^{2m}u_j\biggr)^{-2}\Phi_m(u_1^2,\ldots,u_{2m}^2)+O(\nu^{-2})\biggr) \\
&=\prod_{j=1}^{2m}u_j^{-1/2}\biggl(1\pm\prod_{j=1}^{2m}u_j+\frac{2g^2}{\nu}\biggl(1\pm \prod_{j=1}^{2m}u_j\biggr)^{-1}\Phi_m(u_1^2,\ldots,u_{2m}^2)+O(\nu^{-2})\biggr), 
\end{align*}
and hence we get 
\begin{align*}
&\Tr\biggl(\tau_\nu\biggl(\prod_{1\le j\le 2m}^{\longrightarrow}u_j^{Y(-(-1)^jg/\sqrt{\nu})}\biggr)\biggr)\prod_{j=1}^{2m} u_j^{-\nu} \\
&=\frac{\Bigl(\frac{1}{2}\Bigl(\sqrt{\Psi_m^{g/\sqrt{\nu}}(u_1,\ldots,u_{2m})+2}-\sqrt{\Psi_m^{g/\sqrt{\nu}}(u_1,\ldots,u_{2m})-2}\Bigr)\Bigr)^{2\nu-2}}
{\sqrt{\Psi_m^{g/\sqrt{\nu}}(u_1,\ldots,u_{2m})+2}\sqrt{\Psi_m^{g/\sqrt{\nu}}(u_1,\ldots,u_{2m})-2}}\prod_{j=1}^{2m} u_j^{-\nu} \\
&=\biggl(\prod_{j=1}^{2m}u_j^{-1/2}\biggl(\prod_{j=1}^{2m}u_j+\frac{g^2}{\nu}\biggl(\frac{\Phi_m(u_1^2,\ldots,u_{2m}^2)}{1+\prod_{j=1}^{2m}u_j}
-\frac{\Phi_m(u_1^2,\ldots,u_{2m}^2)}{1-\prod_{j=1}^{2m}u_j}\biggr)+O(\nu^{-2})\biggr)\biggr)^{2\nu-2} \\*
&\eqspace{}\times\biggl(\prod_{j=1}^{2m}u_j^{-1}\biggl(1+\prod_{j=1}^{2m}u_j+O(\nu^{-1})\biggr)\biggl(1-\prod_{j=1}^{2m}u_j+O(\nu^{-1})\biggr)\biggr)^{-1}\prod_{j=1}^{2m}u_j^{-\nu} \\
&=\biggl(\prod_{j=1}^{2m}u_j^{1/2}\biggl(1\hspace{-1pt}-\hspace{-1pt}\frac{2g^2}{\nu}\frac{\Phi_m(u_1^2,\ldots,u_{2m}^2)}{1-\prod_{j=1}^{2m}u_j^2}\hspace{-1pt}+\hspace{-1pt}O(\nu^{-2})\biggr)\biggr)^{2\nu-2}
\frac{\prod_{j=1}^{2m}u_j^{-\nu}}{\prod_{j=1}^{2m}u_j^{-1}\bigl(1\hspace{-1pt}-\hspace{-1pt}\prod_{j=1}^{2m}u_j^2\hspace{-1pt}+\hspace{-1pt}O(\nu^{-1})\bigr)} \\
&=\biggl(1-\frac{2g^2}{\nu}\frac{\Phi_m(u_1^2,\ldots,u_{2m}^2)}{1-\prod_{j=1}^{2m}u_j^2}+O(\nu^{-2})\biggr)^{2\nu-2}\frac{1}{1-\prod_{j=1}^{2m}u_j^2+O(\nu^{-1})} \\
&\longrightarrow \exp\biggl(-4g^2\frac{\Phi_m(u_1^2,\ldots,u_{2m}^2)}{1-\prod_{j=1}^{2m}u_j^2}\biggr)\frac{1}{1-\prod_{j=1}^{2m}u_j^2} 
=\Tr\biggl(\tau_\flat\biggl(\prod_{1\le j\le 2m}^{\longrightarrow}u_j^{2X(-(-1)^jg)}\biggr)\biggr)
\end{align*}
as $\nu\to\infty$. 
\end{proof}

\begin{proof}[Proof of Theorem \ref{thm_int_exp}]
(1), (2) For $\Re(\lambda\pm\varepsilon)>0$, $(\rd\tau_\flat(X(\pm g))+\lambda\pm\varepsilon)^{-1}$ is expressed as 
\begin{align*}
(\rd\tau_\flat(X(\pm g))+\lambda\pm\varepsilon)^{-1}&=\int_0^\infty e^{-x\rd\tau_\flat(X(\pm g))}e^{-x(\lambda\pm\varepsilon)}\,\rd x
=\int_0^\infty \tau_\flat(e^{-xX(\pm g)})e^{-x(\lambda\pm\varepsilon)}\,\rd x \\
&=\int_0^1 \tau_\flat(u^{X(\pm g)})u^{\lambda\pm\varepsilon-1}\,\rd u, 
\end{align*}
and for $\Re(\lambda\pm\varepsilon)+\nu>0$, $(\rd\tau_\nu(Y(\pm g))+\lambda\pm\varepsilon)^{-1}$ is expressed as 
\begin{align*}
(\rd\tau_\nu(Y(\pm g))+\lambda\pm\varepsilon)^{-1}&=\int_0^\infty e^{-x\rd\tau_\nu(Y(\pm g))}e^{-x(\lambda\pm\varepsilon)}\,\rd x
=\int_0^\infty \tau_\nu(e^{-xY(\pm g)})e^{-x(\lambda\pm\varepsilon)}\,\rd x \\
&=\int_0^1 \tau_\nu(u^{Y(\pm g)})u^{\lambda\pm\varepsilon-1}\,\rd u. 
\end{align*}
Hence we have 
\begin{align*}
&\bigl((\rd\tau_\flat(X(+g))+\lambda+\varepsilon)^{-1}(\rd\tau_\flat(X(-g))+\lambda-\varepsilon)^{-1}\bigr)^m \\
&=\int_{[0,1]^{2m}}\tau_\flat(u_1^{X(+g)}u_2^{X(-g)}\cdots u_{2m-1}^{X(+g)}u_{2m}^{X(-g)})
u_1^{\lambda+\varepsilon-1}u_2^{\lambda-\varepsilon-1}\cdots u_{2m-1}^{\lambda+\varepsilon-1}u_{2m}^{\lambda-\varepsilon-1}\,\rd u_1\cdots \rd u_{2m}, \\
&\bigl((\rd\tau_\nu(Y(+g))+\lambda+\varepsilon)^{-1}(\rd\tau_\nu(Y(-g))+\lambda-\varepsilon)^{-1}\bigr)^m \\
&=\int_{[0,1]^{2m}}\tau_\nu(u_1^{Y(+g)}u_2^{Y(-g)}\cdots u_{2m-1}^{Y(+g)}u_{2m}^{Y(-g)})
u_1^{\lambda+\varepsilon-1}u_2^{\lambda-\varepsilon-1}\cdots u_{2m-1}^{\lambda+\varepsilon-1}u_{2m}^{\lambda-\varepsilon-1}\,\rd u_1\cdots \rd u_{2m}. 
\end{align*}
These converge as Bochner integrals with respect to the trace class norms since 
\begin{align*}
&\bigl\Vert\bigl((\rd\tau_\flat(X(+g))+\lambda+\varepsilon)^{-1}(\rd\tau_\flat(X(-g))+\lambda-\varepsilon)^{-1}\bigr)^m\bigr\Vert_{\Tr} \\
&\le \int_{[0,1]^{2m}}\biggl\Vert\tau_\flat\biggl(\prod_{1\le j\le 2m}^{\longrightarrow} u_j^{X(-(-1)^jg)}\biggr)\biggr\Vert_{\Tr}
\prod_{j=1}^{2m} \bigl|u_j^{\lambda-(-1)^j\varepsilon-1}\bigr|\prod_{j=1}^{2m}\rd u_j \\
&\le\int_{[0,1]^{2m}} \frac{1}{1-\prod_{j=1}^{2m}u_j} \prod_{j=1}^{2m}u_j^{\Re(\lambda-(-1)^j\varepsilon)-1}\,\rd u_j
=\sum_{k=0}^\infty\int_{[0,1]^{2m}} \prod_{j=1}^{2m}u_j^{\Re(\lambda-(-1)^j\varepsilon)+k-1}\,\rd u_j \\
&=\sum_{k=0}^\infty \frac{1}{(\Re(\lambda+\varepsilon)+k)^m(\Re(\lambda-\varepsilon)+k)^m}<\infty, 
\end{align*}
and 
\begin{align*}
&\bigl\Vert\bigl((\rd\tau_\nu(Y(+g))+\lambda+\varepsilon)^{-1}(\rd\tau_\nu(Y(-g))+\lambda-\varepsilon)^{-1}\bigr)^m\bigr\Vert_{\Tr} \\
&\le \int_{[0,1]^{2m}}\biggl\Vert\tau_\nu\biggl(\prod_{1\le j\le 2m}^{\longrightarrow} u_j^{Y(-(-1)^jg)}\biggr)\biggr\Vert_{\Tr}
\prod_{j=1}^{2m} \bigl|u_j^{\lambda-(-1)^j\varepsilon-1}\bigr|\prod_{j=1}^{2m}\rd u_j \\
&\le\int_{[0,1]^{2m}} \frac{\prod_{j=1}^{2m}u_j^\nu}{1-\prod_{j=1}^{2m}u_j^2} \prod_{j=1}^{2m}u_j^{\Re(\lambda-(-1)^j\varepsilon)-1}\,\rd u_j
=\sum_{k=0}^\infty\int_{[0,1]^{2m}} \prod_{j=1}^{2m}u_j^{\Re(\lambda-(-1)^j\varepsilon)+\nu+2k-1}\,\rd u_j \\
&=\sum_{k=0}^\infty \frac{1}{(\Re(\lambda+\varepsilon)+\nu+2k)^m(\Re(\lambda-\varepsilon)+\nu+2k)^m}<\infty 
\end{align*}
hold. Then their traces are computed as 
\begin{align*}
R_m^\flat(\lambda;g,\varepsilon)&=\Tr\bigl(\bigl((\rd\tau_\flat(X(+g))+\lambda+\varepsilon)^{-1}(\rd\tau_\flat(X(-g))+\lambda-\varepsilon)^{-1}\bigr)^m\bigr) \\
&=\int_{[0,1]^{2m}}\Tr\biggl(\tau_\flat\biggl(\prod_{1\le j\le 2m}^{\longrightarrow} u_j^{X(-(-1)^jg)}\biggr)\biggr)
\prod_{j=1}^{2m} u_j^{\lambda-(-1)^j\varepsilon-1}\prod_{j=1}^{2m}\rd u_j, \\
R_m^\nu(\lambda;g,\varepsilon)&=\Tr\bigl(\bigl((\rd\tau_\nu(Y(+g))+\lambda+\varepsilon)^{-1}(\rd\tau_\nu(Y(-g))+\lambda-\varepsilon)^{-1}\bigr)^m\bigr) \\
&=\int_{[0,1]^{2m}}\Tr\biggl(\tau_\nu\biggl(\prod_{1\le j\le 2m}^{\longrightarrow} u_j^{Y(-(-1)^jg)}\biggr)\biggr)
\prod_{j=1}^{2m} u_j^{\lambda-(-1)^j\varepsilon-1}\prod_{j=1}^{2m}\rd u_j. 
\end{align*}
Hence by Lemma \ref{lem_Trtau}, we get the desired formulas. 

(3) Since 
\begin{align*}
&\int_{[0,1]^{2m}}\biggl\Vert\tau_\nu\biggl(\prod_{1\le j\le 2m}^{\longrightarrow} u_j^{Y(-(-1)^jg/\sqrt{\nu})}\biggr)\biggr\Vert_{\Tr}
\prod_{j=1}^{2m} \bigl|u_j^{2\lambda-\nu-2(-1)^j\varepsilon-1}\bigr|\prod_{j=1}^{2m}\rd u_j \\
&\le\int_{[0,1]^{2m}} \frac{\prod_{j=1}^{2m}u_j^\nu}{1-\prod_{j=1}^{2m}u_j^2} \prod_{j=1}^{2m}u_j^{\Re(2\lambda-2(-1)^j\varepsilon)-\nu-1}\,\rd u_j \\
&=\sum_{k=0}^\infty \frac{1}{(2\Re(\lambda+\varepsilon)+2k)^m(2\Re(\lambda-\varepsilon)+2k)^m}<\infty 
\end{align*}
holds uniformly in $\nu>0$, by the dominated convergence theorem, we get 
\begin{align*}
&\lim_{\nu\to\infty}2^{2m}R_m^\nu\biggl(2\lambda-\nu;\frac{g}{\sqrt{\nu}},2\varepsilon\biggr) \\
&=\lim_{\nu\to\infty}2^{2m}\int_{[0,1]^{2m}}\Tr\biggl(\tau_\nu\biggl(\prod_{1\le j\le 2m}^{\longrightarrow} u_j^{Y(-(-1)^jg/\sqrt{\nu})}\biggr)\biggr)
\prod_{j=1}^{2m} u_j^{2\lambda-\nu-2(-1)^j\varepsilon-1}\prod_{j=1}^{2m}\rd u_j \\
&=2^{2m}\int_{[0,1]^{2m}}\lim_{\nu\to\infty}\Tr\biggl(\tau_\nu\biggl(\prod_{1\le j\le 2m}^{\longrightarrow} u_j^{Y(-(-1)^jg/\sqrt{\nu})}\biggr)\biggr)
\prod_{j=1}^{2m} u_j^{2\lambda-\nu-2(-1)^j\varepsilon-1}\prod_{j=1}^{2m}\rd u_j \\
&=2^{2m}\int_{[0,1]^{2m}}\Tr\biggl(\tau_\flat\biggl(\prod_{1\le j\le 2m}^{\longrightarrow} u_j^{2X(-(-1)^jg)}\biggr)\biggr)
\prod_{j=1}^{2m} u_j^{2\lambda-2(-1)^j\varepsilon-1}\prod_{j=1}^{2m}\rd u_j \\
&=\int_{[0,1]^{2m}}\Tr\biggl(\tau_\flat\biggl(\prod_{1\le j\le 2m}^{\longrightarrow} u_j^{X(-(-1)^jg)}\biggr)\biggr)
\prod_{j=1}^{2m} u_j^{\lambda-(-1)^j\varepsilon-1}\prod_{j=1}^{2m}\rd u_j \\
&=R_m^\flat(\lambda;g,\varepsilon). \qedhere
\end{align*}
\end{proof}

By Lemma \ref{lem_limit} and Theorems \ref{thm_zeta_sum}, \ref{thm_int_exp}, we immediately get the following. 
\begin{corollary}\label{cor_zeta}
\begin{enumerate}
\item Let $g,\Delta,\varepsilon\in\BR$, $\lambda\in\BC$. We assume $\Re\lambda-|\varepsilon|>0$ and $|\Delta|<|\lambda-|\varepsilon||$. 
Then for $n\in\BZ_{\ge 2}$, we have 
\begin{multline*}
\zeta\bigl(\widetilde{\rH}_\flat^{(g,\Delta,\varepsilon)};n,\lambda\bigr) 
=\zeta(n,\lambda+\varepsilon)+\zeta(n,\lambda-\varepsilon)+\frac{1}{\Gamma(n)}\sum_{m=1}^\infty \frac{\Delta^{2m}}{m} \\
{}\times\int_{[0,1]^{2m}} \exp\biggl(-4g^2\frac{\Phi_m(u_1,\ldots,u_{2m})}{1-\prod_{j=1}^{2m}u_j}\biggr)
\frac{\bigl(-\log\prod_{j=1}^{2m}u_j\bigr)^n}{1-\prod_{j=1}^{2m}u_j}\prod_{j=1}^{2m} u_j^{\lambda-(-1)^j\varepsilon-1}\,\rd u_j, 
\end{multline*}
where $\Phi_m(u_1,\ldots,u_{2m})$ is as in (\ref{formula_Phigu}). 
\item Let $\nu>0$, $g,\Delta,\varepsilon\in\BR$, $\lambda\in\BC$. 
We assume $\Re\lambda-|\varepsilon|+\nu>0$ and $|\Delta|<|\lambda-|\varepsilon|+\nu|$. 
Then for $n\in\BZ_{\ge 2}$, we have 
\begin{align*}
\zeta\bigl(\widetilde{\rH}_\nu^{(g,\Delta,\varepsilon)};n,\lambda\bigr)
=2^{-n}\zeta\biggl(n,\frac{\lambda+\varepsilon+\nu}{2}\biggr)+2^{-n}\zeta\biggl(n,\frac{\lambda-\varepsilon+\nu}{2}\biggr)+\frac{1}{\Gamma(n)}\sum_{m=1}^\infty \frac{\Delta^{2m}}{m} \\*
{}\times\int_{[0,1]^{2m}} \frac{\bigl(\frac{1}{2}(\sqrt{\Psi^g_m(u_1,\ldots,u_{2m})+2}-\sqrt{\Psi^g_m(u_1,\ldots,u_{2m})-2})\bigr)^{2\nu-2}}
{\sqrt{\Psi^g_m(u_1,\ldots,u_{2m})+2}\sqrt{\Psi^g_m(u_1,\ldots,u_{2m})-2}} \\*
{}\times\biggl(-\log\prod_{j=1}^{2m}u_j\biggr)^n\prod_{j=1}^{2m} u_j^{\lambda-(-1)^j\varepsilon-1}\,\rd u_j, 
\end{align*}
where $\Psi^g_m(u_1,\ldots,u_{2m})$ is as in (\ref{formula_Psigu}). 
\item Let $g,\Delta,\varepsilon\in\BR$, $\lambda\in\BC$. We assume $\Re\lambda-|\varepsilon|>0$ and $|\Delta|<|\lambda-|\varepsilon||$. 
Then for $n\in\BZ_{\ge 2}$, we have 
\[ 2^n\lim_{\nu\to\infty}\zeta\bigl(\widetilde{\rH}_\nu^{(g/\sqrt{\nu},2\Delta,2\varepsilon)};n,2\lambda-\nu\bigr)
=\zeta\bigl(\widetilde{\rH}_\flat^{(g,\Delta,\varepsilon)};n,\lambda\bigr). \]
\end{enumerate}
\end{corollary}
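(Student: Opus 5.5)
The plan is to combine Theorem \ref{thm_zeta_sum} with the integral expressions of Theorem \ref{thm_int_exp}, and then differentiate under the integral sign. For (1), the hypotheses $\Re\lambda-|\varepsilon|>0$ and $|\Delta|<|\lambda-|\varepsilon||$ must first be shown to imply those of Theorem \ref{thm_zeta_sum}\,(1). Since $\Re(k+\lambda\pm\varepsilon)\ge\Re\lambda-|\varepsilon|>0$, we have $\lambda\pm\varepsilon\notin-\BZ_{\ge0}$. I will then check $C_\flat(\lambda,\varepsilon)\le|\lambda-|\varepsilon||^{-1}$ by bounding $|k+\lambda\pm\varepsilon|$ from below; at worst I would adjust to $\Re\lambda-|\varepsilon|$ if the bound with the modulus needs real $\lambda$. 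Hence Theorem \ref{thm_zeta_sum}\,(1) gives the expansion with the terms $\frac{(-1)^n}{(n-1)!}\frac{\Delta^{2m}}{m}\partial_\lambda^nR_m^\flat$. Case (2) is handled the same way, with $C_\nu(\lambda,\varepsilon)=\max 1/|2k+\lambda\pm\varepsilon+\nu|$.

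The key step is to differentiate the integral of Theorem \ref{thm_int_exp}\,(1) or (2) $n$ times in $\lambda$. The $\lambda$-dependence sits only in $\prod_j u_j^{\lambda-(-1)^j\varepsilon-1}$. Each derivative therefore brings down $\log\prod_j u_j$, so $\partial_\lambda^n$ produces $(\log\prod u_j)^n=(-1)^n(-\log\prod u_j)^n$. This $(-1)^n$ cancels the prefactor $(-1)^n$, and $(n-1)!=\Gamma(n)$ gives the stated form. To justify differentiating under the integral, I will work on a small closed disc around $\lambda$ on which $\Re\lambda'-|\Re\varepsilon|\ge\delta>0$. There the integrand and its $\lambda$-derivatives are dominated by $|\log\prod u_j|^n(1-\prod u_j)^{-1}\prod u_j^{\delta-1}$, or in case (2) by the analogous bound using the trace-norm estimate of Lemma \ref{lem_Trtau}. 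The bound $|\Tr|\le\Vert\cdot\Vert_{\Tr}$ together with Lemma \ref{lem_Trtau} controls the trace factor uniformly in $g$. Expanding $(1-\prod u_j)^{-1}=\sum_k\prod u_j^k$ shows this dominating function is integrable, the integral of each term being $n!$-type multiples of $(\delta+k)^{-2m-n}$. Alternatively, I could argue holomorphy and apply the Cauchy formula, as in Lemma \ref{lem_limit}.

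For (3), I apply Lemma \ref{lem_limit}. Its hypothesis, that $\lim_{\nu\to\infty}R_m^\nu(2\lambda-\nu;g/\sqrt\nu,2\varepsilon)$ converges to a holomorphic function, is supplied by Theorem \ref{thm_int_exp}\,(3). That theorem identifies the limit as $2^{-2m}R_m^\flat(\lambda;g,\varepsilon)$, which is holomorphic on $\Re\lambda>|\Re\varepsilon|$ by the integral representation. Lemma \ref{lem_limit} also needs the convergence on a neighborhood, which holds pointwise for every $\lambda'$ in a small disc, and this is enough for its proof. Substituting, $(2\Delta)^{2m}2^{-2m}R_m^\flat=\Delta^{2m}R_m^\flat$, so the right side of Lemma \ref{lem_limit} matches Theorem \ref{thm_zeta_sum}\,(1), that is, $\zeta(\widetilde{\rH}_\flat^{(g,\Delta,\varepsilon)};n,\lambda)$.

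The main obstacle is the bookkeeping of hypotheses. This means verifying that $|\Delta|<|\lambda-|\varepsilon||$ implies $|\Delta|C_\flat<1$ for complex $\lambda$, and in (3) that the $\nu$-side conditions of Lemma \ref{lem_limit} hold for large $\nu$. The latter reduces to the identity $2C_\nu(2\lambda-\nu,2\varepsilon)=C_\flat(\lambda,\varepsilon)$. Domination for differentiation under the integral is routine.
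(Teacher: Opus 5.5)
Your proposal is correct and is essentially the paper's argument: the paper obtains the corollary by combining Theorem \ref{thm_zeta_sum} with the integral expressions of Theorem \ref{thm_int_exp}, differentiated $n$ times in $\lambda$, and for (3) by feeding Theorem \ref{thm_int_exp}\,(3) into Lemma \ref{lem_limit}. Your hedge about complex $\lambda$ is unnecessary: since $\varepsilon$ is real and $\Re\lambda>|\varepsilon|$, we have $|k+\lambda\pm\varepsilon|^2=(k+\Re\lambda\pm\varepsilon)^2+(\Im\lambda)^2\ge|\lambda-|\varepsilon||^2$, so $C_\flat(\lambda,\varepsilon)=|\lambda-|\varepsilon||^{-1}$ exactly, and likewise $C_\nu(\lambda,\varepsilon)=|\lambda-|\varepsilon|+\nu|^{-1}$.
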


We recall 
\begin{align*}
&R_m^+(\lambda;g,\varepsilon):=R_m^{1/2}(\lambda;g,\varepsilon)+R_m^{3/2}(\lambda;g,\varepsilon), \\
&R_m^-(\lambda;g,\varepsilon):=R_m^{1/2}(\lambda;g,\varepsilon)-R_m^{3/2}(\lambda;g,\varepsilon) 
\end{align*}
from (\ref{def_Rmpm}). Then their integral expressions are given as follows. 

\begin{corollary}\label{cor_int_exp}
Let $g\in\BR$, $\lambda,\varepsilon\in\BC$, $m\in\BZ_{\ge 1}$. If $\Re\lambda-|\Re\varepsilon|+\frac{1}{2}>0$, then we have 
\begin{align*}
R_m^+(\lambda;g,\varepsilon)&=\int_{[0,1]^{2m}} \frac{1}{\sqrt{\Psi^g_m(u_1,\ldots,u_{2m})-2}}\prod_{j=1}^{2m} u_j^{\lambda-(-1)^j\varepsilon-1}\,\rd u_j, \\
R_m^-(\lambda;g,\varepsilon)&=\int_{[0,1]^{2m}} \frac{1}{\sqrt{\Psi^g_m(u_1,\ldots,u_{2m})+2}}\prod_{j=1}^{2m} u_j^{\lambda-(-1)^j\varepsilon-1}\,\rd u_j, 
\end{align*}
where $\Psi_m^g(u_1,\ldots,u_{2m})$ is as in (\ref{formula_Psigu}). 
\end{corollary}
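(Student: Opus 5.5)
The plan is to add and subtract the two integral expressions from Theorem \ref{thm_int_exp}\,(2) with $\nu=1/2$ and $\nu=3/2$. The hypothesis $\Re\lambda-|\Re\varepsilon|+\frac{1}{2}>0$ is exactly the condition $\Re\lambda-|\Re\varepsilon|+\nu>0$ for $\nu=1/2$, and it implies the condition for $\nu=3/2$. Hence both integral formulas hold, with the same measure $\prod_j u_j^{\lambda-(-1)^j\varepsilon-1}\,\rd u_j$. Everything then reduces to a pointwise identity between the integrands.

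Write $\Psi=\Psi^g_m(u_1,\ldots,u_{2m})$, and set
\[ s_\pm:=\tfrac12\bigl(\sqrt{\Psi+2}\pm\sqrt{\Psi-2}\bigr). \]
In the proof of Lemma \ref{lem_Trtau}\,(2) we saw that $\Psi>2$, so both square roots are positive reals. Moreover $s_+s_-=\frac14((\Psi+2)-(\Psi-2))=1$.

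The kernels are $K_\nu:=s_-^{2\nu-2}/(\sqrt{\Psi+2}\sqrt{\Psi-2})$. For $\nu=1/2$ the exponent is $-1$, so $K_{1/2}=s_+/(\sqrt{\Psi+2}\sqrt{\Psi-2})$. For $\nu=3/2$ the exponent is $+1$, giving $K_{3/2}=s_-/(\sqrt{\Psi+2}\sqrt{\Psi-2})$. Using $s_++s_-=\sqrt{\Psi+2}$ and $s_+-s_-=\sqrt{\Psi-2}$, we get
\[ K_{1/2}+K_{3/2}=\frac{1}{\sqrt{\Psi-2}}, \qquad K_{1/2}-K_{3/2}=\frac{1}{\sqrt{\Psi+2}}. \]

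Integrating these identities against the common measure gives both formulas, by linearity and the definition (\ref{def_Rmpm}) of $R_m^\pm$. Each of $R_m^{1/2}$ and $R_m^{3/2}$ is an absolutely convergent integral, by the trace-norm bounds in the proof of Theorem \ref{thm_int_exp}, so splitting and recombining the sum is justified. There is no real obstacle here. The only point needing care is the choice of sign and branch of $s_-^{-1}$, which is handled by $\Psi>2$ and the positivity of $s_\pm$.
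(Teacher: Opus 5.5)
Your proposal is correct and is essentially the paper's proof: it applies Theorem \ref{thm_int_exp}\,(2) with $\nu=1/2$ and $\nu=3/2$ and uses the pointwise identities $K_{1/2}=\frac{1}{2}\bigl(\frac{1}{\sqrt{\Psi-2}}+\frac{1}{\sqrt{\Psi+2}}\bigr)$ and $K_{3/2}=\frac{1}{2}\bigl(\frac{1}{\sqrt{\Psi-2}}-\frac{1}{\sqrt{\Psi+2}}\bigr)$. You then add and subtract according to (\ref{def_Rmpm}). Your remarks on $\Psi>2$ and on the absolute convergence of each integral make explicit what the paper leaves implicit.
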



\begin{proof}
For $\nu=1/2, 3/2$, since 
\begin{align*}
\frac{\bigl(\frac{1}{2}(\sqrt{\Psi+2}-\sqrt{\Psi-2})\bigr)^{2\cdot\frac{1}{2}-2}}{\sqrt{\Psi+2}\sqrt{\Psi-2}}
&=\frac{\frac{1}{2}(\sqrt{\Psi+2}+\sqrt{\Psi-2})}{\sqrt{\Psi+2}\sqrt{\Psi-2}}=\frac{1}{2}\biggl(\frac{1}{\sqrt{\Psi-2}}+\frac{1}{\sqrt{\Psi+2}}\biggr), \\
\frac{\bigl(\frac{1}{2}(\sqrt{\Psi+2}-\sqrt{\Psi-2})\bigr)^{2\cdot\frac{3}{2}-2}}{\sqrt{\Psi+2}\sqrt{\Psi-2}}
&=\frac{\frac{1}{2}(\sqrt{\Psi+2}-\sqrt{\Psi-2})}{\sqrt{\Psi+2}\sqrt{\Psi-2}}=\frac{1}{2}\biggl(\frac{1}{\sqrt{\Psi-2}}-\frac{1}{\sqrt{\Psi+2}}\biggr) 
\end{align*}
hold, we have 
\begin{align*}
&R_m^{1/2}(\lambda;g,\varepsilon) \\
&=\frac{1}{2}\int_{[0,1]^{2m}} \biggl(\frac{1}{\sqrt{\Psi^g_m(u_1,\ldots,u_{2m})\hspace{-1pt}-\hspace{-1pt}2}}+\frac{1}{\sqrt{\Psi^g_m(u_1,\ldots,u_{2m})\hspace{-1pt}+\hspace{-1pt}2}}\biggr)
\prod_{j=1}^{2m} u_j^{\lambda-(-1)^j\varepsilon-1}\,\rd u_j, \\
&R_m^{3/2}(\lambda;g,\varepsilon) \\
&=\frac{1}{2}\int_{[0,1]^{2m}} \biggl(\frac{1}{\sqrt{\Psi^g_m(u_1,\ldots,u_{2m})\hspace{-1pt}-\hspace{-1pt}2}}-\frac{1}{\sqrt{\Psi^g_m(u_1,\ldots,u_{2m})\hspace{-1pt}+\hspace{-1pt}2}}\biggr)
\prod_{j=1}^{2m} u_j^{\lambda-(-1)^j\varepsilon-1}\,\rd u_j.
\end{align*}
Hence we get the desired formulas. 
\end{proof}

\begin{example}\label{ex_R1}
Let $m=1$. Then we have 
\begin{align*}
&\Phi_1(u,v)=1+uv-u-v=(1-u)(1-v), \\
&\Psi^g_1(u,v)\pm 2 \\
&=\tr\biggl(\begin{bmatrix} \cosh(2g)&-\sinh(2g) \\ -\sinh(2g)&\cosh(2g) \end{bmatrix}\begin{bmatrix}u^{-1}&0\\0&u\end{bmatrix}
\begin{bmatrix} \cosh(2g)&\sinh(2g) \\ \sinh(2g)&\cosh(2g) \end{bmatrix}\begin{bmatrix}v^{-1}&0\\0&v\end{bmatrix}\biggr)\pm 2 \\
&=(u^{-1}v^{-1}+uv\pm 2)\cosh^2(2g)-(uv^{-1}+u^{-1}v\pm 2)\sinh^2(2g) \\
&=\frac{\cosh^2(2g)}{uv}\bigl((1\pm uv)^2-\tanh^2(2g)(u\pm v)^2\bigr) \\
&=\frac{1}{uv}\bigl((1\pm uv)^2+\sinh^2(2g)(1-u^2)(1-v^2)), 
\end{align*}
and hence, for $g\in\BR$, $\lambda,\varepsilon\in\BC$ with $\Re\lambda-|\Re\varepsilon|>0$ resp. $\Re\lambda-|\Re\varepsilon|+\frac{1}{2}>0$, we have 
\begin{align*}
R_1^\flat(\lambda;g,\varepsilon)&=\int_{[0,1]^2}\exp\biggl(-4g^2\frac{(1-u)(1-v)}{1-uv}\biggr)\frac{u^{\lambda+\varepsilon-1}v^{\lambda-\varepsilon-1}}{1-uv}\,\rd u\rd v, \\
R_1^\pm(\lambda;g,\varepsilon)&=\sech(2g)\int_{[0,1]^2} \frac{u^{\lambda+\varepsilon-\frac{1}{2}}v^{\lambda-\varepsilon-\frac{1}{2}}}{\sqrt{(1\mp uv)^2-\tanh^2(2g)(u\mp v)^2}}\,\rd u\rd v \\
&=\int_{[0,1]^2} \frac{u^{\lambda+\varepsilon-\frac{1}{2}}v^{\lambda-\varepsilon-\frac{1}{2}}}{\sqrt{(1\mp uv)^2+\sinh^2(2g)(1-u^2)(1-v^2)}}\,\rd u\rd v. 
\end{align*}
\end{example}

\section{Computation of the first term of 1pQRM}\label{section_1pQRM}

In this section, we seek another expression of 
\begin{align*}
R_1^\flat(\lambda;g,\varepsilon)
&=\Tr\bigl((\ra^\dagger\ra+g(\ra+\ra^\dagger)+g^2+\varepsilon+\lambda)^{-1}(\ra^\dagger\ra-g(\ra+\ra^\dagger)+g^2-\varepsilon+\lambda)^{-1}\bigr).
\end{align*}
Then by Example \ref{ex_R1}, for $g\in\BR$, $\lambda,\varepsilon\in\BC$ with $\Re\lambda-|\Re\varepsilon|>0$, we have 
\begin{align*}
R_1^\flat(\lambda;g,\varepsilon)&=\int_{[0,1]^2}\exp\biggl(-4g^2\frac{(1-u)(1-v)}{1-uv}\biggr)\frac{u^{\lambda+\varepsilon-1}v^{\lambda-\varepsilon-1}}{1-uv}\,\rd u\rd v \\
&=\sum_{n=0}^\infty \frac{(-4g^2)^n}{n!}\int_{[0,1]^2} \frac{(1-u)^n(1-v)^n}{(1-uv)^{n+1}}u^{\lambda+\varepsilon-1}v^{\lambda-\varepsilon-1}\,\rd u\rd v. 
\end{align*}
Now, for $n\in\BZ_{\ge 0}$, $\lambda,\varepsilon\in\BC$ with $\Re\lambda-|\Re\varepsilon|>0$, we put 
\[ J_n^\flat(\lambda,\varepsilon):=\int_{[0,1]^2} \frac{(1-u)^n(1-v)^n}{(1-uv)^{n+1}}u^{\lambda+\varepsilon-1}v^{\lambda-\varepsilon-1}\,\rd u\rd v, \]
so that 
\[ R_1^\flat(\lambda;g,\varepsilon)=\sum_{n=0}^\infty \frac{(-4g^2)^n}{n!}J_n^\flat(\lambda,\varepsilon) \]
holds. 

\begin{remark}
The integral $J_n^\flat(\lambda,\varepsilon)$ is regarded as a generalization of Beukers' integral \cite{Be1} used for the proof of the irrationality of $\zeta(2)$ after Ap\'ery's work \cite{Ap, vdP}, 
\[ J_n^\flat(n+1,0)=\int_{[0,1]^2} \frac{(1-u)^n(1-v)^nu^nv^n}{(1-uv)^{n+1}}\,\rd u\rd v. \]
The original proof of the irrationality by Ap\'ery uses the sequences $\{A_n\}$, $\{B_n\}$ given by 
\begin{align*}
A_n&:=\sum_{k=0}^n\binom{n}{k}^2\binom{n+k}{k}, \\
B_n&:=\sum_{k=0}^n\binom{n}{k}^2\binom{n+k}{k}\biggl(2\sum_{m=1}^n\frac{(-1)^{m-1}}{m^2}+\sum_{m=1}^k\frac{(-1)^{n+m-1}}{m^2\binom{n}{m}\binom{n+m}{m}}\biggr). 
\end{align*}
These are characterized by the common recurrence relation with the distinct initial conditions, 
\begin{align*}
n^2A_n-(11n^2-11n+3)A_{n-1}-(n-1)^2A_{n-2}&=0, & A_0&=1, & A_1&=3, \\
n^2B_n-(11n^2-11n+3)B_{n-1}-(n-1)^2B_{n-2}&=0, & B_0&=0, & B_1&=5. 
\end{align*}
Then Beukers' integral and Ap\'ery's numbers are related as 
\begin{equation}\label{formula_Apery_Beukers}
(-1)^nJ_n^\flat(n+1,0)=A_n\zeta(2)-B_n. 
\end{equation}
For a proof of this equality, see, e.g., \cite{JT}. For more general identities, see also \cite{Ne}. 
By \cite{Be2}, the generating function $\sum_{n=0}^\infty A_nt^n$ of $\{A_n\}$ becomes a modular form of weight 1 if we take $t$ to be an appropriate modular function with respect to $\Gamma_1(5)$. 
\end{remark}

First we prove the following. 
\begin{proposition}\label{prop_Jflat}
For $n\in\BZ_{\ge 0}$, $\lambda,\varepsilon\in\BC$ with $\lambda\pm\varepsilon\notin-\BZ_{\ge 0}$, we have 
\[ J_n^\flat(\lambda,\varepsilon)
=\sum_{k=0}^\infty \frac{n!(k+1)_n}{(\lambda+\varepsilon+k)_{n+1}(\lambda-\varepsilon+k)_{n+1}}. \]
\end{proposition}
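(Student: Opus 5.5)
First I would prove the identity where the integral converges, i.e.\ for $\Re\lambda-|\Re\varepsilon|>0$, and then extend it by analytic continuation. The right-hand side is meromorphic in $(\lambda,\varepsilon)$. Its $k$-th term is $O(k^{n}/k^{2n+2})=O(k^{-n-2})$ uniformly on compact sets avoiding the poles, so the series converges absolutely and locally uniformly to a holomorphic function on $\{\lambda\pm\varepsilon\notin-\BZ_{\ge0}\}$. Consequently, the equality on the half-plane region determines the meaning of $J_n^\flat$ in the statement everywhere else.

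In the convergent region I would expand $(1-uv)^{-n-1}=\sum_{k\ge0}\frac{(n+1)_k}{k!}(uv)^k$. Since every term is nonnegative, or by dominated convergence with $\Re$ exponents, the sum and the integral may be interchanged. The integral then factorizes into Beta integrals:
\[
J_n^\flat(\lambda,\varepsilon)=\sum_{k=0}^\infty\frac{(n+1)_k}{k!}\,B(\lambda+\varepsilon+k,n+1)\,B(\lambda-\varepsilon+k,n+1).
\]
Using $B(a,n+1)=\frac{n!}{(a)_{n+1}}$, each term becomes $\frac{(n+1)_k}{k!}\frac{(n!)^2}{(\lambda+\varepsilon+k)_{n+1}(\lambda-\varepsilon+k)_{n+1}}$.

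The final step is the elementary identity $\frac{(n+1)_k}{k!}\,n!=\frac{(n+k)!}{k!}=(k+1)_n$, which turns the term into exactly $\frac{n!(k+1)_n}{(\lambda+\varepsilon+k)_{n+1}(\lambda-\varepsilon+k)_{n+1}}$, as claimed. The only point requiring care is justifying the interchange near $uv=1$. For real exponents the monotone convergence theorem handles it. For complex exponents, the absolute values give the same series with $\Re(\lambda\pm\varepsilon)$, which converges by the $O(k^{-n-2})$ bound. So I do not expect a real obstacle; the main work is bookkeeping and stating the analytic continuation properly.
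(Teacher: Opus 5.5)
Your proposal is correct and follows essentially the same route as the paper: expand $(1-uv)^{-n-1}$ as a power series, interchange sum and integral, evaluate the two Beta integrals, and simplify with $\frac{(n+1)_k}{k!}=\frac{(k+1)_n}{n!}$. Your justification of the interchange and your remark on extending the identity to $\lambda\pm\varepsilon\notin-\BZ_{\ge 0}$ by analytic continuation supply details the paper leaves implicit.
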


\begin{proof}
Since $(1-uv)^{-(n+1)}=\sum_{k=0}^\infty \frac{(n+1)_k}{k!}(uv)^k$ and $\frac{(n+1)_k}{k!}=\frac{(k+n)!}{k!n!}=\frac{(k+1)_n}{n!}$ hold, we have 
\begin{align*}
J_n^\flat(\lambda,\varepsilon)&=\sum_{k=0}^\infty \frac{(n+1)_k}{k!}\int_{[0,1]^2}(uv)^k(1-u)^n(1-v)^n u^{\lambda+\varepsilon-1}v^{\lambda-\varepsilon-1}\,\rd u\rd v \\
&=\sum_{k=0}^\infty \frac{(k+1)_n}{n!}\int_0^1 (1-u)^n u^{\lambda+\varepsilon+k-1}\,\rd u \int_0^1 (1-v)^n v^{\lambda-\varepsilon+k-1}\,\rd v \\
&=\sum_{k=0}^\infty \frac{(k+1)_n}{n!}\frac{\Gamma(n+1)\Gamma(\lambda+\varepsilon+k)}{\Gamma(\lambda+\varepsilon+k+n+1)}
\frac{\Gamma(n+1)\Gamma(\lambda-\varepsilon+k)}{\Gamma(\lambda-\varepsilon+k+n+1)} \\
&=\sum_{k=0}^\infty \frac{n!(k+1)_n}{(\lambda+\varepsilon+k)_{n+1}(\lambda-\varepsilon+k)_{n+1}}. \qedhere
\end{align*}
\end{proof}

Then each summand has the following partial fraction decomposition. 
\begin{lemma}\label{lem_partfrac}
\begin{align*}
\frac{n!(x+1)_n}{(\lambda+\varepsilon+x)_{n+1}(\lambda-\varepsilon+x)_{n+1}}
=\frac{(-1)^n}{2\varepsilon}\sum_{l=0}^n\binom{n}{l}\biggl(-\frac{(\lambda+\varepsilon-n+l)_n}{(1+2\varepsilon)_l(1-2\varepsilon)_{n-l}}\frac{1}{\lambda+\varepsilon+x+l} \\*
{}+\frac{(\lambda-\varepsilon-n+l)_n}{(1+2\varepsilon)_{n-l}(1-2\varepsilon)_l}\frac{1}{\lambda-\varepsilon+x+l}\biggr). 
\end{align*}
\end{lemma}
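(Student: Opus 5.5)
The plan is to treat the left-hand side as a rational function of $x$ and compute its partial fraction decomposition by residues. The denominator $(\lambda+\varepsilon+x)_{n+1}(\lambda-\varepsilon+x)_{n+1}$ has simple zeros at $x=-(\lambda+\varepsilon+l)$ and $x=-(\lambda-\varepsilon+l)$ for $l=0,\ldots,n$. These $2(n+1)$ poles are distinct for generic $\varepsilon$; in particular we need $2\varepsilon\notin\BZ$, and the formula is then to be read as an identity of rational functions in $\varepsilon$. The numerator $n!(x+1)_n$ has degree $n$, while the denominator has degree $2n+2$. So the function is proper, vanishes at infinity, and equals the sum of its principal parts. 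It therefore suffices to compute the residue at each pole and to check that it matches the coefficient of $\frac{1}{\lambda\pm\varepsilon+x+l}$ on the right-hand side.

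\textbf{Residue at $x=-(\lambda+\varepsilon+l)$.} The factor $(\lambda+\varepsilon+x)_{n+1}=\prod_{i=0}^n(\lambda+\varepsilon+x+i)$ contributes
\[ \prod_{i\neq l}(i-l)=(-1)^l\,l!\,(n-l)!. \]
The other factor gives
\[ (\lambda-\varepsilon+x)_{n+1}=\prod_{i=0}^n(i-l-2\varepsilon). \]
Splitting at $i=l$, this becomes
\[ (-2\varepsilon)\cdot\prod_{i<l}(i-l-2\varepsilon)\cdot\prod_{i>l}(i-l-2\varepsilon)=(-2\varepsilon)(-1)^l(1+2\varepsilon)_l\,(1-2\varepsilon)_{n-l}. \]
The numerator evaluates to
\[ n!\,(1-\lambda-\varepsilon-l)_n=n!\,(-1)^n(\lambda+\varepsilon+l-n)_n, \]
using the identity $(a)_n=(-1)^n(1-a-n)_n$. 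Combining these, the residue is
\[ \frac{n!(-1)^n(\lambda+\varepsilon-n+l)_n}{l!(n-l)!\,(-2\varepsilon)(1+2\varepsilon)_l(1-2\varepsilon)_{n-l}}, \]
and the signs $(-1)^l\cdot(-1)^l$ cancel. This is exactly $\frac{(-1)^n}{2\varepsilon}\binom{n}{l}\cdot\bigl(-\frac{(\lambda+\varepsilon-n+l)_n}{(1+2\varepsilon)_l(1-2\varepsilon)_{n-l}}\bigr)$, as claimed.

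\textbf{Residue at $x=-(\lambda-\varepsilon+l)$.} By the symmetry $\varepsilon\mapsto-\varepsilon$ of the left-hand side, this residue follows immediately from the previous one. The substitution turns $\frac{1}{2\varepsilon}\cdot(-1)$ into $\frac{1}{2\varepsilon}\cdot(+1)$ and swaps $(1\pm2\varepsilon)$, which matches the second term.

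The only delicate point is sign bookkeeping in the product $\prod_{i\neq l}(i-l-2\varepsilon)$ and in the reflection of the Pochhammer symbol for the numerator. I would double-check these against the case $n=0$. There the left side is
\[ \frac{1}{(\lambda+\varepsilon+x)(\lambda-\varepsilon+x)}, \]
whose decomposition is $\frac{1}{2\varepsilon}\bigl(-\frac{1}{\lambda+\varepsilon+x}+\frac{1}{\lambda-\varepsilon+x}\bigr)$, consistent with the formula. Finally, the degenerate values $2\varepsilon\in\BZ$ require no separate treatment, since the statement is an identity of rational functions.
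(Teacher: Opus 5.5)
Your proposal is correct and follows essentially the same route as the paper: both compute the partial-fraction coefficients as residues at the simple poles $x=-(\lambda\pm\varepsilon+l)$, with the same evaluation $(\lambda-\varepsilon+x)_{n+1}=(-2\varepsilon)(-1)^l(1+2\varepsilon)_l(1-2\varepsilon)_{n-l}$ at $x=-(\lambda+\varepsilon+l)$ and the same reflection of $(x+1)_n$. Your use of the $\varepsilon\mapsto-\varepsilon$ symmetry to get the second family of coefficients is a tidier form of the paper's ``similarly''.
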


\begin{proof}
Put 
\[ f(x):=\frac{n!(x+1)_n}{(\lambda+\varepsilon+x)_{n+1}(\lambda-\varepsilon+x)_{n+1}}, \]
and let 
\[ f(x)=\sum_{l=0}^n\biggl(\frac{a_l}{\lambda+\varepsilon+x+l}+\frac{b_l}{\lambda-\varepsilon+x+l}\biggr). \]
be its partial fraction decomposition. Then we have 
\begin{align*}
a_l&=\lim_{x\to -\lambda-\varepsilon-l}(\lambda+\varepsilon+x+l)f(x) \\
&=\lim_{x\to -\lambda-\varepsilon-l}\frac{n!(x+1)_n}{(\lambda+\varepsilon+x)_l(\lambda+\varepsilon+x+l+1)_{n-l}(\lambda-\varepsilon+x)_{n+1}} \\
&=\frac{n!(-\lambda-\varepsilon-l+1)_n}{(-l)_l(1)_{n-l}(-2\varepsilon-l)_{n+1}}
=\frac{(-1)^nn!(\lambda+\varepsilon-n+l)_n}{(-1)^l l!(n-l)!(-2\varepsilon-l)_l(-2\varepsilon)(-2\varepsilon+1)_{n-l}} \\
&=-\frac{(-1)^n}{2\varepsilon}\binom{n}{l}\frac{(\lambda+\varepsilon-n+l)_n}{(1+2\varepsilon)_l(1-2\varepsilon)_{n-l}}, 
\end{align*}
and similarly, 
\[ b_l=\frac{(-1)^n}{2\varepsilon}\binom{n}{l}\frac{(\lambda-\varepsilon-n+l)_n}{(1+2\varepsilon)_{n-l}(1-2\varepsilon)_l}. \qedhere \]
\end{proof}

We also have the following. 
\begin{lemma}\label{lem_partfrac2}
\[ \sum_{l=0}^n\binom{n}{l}\frac{(\lambda+\varepsilon-n+l)_n}{(1+2\varepsilon)_l(1-2\varepsilon)_{n-l}}
=\sum_{l=0}^n\binom{n}{l}\frac{(\lambda-\varepsilon-n+l)_n}{(1+2\varepsilon)_{n-l}(1-2\varepsilon)_l}. \]
\end{lemma}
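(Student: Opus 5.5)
The identity is equivalent to the vanishing of the sum of all residues of the rational function in Lemma \ref{lem_partfrac}. I would derive it from the decay of that function at infinity.

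Put
\[ f(x)=\frac{n!(x+1)_n}{(\lambda+\varepsilon+x)_{n+1}(\lambda-\varepsilon+x)_{n+1}} . \]
For generic $\varepsilon$ (so that $2\varepsilon\notin\BZ$ and all $2n+2$ poles are simple), the numerator has degree $n$ and the denominator has degree $2n+2$. Hence $xf(x)\to 0$ as $x\to\infty$.

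Comparing with the partial fraction decomposition $f(x)=\sum_l\bigl(a_l/(\lambda+\varepsilon+x+l)+b_l/(\lambda-\varepsilon+x+l)\bigr)$, the coefficient of $1/x$ at infinity is $\sum_l(a_l+b_l)$. This coefficient must vanish, so $\sum_l a_l=-\sum_l b_l$.

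Now substitute the explicit values of $a_l$ and $b_l$ from the proof of Lemma \ref{lem_partfrac}:
\[ a_l=-\frac{(-1)^n}{2\varepsilon}\binom{n}{l}\frac{(\lambda+\varepsilon-n+l)_n}{(1+2\varepsilon)_l(1-2\varepsilon)_{n-l}},\qquad b_l=\frac{(-1)^n}{2\varepsilon}\binom{n}{l}\frac{(\lambda-\varepsilon-n+l)_n}{(1+2\varepsilon)_{n-l}(1-2\varepsilon)_l}. \]
Multiplying $\sum_l(a_l+b_l)=0$ by $(-1)^n 2\varepsilon$ gives exactly the claimed equality of the two sums, for all generic $\varepsilon$.

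To remove the genericity assumption, note that both sides are rational functions of $(\lambda,\varepsilon)$. The equality therefore extends to every $\varepsilon$ at which the denominators $(1\pm2\varepsilon)_l$ do not vanish, by continuity or identity of rational functions. This includes $\varepsilon=0$ via the limit.

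The only step needing care is confirming the degree count. Since $n<2n+1$, the decay $xf(x)\to0$ holds for every $n\ge0$, and the residue-sum-zero argument goes through.
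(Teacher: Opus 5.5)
Your proposal is correct and follows essentially the same route as the paper, which multiplies the decomposition of Lemma \ref{lem_partfrac} by $x$ and lets $x\to\infty$. That is exactly your observation that the $1/x$ coefficient $\sum_l(a_l+b_l)$ must vanish because $xf(x)\to 0$; your extension from generic $\varepsilon$ to all admissible $\varepsilon$ via identity of rational functions is a harmless detail the paper leaves implicit.
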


\begin{proof}
By Lemma \ref{lem_partfrac}, we have 
\begin{align*}
&\frac{(-1)^n2\varepsilon n!(x+1)_nx}{(\lambda+\varepsilon+x)_{n+1}(\lambda-\varepsilon+x)_{n+1}} \\
&=\sum_{l=0}^n\binom{n}{l}\biggl(-\frac{(\lambda+\varepsilon-n+l)_n}{(1+2\varepsilon)_l(1-2\varepsilon)_{n-l}}\frac{x}{\lambda+\varepsilon+x+l}
+\frac{(\lambda-\varepsilon-n+l)_n}{(1+2\varepsilon)_{n-l}(1-2\varepsilon)_l}\frac{x}{\lambda-\varepsilon+x+l}\biggr). 
\end{align*}
Then by taking the limit $x\to\infty$, we get 
\[ \sum_{l=0}^n\binom{n}{l}\biggl(-\frac{(\lambda+\varepsilon-n+l)_n}{(1+2\varepsilon)_l(1-2\varepsilon)_{n-l}}
+\frac{(\lambda-\varepsilon-n+l)_n}{(1+2\varepsilon)_{n-l}(1-2\varepsilon)_l}\biggr)=0. \qedhere \]
\end{proof}

Now we prove the following. 
\begin{theorem}\label{thm_Jflat}
For $n\in\BZ_{\ge 0}$, $\lambda,\varepsilon\in\BC$ with $\lambda\pm\varepsilon\notin-\BZ_{\ge 0}$, $\varepsilon\notin\bigl\{\pm\frac{1}{2},\pm 1,\ldots,\pm\frac{n}{2}\bigr\}$, we have 
\[ J_n^\flat(\lambda,\varepsilon)
=(-1)^nA_n^\flat(\lambda,\varepsilon)\sum_{k=0}^\infty \frac{1}{(\lambda+\varepsilon+k)(\lambda-\varepsilon+k)}+(-1)^nB_n^\flat(\lambda,\varepsilon), \]
where $A_0^\flat(\lambda,\varepsilon):=1$, $B_0^\flat(\lambda,\varepsilon):=0$, and for $n\ge 1$, 
\begin{align*}
&A_n^\flat(\lambda,\varepsilon):=\sum_{l=0}^n\binom{n}{l}\frac{(\lambda+\varepsilon-n+l)_n}{(1+2\varepsilon)_l(1-2\varepsilon)_{n-l}}
=\sum_{l=0}^n\binom{n}{l}\frac{(\lambda-\varepsilon-n+l)_n}{(1+2\varepsilon)_{n-l}(1-2\varepsilon)_l} \\*
&=\sum_{m=1}^n\frac{(-1)^mm}{2\cdot n!}\sum_{l=m}^{n}\binom{n}{l}\binom{n}{l-m}\biggl(\lambda-\frac{m}{2}-n+l\biggr)_n
\biggl(\frac{1}{\varepsilon-\frac{m}{2}}-\frac{1}{\varepsilon+\frac{m}{2}}\biggr), \\
&B_n^\flat(\lambda,\varepsilon) \\*
:\hspace{-3pt}&=\frac{1}{2\varepsilon}\sum_{l=1}^n\sum_{k=0}^{l-1}\binom{n}{l}\biggl(\frac{(\lambda+\varepsilon-n+l)_n}{(1+2\varepsilon)_l(1-2\varepsilon)_{n-l}}\frac{1}{\lambda+\varepsilon+k}
-\frac{(\lambda-\varepsilon-n+l)_n}{(1+2\varepsilon)_{n-l}(1-2\varepsilon)_l}\frac{1}{\lambda-\varepsilon+k}\biggr) \\
&=\sum_{m=1}^n\frac{(-1)^m}{2\cdot n!}\sum_{l=m}^n\sum_{k=0}^{m-1}\binom{n}{l}\binom{n}{l-m}\frac{\bigl(\lambda-\frac{m}{2}-n+l\bigr)_n}{\lambda-\frac{m}{2}+k}
\biggl(-\frac{1}{\varepsilon-\frac{m}{2}}+\frac{1}{\varepsilon+\frac{m}{2}}\biggr). 
\end{align*}
\end{theorem}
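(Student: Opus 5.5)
Starting from Proposition \ref{prop_Jflat}, I write $J_n^\flat(\lambda,\varepsilon)=\sum_{k\ge 0}f(k)$ with $f$ as in the proof of Lemma \ref{lem_partfrac}, and insert that partial fraction decomposition. Write $a_l=-\frac{(-1)^n}{2\varepsilon}\alpha_l$ and $b_l=\frac{(-1)^n}{2\varepsilon}\beta_l$, where
\[ \alpha_l:=\binom{n}{l}\frac{(\lambda+\varepsilon-n+l)_n}{(1+2\varepsilon)_l(1-2\varepsilon)_{n-l}}, \qquad \beta_l:=\binom{n}{l}\frac{(\lambda-\varepsilon-n+l)_n}{(1+2\varepsilon)_{n-l}(1-2\varepsilon)_l}. \]
The excluded values $\varepsilon\in\{\pm\frac12,\ldots,\pm\frac n2\}$ (together with $\varepsilon\neq0$, handled at the end) are exactly where these denominators vanish. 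Individually $\sum_k\frac{1}{\lambda\pm\varepsilon+k+l}$ diverges, so I truncate at $k\le K$ and let $K\to\infty$. Lemma \ref{lem_partfrac2} gives $\sum_l\alpha_l=\sum_l\beta_l=:A_n^\flat$, so the coefficients of the harmonic-type tails cancel and the limit is finite. Concretely, I shift $\sum_{k=0}^K\frac{1}{\lambda+\varepsilon+k+l}=\sum_{k=0}^{K}\frac{1}{\lambda+\varepsilon+k}-\sum_{k=0}^{l-1}\frac{1}{\lambda+\varepsilon+k}+O(l/K)$, and likewise for $-\varepsilon$. This yields
\[ J_n^\flat=\frac{(-1)^n}{2\varepsilon}A_n^\flat\lim_{K\to\infty}\sum_{k=0}^K\Bigl(\frac{1}{\lambda-\varepsilon+k}-\frac{1}{\lambda+\varepsilon+k}\Bigr)+\frac{(-1)^n}{2\varepsilon}\sum_{l}\sum_{k<l}\Bigl(\frac{\alpha_l}{\lambda+\varepsilon+k}-\frac{\beta_l}{\lambda-\varepsilon+k}\Bigr). \]
Since $\frac{1}{\lambda-\varepsilon+k}-\frac{1}{\lambda+\varepsilon+k}=\frac{2\varepsilon}{(\lambda+\varepsilon+k)(\lambda-\varepsilon+k)}$, the first term is $(-1)^nA_n^\flat\sum_k\frac{1}{(\lambda+\varepsilon+k)(\lambda-\varepsilon+k)}$, and the second is exactly $(-1)^nB_n^\flat$ in its first form. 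The case $n=0$ is immediate, since then $f(k)$ is already the product.

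Next I prove the alternative expressions in $m$, which also show that $A_n^\flat,B_n^\flat$ have only the poles $\varepsilon=\pm\frac m2$ and extend to $\varepsilon=0$. For fixed $\lambda$, $A_n^\flat$ is a rational function of $\varepsilon$ that is bounded as $\varepsilon\to\infty$. Expanding $(1+2\varepsilon)_l(1-2\varepsilon)_{n-l}$, its possible poles are simple, at $\varepsilon=-\frac j2$ ($1\le j\le l$) and $\varepsilon=\frac j2$ ($1\le j\le n-l$). I compute the residue at $\varepsilon=\pm\frac m2$ termwise: at $\varepsilon=\frac m2$ only the terms with $l\le n-m$ contribute, and reindexing $l\mapsto n-l$ (or using the $\beta$-form) should turn the residue into $\frac{(-1)^m m}{2\,n!}\sum_{l=m}^n\binom nl\binom n{l-m}(\lambda-\frac m2-n+l)_n$. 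The symmetry of Lemma \ref{lem_partfrac2} under $\varepsilon\to-\varepsilon$ then gives the opposite residue at $-\frac m2$.

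The constant term is the delicate part. The claimed expression has no constant, so I must show $A_n^\flat\to0$ as $\varepsilon\to\infty$ for $n\ge1$. Each summand decays like $\varepsilon^{-n}$ except that $(\lambda\pm\varepsilon-n+l)_n$ grows like $\varepsilon^n$, leaving an order-one contribution $\sum_l\binom nl\frac{(\pm1)^n\cdots}{\cdots}$. I would show this vanishes using the identity $\sum_l\binom nl(-1)^l=0$ after normalizing the leading coefficients. Then $A_n^\flat$ equals the sum of its principal parts. I handle $B_n^\flat$ in the same way: the $\frac1{2\varepsilon}$ prefactor makes it vanish at infinity, the apparent pole at $\varepsilon=0$ cancels by Lemma \ref{lem_partfrac2}, and the residues at $\pm\frac m2$ come with the extra factor $\sum_{k=0}^{m-1}\frac1{\lambda-\frac m2+k}$. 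That factor arises because at $\varepsilon=\frac m2$ the $\alpha$- and $\beta$-sums telescope over $k$, since $\lambda+\varepsilon+k=\lambda-\varepsilon+k+m$.

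I expect the main obstacle to be this residue bookkeeping: matching the binomial sums $\binom nl\binom n{l-m}$ after the index shifts and getting the signs $(-1)^m$ right. I would check the result on $n=1,2$ before committing to the general reindexing.
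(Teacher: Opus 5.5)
Your proposal is correct and follows the paper's proof essentially step for step. The paper likewise truncates the sum, applies Lemma \ref{lem_partfrac} with the index shift, and uses Lemma \ref{lem_partfrac2} to cancel the divergent tails. It then expands in partial fractions in $\varepsilon$, using the vanishing limit $\bigl(\frac12-\frac12\bigr)^n=0$, the residues at $\pm\frac{m}{2}$, and evenness in $\varepsilon$; for the residues the paper shifts $l\mapsto l-m$, while your $\beta$-form route gives them directly.

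One small correction. The removable singularity of $B_n^\flat$ at $\varepsilon=0$ comes from $\alpha_l=\beta_l$ termwise at $\varepsilon=0$ (equivalently, from evenness of $B_n^\flat$), not from Lemma \ref{lem_partfrac2}. Your attention to $\varepsilon=0$ is a genuine improvement: the statement allows it but Lemma \ref{lem_partfrac} excludes it, so it must be recovered by continuity, which the paper leaves implicit.
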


\begin{proof}
When $n=0$, this is clear from Proposition \ref{prop_Jflat}. 
When $n\ge 1$, for any $N\in\BZ_{>0}$, by Lemma \ref{lem_partfrac}, we have 
\begin{align*}
&(-1)^n\sum_{k=0}^N \frac{n!(k+1)_n}{(\lambda+\varepsilon+k)_{n+1}(\lambda-\varepsilon+k)_{n+1}} \\
&=\frac{1}{2\varepsilon}\sum_{k=0}^N \sum_{l=0}^n\binom{n}{l}\biggl(-\frac{(\lambda+\varepsilon-n+l)_n}{(1\hspace{-1pt}+\hspace{-1pt}2\varepsilon)_l(1\hspace{-1pt}-\hspace{-1pt}2\varepsilon)_{n-l}}
\frac{1}{\lambda\hspace{-1pt}+\hspace{-1pt}\varepsilon\hspace{-1pt}+\hspace{-1pt}k\hspace{-1pt}+\hspace{-1pt}l}
+\frac{(\lambda-\varepsilon-n+l)_n}{(1\hspace{-1pt}+\hspace{-1pt}2\varepsilon)_{n-l}(1\hspace{-1pt}-\hspace{-1pt}2\varepsilon)_l}
\frac{1}{\lambda\hspace{-1pt}-\hspace{-1pt}\varepsilon\hspace{-1pt}+\hspace{-1pt}k\hspace{-1pt}+\hspace{-1pt}l}\biggr) \\
&=\frac{1}{2\varepsilon}\sum_{l=0}^n\sum_{k=l}^{N+l} \binom{n}{l}\biggl(-\frac{(\lambda+\varepsilon-n+l)_n}{(1+2\varepsilon)_l(1-2\varepsilon)_{n-l}}\frac{1}{\lambda+\varepsilon+k}
+\frac{(\lambda-\varepsilon-n+l)_n}{(1+2\varepsilon)_{n-l}(1-2\varepsilon)_l}\frac{1}{\lambda-\varepsilon+k}\biggr) \\
&=\frac{1}{2\varepsilon}\sum_{l=0}^n\sum_{k=0}^N \binom{n}{l}\biggl(-\frac{(\lambda+\varepsilon-n+l)_n}{(1+2\varepsilon)_l(1-2\varepsilon)_{n-l}}\frac{1}{\lambda+\varepsilon+k}
+\frac{(\lambda-\varepsilon-n+l)_n}{(1+2\varepsilon)_{n-l}(1-2\varepsilon)_l}\frac{1}{\lambda-\varepsilon+k}\biggr) \\*
&\eqspace{}+\frac{1}{2\varepsilon}\sum_{l=1}^n\sum_{k=0}^{l-1}\binom{n}{l}\biggl(\frac{(\lambda+\varepsilon-n+l)_n}{(1+2\varepsilon)_l(1-2\varepsilon)_{n-l}}\frac{1}{\lambda+\varepsilon+k}
-\frac{(\lambda-\varepsilon-n+l)_n}{(1+2\varepsilon)_{n-l}(1-2\varepsilon)_l}\frac{1}{\lambda-\varepsilon+k}\biggr) \\*
&\eqspace{}+\frac{1}{2\varepsilon}\sum_{l=1}^n\sum_{k=N+1}^{N+l}\binom{n}{l}\biggl(-\frac{(\lambda+\varepsilon-n+l)_n}{(1\hspace{-1pt}+\hspace{-1pt}2\varepsilon)_l(1\hspace{-1pt}-\hspace{-1pt}2\varepsilon)_{n-l}}
\frac{1}{\lambda\hspace{-1pt}+\hspace{-1pt}\varepsilon\hspace{-1pt}+\hspace{-1pt}k}
+\frac{(\lambda-\varepsilon-n+l)_n}{(1\hspace{-1pt}+\hspace{-1pt}2\varepsilon)_{n-l}(1\hspace{-1pt}-\hspace{-1pt}2\varepsilon)_l}
\frac{1}{\lambda\hspace{-1pt}-\hspace{-1pt}\varepsilon\hspace{-1pt}+\hspace{-1pt}k}\biggr) \\
&=\frac{1}{2\varepsilon}\sum_{l=0}^n\binom{n}{l}\frac{(\lambda+\varepsilon-n+l)_n}{(1+2\varepsilon)_l(1-2\varepsilon)_{n-l}}
\sum_{k=0}^N \biggl(-\frac{1}{\lambda+\varepsilon+k}+\frac{1}{\lambda-\varepsilon+k}\biggr) \\*
&\eqspace{}+\frac{1}{2\varepsilon}\sum_{l=1}^n\sum_{k=0}^{l-1}\binom{n}{l}\biggl(\frac{(\lambda+\varepsilon-n+l)_n}{(1+2\varepsilon)_l(1-2\varepsilon)_{n-l}}\frac{1}{\lambda+\varepsilon+k}
-\frac{(\lambda-\varepsilon-n+l)_n}{(1+2\varepsilon)_{n-l}(1-2\varepsilon)_l}\frac{1}{\lambda-\varepsilon+k}\biggr) \\*
&\eqspace{}+O(N^{-1}), 
\end{align*}
where we have used Lemma \ref{lem_partfrac2} at the 4th equality. Then by taking the limit $N\to\infty$, we get the former equalities of $A_n^\flat(\lambda,\varepsilon), B_n^\flat(\lambda,\varepsilon)$. 
To prove the latter equalities, we consider the partial fraction decompositions of $A_n^\flat(\lambda,\varepsilon), B_n^\flat(\lambda,\varepsilon)$ with respect to $\varepsilon$. 
For $A_n^\flat(\lambda,\varepsilon)$, since the degrees of numerators and denominators are equal, it suffices to compute the limit $\varepsilon\to\infty$ and the residues at all poles. 
First, the limit $\varepsilon\to\infty$ is computed as 
\[ \lim_{\varepsilon\to\infty} A_n^\flat(\lambda,\varepsilon)=\sum_{l=0}^n\binom{n}{l}\frac{1}{2^l(-2)^{n-l}}=\biggl(\frac{1}{2}-\frac{1}{2}\biggr)^n=0. \]
Next, for $m=1,2,\ldots,n$, the residue at the pole $\varepsilon=\frac{m}{2}$ is computed as 
\begin{align*}
\lim_{\varepsilon\to\frac{m}{2}}\biggl(\varepsilon-\frac{m}{2}\biggr) A_n^\flat(\lambda,\varepsilon)
&=\lim_{\varepsilon\to\frac{m}{2}}2\varepsilon\biggl(\varepsilon-\frac{m}{2}\biggr) \sum_{l=0}^n\binom{n}{l}\frac{(-1)^{n-l}(\lambda+\varepsilon-n+l)_n}{(2\varepsilon-n+l)_{n-l}(2\varepsilon)(2\varepsilon+1)_l}\\
&=\lim_{\varepsilon\to\frac{m}{2}}\varepsilon(2\varepsilon-m)\sum_{l=0}^n\binom{n}{l}(-1)^{n-l}\frac{(\lambda+\varepsilon-n+l)_n}{(2\varepsilon-n+l)_{n+1}} \\
&=\lim_{\varepsilon\to\frac{m}{2}}\varepsilon\sum_{l=0}^{n-m}\binom{n}{l}(-1)^{n-l}\frac{(\lambda+\varepsilon-n+l)_n}{(2\varepsilon-n+l)_{n-l-m}(2\varepsilon-m+1)_{l+m}} \\
&=\frac{m}{2}\sum_{l=0}^{n-m}\binom{n}{l}(-1)^{n-l}\frac{\bigl(\lambda+\frac{m}{2}-n+l\bigr)_n}{(m-n+l)_{n-l-m}(1)_{l+m}} \\
&=\frac{m}{2}\sum_{l=0}^{n-m}\binom{n}{l}(-1)^{m}\frac{\bigl(\lambda+\frac{m}{2}-n+l\bigr)_n}{(n-l-m)!(l+m)!} \\
&=\frac{(-1)^mm}{2\cdot n!}\sum_{l=0}^{n-m}\binom{n}{l}\binom{n}{l+m}\biggl(\lambda+\frac{m}{2}-n+l\biggr)_n \\
&=\frac{(-1)^mm}{2\cdot n!}\sum_{l=m}^{n}\binom{n}{l-m}\binom{n}{l}\biggl(\lambda-\frac{m}{2}-n+l\biggr)_n. 
\end{align*}
Similarly, we have 
\[ \lim_{\varepsilon\to-\frac{m}{2}}\biggl(\varepsilon+\frac{m}{2}\biggr)A_n^\flat(\lambda,\varepsilon)
=-\frac{(-1)^mm}{2\cdot n!}\sum_{l=m}^{n}\binom{n}{l}\binom{n}{l-m}\biggl(\lambda-\frac{m}{2}-n+l\biggr)_n. \]
Hence we get the latter equality of $A_n^\flat(\lambda,\varepsilon)$. Next, for $B_n^\flat(\lambda,\varepsilon)$, 
since the degrees of numerators are smaller than those of denominators, it suffices to compute the residues at all poles. 
For $m=1,2,\ldots,n$, the residue at the pole $\varepsilon=\frac{m}{2}$ is computed as 
\begin{align*}
&\lim_{\varepsilon\to\frac{m}{2}}\biggl(\varepsilon-\frac{m}{2}\biggr)B_n^\flat(\lambda,\varepsilon) \\
&=\lim_{\varepsilon\to\frac{m}{2}}\biggl(\varepsilon\hspace{-1pt}-\hspace{-1pt}\frac{m}{2}\biggr)\sum_{l=1}^n\sum_{k=0}^{l-1}\binom{n}{l}
\biggl(\frac{(\lambda\hspace{-1pt}+\hspace{-1pt}\varepsilon\hspace{-1pt}-\hspace{-1pt}n\hspace{-1pt}+\hspace{-1pt}l)_n}{(2\varepsilon-n+l)_{n+1}}
\frac{(-1)^{n-l}}{\lambda\hspace{-1pt}+\hspace{-1pt}\varepsilon\hspace{-1pt}+\hspace{-1pt}k}
+\frac{(\lambda-\varepsilon-n+l)_n}{(-2\varepsilon\hspace{-1pt}-\hspace{-1pt}n\hspace{-1pt}+\hspace{-1pt}l)_{n+1}}
\frac{(-1)^{n-l}}{\lambda\hspace{-1pt}-\hspace{-1pt}\varepsilon\hspace{-1pt}+\hspace{-1pt}k}\biggr) \\
&=\frac{1}{2}\lim_{\varepsilon\to\frac{m}{2}}\biggl(\sum_{l=1}^{n-m}
\sum_{k=0}^{l-1}\binom{n}{l}\frac{(\lambda+\varepsilon-n+l)_n}{(2\varepsilon-n+l)_{n-l-m}(2\varepsilon-m+1)_{l+m}}\frac{(-1)^{n-l}}{\lambda+\varepsilon+k} \\*
&\hspace{100pt}-\sum_{l=m}^n\sum_{k=0}^{l-1}\binom{n}{l}\frac{(\lambda-\varepsilon-n+l)_n}{(-2\varepsilon-n+l)_{n-l+m}(-2\varepsilon+m+1)_{l-m}}\frac{(-1)^{n-l}}{\lambda-\varepsilon+k}\biggr) \\
&=\frac{1}{2}\biggl(\sum_{l=1}^{n-m}\sum_{k=0}^{l-1}\binom{n}{l}\frac{\bigl(\lambda+\frac{m}{2}-n+l\bigr)_n}{(m-n+l)_{n-l-m}(1)_{l+m}}\frac{(-1)^{n-l}}{\lambda+\frac{m}{2}+k} \\*
&\hspace{152pt}-\sum_{l=m}^n\sum_{k=0}^{l-1}\binom{n}{l}\frac{\bigl(\lambda-\frac{m}{2}-n+l\bigr)_n}{(-m-n+l)_{n-l+m}(1)_{l-m}}\frac{(-1)^{n-l}}{\lambda-\frac{m}{2}+k}\biggr) \\
&=\frac{(-1)^m}{2\cdot n!}\biggl(\sum_{l=1}^{n-m}\sum_{k=0}^{l-1}\binom{n}{l}\binom{n}{l+m}\frac{\bigl(\lambda+\frac{m}{2}-n+l\bigr)_n}{\lambda+\frac{m}{2}+k} \\*
&\hspace{202pt}-\sum_{l=m}^n\sum_{k=0}^{l-1}\binom{n}{l}\binom{n}{l-m}\frac{\bigl(\lambda-\frac{m}{2}-n+l\bigr)_n}{\lambda-\frac{m}{2}+k}\biggr) \\
&=\frac{(-1)^m}{2\cdot n!}\biggl(\sum_{l=m+1}^{n}\sum_{k=0}^{l-m-1}\binom{n}{l-m}\binom{n}{l}\frac{\bigl(\lambda-\frac{m}{2}-n+l\bigr)_n}{\lambda+\frac{m}{2}+k} \\*
&\hspace{202pt}-\sum_{l=m}^n\sum_{k=0}^{l-1}\binom{n}{l}\binom{n}{l-m}\frac{\bigl(\lambda-\frac{m}{2}-n+l\bigr)_n}{\lambda-\frac{m}{2}+k}\biggr) \\
&=\frac{(-1)^m}{2\cdot n!}\biggl(\sum_{l=m+1}^{n}\sum_{k=m}^{l-1}\binom{n}{l-m}\binom{n}{l}\frac{\bigl(\lambda-\frac{m}{2}-n+l\bigr)_n}{\lambda-\frac{m}{2}+k} \\*
&\hspace{202pt}-\sum_{l=m}^n\sum_{k=0}^{l-1}\binom{n}{l}\binom{n}{l-m}\frac{\bigl(\lambda-\frac{m}{2}-n+l\bigr)_n}{\lambda-\frac{m}{2}+k}\biggr) \\
&=-\frac{(-1)^m}{2\cdot n!}\sum_{l=m}^n\sum_{k=0}^{m-1}\binom{n}{l}\binom{n}{l-m}\frac{\bigl(\lambda-\frac{m}{2}-n+l\bigr)_n}{\lambda-\frac{m}{2}+k}. 
\end{align*}
Also, since $B_n^\flat(\lambda,\varepsilon)$ is even with respect to $\varepsilon$, we have 
\begin{align*}
&\lim_{\varepsilon\to-\frac{m}{2}}\biggl(\varepsilon+\frac{m}{2}\biggr)B_n^\flat(\lambda,\varepsilon)
=\frac{(-1)^m}{2\cdot n!}\sum_{l=m}^n\sum_{k=0}^{m-1}\binom{n}{l}\binom{n}{l-m}\frac{\bigl(\lambda-\frac{m}{2}-n+l\bigr)_n}{\lambda-\frac{m}{2}+k}. 
\end{align*}
Hence we get the latter equality of $B_n^\flat(\lambda,\varepsilon)$. 
\end{proof}

\begin{remark}
Although $A_n^\flat(\lambda,\varepsilon)$ and $B_n^\flat(\lambda,\varepsilon)$ have poles at $\varepsilon\in\bigl\{\pm\frac{1}{2},\pm 1,\ldots,\pm\frac{n}{2}\bigr\}$, since 
\begin{align*}
&\sum_{k=0}^\infty\frac{1}{(\lambda+\varepsilon+k)(\lambda-\varepsilon+k)}
=\frac{1}{2\varepsilon}\sum_{k=0}^\infty\biggl(\frac{1}{\lambda-\varepsilon+k}-\frac{1}{\lambda+\varepsilon+k}\biggr) \\
&=\frac{1}{2\varepsilon}\biggl(\sum_{k=0}^{m-1}\frac{1}{\lambda-\varepsilon+k}+\sum_{k=0}^\infty \biggl(\frac{1}{\lambda-\varepsilon+m+k}-\frac{1}{\lambda+\varepsilon+k}\biggr)\biggr) \\
&=\frac{1}{2\varepsilon}\sum_{k=0}^{m-1}\frac{1}{\lambda-\varepsilon+k}+\frac{\varepsilon-\frac{m}{2}}{\varepsilon}\sum_{k=0}^\infty \frac{1}{(\lambda+\varepsilon+k)(\lambda-\varepsilon+m+k)} 
\end{align*}
holds for $m=1,2,\ldots,n$, we have 
\begin{align*}
&\lim_{\varepsilon\to\frac{m}{2}}\biggl(\varepsilon-\frac{m}{2}\biggr)A_n^\flat(\lambda,\varepsilon)\sum_{k=0}^\infty\frac{1}{(\lambda+\varepsilon+k)(\lambda-\varepsilon+k)}
=-\lim_{\varepsilon\to\frac{m}{2}}\biggl(\varepsilon-\frac{m}{2}\biggr)B_n^\flat(\lambda,\varepsilon) \\
&=\frac{(-1)^m}{2\cdot n!}\sum_{l=m}^n\binom{n}{l}\binom{n}{l-m}\biggl(\lambda-\frac{m}{2}-n+l\biggr)_n\sum_{k=0}^{m-1}\frac{1}{\lambda-\frac{m}{2}+k}. 
\end{align*}
Hence the residue of $J_n^\flat(\lambda,\varepsilon)$ at $\varepsilon=\frac{m}{2}$ vanishes. 
Since $J_n^\flat(\lambda,-\varepsilon)=J_n^\flat(\lambda,\varepsilon)$ holds, the residue at $\varepsilon\hspace{-1pt}=\hspace{-1pt}-\frac{m}{2}$ also vanishes, 
namely, $J_n^\flat(\lambda,\varepsilon)$ is holomorphic at $\varepsilon\hspace{-1pt}\in\hspace{-1pt}\bigl\{\pm\frac{1}{2},\pm 1,\ldots,\pm\frac{n}{2}\bigr\}$. 
\end{remark}

\begin{remark}
By (\ref{formula_Apery_Beukers}), $A_n^\flat(\lambda,\varepsilon),B_n^\flat(\lambda,\varepsilon)$ and Ap\'ery's numbers $A_n,B_n$ are related as 
\begin{equation}\label{formula_rem_Apery}
A_n^\flat(n+1,0)=A_n, \qquad B_n^\flat(n+1,0)=A_n\sum_{k=1}^n\frac{1}{k^2}-B_n. 
\end{equation}
Indeed, the former formula of (\ref{formula_rem_Apery}) is directly verified as 
\[ A_n^\flat(n+1,0)=\sum_{l=0}^n\binom{n}{l}\frac{(l+1)_n}{(1)_l(1)_{n-l}}=\sum_{l=0}^n\binom{n}{l}^2\frac{(l+1)_n}{n!}=\sum_{l=0}^n\binom{n}{l}^2\binom{n+l}{l}=A_n. \]
On the other hand, the latter formula of (\ref{formula_rem_Apery}) is non-trivial. 
\end{remark}

From Theorem \ref{thm_Jflat}, we immediately get the following. 
\begin{corollary}
For $g\in\BR$, $\lambda,\varepsilon\in\BC$ with $\lambda\pm\varepsilon\notin-\BZ_{\ge 0}$, $\varepsilon\notin\pm\frac{1}{2}\BZ_{\ge 1}$, we have 
\[ R_1^\flat(\lambda;g,\varepsilon)=\sum_{n=0}^\infty \frac{(4g^2)^n}{n!}
\biggl(A_n^\flat(\lambda,\varepsilon)\sum_{k=0}^\infty \frac{1}{(\lambda+\varepsilon+k)(\lambda-\varepsilon+k)}+B_n^\flat(\lambda,\varepsilon)\biggr) \]
where $A_n^\flat(\lambda,\varepsilon),B_n^\flat(\lambda,\varepsilon)$ are as in Theorem \ref{thm_Jflat}. 
\end{corollary}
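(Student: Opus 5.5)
The corollary follows by substituting Theorem \ref{thm_Jflat} into the expansion
\[ R_1^\flat(\lambda;g,\varepsilon)=\sum_{n=0}^\infty \frac{(-4g^2)^n}{n!}J_n^\flat(\lambda,\varepsilon), \]
which was derived from Example \ref{ex_R1}. Since the factor $(-1)^n$ in Theorem \ref{thm_Jflat} cancels the sign of $(-4g^2)^n$, we obtain the claimed series termwise. Two points need care.

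\emph{Validity of the starting expansion.} It was proved only for $\Re\lambda-|\Re\varepsilon|>0$, where it comes from expanding $\exp(-4g^2\Phi_1/(1-uv))$ inside an absolutely convergent integral. The exponent is nonpositive for $u,v\in[0,1]$, so the exponential is bounded by $1$. Its Taylor series is also dominated by $\exp(4g^2(1-u)(1-v)/(1-uv))\le e^{4g^2}$, because $(1-u)(1-v)\le 1-uv$ on $[0,1]^2$. Hence the interchange of sum and integral is justified by dominated convergence. I will first prove the corollary in this half-plane region with additionally $\varepsilon\notin\pm\frac12\BZ_{\ge1}$.

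\emph{Extension to all $\lambda\pm\varepsilon\notin-\BZ_{\ge0}$.} Proposition \ref{prop_Jflat} gives a uniform bound:
\[ |J_n^\flat(\lambda,\varepsilon)|\le\sum_k \frac{n!(k+1)_n}{|(\lambda+\varepsilon+k)_{n+1}(\lambda-\varepsilon+k)_{n+1}|}. \]
On a compact set $K$ avoiding the poles, each ratio $(k+1+j)/|(\lambda\pm\varepsilon+k+j)|$ is bounded by some constant $c_K$. This gives $|J_n^\flat|\le n!\,c_K^{2n}\sum_k|(\lambda+\varepsilon+k)(\lambda-\varepsilon+k)|^{-1}$, which is only factorial growth and does not suffice. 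Instead, for $k$ large the ratio is at most $1+O(1/k)$, so $|J_n^\flat|\le n!\,C_K^{\,n}$ with $C_K$ depending on $K$; that is still too weak. The right tool is the integral form of $J_n^\flat$ on the half-plane, where $(1-u)(1-v)\le1-uv$ gives $|J_n^\flat|\le J_0^\flat(\Re\lambda,\Re\varepsilon)$, a bound independent of $n$. So the series $\sum (4g^2)^n J_n^\flat/n!$ converges locally uniformly on the half-plane region. Its continuation to the rest of the region is by the trace definition of $R_1^\flat$, which Lemma \ref{lem_diff_R} shows is holomorphic in $\lambda$. For the series side I would use the recurrence $(\lambda+\varepsilon+k)\to(\lambda+\varepsilon+k+1)$: shifting $\lambda$ by integers reduces any compact set to the half-plane case up to finitely many $k$-terms. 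Each of those finitely many terms is a rational function of $\lambda,\varepsilon$ with coefficient $n!(k+1)_n$ bounded by $C^n n!$. This is the step I expect to be hardest, since those low-$k$ terms carry factorial growth in $n$; they may need to be handled by resumming in $g$ first. If that fails, I would state the corollary on the half-plane and extend by analytic continuation of both sides wherever the series converges.

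\emph{Removing the excluded $\varepsilon$.} The restriction $\varepsilon\notin\pm\frac12\BZ_{\ge1}$ is just the set of poles of $A_n^\flat$ and $B_n^\flat$ from Theorem \ref{thm_Jflat}, so it is inherited termwise and no further argument is needed. Finally I would note that $A_0^\flat=1$ and $B_0^\flat=0$, so the $n=0$ term equals $\sum_k 1/((\lambda+\varepsilon+k)(\lambda-\varepsilon+k))=J_0^\flat$, which is consistent.
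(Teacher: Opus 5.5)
Your main line (substitute Theorem~\ref{thm_Jflat} termwise into $R_1^\flat=\sum_n\frac{(-4g^2)^n}{n!}J_n^\flat$, with the $(-1)^n$ cancelling) is exactly the paper's argument, which it calls immediate. Your dominated-convergence justification on $\Re\lambda-|\Re\varepsilon|>0$, via $(1-u)(1-v)\le 1-uv$, is correct. You are also right that the corollary is stated on the larger set $\lambda\pm\varepsilon\notin-\BZ_{\ge0}$, a point the paper does not address. However, your proposal leaves this extension open and even considers retreating to the half-plane, so as written it does not prove the stated result.

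The obstacle you describe comes from a miscount in your first estimate. In the $k$-th term of Proposition~\ref{prop_Jflat}, pair the $2n$ numerator factors $j+1$ and $k+1+j$ ($0\le j\le n-1$) with the $2n$ denominator factors $\lambda-\varepsilon+k+j$ and $\lambda+\varepsilon+k+j$. Only $1/\bigl((\lambda+\varepsilon+k+n)(\lambda-\varepsilon+k+n)\bigr)$ is left over, and no $n!$ survives. In particular, the low-$k$ terms do not grow factorially: by Stirling, for fixed $k$ the term is $O(n^{-2\Re\lambda-k})$.

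Concretely, let $K$ be a compact set avoiding $\lambda\pm\varepsilon\in-\BZ_{\ge0}$, and put $c_K:=\sup_K\sup_{m\ge0}(m+1)/\min\{|\lambda+\varepsilon+m|,|\lambda-\varepsilon+m|\}<\infty$. Each paired ratio is at most $c_K$ (use $j+1\le k+j+1$), so
\[ |J_n^\flat(\lambda,\varepsilon)|\le c_K^{2n}\sum_{k=0}^\infty\frac{1}{|(\lambda+\varepsilon+k+n)(\lambda-\varepsilon+k+n)|}\le C_K\,c_K^{2n}, \]
with $C_K$ independent of $n$. Exponential growth is harmless against $1/n!$, since $\sum_n\frac{(4g^2)^n}{n!}|J_n^\flat|\le C_K e^{4g^2c_K^2}$. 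Hence the series converges locally uniformly for every $g$. It therefore defines a holomorphic function of $\lambda$ on the connected set $\{\lambda:\lambda\pm\varepsilon\notin-\BZ_{\ge0}\}$ (the complement of a discrete set).

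This function agrees with $R_1^\flat$ on the half-plane, and $R_1^\flat$ is holomorphic on the whole set by Lemma~\ref{lem_diff_R}. The identity theorem then gives equality everywhere, after which you substitute Theorem~\ref{thm_Jflat}. No shift in $\lambda$ and no resummation in $g$ is needed.
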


\section{Computation of the first term of 2pQRM}\label{section_2pQRM}

In this section, we seek another expression of 
\begin{align*}
&R_1^+(\lambda;g,\varepsilon):=R_1^{1/2}(\lambda;g,\varepsilon)+R_1^{3/2}(\lambda;g,\varepsilon) \\*
&=\Tr\biggl(\biggl(\cosh(2g)\biggl(\ra^\dagger\ra+\frac{1}{2}\biggr)+\frac{\sinh(2g)}{2}(\ra^2+(\ra^\dagger)^2)+\varepsilon+\lambda\biggr)^{-1} \\*
&\hspace{50pt}\times\biggl(\cosh(2g)\biggl(\ra^\dagger\ra+\frac{1}{2}\biggr)-\frac{\sinh(2g)}{2}(\ra^2+(\ra^\dagger)^2)-\varepsilon+\lambda\biggr)^{-1}\biggr), \\
&R_1^-(\lambda;g,\varepsilon):=R_1^{1/2}(\lambda;g,\varepsilon)-R_1^{3/2}(\lambda;g,\varepsilon) \\*
&=\Tr\biggl(\biggl(\cosh(2g)\biggl(\ra^\dagger\ra+\frac{1}{2}\biggr)+\frac{\sinh(2g)}{2}(\ra^2+(\ra^\dagger)^2)+\varepsilon+\lambda\biggr)^{-1} \\*
&\hspace{50pt}\times\biggl(\cosh(2g)\biggl(\ra^\dagger\ra+\frac{1}{2}\biggr)-\frac{\sinh(2g)}{2}(\ra^2+(\ra^\dagger)^2)-\varepsilon+\lambda\biggr)^{-1}\biggr|_{L^2(\BR)_\even}\biggr) \\*
&\eqspace{}-\Tr\biggl(\biggl(\cosh(2g)\biggl(\ra^\dagger\ra+\frac{1}{2}\biggr)+\frac{\sinh(2g)}{2}(\ra^2+(\ra^\dagger)^2)+\varepsilon+\lambda\biggr)^{-1} \\*
&\hspace{50pt}\times\biggl(\cosh(2g)\biggl(\ra^\dagger\ra+\frac{1}{2}\biggr)-\frac{\sinh(2g)}{2}(\ra^2+(\ra^\dagger)^2)-\varepsilon+\lambda\biggr)^{-1}\biggr|_{L^2(\BR)_\odd}\biggr). 
\end{align*}
Then by Example \ref{ex_R1}, for $\delta\in\{\pm\}$, $g\in\BR$, $\lambda,\varepsilon\in\BC$ with $\Re\lambda-|\Re\varepsilon|+\frac{1}{2}>0$, we have 
\begin{align*}
R_1^\delta(\lambda;g,\varepsilon)&=\sech(2g)\int_{[0,1]^2} \frac{u^{\lambda+\varepsilon-\frac{1}{2}}v^{\lambda-\varepsilon-\frac{1}{2}}}{\sqrt{(1-\delta uv)^2-\tanh^2(2g)(u-\delta v)^2}}\,\rd u\rd v \\
&=\int_{[0,1]^2} \frac{u^{\lambda+\varepsilon-\frac{1}{2}}v^{\lambda-\varepsilon-\frac{1}{2}}}{\sqrt{(1-\delta uv)^2+\sinh^2(2g)(1-u^2)(1-v^2)}}\,\rd u\rd v. 
\end{align*}
Since $\bigl|\frac{u-\delta v}{1-\delta uv}\bigr|<1$ holds for $u,v\in (0,1)$, we have 
\begin{align*}
R_1^\delta(\lambda;g,\varepsilon)
&=\sech(2g)\int_{[0,1]^2} \frac{u^{\lambda+\varepsilon-\frac{1}{2}}v^{\lambda-\varepsilon-\frac{1}{2}}}{1-\delta uv}
\biggl(1-\tanh^2(2g)\biggl(\frac{u-\delta v}{1-\delta uv}\biggr)^2\biggr)^{-1/2}\,\rd u\rd v \\
&=\sech(2g)\sum_{n=0}^\infty \frac{\bigl(\frac{1}{2}\bigr)_n}{n!}\tanh^{2n}(2g)\int_{[0,1]^2} \frac{(u-\delta v)^{2n}}{(1-\delta uv)^{2n+1}}
u^{\lambda+\varepsilon-\frac{1}{2}}v^{\lambda-\varepsilon-\frac{1}{2}}\,\rd u\rd v. 
\end{align*}
Now, for $\delta\in\{\pm\}$, $n\in\BZ_{\ge 0}$, $\lambda,\varepsilon\in\BC$ with $\Re\lambda-|\Re\varepsilon|+\frac{1}{2}>0$, we put 
\[ J_n^\delta(\lambda,\varepsilon):=\int_{[0,1]^2} \frac{(u-\delta v)^n}{(1-\delta uv)^{n+1}}u^{\lambda+\varepsilon-\frac{1}{2}}v^{\lambda-\varepsilon-\frac{1}{2}}\,\rd u\rd v, \]
so that 
\[ R_1^\delta(\lambda;g,\varepsilon)=\sech(2g)\sum_{n=0}^\infty \frac{\bigl(\frac{1}{2}\bigr)_n}{n!}\tanh^{2n}(2g)J_{2n}^\delta(\lambda,\varepsilon) \]
holds. 

\begin{remark}
\begin{enumerate}
\item $\{J_{n}^\delta(\lambda,\varepsilon)\}_n$ has other generating functions 
\begin{align}
U_\delta(t;\lambda,\varepsilon):\hspace{-3pt}&=\int_{[0,1]^2} \frac{u^{\lambda+\varepsilon-\frac{1}{2}}v^{\lambda-\varepsilon-\frac{1}{2}}}{1-\delta uv-t(u-\delta v)}\,\rd u\rd v 
=\sum_{n=0}^\infty J_n^\delta(\lambda,\varepsilon)t^n, \notag\\
V_\delta(t;\lambda,\varepsilon):\hspace{-3pt}&=\frac{1}{2}\bigl(U_\delta(\sqrt{t};\lambda,\varepsilon)+U_\delta(-\sqrt{t};\lambda,\varepsilon)\bigr) \notag\\
&=\int_{[0,1]^2} \frac{(1-\delta uv)u^{\lambda+\varepsilon-\frac{1}{2}}v^{\lambda-\varepsilon-\frac{1}{2}}}{(1-\delta uv)^2-t(u-\delta v)^2}\,\rd u\rd v
=\sum_{n=0}^\infty J_{2n}^\delta(\lambda,\varepsilon)t^n. \label{def_Vdelta}
\end{align}
We have $U_+(t;\lambda,0)=U_+(-t;\lambda,0)=V_+(t^2;\lambda,0)$, since $J_{2n+1}^+(\lambda,0)=0$ holds. 
\item In \cite{KW4}, ``Ap\'ery-like numbers'' $J_k(n)$ $(k\in\BZ_{\ge 2}, n\in\BZ_{\ge 0})$ are introduced by using the generating function 
\[ R_{k,1}(\kappa)=\frac{k}{2}\sum_{n=0}^\infty \frac{\bigl(\frac{1}{2}\bigr)_n}{n!}J_k(n)(-\kappa^2)^n. \]
Comparing this with (\ref{formula_Taylor_R}) in Remark \ref{rem_Taylor_R}, we have 
\begin{align*}
R_1^+(\lambda;g,0)&=\sum_{k=0}^\infty \frac{2(-1)^k}{k+2}R_{k+2,1}(\sinh(2g))\lambda^k \\
&=\sum_{k=0}^\infty\sum_{n=0}^\infty (-1)^{k+n}\frac{\bigl(\frac{1}{2}\bigr)_n}{n!}J_{k+2}(n)\sinh^{2n}(2g)\lambda^k, 
\end{align*}
namely, we have 
\[ \int_{[0,1]^2} \frac{(1-u^2)^n(1-v^2)^n}{(1-uv)^{2n+1}}u^{\lambda-\frac{1}{2}}v^{\lambda-\frac{1}{2}}\,\rd u\rd v=\sum_{k=0}^\infty (-1)^kJ_{k+2}(n)\lambda^k. \]
This has another generating function 
\begin{align*}
\widetilde{V}_+(s;\lambda,0):=\frac{1}{1-s}V_+\biggl(\frac{s}{s-1};\lambda,0\biggr)
&=\int_{[0,1]^2} \frac{(1-uv)u^{\lambda-\frac{1}{2}}v^{\lambda-\frac{1}{2}}}{(1-uv)^2-s(1-u^2)(1-v^2)}\,\rd u\rd v \\
&=\sum_{n=0}^\infty\sum_{k=0}^\infty (-1)^kJ_{k+2}(n)\lambda^ks^n. 
\end{align*}
\end{enumerate}
\end{remark}

First we prove the following. 
\begin{proposition}\label{prop_Jpm}
For $n\in\BZ_{\ge 0}$, $\delta\in\{\pm\}$, $\lambda,\varepsilon\in\BC$ with $\lambda\pm\varepsilon\notin-\bigl(\BZ_{\ge 0}+\frac{1}{2}\bigr)$, we have 
\[ J_n^\delta(\lambda,\varepsilon)
=\frac{1}{2}\sum_{k=0}^\infty\frac{\bigl(\lambda-\frac{n-1}{2}\bigr)_n}{\bigl(\lambda+k-\frac{n-1}{2}\bigr)_{n+1}}
\biggl(\frac{1}{\lambda-\varepsilon+k+\frac{1}{2}}+\frac{(-\delta)^n}{\lambda+\varepsilon+k+\frac{1}{2}}\biggr)\delta^k. \]
\end{proposition}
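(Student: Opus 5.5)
The plan is to reduce both sides to explicit meromorphic functions of $\lambda$ and compare them by partial fractions, in the spirit of Proposition \ref{prop_Jflat} and Lemma \ref{lem_partfrac}. I will first assume $\Re\lambda-|\Re\varepsilon|+\frac12>0$, where $J_n^\delta$ is defined by the integral. For the left-hand side, expand
\[ (1-\delta uv)^{-(n+1)}=\sum_{k\ge 0}\binom{n+k}{k}\delta^k(uv)^k, \qquad (u-\delta v)^n=\sum_{j=0}^n\binom{n}{j}(-\delta)^{n-j}u^jv^{n-j}, \]
and integrate term by term over $[0,1]^2$. This is justified by monotone convergence for $\delta=+$ and by dominated convergence, via the $\delta=+$ bound, for $\delta=-$. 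It gives
\[ J_n^\delta(\lambda,\varepsilon)=\sum_{k\ge 0}\sum_{j=0}^n\binom{n+k}{k}\binom{n}{j}\frac{\delta^k(-\delta)^{n-j}}{\bigl(\lambda+\varepsilon+k+j+\frac12\bigr)\bigl(\lambda-\varepsilon+k+n-j+\frac12\bigr)}. \]
In this form the only possible poles in $\lambda$ lie at $\lambda=\mp\varepsilon-\frac12-N$ with $N\in\BZ_{\ge 0}$.

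Next I will rewrite the right-hand side. For each $k$, the factor $\bigl(\lambda-\frac{n-1}{2}\bigr)_n/\bigl(\lambda+k-\frac{n-1}{2}\bigr)_{n+1}$ is a rational function of $\lambda$ of degree $-1$. Its partial fraction decomposition has poles at $\lambda=\frac{n-1}{2}-k-i$ for $0\le i\le n$, and the coefficients are obtained exactly as in Lemma \ref{lem_partfrac} by taking limits. These poles are independent of $\varepsilon$, so they must cancel after the summation over $k$. I expect this cancellation to follow by a telescoping argument: sum over $k\le N$, shift indices, and bound the boundary terms by $O(N^{-1})$, as in the proof of Theorem \ref{thm_Jflat}. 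The point is that the residue of the $k$-th term at $\lambda=\frac{n-1}{2}-K$ cancels against contributions from the other $k'$ with the same value of $k'+i$, up to the $\varepsilon$-dependent factor.

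Once the spurious poles are removed, I will compare both sides as meromorphic functions of $\lambda$. I will check three things. First, the residues at $\lambda=\varepsilon-\frac12-N$ and at $\lambda=-\varepsilon-\frac12-N$ must agree; on the left-hand side these are finite sums of products of binomial coefficients, and on the right-hand side they come from the factors $1/\bigl(\lambda\mp\varepsilon+k+\frac12\bigr)$ with $k\le N$. Second, both sides must tend to $0$ as $\Re\lambda\to+\infty$. Third, both series must converge locally uniformly, so that the difference of the two sides is entire and vanishes, by the same Liouville-type argument as the $\varepsilon\to\infty$ step in Theorem \ref{thm_Jflat}. An alternative to the residue comparison is to match the coefficients of $\delta^K$ after reindexing $K=k+j$ or $K=k+n-j$. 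The residue identity should reduce to a Chu--Vandermonde-type sum of the form $\sum_{j}\binom{n}{j}\binom{n+N-j}{n}(\cdots)$, which I will evaluate with the Pochhammer identities already used in Lemma \ref{lem_partfrac}.

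Finally, the range of validity extends to all $\lambda$ with $\lambda\pm\varepsilon\notin-\bigl(\BZ_{\ge 0}+\frac12\bigr)$ by analytic continuation. This requires checking that the right-hand side converges: it does absolutely for $\delta=+$, since its terms are $O(k^{-2})$, and for $\delta=-$ it converges as well. The main obstacle is the second step, namely showing that the spurious poles at $\lambda\in\frac{n-1}{2}-\BZ_{\ge 0}$ cancel and computing the residues at $\lambda=\pm\varepsilon-\frac12-N$ cleanly. My first attempt will be the truncation at $k\le N$ with explicit bookkeeping of the leftover terms; if that becomes unwieldy, I will fall back on coefficient matching in $\delta^K$.
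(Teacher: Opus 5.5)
Your proposal has two genuine gaps. The first is the left-hand expansion for $\delta=-$. Dominated convergence ``via the $\delta=+$ bound'' does not work. For $\delta=+$ the corner singularity is tamed by $|u-v|\le 1-uv$, so the partial sums are dominated by $(1-uv)^{-1}|u^{\lambda+\varepsilon-\frac12}v^{\lambda-\varepsilon-\frac12}|$. For $\delta=-$, however, the majorant $\sum_k\binom{n+k}{k}(uv)^k(u+v)^n=(u+v)^n(1-uv)^{-n-1}$ is not integrable near $(1,1)$ once $n\ge 1$. In fact your series itself diverges: for $\delta=-$ all $j$-terms are positive and the inner sum is $\asymp 2^nk^{-2}$, so the $k$-th term is $\asymp(-1)^k2^nk^{n-2}/n!$, which does not tend to $0$ for $n\ge 2$. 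Even for $\delta=+$ the series converges only as an iterated sum, because the $j$-sum is an $n$-th difference of size $O(k^{-n-2})$; monotone convergence is also unavailable for odd $n$, since $(u-v)^n$ changes sign. So for $\delta=-$ you have not established the pole structure or the residues of $J_n^-$ in $\lambda$.

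The second gap is the concluding step. Matching residues plus decay as $\Re\lambda\to+\infty$ only shows that the difference is an entire function tending to $0$ in the right half-plane, and such a function need not vanish ($e^{-\lambda}$ is an example). A Liouville argument needs a bound on all of $\BC$, in particular in the left half-plane. There both sides have infinitely many poles and the integral defining $J_n^\delta$ is unavailable. The $\varepsilon\to\infty$ step in Theorem \ref{thm_Jflat} works only because there one compares rational functions.

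Your fallback of matching coefficients of $\delta^K$ is not meaningful either, since $\delta=\pm 1$ and the identity genuinely uses $\delta^2=1$. If $\delta$ is replaced by a formal variable $t$ on both sides, the identity fails: for $n=1$, $\varepsilon=0$, $\lambda=\frac12$ the constant terms are $\frac12$ and $\frac13$.

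The paper sidesteps all of this. It splits $[0,1]^2$ along the diagonal and uses $(v-\delta u)^n=(-\delta)^n(u-\delta v)^n$, which is exactly where $\delta^2=1$ enters. It then substitutes $v=uw$ and, for fixed $w<1$, expands the inner $u$-integral as a power series in $\delta w$. The coefficients are evaluated by the Pfaff--Saalsch\"utz formula as $\frac12\bigl(\lambda-\frac{n-1}{2}\bigr)_n/\bigl(\lambda+k-\frac{n-1}{2}\bigr)_{n+1}=O(k^{-n-1})$. A termwise $w$-integration then gives the right-hand side directly, with no divergent corner expansion and no uniqueness argument.
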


\begin{proof}
By separating the domain of integration into $\{0\le v\le u\le 1\}$ and $\{0\le u\le v\le 1\}$, exchanging $u$ and $v$ on the 2nd domain, and putting $v=uw$, we have 
\begin{align*}
&J_n^\delta(\lambda,\varepsilon):=\int_{[0,1]^2}\frac{(u-\delta v)^n}{(1-\delta uv)^{n+1}}u^{\lambda+\varepsilon-\frac{1}{2}}v^{\lambda-\varepsilon-\frac{1}{2}}\,\rd u\rd v \\
&=\int_{0\le v\le u\le 1}\frac{(u-\delta v)^n}{(1-\delta uv)^{n+1}}(u^{\lambda+\varepsilon-\frac{1}{2}}v^{\lambda-\varepsilon-\frac{1}{2}}
+(-\delta)^n v^{\lambda+\varepsilon-\frac{1}{2}}u^{\lambda-\varepsilon-\frac{1}{2}})\,\rd u\rd v \\
&=\int_{[0,1]^2}\frac{(u-\delta uw)^n}{(1-\delta u^2w)^{n+1}}(u^{\lambda+\varepsilon-\frac{1}{2}}(uw)^{\lambda-\varepsilon-\frac{1}{2}}
+(-\delta)^n(uw)^{\lambda+\varepsilon-\frac{1}{2}}u^{\lambda-\varepsilon-\frac{1}{2}})u\,\rd u\rd w \\
&=\int_0^1\biggl(\int_0^1 \frac{(1-\delta w)^n}{(1-\delta u^2w)^{n+1}}u^{2\lambda+n}\,\rd u\biggr)(w^{\lambda-\varepsilon-\frac{1}{2}}+(-\delta)^nw^{\lambda+\varepsilon-\frac{1}{2}})\,\rd w. 
\end{align*}
Then the inner integral is computed as 
\begin{align*}
&\int_0^1 \frac{(1-\delta w)^n}{(1-\delta u^2w)^{n+1}}u^{2\lambda+n}\,\rd u \\
&=\int_0^1 \sum_{k=0}^\infty \frac{(n+1)_k}{k!}(\delta u^2w)^k \sum_{l=0}^n\binom{n}{l}(-\delta w)^{n-l}u^{2\lambda+n}\,\rd u \\
&=\int_0^1 \sum_{l=0}^n\sum_{k=0}^\infty \frac{(k+1)_n}{n!}\binom{n}{l} (-1)^{n-l}(\delta w)^{k+n-l}u^{2\lambda+2k+n}\,\rd u \\
&=\int_0^1 \sum_{l=0}^n\sum_{k=n-l}^\infty \frac{(k+l-n+1)_n}{n!}\binom{n}{l} (-1)^{n+l}(\delta w)^k u^{2\lambda+2k+2l-n}\,\rd u \\
&=\sum_{k=0}^\infty\sum_{l=\max\{n-k,0\}}^n \frac{(k+l-n+1)_n}{n!}\binom{n}{l} \frac{(-1)^{n+l}(\delta w)^k}{2\lambda+2k+2l-n+1} \\
&=\frac{(-1)^n}{2\cdot n!}\sum_{k=0}^\infty\sum_{l=0}^n \binom{n}{l} (-1)^l\frac{(k-n+1+l)_n}{\lambda+k-\frac{n-1}{2}+l}(\delta w)^k, 
\end{align*}
where we have used $\frac{(n+1)_k}{k!}=\frac{(n+k)!}{k!n!}=\frac{(k+1)_n}{n!}$ at the 2nd equality, and replaced $k$ with $k+l-n$ at the 3rd equality. We note that $0\le l\le n-k-1$ implies $(k+l-n+1)_n=0$, 
and hence we can extend the range of $l$ from $\max\{n-k,0\}\le l\le n$ to $0\le l\le n$ at the last equality. 
Next, for general $\alpha,\beta\in\BC$, we consider the following equality, 
\begin{align*}
&\sum_{l=0}^n \binom{n}{l} (-1)^l\frac{(\alpha+l)_n}{\beta+l}
=\frac{(\alpha)_n}{\beta}\sum_{l=0}^n \frac{(-n)_l(\alpha+n)_l(\beta)_l}{l!(\alpha)_l(\beta+1)_l}
=\frac{(\alpha)_n}{\beta}{}_3F_2\biggl(\begin{matrix} -n,\alpha+n,\beta \\ \alpha,\beta+1 \end{matrix};1\biggr) \\
&=\frac{(\alpha)_n}{\beta}\frac{(\beta-\alpha-n+1)_n(1)_n}{(\beta+1)_n(1-\alpha-n)_n}
=(-1)^n\frac{(\beta-\alpha-n+1)_n n!}{(\beta)_{n+1}}, 
\end{align*}
which follows from the Pfaff--Saalsch\"utz formula \cite[Theorem 2.2.6]{AAR}, 
\[ {}_3F_2\biggl(\begin{matrix} -n,a,b \\ 1+a+b-c-n,c \end{matrix};1\biggr)=\frac{(c-a)_n(c-b)_n}{(c)_n(c-a-b)_n}. \] 
Applying this formula with $\alpha=k-n+1$, $\beta=\lambda+k-\frac{n-1}{2}$, we get 
\begin{align*}
\int_0^1 \frac{(1-\delta w)^n}{(1-\delta u^2w)^{n+1}}u^{2\lambda+n}\,\rd u 
=\frac{1}{2}\sum_{k=0}^\infty\frac{\bigl(\lambda-\frac{n-1}{2}\bigr)_n}{\bigl(\lambda+k-\frac{n-1}{2}\bigr)_{n+1}}(\delta w)^k. 
\end{align*}
Therefore we get 
\begin{align*}
&J_n^\delta(\lambda,\varepsilon)=\int_0^1\biggl(\int_0^1 \frac{(1-\delta w)^n}{(1-\delta u^2w)^{n+1}}u^{2\lambda+n}\,\rd u\biggr)
(w^{\lambda-\varepsilon-\frac{1}{2}}+(-\delta)^n w^{\lambda+\varepsilon-\frac{1}{2}})\,\rd w \\
&=\frac{1}{2}\int_0^1\sum_{k=0}^\infty\frac{\bigl(\lambda-\frac{n-1}{2}\bigr)_n}{\bigl(\lambda+k-\frac{n-1}{2}\bigr)_{n+1}}(\delta w)^k
(w^{\lambda-\varepsilon-\frac{1}{2}}+(-\delta)^n w^{\lambda+\varepsilon-\frac{1}{2}})\,\rd w \\
&=\frac{1}{2}\sum_{k=0}^\infty\frac{\bigl(\lambda-\frac{n-1}{2}\bigr)_n}{\bigl(\lambda+k-\frac{n-1}{2}\bigr)_{n+1}}
\biggl(\frac{1}{\lambda-\varepsilon+k+\frac{1}{2}}+\frac{(-\delta)^n}{\lambda+\varepsilon+k+\frac{1}{2}}\biggr)\delta^k. \qedhere 
\end{align*}
\end{proof}

$J_n^\delta(\lambda,\varepsilon)$ satisfies the following recurrence relation. 
\begin{proposition}\label{prop_recurrence_Jpm}
For $n\in\BZ_{\ge 2}$, $\delta\in\{\pm\}$, $\lambda,\varepsilon\in\BC$ with $\lambda\pm\varepsilon\notin-\bigl(\BZ_{\ge 0}+\frac{1}{2}\bigr)$, $\varepsilon\ne\pm\frac{n}{2}$, we have 
\begin{align*}
J_n^\delta(\lambda,\varepsilon)
=\frac{\bigl(\lambda+\frac{n-1}{2}\bigr)\bigl(\lambda-\frac{n-1}{2}\bigr)}{\bigl(\varepsilon+\frac{n}{2}\bigr)\bigl(\varepsilon-\frac{n}{2}\bigr)}J_{n-2}^\delta(\lambda,\varepsilon)
-\begin{cases} \displaystyle \frac{\lambda}{(n-1)\bigl(\varepsilon+\frac{n}{2}\bigr)\bigl(\varepsilon-\frac{n}{2}\bigr)} & (\delta=+,\; n\colon\text{even}), \\
\displaystyle \frac{\varepsilon}{n\bigl(\varepsilon+\frac{n}{2}\bigr)\bigl(\varepsilon-\frac{n}{2}\bigr)} & (\delta=+,\; n\colon\text{odd}), \\
\displaystyle \frac{1}{2\bigl(\varepsilon+\frac{n}{2}\bigr)\bigl(\varepsilon-\frac{n}{2}\bigr)} & (\delta=-). \end{cases}
\end{align*}
\end{proposition}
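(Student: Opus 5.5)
The plan is to work directly from the series representation of Proposition \ref{prop_Jpm} and compare the $k$-th summands of $J_n^\delta$ and $J_{n-2}^\delta$.

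\textbf{Rewriting the summands.} Put $x:=\lambda+k+\frac12$ and $P_n(x):=\prod_{j=0}^{n}\bigl(x-\frac{n}{2}+j\bigr)$. Then $\bigl(\lambda+k-\frac{n-1}{2}\bigr)_{n+1}=P_n(x)$, and $P_n(x)=\bigl(x-\frac n2\bigr)\bigl(x+\frac n2\bigr)P_{n-2}(x)$. For the numerators,
\[ \Bigl(\lambda-\tfrac{n-1}{2}\Bigr)_n=\Bigl(\lambda+\tfrac{n-1}{2}\Bigr)\Bigl(\lambda-\tfrac{n-1}{2}\Bigr)\Bigl(\lambda-\tfrac{n-3}{2}\Bigr)_{n-2}. \]
This already produces the factor $\bigl(\lambda+\frac{n-1}{2}\bigr)\bigl(\lambda-\frac{n-1}{2}\bigr)$ appearing in front of $J_{n-2}^\delta$ in the statement. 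Also $(-\delta)^n=(-\delta)^{n-2}$, so the two summands differ only by the rational factor $1/(x^2-\frac{n^2}{4})$.

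\textbf{Splitting off $J_{n-2}^\delta$.} To absorb this factor against $1/(x\mp\varepsilon)$ I will use the elementary identity
\[ \frac{\varepsilon^2-\frac{n^2}{4}}{\bigl(x^2-\frac{n^2}{4}\bigr)(x\mp\varepsilon)}=\frac{1}{x\mp\varepsilon}-\frac{x\pm\varepsilon}{x^2-\frac{n^2}{4}}. \]
Multiplying $J_n^\delta$ by $\bigl(\varepsilon+\frac n2\bigr)\bigl(\varepsilon-\frac n2\bigr)$ and applying this identity termwise splits it into two pieces. The first piece is exactly $\bigl(\lambda+\frac{n-1}{2}\bigr)\bigl(\lambda-\frac{n-1}{2}\bigr)J_{n-2}^\delta(\lambda,\varepsilon)$. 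The second piece is the remainder
\[ -\frac{\bigl(\lambda-\frac{n-1}{2}\bigr)_n}{2}\sum_{k=0}^\infty\frac{(x+\varepsilon)+(-\delta)^n(x-\varepsilon)}{P_n(x)}\,\delta^k . \]
The statement then reduces to evaluating this $\varepsilon$-independent-looking sum in closed form. Absolute convergence is clear since $\deg P_n\ge 3$ for $n\ge2$.

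\textbf{Evaluating the remainder (main obstacle).} Here I split into the three cases of the statement.
\begin{itemize}
\item For $\delta=+$, $n$ even, the numerator is $2x$.
\item For $\delta=+$, $n$ odd, the numerator is $2\varepsilon$.
\item For $\delta=-$, the numerator is $(1+(-1)^n)x+(1-(-1)^n)\varepsilon$, summed with weight $(-1)^k$.
\end{itemize}
In each case the sum should telescope after a partial fraction decomposition of $1/P_n$ or $x/P_n$. For $\delta=+$ I will use
\[ \frac{1}{P_n(x)}=\frac{1}{n}\Bigl(\frac{1}{P_{n-1}(x-\frac12)}-\frac{1}{P_{n-1}(x+\frac12)}\Bigr), \]
and similarly express $x/P_n(x)$ as a difference, using the symmetry $P_n(-x)=(-1)^{n+1}P_n(x)$. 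Only the $k=0$ boundary term survives, and it should collapse to $\lambda/(n-1)$, respectively $\varepsilon/n$, after cancelling $\bigl(\lambda-\frac{n-1}{2}\bigr)_n$. For $\delta=-$ I expect to pair consecutive $k$ to create a telescoping alternating sum, giving the boundary value $\frac12$.

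The delicate point is getting these boundary constants exactly right. If direct telescoping becomes messy, my fallback is to verify the three constants via the integral definition of $J_n^\delta$. I would write $(u-\delta v)^n/(1-\delta uv)^{n+1}$ as a derivative in $u$ of a lower-order expression and integrate by parts; the boundary terms at $u=1$ give explicit elementary integrals in $v$.
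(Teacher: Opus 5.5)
Your reduction is the same as the paper's proof. Comparing the $k$-th summands from Proposition~\ref{prop_Jpm}, the factor $\varepsilon^2-x^2=(\varepsilon^2-\frac{n^2}{4})-(x^2-\frac{n^2}{4})$ cancels the denominators $x\mp\varepsilon$. Your remainder $-\frac12\bigl(\lambda-\frac{n-1}{2}\bigr)_n\sum_k\bigl((x+\varepsilon)+(-\delta)^n(x-\varepsilon)\bigr)\delta^k/P_n(x)$ is correct.

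The genuine error is in your case split for $\delta=-$. There $-\delta=+1$, so $(-\delta)^n=1$ for every $n$, and the numerator is $2x$ regardless of parity. Your expression $(1+(-1)^n)x+(1-(-1)^n)\varepsilon$ inserts the value of $(-\delta)^n$ for $\delta=+$ instead. For odd $n$ it would give the remainder $-\varepsilon\bigl(\lambda-\frac{n-1}{2}\bigr)_n\sum_k(-1)^k/P_n(x)$. This expression is odd in $\varepsilon$ and vanishes at $\varepsilon=0$. It is also an alternating sum that does not telescope (it retains a digamma-type term), so it cannot be identically $-\frac12$. A quick sanity check catches this: $J_n^-$ and $J_{n-2}^-$ are even in $\varepsilon$ (swap $u$ and $v$ in the integral), so the correction term must be even in $\varepsilon$ as well.

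Once the numerator is fixed, the cases $\delta=+$ with $n$ even and $\delta=-$ (all $n$) are handled by a single identity, and no pairing of consecutive terms is needed. From $(x+\frac n2)(x+\frac n2-1)-(x-\frac n2)(x-\frac n2+1)=2(n-1)x$ you get
\[ \frac{2(n-1)x}{P_n(x)}=\frac{1}{P_{n-2}(x-1)}-\frac{1}{P_{n-2}(x+1)}. \]
This is a shift by $2$ in $k$. Since $\delta^{k+2}=\delta^k$, it telescopes for both signs, leaving the boundary terms at $k=0$ and $k=1$:
\begin{align*}
\frac{\bigl(\lambda-\frac{n-1}{2}\bigr)_n}{2}\sum_{k=0}^\infty\frac{2x\,\delta^k}{P_n(x)}
&=\frac{\bigl(\lambda-\frac{n-1}{2}\bigr)_n}{2(n-1)}\biggl(\frac{1}{\bigl(\lambda-\frac{n-1}{2}\bigr)_{n-1}}+\frac{\delta}{\bigl(\lambda-\frac{n-3}{2}\bigr)_{n-1}}\biggr)\\
&=\frac{\bigl(\lambda+\frac{n-1}{2}\bigr)+\delta\bigl(\lambda-\frac{n-1}{2}\bigr)}{2(n-1)}.
\end{align*}
This equals $\frac{\lambda}{n-1}$ for $\delta=+$ and $\frac12$ for $\delta=-$, and it is exactly the paper's argument. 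Your shift-by-one identity for $1/P_n$ then settles the remaining case $\delta=+$, $n$ odd, just as the paper does.
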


\begin{proof}
By Proposition \ref{prop_Jpm}, we have 
\begin{align*}
&J_n^\delta(\lambda,\varepsilon)-\frac{\bigl(\lambda+\frac{n-1}{2}\bigr)\bigl(\lambda-\frac{n-1}{2}\bigr)}{\bigl(\varepsilon+\frac{n}{2}\bigr)\bigl(\varepsilon-\frac{n}{2}\bigr)}
J_{n-2}^\delta(\lambda,\varepsilon) \\
&=\sum_{k=0}^\infty\Biggl(\frac{\bigl(\lambda-\frac{n-1}{2}\bigr)_n}{\bigl(\lambda+k-\frac{n-1}{2}\bigr)_{n+1}}
-\frac{\bigl(\lambda+\frac{n-1}{2}\bigr)\bigl(\lambda-\frac{n-1}{2}\bigr)}{\bigl(\varepsilon+\frac{n}{2}\bigr)\bigl(\varepsilon-\frac{n}{2}\bigr)}
\frac{\bigl(\lambda-\frac{n-3}{2}\bigr)_{n-2}}{\bigl(\lambda+k-\frac{n-3}{2}\bigr)_{n-1}}\Biggr) \\*
&\hspace{228pt}\times\frac{1}{2}\biggl(\frac{1}{\lambda-\varepsilon+k+\frac{1}{2}}+\frac{(-\delta)^n}{\lambda+\varepsilon+k+\frac{1}{2}}\biggr)\delta^k \\
&=\sum_{k=0}^\infty \frac{\bigl(\lambda-\frac{n-1}{2}\bigr)_n
\bigl(\bigl(\varepsilon+\frac{n}{2}\bigr)\bigl(\varepsilon-\frac{n}{2}\bigr)-\bigl(\lambda+k-\frac{n-1}{2}\bigr)\bigl(\lambda+k+\frac{n+1}{2}\bigr)\bigr)}
{\bigl(\lambda+k-\frac{n-1}{2}\bigr)_{n+1}\bigl(\varepsilon+\frac{n}{2}\bigr)\bigl(\varepsilon-\frac{n}{2}\bigr)} \\*
&\hspace{228pt}\times\frac{1}{2}\biggl(\frac{1}{\lambda-\varepsilon+k+\frac{1}{2}}+\frac{(-\delta)^n}{\lambda+\varepsilon+k+\frac{1}{2}}\biggr)\delta^k \\
&=-\sum_{k=0}^\infty \frac{\bigl(\lambda\hspace{-1pt}-\hspace{-1pt}\frac{n-1}{2}\bigr)_n
\bigl(\lambda\hspace{-1pt}+\hspace{-1pt}\varepsilon\hspace{-1pt}+\hspace{-1pt}k\hspace{-1pt}+\hspace{-1pt}\frac{1}{2}\bigr)
\bigl(\lambda\hspace{-1pt}-\hspace{-1pt}\varepsilon\hspace{-1pt}+\hspace{-1pt}k\hspace{-1pt}+\hspace{-1pt}\frac{1}{2}\bigr)}
{\bigl(\lambda+k-\frac{n-1}{2}\bigr)_{n+1}\bigl(\varepsilon+\frac{n}{2}\bigr)\bigl(\varepsilon-\frac{n}{2}\bigr)}
\frac{\bigl(\lambda\hspace{-1pt}+\hspace{-1pt}\varepsilon\hspace{-1pt}+\hspace{-1pt}k\hspace{-1pt}+\hspace{-1pt}\frac{1}{2}\bigr)
\hspace{-1pt}+\hspace{-1pt}(-\delta)^n\bigl(\lambda\hspace{-1pt}-\hspace{-1pt}\varepsilon\hspace{-1pt}+\hspace{-1pt}k\hspace{-1pt}+\hspace{-1pt}\frac{1}{2}\bigr)}
{2\bigl(\lambda+\varepsilon+k+\frac{1}{2}\bigr)\bigl(\lambda-\varepsilon+k+\frac{1}{2}\bigr)}\delta^k \\
&=-\frac{\bigl(\lambda-\frac{n-1}{2}\bigr)_n}{\bigl(\varepsilon+\frac{n}{2}\bigr)\bigl(\varepsilon-\frac{n}{2}\bigr)}\sum_{k=0}^\infty \frac{\delta^k}{\bigl(\lambda+k-\frac{n-1}{2}\bigr)_{n+1}} \\*
&\eqspace{}\times \begin{cases} \displaystyle \frac{\bigl(\lambda+k+\frac{n+1}{2}\bigr)\bigl(\lambda+k+\frac{n-1}{2}\bigr)
-\bigl(\lambda+k-\frac{n-1}{2}\bigr)\bigl(\lambda+k-\frac{n-3}{2}\bigr)}{2(n-1)} & ((-\delta)^n=+1) \\ 
\displaystyle \frac{\varepsilon\bigl(\bigl(\lambda+k+\frac{n+1}{2}\bigr)-\bigl(\lambda+k-\frac{n-1}{2}\bigr)\bigr)}{n} & ((-\delta)^n=-1) \end{cases} \\
&=-\frac{\bigl(\lambda-\frac{n-1}{2}\bigr)_n}{\bigl(\varepsilon+\frac{n}{2}\bigr)\bigl(\varepsilon-\frac{n}{2}\bigr)} \\*
&\eqspace{}\times \begin{cases} \displaystyle 
\frac{1}{2(n-1)}\sum_{k=0}^\infty \Biggl(\frac{\delta^k}{\bigl(\lambda+k-\frac{n-1}{2}\bigr)_{n-1}}-\frac{\delta^k}{\bigl(\lambda+k-\frac{n-5}{2}\bigr)_{n-1}}\Biggr) & ((-\delta)^n=+1) \\ 
\displaystyle \frac{\varepsilon}{n}\sum_{k=0}^\infty \Biggl(\frac{1}{\bigl(\lambda+k-\frac{n-1}{2}\bigr)_n}-\frac{1}{\bigl(\lambda+k-\frac{n-3}{2}\bigr)_n}\Biggr) 
& ((-\delta)^n=-1) \end{cases} \\
&=\begin{cases} \displaystyle \frac{-\bigl(\lambda-\frac{n-1}{2}\bigr)_n}{2(n-1)\bigl(\varepsilon+\frac{n}{2}\bigr)\bigl(\varepsilon-\frac{n}{2}\bigr)}
\Biggl(\frac{1}{\bigl(\lambda-\frac{n-1}{2}\bigr)_{n-1}}+\frac{\delta}{\bigl(\lambda-\frac{n-3}{2}\bigr)_{n-1}}\Biggr) & ((-\delta)^n=+1) \\
\displaystyle \frac{-\varepsilon\bigl(\lambda-\frac{n-1}{2}\bigr)_n}{n\bigl(\varepsilon+\frac{n}{2}\bigr)\bigl(\varepsilon-\frac{n}{2}\bigr)}
\frac{1}{\bigl(\lambda-\frac{n-1}{2}\bigr)_n} & ((-\delta)^n=+1) \end{cases} \\
&=\begin{cases} \displaystyle -\frac{\bigl(\lambda+\frac{n-1}{2}\bigr)+\delta\bigl(\lambda-\frac{n-1}{2}\bigr)}
{2(n-1)\bigl(\varepsilon+\frac{n}{2}\bigr)\bigl(\varepsilon-\frac{n}{2}\bigr)} & ((-\delta)^n=+1) \\
\displaystyle -\frac{\varepsilon}{n\bigl(\varepsilon+\frac{n}{2}\bigr)\bigl(\varepsilon+\frac{n}{2}\bigr)} & ((-\delta)^n=-1) \end{cases} \\
&=\begin{cases} \displaystyle -\frac{\lambda}{(n-1)\bigl(\varepsilon+\frac{n}{2}\bigr)\bigl(\varepsilon-\frac{n}{2}\bigr)} & (\delta=+,\; n\colon\text{even}) \\
\displaystyle -\frac{\varepsilon}{n\bigl(\varepsilon+\frac{n}{2}\bigr)\bigl(\varepsilon-\frac{n}{2}\bigr)} & (\delta=+,\; n\colon\text{odd}) \\
\displaystyle -\frac{1}{2\bigl(\varepsilon+\frac{n}{2}\bigr)\bigl(\varepsilon-\frac{n}{2}\bigr)} & (\delta=-). \end{cases} \qedhere
\end{align*}
\end{proof}

Now we prove the following. 
\begin{theorem}\label{thm_Jpm}
For $n\in\BZ_{\ge 0}$, $\delta\in\{\pm\}$, $\lambda,\varepsilon\in\BC$ with $\lambda\pm\varepsilon\notin-\bigl(\BZ_{\ge 0}+\frac{1}{2}\bigr)$, 
$\varepsilon\notin\bigl\{-\frac{n}{2},-\frac{n}{2}+1,\ldots,\frac{n}{2}\bigr\}$, we have 
\[ J_n^\delta(\lambda,\varepsilon)
=\frac{A_n^\delta(\lambda,\varepsilon)}{2}\sum_{k=0}^\infty\biggl(\frac{1}{\lambda-\varepsilon+k+\frac{1}{2}}
-\frac{\delta^n}{\lambda+\varepsilon+k+\frac{1}{2}}\biggr)\delta^k+B_n^\delta(\lambda,\varepsilon), \]
where 
\[ A_n^\delta(\lambda,\varepsilon):=\frac{\bigl(\lambda-\frac{n-1}{2}\bigr)_n}{\bigl(\varepsilon-\frac{n}{2}\bigr)_{n+1}}
=\begin{cases}\displaystyle \frac{\bigl(\frac{1}{2}+\lambda\bigr)_{n/2}\bigl(\frac{1}{2}-\lambda\bigr)_{n/2}}{\varepsilon(1+\varepsilon)_{n/2}(1-\varepsilon)_{n/2}} & (n\colon\text{even}), \\
\displaystyle -\frac{\lambda(1+\lambda)_{\lfloor n/2\rfloor}(1-\lambda)_{\lfloor n/2\rfloor}}{\bigl(\frac{1}{2}+\varepsilon\bigr)_{\lceil n/2\rceil}\bigl(\frac{1}{2}-\varepsilon\bigr)_{\lceil n/2\rceil}} 
& (n\colon\text{odd}), \end{cases} \]
$B_0^\delta(\lambda,\varepsilon)=0$, and for $n\ge 1$, 
\begin{align*}
&B_n^\delta(\lambda,\varepsilon) \\
:\hspace{-3pt}&=\frac{1}{2}
\sum_{m=0}^{\lceil n/2\rceil-1}\frac{(-1)^{m+1}}{m!(n-m)!}\biggl(\frac{1}{\varepsilon-\frac{n}{2}+m}-\frac{(-\delta)^n}{\varepsilon+\frac{n}{2}-m}\biggr)
\sum_{k=0}^{n-2m-1}\frac{\bigl(\lambda-\frac{n-1}{2}\bigr)_n}{\lambda+k-\frac{n-1}{2}+m}\delta^k \\
&=\begin{cases} \displaystyle \sum_{l=0}^{n/2-1}\frac{\lambda\bigl(\lambda-\frac{n-1}{2}\bigr)_l\bigl(-\lambda-\frac{n-1}{2}\bigr)_l}
{(n-2l-1)\bigl(\varepsilon-\frac{n}{2}\bigr)_{l+1}\bigl(-\varepsilon-\frac{n}{2}\bigr)_{l+1}} & (\delta=+,\; n\colon\text{even}), \\ 
\displaystyle \sum_{l=0}^{\lceil n/2\rceil-1}\frac{\varepsilon\bigl(\lambda-\frac{n-1}{2}\bigr)_l\bigl(-\lambda-\frac{n-1}{2}\bigr)_l}
{(n-2l)\bigl(\varepsilon-\frac{n}{2}\bigr)_{l+1}\bigl(-\varepsilon-\frac{n}{2}\bigr)_{l+1}} & (\delta=+,\; n\colon\text{odd}), \\ 
\displaystyle \sum_{l=0}^{\lceil n/2\rceil-1}\frac{\bigl(\lambda-\frac{n-1}{2}\bigr)_l\bigl(-\lambda-\frac{n-1}{2}\bigr)_l}
{2\bigl(\varepsilon-\frac{n}{2}\bigr)_{l+1}\bigl(-\varepsilon-\frac{n}{2}\bigr)_{l+1}} & (\delta=-). 
\end{cases}
\end{align*}
\end{theorem}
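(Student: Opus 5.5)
Two routes are natural: run the recurrence of Proposition \ref{prop_recurrence_Jpm} from the base cases $n=0,1$, or repeat for Proposition \ref{prop_Jpm} what was done for $J_n^\flat$ in Theorem \ref{thm_Jflat}, namely a partial fraction decomposition followed by telescoping. I would use the second route to obtain the first expression for $B_n^\delta$, and the recurrence to obtain the closed product forms.

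\textbf{Partial fractions in $k$.} Starting from Proposition \ref{prop_Jpm}, I decompose
\[ \frac{\bigl(\lambda-\frac{n-1}{2}\bigr)_n}{\bigl(\lambda+k-\frac{n-1}{2}\bigr)_{n+1}\bigl(\lambda-\varepsilon+k+\frac12\bigr)} \]
in the variable $x=\lambda+k$. There are $n+2$ simple poles: $x=\frac{n-1}{2}-j$ for $j=0,\dots,n$, and $x=\varepsilon-\frac12$.

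\emph{Residue at $x=\varepsilon-\frac12$.} This residue is $\bigl(\lambda-\frac{n-1}{2}\bigr)_n/\bigl(\varepsilon-\frac n2\bigr)_{n+1}=A_n^\delta(\lambda,\varepsilon)$. Hence the term $\frac{A_n^\delta}{2}\sum_k \frac{\delta^k}{\lambda-\varepsilon+k+\frac12}$ appears.

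\emph{Term with $\lambda+\varepsilon$.} Replacing $\varepsilon\to-\varepsilon$ changes $A_n^\delta$ by the factor $(-1)^{n+1}$. Together with the factor $(-\delta)^n$, this yields the second term $-\delta^n\frac{A_n^\delta}{2}\sum_k\frac{\delta^k}{\lambda+\varepsilon+k+\frac12}$.

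\emph{Poles $1/(\lambda+k-\frac{n-1}{2}+j)$.} These residues have coefficients
\[ \frac{(-1)^j}{j!(n-j)!}\Bigl(\varepsilon-\tfrac n2+j\Bigr)^{-1}\Bigl(\lambda-\tfrac{n-1}{2}\Bigr)_n . \]
After shifting $k\to k-j$, these terms pair up: the term for $j$ from the $(\lambda-\varepsilon)$ part cancels against the term for $n-j$ from the $(\lambda+\varepsilon)$ part, up to the factors $\delta^{j}$ versus $\delta^{n-j}$. The full sums over $k$ cancel exactly. This is the analogue of Lemma \ref{lem_partfrac2}, proved by letting $x\to\infty$ after multiplying by $x$; here the sum of all residues vanishes because the degree drops by $2$. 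What remains are finite sums of length $n-2m$, which gives the stated first formula for $B_n^\delta$ with $m=\min(j,n-j)$. Since $\delta=-1$ makes the series only conditionally convergent, I would truncate at $N$ as in Theorem \ref{thm_Jflat} and let $N\to\infty$.

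\textbf{Closed product forms.} The product forms of $A_n^\delta$ follow directly by pairing the factors of the Pochhammer symbols. For the closed forms of $B_n^\delta$, the cleaner argument is an induction on $n$ using Proposition \ref{prop_recurrence_Jpm}.

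\emph{Recurrence for $A_n^\delta$.} Check that
\[ A_n^\delta=\frac{(\lambda+\frac{n-1}{2})(\lambda-\frac{n-1}{2})}{(\varepsilon+\frac n2)(\varepsilon-\frac n2)}A_{n-2}^\delta . \]
The series parts of the decomposition also match under $n\to n-2$, since they depend on $n$ only through $\delta^n$.

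\emph{Recurrence for $B_n^\delta$.} Uniqueness then forces
\[ B_n^\delta=\frac{(\lambda+\frac{n-1}{2})(\lambda-\frac{n-1}{2})}{(\varepsilon+\frac n2)(\varepsilon-\frac n2)}B_{n-2}^\delta-(\text{inhomogeneous term}). \]
The sums over $l$ in the statement satisfy exactly this recurrence. The new $l=0$ term is the inhomogeneous term, and the terms $l\ge1$ are the multiplied $B_{n-2}$ terms, once $\bigl(\pm\lambda-\frac{n-1}{2}\bigr)_{l}$ and $\bigl(\pm\varepsilon-\frac n2\bigr)_{l+1}$ are reindexed. The base cases $B_0=0$ and $B_1$ come from the first formula or from Proposition \ref{prop_Jpm} directly.

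\textbf{Main obstacle.} The hardest step is the bookkeeping in the telescoping: showing that the residue coefficients at the $j$-th and $(n-j)$-th poles coincide, so that the infinite tails cancel and only the range $k\le n-2m-1$ survives. The delicate point is the $\delta^k$ weights, which must combine into $(-\delta)^n$ and $\delta^k$ factors. For the equality of the two $B_n^\delta$ expressions, I would rely on the recurrence argument rather than a direct identity.
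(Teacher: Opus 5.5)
Your route is the paper's: partial fractions of each summand of Proposition \ref{prop_Jpm}, pairing the $j$-th pole of the $(\lambda-\varepsilon)$ part with the $(n-j)$-th pole of the $(\lambda+\varepsilon)$ part, and telescoping with a truncation at $N$. That truncation is needed for $\delta=+$ as well, where the individual harmonic pieces diverge. The closed forms of $B_n^\delta$ then follow from Proposition \ref{prop_recurrence_Jpm} and the base cases $n=0,1$. One slip to fix: the residue at the $j$-th Pochhammer pole carries $(-1)^{j+1}$, not $(-1)^j$, and this is exactly what produces the factor $(-1)^{m+1}$ in $B_n^\delta$.
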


\begin{proof}
First, we find the partial fraction decomposition of each term of 
\[ J_n^\delta(\lambda,\varepsilon)
=\frac{1}{2}\sum_{k=0}^\infty\frac{\bigl(\lambda-\frac{n-1}{2}\bigr)_n}{\bigl(\lambda+k-\frac{n-1}{2}\bigr)_{n+1}}
\biggl(\frac{1}{\lambda-\varepsilon+k+\frac{1}{2}}+\frac{(-\delta)^n}{\lambda+\varepsilon+k+\frac{1}{2}}\biggr)\delta^k. \]
We put 
\[ f(x)=\frac{1}{\bigl(\lambda+x-\frac{n-1}{2}\bigr)_{n+1}}\frac{1}{\lambda-\varepsilon+x+\frac{1}{2}}. \]
Then the residue of $f(x)$ at the pole $x=-\lambda+\frac{n-1}{2}-m$ $(0\le m\le n)$ is given by 
\begin{align*}
&\lim_{x\to-\lambda+\frac{n-1}{2}-m}\biggl(\lambda+x-\frac{n-1}{2}+m\biggr)f(x) \\
&=\lim_{x\to-\lambda+\frac{n-1}{2}-m}\frac{1}{\bigl(\lambda+x-\frac{n-1}{2}\bigr)_m\bigl(\lambda+x-\frac{n-1}{2}+m+1\bigr)_{n-m}}\frac{1}{\lambda-\varepsilon+x+\frac{1}{2}} \\
&=\frac{1}{(-m)_m(1)_{n-m}}\frac{1}{-\varepsilon+\frac{n}{2}-m}=\frac{(-1)^{m+1}}{m!(n-m)!}\frac{1}{\varepsilon-\frac{n}{2}+m}, 
\end{align*}
and the residue at the pole $x=-\lambda+\varepsilon-\frac{1}{2}$ is given by 
\[ \lim_{x\to-\lambda+\varepsilon-\frac{1}{2}}\biggl(\lambda-\varepsilon+x+\frac{1}{2}\biggr)f(x)
=\frac{1}{\bigl(\varepsilon-\frac{n}{2}\bigr)_{n+1}}. \]
Hence we have 
\begin{align*}
&\frac{1}{\bigl(\lambda+x-\frac{n-1}{2}\bigr)_{n+1}}\frac{1}{\lambda-\varepsilon+x+\frac{1}{2}} \\
&=\sum_{m=0}^n\frac{(-1)^{m+1}}{m!(n-m)!}\frac{1}{\varepsilon-\frac{n}{2}+m}\frac{1}{\lambda+x-\frac{n-1}{2}+m}
+\frac{1}{\bigl(\varepsilon-\frac{n}{2}\bigr)_{n+1}}\frac{1}{\lambda-\varepsilon+x+\frac{1}{2}}, 
\end{align*}
and by changing $\varepsilon$ to $-\varepsilon$, $m$ to $n-m$, we have 
\begin{align*}
&\frac{1}{\bigl(\lambda+x-\frac{n-1}{2}\bigr)_{n+1}}\frac{1}{\lambda+\varepsilon+x+\frac{1}{2}} \\
&=\sum_{m=0}^n\frac{(-1)^{m}}{m!(n-m)!}\frac{1}{\varepsilon+\frac{n}{2}-m}\frac{1}{\lambda+x-\frac{n-1}{2}+m}
+\frac{1}{\bigl(-\varepsilon-\frac{n}{2}\bigr)_{n+1}}\frac{1}{\lambda+\varepsilon+x+\frac{1}{2}} \\
&=\sum_{m=0}^n\frac{(-1)^{n-m}}{(n-m)!m!}\frac{1}{\varepsilon-\frac{n}{2}+m}\frac{1}{\lambda+x+\frac{n+1}{2}-m}
+\frac{(-1)^{n+1}}{\bigl(\varepsilon-\frac{n}{2}\bigr)_{n+1}}\frac{1}{\lambda+\varepsilon+x+\frac{1}{2}}. 
\end{align*}
Therefore we have 
\begin{align*}
&\frac{2}{\bigl(\lambda-\frac{n-1}{2}\bigr)_n}J_n^\delta(\lambda,\varepsilon) \\
&=\sum_{k=0}^\infty\biggl(\sum_{m=0}^n\frac{(-1)^{m+1}}{m!(n-m)!}\frac{1}{\varepsilon-\frac{n}{2}+m}\frac{1}{\lambda+k-\frac{n-1}{2}+m}
+\frac{1}{\bigl(\varepsilon-\frac{n}{2}\bigr)_{n+1}}\frac{1}{\lambda-\varepsilon+k+\frac{1}{2}} \\*
&\eqspace{}+\sum_{m=0}^n\frac{(-1)^m\delta^n}{m!(n-m)!}\frac{1}{\varepsilon-\frac{n}{2}+m}\frac{1}{\lambda+k+\frac{n+1}{2}-m}
-\frac{\delta^n}{\bigl(\varepsilon-\frac{n}{2}\bigr)_{n+1}}\frac{1}{\lambda+\varepsilon+k+\frac{1}{2}}\biggr)\delta^k \\
&=\sum_{k=0}^\infty
\biggl(\frac{1}{\bigl(\varepsilon-\frac{n}{2}\bigr)_{n+1}}\biggl(\frac{1}{\lambda-\varepsilon+k+\frac{1}{2}}-\frac{\delta^n}{\lambda+\varepsilon+k+\frac{1}{2}}\biggr) \\*
&\eqspace{}+\sum_{m=0}^n\frac{(-1)^{m+1}}{m!(n-m)!}\frac{1}{\varepsilon-\frac{n}{2}+m}\biggl(\frac{1}{\lambda+k-\frac{n-1}{2}+m}-\frac{\delta^n}{\lambda+k+\frac{n+1}{2}-m}\biggr)\biggr)\delta^k. 
\end{align*}
Then for $n\ge 1$, the 2nd term is computed as the limit $N\to\infty$ of the following. 
\begin{align*}
&\sum_{k=0}^N\sum_{m=0}^n\frac{(-1)^{m+1}}{m!(n-m)!}\frac{1}{\varepsilon-\frac{n}{2}+m}\biggl(\frac{1}{\lambda+k-\frac{n-1}{2}+m}-\frac{\delta^n}{\lambda+k+\frac{n+1}{2}-m}\biggr)\delta^k \\
&=\sum_{k=0}^N\sum_{m=0}^{\lceil n/2\rceil-1}\frac{(-1)^{m+1}}{m!(n-m)!}
\frac{1}{\varepsilon-\frac{n}{2}+m}\biggl(\frac{1}{\lambda+k-\frac{n-1}{2}+m}-\frac{\delta^n}{\lambda+k+\frac{n+1}{2}-m}\biggr)\delta^k \\*
&\eqspace{}+\sum_{k=0}^N\sum_{m=0}^{\lceil n/2\rceil-1}\frac{(-1)^{n-m+1}}{m!(n-m)!}
\frac{1}{\varepsilon+\frac{n}{2}-m}\biggl(\frac{1}{\lambda+k+\frac{n+1}{2}-m}-\frac{\delta^n}{\lambda+k-\frac{n-1}{2}+m}\biggr)\delta^k \\
&=\sum_{m=0}^{\lceil n/2\rceil-1}\frac{(-1)^{m+1}}{m!(n-m)!}
\frac{1}{\varepsilon-\frac{n}{2}+m}\biggl(\sum_{k=0}^N\frac{\delta^k}{\lambda+k-\frac{n-1}{2}+m}-\sum_{k=n-2m}^{N+n-2m}\frac{\delta^k}{\lambda+k-\frac{n-1}{2}+m}\biggr) \\*
&\eqspace{}+\sum_{m=0}^{\lceil n/2\rceil-1}\frac{(-1)^{n-m+1}}{m!(n-m)!}
\frac{1}{\varepsilon+\frac{n}{2}-m}\biggl(\sum_{k=n-2m}^{N+n-2m}\frac{\delta^{k+n}}{\lambda+k-\frac{n-1}{2}+m}-\sum_{k=0}^N\frac{\delta^{k+n}}{\lambda+k-\frac{n-1}{2}+m}\biggr) \\
&=\sum_{m=0}^{\lceil n/2\rceil-1}\frac{(-1)^{m+1}}{m!(n\hspace{-1pt}-\hspace{-1pt}m)!}
\frac{1}{\varepsilon\hspace{-1pt}-\hspace{-1pt}\frac{n}{2}\hspace{-1pt}+\hspace{-1pt}m}
\biggl(\sum_{k=0}^{n-2m-1}\frac{\delta^k}{\lambda\hspace{-1pt}+\hspace{-1pt}k\hspace{-1pt}-\hspace{-1pt}\frac{n-1}{2}\hspace{-1pt}+\hspace{-1pt}m}
-\sum_{k=N+1}^{N+n-2m}\frac{\delta^k}{\lambda\hspace{-1pt}+\hspace{-1pt}k\hspace{-1pt}-\hspace{-1pt}\frac{n-1}{2}\hspace{-1pt}+\hspace{-1pt}m}\biggr) \\*
&\eqspace{}+\sum_{m=0}^{\lceil n/2\rceil-1}\frac{(-1)^{n-m+1}}{m!(n\hspace{-1pt}-\hspace{-1pt}m)!}
\frac{1}{\varepsilon\hspace{-1pt}+\hspace{-1pt}\frac{n}{2}\hspace{-1pt}-\hspace{-1pt}m}
\biggl(\sum_{k=N+1}^{N+n-2m}\frac{\delta^{k+n}}{\lambda\hspace{-1pt}+\hspace{-1pt}k\hspace{-1pt}-\hspace{-1pt}\frac{n-1}{2}\hspace{-1pt}+\hspace{-1pt}m}
-\sum_{k=0}^{n-2m-1}\frac{\delta^{k+n}}{\lambda\hspace{-1pt}+\hspace{-1pt}k\hspace{-1pt}-\hspace{-1pt}\frac{n-1}{2}\hspace{-1pt}+\hspace{-1pt}m}\biggr) \\
&=\sum_{m=0}^{\lceil n/2\rceil-1}\frac{(-1)^{m+1}}{m!(n-m)!}\biggl(\frac{1}{\varepsilon-\frac{n}{2}+m}-\frac{(-\delta)^n}{\varepsilon+\frac{n}{2}-m}\biggr)
\sum_{k=0}^{n-2m-1}\frac{\delta^k}{\lambda+k-\frac{n-1}{2}+m}+O(N^{-1}). 
\end{align*}
Hence we get the 1st equalities of $A_n^\delta(\lambda,\varepsilon)$, $B_n^\delta(\lambda,\varepsilon)$. Especially, for $n=0,1$, we have 
\begin{align*}
J_0^\delta(\lambda,\varepsilon)&=\frac{1}{2\varepsilon}\sum_{k=0}^\infty\biggl(\frac{1}{\lambda-\varepsilon+k+\frac{1}{2}}-\frac{1}{\lambda+\varepsilon+k+\frac{1}{2}}\biggr)\delta^k, \\
J_1^\delta(\lambda,\varepsilon)&=\frac{\lambda}{2\bigl(\varepsilon\hspace{-1pt}-\hspace{-1pt}\frac{1}{2}\bigr)\bigl(\varepsilon\hspace{-1pt}+\hspace{-1pt}\frac{1}{2}\bigr)}
\sum_{k=0}^\infty\biggl(\frac{1}{\lambda\hspace{-1pt}-\hspace{-1pt}\varepsilon\hspace{-1pt}+\hspace{-1pt}k\hspace{-1pt}+\hspace{-1pt}\frac{1}{2}}
-\frac{\delta}{\lambda\hspace{-1pt}+\hspace{-1pt}\varepsilon\hspace{-1pt}+\hspace{-1pt}k\hspace{-1pt}+\hspace{-1pt}\frac{1}{2}}\biggr)\delta^k
-\frac{1}{2}\biggl(\frac{1}{\varepsilon\hspace{-1pt}-\hspace{-1pt}\frac{1}{2}}+\frac{\delta}{\varepsilon\hspace{-1pt}+\hspace{-1pt}\frac{1}{2}}\biggr).
\end{align*}
Then the 2nd equalities of $A_n^\delta(\lambda,\varepsilon)$, $B_n^\delta(\lambda,\varepsilon)$ follow from the recurrence relation in Proposition \ref{prop_recurrence_Jpm}. 
\end{proof}

\begin{remark}
Although $A_n^\delta(\lambda,\varepsilon)$ and $B_n^\delta(\lambda,\varepsilon)$ have poles at $\varepsilon\in\bigl\{-\frac{n}{2},-\frac{n}{2}+1,\ldots,\frac{n}{2}\bigr\}$ and 
$\varepsilon\in\bigl\{-\frac{n}{2},-\frac{n}{2}+1,\ldots,\frac{n}{2}\bigr\}\setminus\{0\}$ respectively, 
since 
\begin{align*}
&\sum_{k=0}^\infty\biggl(\frac{1}{\lambda-\varepsilon+k+\frac{1}{2}}-\frac{\delta^n}{\lambda+\varepsilon+k+\frac{1}{2}}\biggr)\delta^k \\
&=\sum_{k=0}^{n-2m-1}\frac{\delta^k}{\lambda-\varepsilon+k+\frac{1}{2}}
+\sum_{k=0}^\infty\biggl(\frac{\delta^{k+n-2m}}{\lambda-\varepsilon+k+n-2m+\frac{1}{2}}-\frac{\delta^{k+n}}{\lambda+\varepsilon+k+\frac{1}{2}}\biggr) \\
&=\sum_{k=0}^{n-2m-1}\frac{\delta^k}{\lambda-\varepsilon+k+\frac{1}{2}}
+\sum_{k=0}^\infty\frac{(2\varepsilon-n+2m)\delta^{k+n}}{\bigl(\lambda-\varepsilon+k+n-2m+\frac{1}{2}\bigr)\bigl(\lambda+\varepsilon+k+\frac{1}{2}\bigr)}
\end{align*}
holds for $m=0,1,\ldots,\bigl\lfloor\frac{n}{2}\bigr\rfloor$, we have 
\begin{align*}
&\lim_{\varepsilon\to\frac{n}{2}-m}\biggl(\varepsilon-\frac{n}{2}+m\biggr)\frac{A_n^\delta(\lambda,\varepsilon)}{2}
\sum_{k=0}^\infty\biggl(\frac{1}{\lambda-\varepsilon+k+\frac{1}{2}}-\frac{\delta^n}{\lambda+\varepsilon+k+\frac{1}{2}}\biggr)\delta^k \\
&=\lim_{\varepsilon\to\frac{n}{2}-m}\frac{\bigl(\lambda-\frac{n-1}{2}\bigr)_n}{2\bigl(\varepsilon-\frac{n}{2}\bigr)_m\bigl(\varepsilon-\frac{n}{2}+m+1\bigr)_{n-m}}
\sum_{k=0}^{n-2m-1}\frac{\delta^k}{\lambda-\varepsilon+k+\frac{1}{2}} \\
&=\frac{\bigl(\lambda-\frac{n-1}{2}\bigr)_n}{2(-m)_m(1)_{n-m}}\sum_{k=0}^{n-2m-1}\frac{\delta^k}{\lambda+k-\frac{n-1}{2}+m} \\
&=\frac{(-1)^m\bigl(\lambda-\frac{n-1}{2}\bigr)_n}{2\cdot m!(n-m)!}\sum_{k=0}^{n-2m-1}\frac{\delta^k}{\lambda+k-\frac{n-1}{2}+m} \\
&=-\lim_{\varepsilon\to\frac{n}{2}-m}\biggl(\varepsilon-\frac{n}{2}+m\biggr)B_n^\delta(\lambda,\varepsilon). 
\end{align*}
Hence the residue of $J_n^\delta(\lambda,\varepsilon)$ at $\varepsilon=\frac{n}{2}-m$ vanishes. 
Since $J_n^\delta(\lambda,-\varepsilon)=(-\delta)^nJ_n^\delta(\lambda,\varepsilon)$ holds, the residue at $\varepsilon=-\frac{n}{2}+m$ also vanishes, 
namely, $J_n^\delta(\lambda,\varepsilon)$ is holomorphic at $\varepsilon\in\bigl\{-\frac{n}{2},-\frac{n}{2}+1,\ldots,\frac{n}{2}\bigr\}$. 
\end{remark}

From Theorem \ref{thm_Jpm}, we immediately get the following. 
\begin{corollary}
For $g\in\BR$, $\lambda,\varepsilon\in\BC$ with $\lambda\pm\varepsilon\notin-\bigl(\BZ_{\ge 0}+\frac{1}{2}\bigr)$, $\varepsilon\notin\pm\BZ_{\ge 1}$, we have 
\begin{align*}
R_1^+(\lambda;g,\varepsilon)
&=\Biggl({}_3F_2\biggl(\begin{matrix}\frac{1}{2},\frac{1}{2}+\lambda,\frac{1}{2}-\lambda, \\ 1+\varepsilon,1-\varepsilon\end{matrix}; \tanh^2(2g)\biggr)
\sum_{k=0}^\infty \frac{1}{\bigl(\lambda+\varepsilon+k+\frac{1}{2}\bigr)\bigl(\lambda-\varepsilon+k+\frac{1}{2}\bigr)} \\*
&\eqspace{}+\sum_{n=1}^\infty \frac{\bigl(\frac{1}{2}\bigr)_n}{n!}\tanh^{2n}(2g)
\sum_{l=0}^{n-1}\frac{\lambda\bigl(\lambda-n+\frac{1}{2}\bigr)_l\bigl(-\lambda-n+\frac{1}{2}\bigr)_l}{(2n-2l-1)(\varepsilon-n)_{l+1}(-\varepsilon-n)_{l+1}}\Biggr)\sech(2g), \\
R_1^-(\lambda;g,\varepsilon)
&=\Biggl({}_3F_2\biggl(\begin{matrix}\frac{1}{2},\frac{1}{2}+\lambda,\frac{1}{2}-\lambda, \\ 1+\varepsilon,1-\varepsilon\end{matrix}; \tanh^2(2g)\biggr)
\sum_{k=0}^\infty \frac{(-1)^k}{\bigl(\lambda+\varepsilon+k+\frac{1}{2}\bigr)\bigl(\lambda-\varepsilon+k+\frac{1}{2}\bigr)} \\*
&\eqspace{}+\sum_{n=1}^\infty \frac{\bigl(\frac{1}{2}\bigr)_n}{n!}\tanh^{2n}(2g)
\sum_{l=0}^{n-1}\frac{\bigl(\lambda-n+\frac{1}{2}\bigr)_l\bigl(-\lambda-n+\frac{1}{2}\bigr)_l}{2(\varepsilon-n)_{l+1}(-\varepsilon-n)_{l+1}}\Biggr)\sech(2g). 
\end{align*}
\end{corollary}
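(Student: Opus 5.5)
The plan is to sum the series $R_1^\delta(\lambda;g,\varepsilon)=\sech(2g)\sum_{n\ge 0}\frac{(\frac12)_n}{n!}\tanh^{2n}(2g)J_{2n}^\delta(\lambda,\varepsilon)$ from the beginning of this section term by term. For each $J_{2n}^\delta$ we substitute the expression of Theorem~\ref{thm_Jpm} with index $2n$. Since $2n$ is even, $\delta^{2n}=1$, so the bracketed sum in Theorem~\ref{thm_Jpm} becomes $\sum_k\bigl(\frac{1}{\lambda-\varepsilon+k+\frac12}-\frac{1}{\lambda+\varepsilon+k+\frac12}\bigr)\delta^k$. Because $(\lambda-\varepsilon+k+\frac12)^{-1}-(\lambda+\varepsilon+k+\frac12)^{-1}=2\varepsilon/\bigl((\lambda+\varepsilon+k+\frac12)(\lambda-\varepsilon+k+\frac12)\bigr)$, we get
\[
\frac{A_{2n}^\delta}{2}\sum_k(\cdots)\delta^k=\varepsilon A_{2n}^\delta(\lambda,\varepsilon)\sum_{k=0}^\infty\frac{\delta^k}{(\lambda+\varepsilon+k+\frac12)(\lambda-\varepsilon+k+\frac12)}.
\]
The hypothesis $\varepsilon\notin\pm\BZ_{\ge1}$ (together with the removable case $\varepsilon=0$, handled by continuity) ensures that Theorem~\ref{thm_Jpm} applies for every even index.

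Next we identify the hypergeometric factor. By the even-$n$ formula in Theorem~\ref{thm_Jpm},
\[
\varepsilon A_{2n}^\delta(\lambda,\varepsilon)=\frac{(\frac12+\lambda)_n(\frac12-\lambda)_n}{(1+\varepsilon)_n(1-\varepsilon)_n}.
\]
Hence
\[
\sum_n\frac{(\frac12)_n}{n!}\tanh^{2n}(2g)\,\varepsilon A_{2n}^\delta={}_3F_2\biggl(\begin{matrix}\frac12,\frac12+\lambda,\frac12-\lambda\\ 1+\varepsilon,1-\varepsilon\end{matrix};\tanh^2(2g)\biggr).
\]
This series converges since $\tanh^2(2g)<1$. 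For the $B$-part we take the second expression of $B_{2n}^\delta$ in Theorem~\ref{thm_Jpm} with $n\to 2n$, so that $\frac{n-1}{2}$ becomes $n-\frac12$ and $\frac n2$ becomes $n$. In the case $\delta=+$ this gives
\[
B_{2n}^+=\sum_{l=0}^{n-1}\frac{\lambda(\lambda-n+\frac12)_l(-\lambda-n+\frac12)_l}{(2n-2l-1)(\varepsilon-n)_{l+1}(-\varepsilon-n)_{l+1}}.
\]
In the case $\delta=-$ it gives the same sum without the factor $\lambda$, with the denominator $2n-2l-1$ replaced by $2$. The $n=0$ term vanishes because $B_0^\delta=0$. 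Substituting these expressions yields exactly the displayed formulas.

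The main obstacle is justifying the interchange and the splitting of the series. We must show that $\sum_n\frac{(\frac12)_n}{n!}\tanh^{2n}(2g)J_{2n}^\delta$ may be separated into the $A$-part and the $B$-part, each convergent. The $A$-part converges because the ${}_3F_2$ series has radius of convergence $1$. Then the $B$-part converges as the difference of two convergent series.

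We also need the series for $R_1^\delta$ itself, which was derived under $\Re\lambda-|\Re\varepsilon|+\frac12>0$. To extend it to all $\lambda\pm\varepsilon\notin-(\BZ_{\ge0}+\frac12)$, we use the analytic continuation in $\lambda$ of both sides: $R_1^\delta$ is holomorphic by Lemma~\ref{lem_diff_R}, and the right-hand side is holomorphic there. The key bound is that $|J_{2n}^\delta|$ grows at most polynomially in $n$, uniformly on compacta; this follows from Proposition~\ref{prop_Jpm} together with the elementary bound $|u-\delta v|\le|1-\delta uv|$. With these in hand, the identity theorem finishes the argument.
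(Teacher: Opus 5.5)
Your proposal is correct and follows the paper's own route: the paper states the corollary as ``immediate'' from Theorem~\ref{thm_Jpm} with even index $2n$ substituted into $R_1^\delta=\sech(2g)\sum_n\frac{(\frac12)_n}{n!}\tanh^{2n}(2g)J_{2n}^\delta$, and your identification of $\varepsilon A_{2n}^\delta$ with the ${}_3F_2$ coefficients, and of $B_{2n}^\pm$ with the displayed finite sums, is right. Your handling of $\varepsilon=0$ and of the analytic continuation off the half-plane $\Re\lambda-|\Re\varepsilon|+\frac12>0$ supplies details the paper leaves implicit; note only that off that half-plane the integral bound $|u-\delta v|\le|1-\delta uv|$ is no longer available, so the locally uniform bound on $J_{2n}^\delta$ has to come from estimating the series in Proposition~\ref{prop_Jpm}.
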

The formula $R_1^+(0;g,0)$ coincides with the result by \cite{O3}. 

\begin{remark}
\begin{enumerate}
\item The even part of $R_1^+(\lambda;g,\varepsilon)$ and the odd part of $R_1^-(\lambda;g,\varepsilon)$ with respect to $\lambda$ are simplified as, for $\delta\in\{\pm\}$, 
\begin{align*}
&R_1^\delta(\lambda;g,\varepsilon)+\delta R_1^\delta(-\lambda;g,\varepsilon) \\
&=\sech(2g){}_3F_2\biggl(\begin{matrix}\frac{1}{2},\frac{1}{2}+\lambda,\frac{1}{2}-\lambda, \\ 1+\varepsilon,1-\varepsilon\end{matrix}; \tanh^2(2g)\biggr)
\sum_{k=-\infty}^\infty \frac{\delta^k}{\bigl(\lambda+\varepsilon+k+\frac{1}{2}\bigr)\bigl(\lambda-\varepsilon+k+\frac{1}{2}\bigr)} \\
&=\sech(2g){}_2F_1\biggl(\begin{matrix}\frac{1}{4}+\frac{\lambda+\varepsilon}{2},\frac{1}{4}-\frac{\lambda-\varepsilon}{2}, \\ 1+\varepsilon\end{matrix}; \tanh^2(2g)\biggr)
{}_2F_1\biggl(\begin{matrix}\frac{1}{4}-\frac{\lambda+\varepsilon}{2},\frac{1}{4}+\frac{\lambda-\varepsilon}{2}, \\ 1-\varepsilon\end{matrix}; \tanh^2(2g)\biggr) \\*
&\hspace{198pt}\times\sum_{k=-\infty}^\infty \frac{\delta^k}{\bigl(\lambda+\varepsilon+k+\frac{1}{2}\bigr)\bigl(\lambda-\varepsilon+k+\frac{1}{2}\bigr)} \\
&=\sech^3(2g){}_2F_1\biggl(\begin{matrix}\frac{3}{4}+\frac{\lambda+\varepsilon}{2},\frac{3}{4}-\frac{\lambda-\varepsilon}{2}, \\ 1+\varepsilon\end{matrix}; \tanh^2(2g)\biggr)
{}_2F_1\biggl(\begin{matrix}\frac{3}{4}-\frac{\lambda+\varepsilon}{2},\frac{3}{4}+\frac{\lambda-\varepsilon}{2}, \\ 1-\varepsilon\end{matrix}; \tanh^2(2g)\biggr) \\*
&\hspace{198pt}\times\sum_{k=-\infty}^\infty \frac{\delta^k}{\bigl(\lambda+\varepsilon+k+\frac{1}{2}\bigr)\bigl(\lambda-\varepsilon+k+\frac{1}{2}\bigr)}. 
\end{align*}
The 2nd and 3rd equalities follow from \cite[Exercise 2.13, (2.2.7)]{AAR}, 
\begin{gather*}
{}_3F_2\biggl(\begin{matrix}\frac{1}{2},a-b+\frac{1}{2},b-a+\frac{1}{2}\\a+b+\frac{1}{2},\frac{3}{2}-a-b\end{matrix};x\biggr)
={}_2F_1\biggl(\begin{matrix}a,b\\a+b+\frac{1}{2}\end{matrix};x\biggr){}_2F_1\biggl(\begin{matrix}\frac{1}{2}-a,\frac{1}{2}-b\\\frac{3}{2}-a-b\end{matrix};x\biggr), \\
{}_2F_1\biggl(\begin{matrix}a,b\\c\end{matrix};x\biggr)=(1-x)^{c-a-b}{}_2F_1\biggl(\begin{matrix}c-a,c-b\\c\end{matrix};x\biggr). 
\end{gather*}
\item Similarly, the generating function $V_\delta(t;\lambda,\varepsilon)$ in (\ref{def_Vdelta}) is computed as 
\begin{align*}
V_+(t;\lambda,\varepsilon)
&={}_3F_2\biggl(\begin{matrix}1,\frac{1}{2}+\lambda,\frac{1}{2}-\lambda, \\ 1+\varepsilon,1-\varepsilon\end{matrix}; t\biggr)
\sum_{k=0}^\infty \frac{1}{\bigl(\lambda+\varepsilon+k+\frac{1}{2}\bigr)\bigl(\lambda-\varepsilon+k+\frac{1}{2}\bigr)} \\*
&\eqspace{}+\sum_{n=1}^\infty \sum_{l=0}^{n-1}\frac{\lambda\bigl(\lambda-n+\frac{1}{2}\bigr)_l\bigl(-\lambda-n+\frac{1}{2}\bigr)_l}{(2n-2l-1)(\varepsilon-n)_{l+1}(-\varepsilon-n)_{l+1}}t^n, \\
V_-(t;\lambda,\varepsilon)
&={}_3F_2\biggl(\begin{matrix}1,\frac{1}{2}+\lambda,\frac{1}{2}-\lambda, \\ 1+\varepsilon,1-\varepsilon\end{matrix}; t\biggr)
\sum_{k=0}^\infty \frac{(-1)^k}{\bigl(\lambda+\varepsilon+k+\frac{1}{2}\bigr)\bigl(\lambda-\varepsilon+k+\frac{1}{2}\bigr)} \\*
&\eqspace{}+\sum_{n=1}^\infty \sum_{l=0}^{n-1}\frac{\bigl(\lambda-n+\frac{1}{2}\bigr)_l\bigl(-\lambda-n+\frac{1}{2}\bigr)_l}{2(\varepsilon-n)_{l+1}(-\varepsilon-n)_{l+1}}t^n. 
\end{align*}
Then the even part of $V_+(t;\lambda,\varepsilon)$ and the odd part of $V_-(t;\lambda,\varepsilon)$ with respect to $\lambda$ are simplified as, for $\delta\in\{\pm\}$, 
\begin{align*}
&V_\delta(t;\lambda,\varepsilon)+\delta V_\delta(t;-\lambda,\varepsilon) \\
&={}_3F_2\biggl(\begin{matrix}1,\frac{1}{2}+\lambda,\frac{1}{2}-\lambda, \\ 1+\varepsilon,1-\varepsilon\end{matrix}; t\biggr)
\sum_{k=-\infty}^\infty \frac{\delta^k}{\bigl(\lambda+\varepsilon+k+\frac{1}{2}\bigr)\bigl(\lambda-\varepsilon+k+\frac{1}{2}\bigr)}. 
\end{align*}
Especially, for $\varepsilon=0$, we have 
\begin{align}
V_+(t;\lambda,0)+V_+(t;-\lambda,0)
&=\int_{[0,1]^2} \frac{(1-uv)(u^\lambda v^\lambda+u^{-\lambda}v^{-\lambda})}{(1-uv)^2-t(u-v)^2}u^{-\frac{1}{2}}v^{-\frac{1}{2}}\,\rd u\rd v \notag\\
&=\frac{\pi^2}{\cos^2(\pi\lambda)}{}_2F_1\biggl(\begin{matrix}\frac{1}{2}+\lambda,\frac{1}{2}-\lambda, \\ 1\end{matrix}; t\biggr), \label{formula_V+even}\\
V_-(t;\lambda,0)-V_-(t;-\lambda,0)
&=\int_{[0,1]^2} \frac{(1+uv)(u^\lambda v^\lambda-u^{-\lambda}v^{-\lambda})}{(1+uv)^2-t(u+v)^2}u^{-\frac{1}{2}}v^{-\frac{1}{2}}\,\rd u\rd v \notag\\
&=-\frac{\pi^2\sin(\pi\lambda)}{\cos^2(\pi\lambda)}{}_2F_1\biggl(\begin{matrix}\frac{1}{2}+\lambda,\frac{1}{2}-\lambda, \\ 1\end{matrix}; t\biggr). \notag
\end{align}
The formula (\ref{formula_V+even}) coincides with \cite[(4.2), Proposition 4.2]{KW4}. 
\item By \cite{KW2}, by substituting 
\[ s=s(\tau)=-\frac{\theta_2(\tau)^4}{\theta_4(\tau)^4}=\frac{\eta(\tau)^8\eta(4\tau)^{16}}{\eta(2\tau)^{24}} \]
into $\widetilde{V}_+(s;0,0):=\frac{1}{1-s}V_+\bigl(\frac{s}{s-1};0,0\bigr)$, where $\theta_j(\tau)$ are the elliptic theta functions and $\eta(\tau)$ is the Dedekind eta function, 
we get 
\[ \widetilde{V}_+(s;0,0)=\frac{\pi^2}{2(1-s)}{}_2F_1\biggl(\begin{matrix}\frac{1}{2},\frac{1}{2}, \\ 1\end{matrix}; \frac{s}{s-1}\biggr)
=\frac{\pi^2}{2}\frac{\theta_4(\tau)^4}{\theta_3(\tau)^2}=\frac{\pi^2}{2}\frac{\eta(2\tau)^{22}}{\eta(\tau)^{12}\eta(4\tau)^8}, \]
which is a $\Gamma(2)$-modular form of weight 1. \cite[Problems 4.1--4.3]{KW4} proposed a problem of seeking similar modular properties for $V_+(t;\lambda,0)+V_+(t;-\lambda,0)$, 
and it seems natural to consider an analogous problem for $V_+(t;\lambda,\varepsilon)+V_+(t;-\lambda,\varepsilon)$ for general $\lambda,\varepsilon$. 
\end{enumerate}
\end{remark}


\end{document}